\documentclass[11pt,letterpaper]{article}

\usepackage{subcaption}

\usepackage[margin=1in]{geometry}
\usepackage[utf8]{inputenc}

\usepackage{amsmath,amssymb,amsthm,mathrsfs,bm}
\usepackage{physics}
\usepackage{complexity}
\usepackage{caption}
\usepackage{algorithm}
\usepackage{algpseudocode}
\makeatletter
\providecommand{\theHALG@line}{}
\renewcommand{\theHALG@line}{\thealgorithm.\arabic{ALG@line}}
\makeatother
\usepackage{setspace}

\usepackage{graphicx}
\usepackage[colorlinks=true]{hyperref}
\usepackage[dvipsnames]{xcolor}
\definecolor{darkred}  {rgb}{0.5,0,0}
\definecolor{darkblue} {rgb}{0,0,0.5}
\definecolor{darkgreen}{rgb}{0,0.5,0}
\hypersetup{
urlcolor      = blue,         
linkcolor     = darkblue,     
citecolor     = darkgreen,    
filecolor     = darkred,
linkcolor     = darkblue
}
\usepackage{aliascnt}

\usepackage{tcolorbox}
\tcbuselibrary{breakable}
\usepackage{tikz-cd}
\usetikzlibrary{decorations.pathreplacing, positioning,calc,arrows.meta,shapes.geometric}
\pgfmathsetmacro\MathAxis{height("$\vcenter{}$")}

\usepackage[toc,page]{appendix}
\usepackage{tikz,pgfplots}\pgfplotsset{compat=1.18}
\usetikzlibrary{positioning,fit,calc}

\definecolor{darkred}{rgb}{0.5,0,0}
\definecolor{darkblue}{rgb}{0,0,0.5}
\definecolor{darkgreen}{rgb}{0,0.5,0}
\definecolor{DarkRed}{RGB}{170,0,0}

\makeatletter
\@ifpackageloaded{hyperref}{}{
  \usepackage[colorlinks=true]{hyperref}
}
\makeatother

\hypersetup{
  urlcolor  = blue,
  linkcolor = darkblue,
  citecolor = darkgreen,
  filecolor = darkred
}

\theoremstyle{plain}

\newtheorem{theorem}{Theorem}[section]
\newtheorem*{theorem*}{Theorem}

\newaliascnt{proposition}{theorem}

\aliascntresetthe{proposition}

\newaliascnt{conjecture}{theorem}

\aliascntresetthe{conjecture}

\newaliascnt{lemma}{theorem}
\newtheorem{lemma}[lemma]{Lemma}
\aliascntresetthe{lemma}

\newaliascnt{claim}{theorem}
\newtheorem{claim}[claim]{Claim}
\aliascntresetthe{claim}

\newaliascnt{fact}{theorem}
\newtheorem{fact}[fact]{Fact}
\aliascntresetthe{fact}

\newaliascnt{corollary}{theorem}
\newtheorem{corollary}[corollary]{Corollary}
\aliascntresetthe{corollary}

\theoremstyle{definition}

\newaliascnt{definition}{theorem}
\newtheorem{definition}[definition]{Definition}
\aliascntresetthe{definition}

\newaliascnt{remark}{theorem}
\newtheorem{remark}[remark]{Remark}
\aliascntresetthe{remark}

\newaliascnt{note}{theorem}

\aliascntresetthe{note}

\newaliascnt{properties}{theorem}

\aliascntresetthe{properties}

\newaliascnt{assumption}{theorem}

\aliascntresetthe{assumption}

\newaliascnt{observation}{theorem}

\aliascntresetthe{observation}

\newaliascnt{example}{theorem}

\aliascntresetthe{example}

\newaliascnt{question}{theorem}
\newtheorem{question}[question]{Question}
\aliascntresetthe{question}

\usepackage[nameinlink, capitalize]{cleveref}

\crefname{theorem}{Theorem}{Theorems}
\Crefname{theorem}{Theorem}{Theorems}
\crefname{proposition}{Proposition}{Propositions}
\Crefname{proposition}{Proposition}{Propositions}
\crefname{conjecture}{Conjecture}{Conjectures}
\Crefname{conjecture}{Conjecture}{Conjectures}
\crefname{lemma}{Lemma}{Lemmas}
\Crefname{lemma}{Lemma}{Lemmas}
\crefname{claim}{Claim}{Claims}
\Crefname{claim}{Claim}{Claims}
\crefname{fact}{Fact}{Facts}
\Crefname{fact}{Fact}{Facts}
\crefname{corollary}{Corollary}{Corollaries}
\Crefname{corollary}{Corollary}{Corollaries}
\crefname{definition}{Definition}{Definitions}
\Crefname{definition}{Definition}{Definitions}
\crefname{remark}{Remark}{Remarks}
\Crefname{remark}{Remark}{Remarks}
\crefname{properties}{Properties}{Properties}
\Crefname{properties}{Properties}{Properties}
\crefname{assumption}{Assumption}{Assumptions}
\Crefname{assumption}{Assumption}{Assumptions}
\crefname{observation}{Observation}{Observations}
\Crefname{observation}{Observation}{Observations}
\crefname{example}{Example}{Examples}
\Crefname{example}{Example}{Examples}
\crefname{protocol}{Protocol}{Protocols}
\Crefname{protocol}{Protocol}{Protocols}
\crefname{equation}{Equation}{Equations}
\Crefname{equation}{Equation}{Equations}
\crefname{section}{Section}{Sections}
\Crefname{section}{Section}{Sections}
\crefname{question}{Question}{Question}

\usepackage{enumerate}

\renewcommand{\polylog}{\mathrm{polylog}}
\renewcommand{\poly}{\mathrm{poly}}
\renewcommand{\epsilon}{\varepsilon}

\providecommand{\tr}{\operatorname{tr}}
\providecommand{\Tr}{\operatorname{Tr}}
\providecommand{\poly}{\operatorname{poly}}
\providecommand{\polylog}{\operatorname{polylog}}

\newcommand{\labs}[1]{\left\lvert #1\right\rvert}

\newcommand{\ceil}[1]{\left\lceil #1\right\rceil}

\newcommand{\under}[2]{\underbrace{#1}_{#2}}
\newcommand{\undersetbrace}[2]{\underbrace{#2}_{#1}}

\providecommand{\MathAxis}{2.5}

\providecommand{\BC}{\mathbb{C}}
\providecommand{\BE}{\mathbb{E}}

\providecommand{\BR}{\mathbb{R}}
\providecommand{\BZ}{\mathbb{Z}}

\providecommand{\CA}{\mathcal{A}}
\providecommand{\CB}{\mathcal{B}}

\providecommand{\calH}{\mathcal{H}}

\providecommand{\CL}{\mathcal{L}}

\providecommand{\CN}{\mathcal{N}}
\providecommand{\CO}{\mathcal{O}}

\providecommand{\CR}{\mathcal{R}}
\providecommand{\CS}{\mathcal{S}}

\providecommand{\CU}{\mathcal{U}}

\providecommand{\vA}{\mathbf{A}}
\providecommand{\vB}{\mathbf{B}}
\providecommand{\vC}{\mathbf{C}}

\providecommand{\vE}{\mathbf{E}}

\providecommand{\vG}{\mathbf{G}}
\providecommand{\vH}{\mathbf{H}}
\providecommand{\vI}{\mathbf{I}}

\providecommand{\vK}{\mathbf{K}}

\providecommand{\vO}{\mathbf{O}}
\providecommand{\vP}{\mathbf{P}}

\providecommand{\vQ}{\mathbf{Q}}

\providecommand{\vU}{\mathbf{U}}
\providecommand{\vV}{\mathbf{V}}

\providecommand{\vX}{\mathbf{X}}
\providecommand{\vY}{\mathbf{Y}}
\providecommand{\vZ}{\mathbf{Z}}

\providecommand{\vrho}{\boldsymbol{\rho}}
\providecommand{\vsigma}{\boldsymbol{\sigma}}
\providecommand{\vtau}{\boldsymbol{\tau}}
\providecommand{\valpha}{\boldsymbol{\alpha}}

\providecommand{\sA}{\mathsf{A}}
\providecommand{\sB}{\mathsf{B}}

\providecommand{\sL}{\mathsf{L}}

\providecommand{\sR}{\mathsf{R}}

\providecommand{\sd}{\mathsf{d}}
\providecommand{\sk}{\mathsf{k}}
\providecommand{\sD}{\mathsf{D}}
\providecommand{\sgr}{\mathsf{gr}}
\providecommand{\scgr}{\mathsf{c}}

\providecommand{\sq}{\mathsf{q}}

\newcommand{\rd}{\mathrm{d}}

\newif\ifshowrevisions
\showrevisionstrue

\usepackage[backend=biber,style=alphabetic,url=false,isbn=false,maxnames=10,minnames=3,maxalphanames=4,minalphanames=3]{biblatex}
\newcommand{\sig}{\boldsymbol{\sigma}}
\newcommand{\rhoH}{\boldsymbol{\rho}_{\vH}}

\newcommand{\ADB}{\mathsf{ADB}}
\newcommand{\FI}{\mathsf{FI}}

\newcommand{\supp}{\operatorname{supp}}
\newcommand{\dist}{\operatorname{dist}}

\newcommand{\QF}{\mathcal{Q}}

\newcommand{\Grid}{\mathsf{Grid}}

\begin{document}

\title{\Large The stationarity test: a framework for learning \\ quantum many-body systems from their thermal  states}
\author{Thiago Bergamaschi\thanks{Department of EECS, UC Berkeley, CA, USA. \href{mailto:thiagob@berkeley.edu}{thiagob@berkeley.edu}}
\thanks{Google Quantum AI.}}

\maketitle

\begin{abstract}

The task of learning the Hamiltonian interactions governing a quantum system, given samples of its thermal (or `Gibbs') states, is a foundational question at the intersection of quantum learning theory and many-body physics. In this paper, we 
draw connections to the quantum Gibbs sampling literature to introduce a natural learning algorithm we call \textit{the stationarity test}: which simply ``guesses'' the Hamiltonian, and measures the rate-of-change of local observables, under the associated detailed-balanced quantum Markov chain \cite{chen2023efficient}.

We leverage the stationarity test to address the following applications:
\begin{enumerate}
    \item To give the first learning algorithm for the underlying interaction graph, i.e. \textit{structure learning}, of lattice Hamiltonians at all temperatures, given copies of their Gibbs states.

    \item To give the first learning algorithm for the coefficients of a lattice Hamiltonian, given only copies of its \textit{thermal metastable states}, modeled as the ``local minima'' of the free energy. 
    
    \item In addition, we present a refinement to the recent area law for thermal metastable states \cite{BCV25}, which holds in the thermodynamic limit. 
\end{enumerate}

Our learning algorithms are rigorous, time-efficient, and nearly sample-optimal in system size and accuracy. At a technical level, our arguments are based on new approximate locality and convexity properties for the quantum Fisher information of these Gibbs sampling algorithms.

\end{abstract}

\newpage

\setcounter{tocdepth}{2}
\tableofcontents
\newpage

\section{Introduction}

The Hamiltonian learning problem is a fundamental question at the intersection of learning theory, condensed matter physics, quantum metrology, and device calibration \cite{da_Silva_2011, Shabani11, Wiebe2_2014, Wiebe1_2014, Wang_2017, Bairey_2019}. One important version of this problem is inspired by statistical physics, where nature, or some experimentally motivated black-box, has prepared the equilibrium thermal (or `Gibbs') state of an unknown many-body Hamiltonian. The task is to infer and characterize the underlying Hamiltonian interactions, from measurements on its thermal state.

The classical counterpart to this problem, learning Ising or graphical models, has been studied extensively since the 80's, motivated in part by close connections to machine learning \cite{Ackley1985ALA, Hinton1986LearningAR, bresler2015efficiently, klivans2017learning, hamilton2017information, vuffray2016interaction}. In contrast, only recently has substantial progress been made in the quantum setting \cite{Bairey_2019, anshu2020sample, Haah_2024, RF2024, Fawzi_2024, artymowicz2024efficienthamiltonianlearninggibbs, artymowicz2024certifiedalgorithmsquantumhamiltonian}, culminating in breakthrough time-and-sample efficient algorithms for learning the \textit{parameters} (coefficients) of non-commuting local Hamiltonians, from their Gibbs states at any temperature \cite{bakshi2023learningquantumhamiltonianstemperature, chen2025learning}. However, these results rely on rather sophisticated constructions, such as semi-definite relaxations or complex uniqueness equations for the Gibbs state, and left intriguing directions unstudied   \cite{anshu2023surveycomplexitylearningquantum}. 

In this paper, we present a simple framework for Hamiltonian learning problems by drawing a connection to recent dissipative quantum Gibbs sampling algorithms \cite{Rall2023thermalstate,chen2023quantum,chen2023efficient, ding2024efficient, jiang2024quantum}. At the center of our framework is a primitive we coin the \textit{stationarity test}: which simply guesses the Hamiltonian, and locally checks if the sampled thermal state is (approximately) a fixed point of the Gibbs sampling algorithm \cite{chen2023efficient} associated to the guess. We leverage the stationarity test to recover the existing optimal parameter learning guarantees \cite{chen2025learning}, and to develop the first learning algorithms for two open problems in this literature.

Our first application is to the \textit{structure learning} problem, in which the task is to learn both the connectivity (i.e. interaction graph) and the coefficients of an unknown Hamiltonian.
For example, suppose we receive samples from the Gibbs distribution of an Ising model on a lattice, but without the information of how the sampled bits map to the lattice sites. Can we infer the underlying lattice? Naturally, this is a well-studied problem in the statistical inference literature \cite{bresler2008reconstruction, bento2009graphicalmodelsdifficultlearn, santhanam2012information,
abbeel2012learningfactorgraphspolynomial,
bresler2014structure}, however, has remained a long-standing open question for non-commuting Hamiltonians, outside of very recent work in the high-temperature regime \cite{lewis2026learningstructureopenquantum}. Conceptually, the obstacle is that quantum Gibbs states fundamentally do not satisfy the same exact conditional independence properties that classical ones do \cite{brown2012quantum, 
kuwahara2024clustering,chen2025GibbsMarkov, bakshi2025dobrushinconditionquantummarkov}.


Next we turn our attention to a fundamental issue with the assumptions on the input to Hamiltonian learning problems. Preparing Gibbs states at low-temperatures is computationally intractable in general (at least $\mathsf{NP}$ hard \cite{sly2010computational}), and thus it seems unrealistic to assume that nature, or an empirical black-box, is always preparing a genuine equilibrium state. An alternative, inspired by recent work on slow-mixing classical and quantum Markov chains, is that nature  has instead prepared an approximately stationary, \textit{thermal metastable state} \cite{BCV25, Gamarnik2024SlowMO, rakovszky2024bottlenecksquantumchannelsfinite, liu2024locally, Jayakumar_2026}. 
By arguments akin to gradient descent, these ``local minima'' of free energy are always polynomial-time preparable, but may be statistically far from the Gibbs state, even locally (\cref{rmk:efficiently-preparable}). 
Do samples from these metastable states suffice to recover the Hamiltonian?




Here we give rigorous, time-efficient, and nearly sample-optimal learning algorithms for both of these questions on lattice Hamiltonians, building on the stationarity test. At a technical level, we draw new connections between these learning-theoretic problems and Gibbs sampling algorithms by drawing on modern tools from quantum optimal transport theory \cite{carlen2012analog2wassersteinmetricnoncommutative, carlen2017gradientflowentropyinequalities, Datta2017RelatingRE, Carlen_2019, gao2021riccicurvaturequantumchannels, carlen2024dynamics, BCV25}. Looking forward, we hope this exchange brings forth interesting new interdisciplinary directions in quantum learning theory, as well as new tools for the study of quantum Gibbs samplers; see \cref{section:discussion} for an outlook and for further related work.

\subsection{A framework for Hamiltonian learning problems}
\label{section:main-results}

We dedicate this section to an account of our framework and applications. After some brief setup, we begin in \cref{section:intro-stat-test} with a description of the stationarity test. Then, in \cref{section:structure-learning-intro} we discuss our structure learning algorithm, and in \cref{section:metastab-intro} applications to metastable states.

\paragraph{Setup.} Let $\vH=\sum_{\gamma\in\Gamma}h_\gamma\vP_\gamma$ denote an $n$ qubit $\sk-$local Hamiltonian, on a set of non-identity Pauli interactions $\Gamma$ with coefficients $|h_\gamma|\leq 1$. We assume the \textit{interaction graph} defined by $\Gamma$ is of degree $\leq \sd$, and can be embedded on a $\mathsf{D}-$dimensional lattice.  We treat $\sk, \sd, \mathsf{D}$ as fixed constants independent of $n$. At a fixed inverse-temperature $\beta>0$, we denote the Gibbs state of $\vH$ as:
\begin{equation}
    \vrho_{\vH}:=\frac{e^{-\beta\vH}}{\tr[e^{-\beta\vH}]}.
\end{equation}

In the modern quantum Gibbs sampling paradigm, one models the time-evolution of a quantum state $\vsigma$ during the algorithm as a quantum Markov semigroup, generated by a Lindbladian\footnote{The dynamics depends on $\vH, \beta$ and a choice of updates (here, single-site Pauli operators); see \cref{section:prelim}.} \cite{chen2023efficient} which satisfies some form of quasi-locality, and (KMS) detailed-balance, thus fixing $\vrho_{\vH}$:
\begin{equation}
    \frac{\rd}{\rd t}\vsigma=\CL_{\vH}[\vsigma], \quad \text{where} \quad \CL_{\vH}[\vsigma]=\sum_{i\in [n]}\CL_{\vH, i} [\vsigma] \quad \text{ and }\quad \CL_{\vH, i}[\vrho_{\vH}]=0.\label{eq:lind-eq-intro}
\end{equation}
\subsubsection{The stationarity test}
\label{section:intro-stat-test}
The central primitive in our learning algorithms is the following test, which we refer to as the \textit{stationarity test}. Operationally, \cref{def:stat-test} simply ``guesses'' a Hamiltonian $\vG$, and computes the rate-of-change of a local observable $\vO$ under $\CL_{\vG}$ when evaluated on a sampled state $\vsigma$.

\begin{tcolorbox}[
    breakable,
    colback=white,
    colframe=black,
    boxrule=0.5pt,
    arc=2mm,          
    grow sidewards by=2mm, 
    left=8pt,
    right=8pt,
    top=4pt,
    bottom=12pt,      
    before skip=10pt,
    after skip=14pt   
]

\begin{definition}[The stationarity test]\label{def:stat-test}
Let $\vG$ be a candidate Hamiltonian, and $\CL_{\vG}=\sum_i\CL_{\vG, i}$ the Lindbladian dynamics \eqref{eq:lind-eq-intro}. Given a sample \(\vsigma\), an observable $\vO$, and a site $i\in [n]$, we denote:
\begin{equation}
    \mathsf{Test}_{i}[\vO,\vG;\vsigma]
    :=
    \tr\big[\vsigma\,\CL_{\vG,i}^{\dagger}[\vO]\big].
    \label{eq:intro-stationarity-test}
\end{equation}

\end{definition}
\end{tcolorbox}

When $\vO$ is a local observable, the quasi-locality of $\CL_{\vG, i}$ entails the stationarity test can be evaluated locally, and thus in parallel, which will ensure sample efficiency. We note this is not the first work to suggest (some form of) stationarity as a means to identify thermal states \cite{Bairey_2019, evans2019scalablebayesianhamiltonianlearning, artymowicz2024certifiedalgorithmsquantumhamiltonian}, including practical schemes; we defer a comparison to prior work to \cref{section:discussion}.

\begin{remark} [Completeness]\label{rmk:completeness-intro}
   The main virtue of the stationarity test is a simple \textit{completeness} condition. Namely, if given a sample of the Gibbs state $\vsigma:=\vrho_{\vH}$ and we ``guess'' correctly $\vG=\vH$, then detailed-balance implies $\mathsf{Test}_{i}[\vO,\vH;\vrho_{\vH}]=0$ for every observable $\vO$ and $i\in [n]$.
\end{remark}

The challenge lies in the converse: if the outcome of the test $\mathsf{Test}_{i}[\vO,\vG;\vrho_{\vH}]\approx 0$ (for every local observable $\vO$), then the guess Hamiltonian is close to the ground truth $\vG\approx \vH$; and thus we have correctly learned the Hamiltonian. To prove this \textit{soundness} condition, we make use of the following quantitative relationship between the stationarity test and the \textit{quantum Fisher information}.\footnote{The quantum Fisher information of $\vsigma$ relative to $\vrho_{\vG}$ measures the rate of change of the free energy (and thus the relative entropy) of $\vsigma$ when evolved under $\CL_{\vG}$, see \cref{def:relative-fisher-information} for formal definitions.}

\begin{remark}
    [The stationarity test and the Fisher information]\label{rmk:stat-fisher-intro} When input the observable $\vO=\vH-\vG$, the stationarity test measures the \textit{quantum Fisher information} of $\vrho_{\vH}$ relative to $\vrho_{\vG}$:
    \begin{align}
         -\frac{\rd}{\rd t} \mathsf{D}\bigg(e^{t\CL_{\vG, i}}[\vrho_{\vH}]\bigg|\bigg|\vrho_{\vG}\bigg)\bigg|_{t=0} := \FI_{\vG,i}\big(\vrho_{\vH}\big|\big| \vrho_{\vG}\big)= \beta\cdot \mathsf{Test}_{i}[\vH-\vG,\vG;\vrho_{\vH}] 
    \end{align}
\end{remark}

The crux of the framework will be that the Fisher information can ultimately
be bounded \textit{even without knowledge of} $\vH$, simply from quasi-local measurements on samples of $\vrho_{\vH}$, see \cref{section:overview} for further explanation. Next, we discuss the learning algorithms we build out of said test.

\subsubsection{Structure learning of lattice Hamiltonians}
\label{section:structure-learning-intro}

Our first application is to give the first structure learning algorithm for quantum lattice Hamiltonians, given samples of their Gibbs states, at any temperature.

\begin{figure}[!b]
    \centering
    \includegraphics[width=0.47\linewidth]{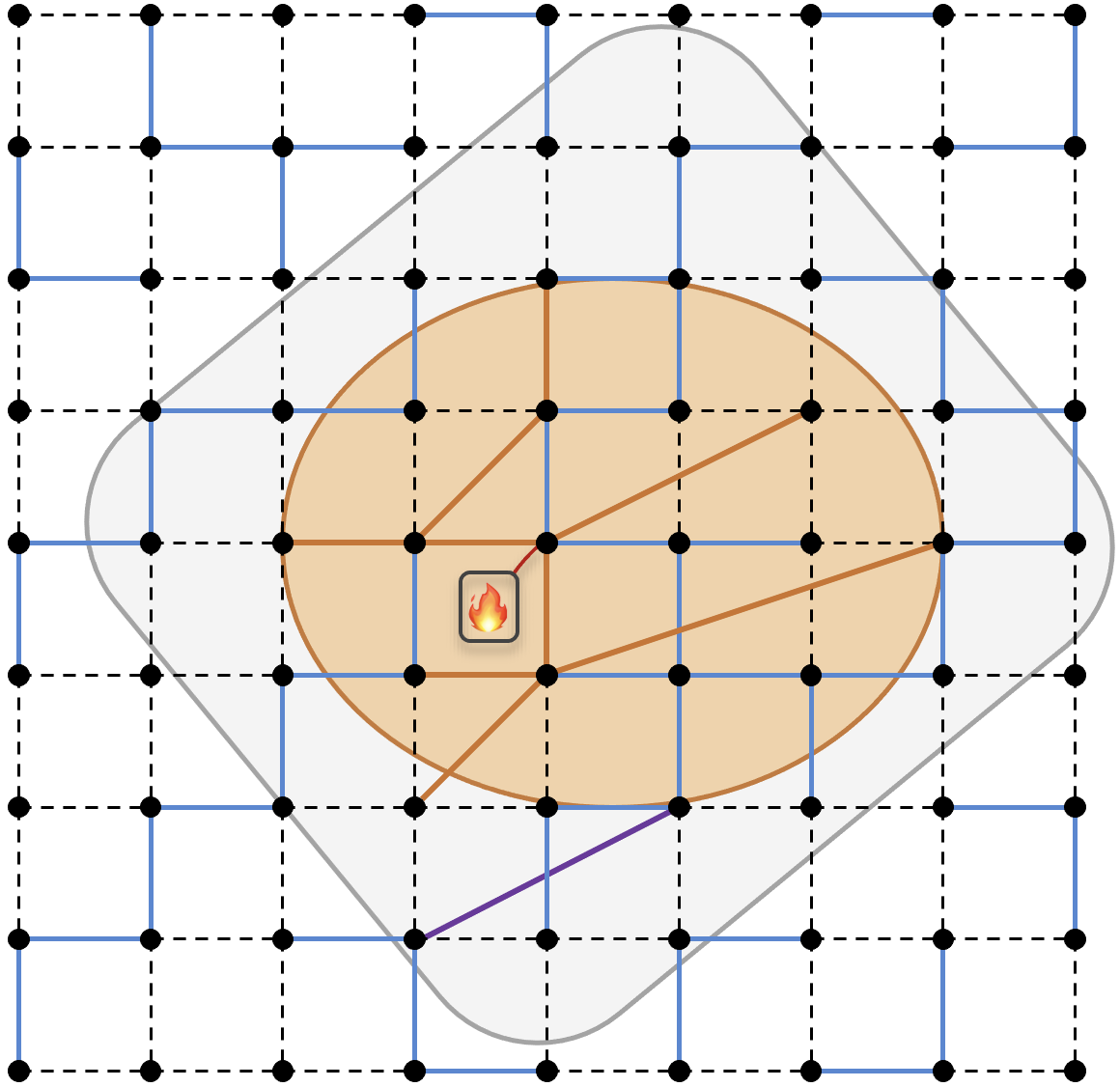}
    \caption{A cartoon of the ``neighborhood guessing'' structure learning algorithm. A ground truth 2-local Hamiltonian $\vH$ lies on the edges of an unknown square lattice. The current guess Hamiltonian $\vG$ (edges in blue) is a strict subgraph of $\vH$ (missing edges in dashed black). In each iterative step, we ``guess'' the neighborhood of a qubit, by perturbing $\vG+\vU$ with a small cluster $\vU$ of 2-local terms (orange). The stationarity test then evaluates the rate of change of 2-local observables (purple), evolved under $\CL_{\vG+\vU}$, truncated to a finite radius (gray) around the origin.}
    \label{fig:st-learn}
\end{figure}

\begin{tcolorbox}[
    breakable,
    colback=white,
    colframe=black,
    boxrule=0.5pt,
    arc=2mm,
    grow sidewards by=2mm,  
    left=8pt,
    right=8pt,
    top=5pt,
    bottom=5pt,
    before skip=8pt,
    after skip=5pt
]

\begin{theorem}[Structure learning from quantum Gibbs states, informal]
\label{theorem:structure-learning-intro}
Fix $\eta\in(0,1)$, $\beta>0$. Let $\vH=\sum_{\gamma\in \Gamma_{\vH}}h_\gamma\vP_\gamma$ be an $n$ qubit, lattice Hamiltonian on an unknown set of interactions $\Gamma_{\vH}$. Then, there exists an algorithm which on input at most
\begin{equation}
    \mathsf{N_{samp}}:=
  \frac{e^{\poly(\beta)}}{\beta^2}
  \cdot O\bigg(\frac{\polylog \, \eta^{-1}}{\eta^2} \cdot \log n\bigg),
  \label{eq:structure-complexity-intro}
\end{equation}
samples of $\vrho_{\vH}$, with probability $\geq 0.99$, outputs a Hamiltonian $\vG=\sum_{\gamma\in \Gamma_{\vG}}g_\gamma\vP_\gamma$ satisfying:
\begin{align}
    \textsf{(structure learning)}\quad  \Gamma_{\vG} \subseteq  \Gamma_{\vH}, \qquad  \textsf{(parameter learning)} \quad \max_{\gamma} |g_\gamma-h_\gamma|\leq \eta.
\end{align}
The time complexity of the algorithm is at most $n^{\poly(\beta)} \cdot \poly(\mathsf{N_{samp}})$.
\end{theorem}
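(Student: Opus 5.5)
The plan is to build the structure learner as an iterative ``neighborhood guessing'' procedure around the stationarity test (\cref{def:stat-test}), with the Fisher information identity of \cref{rmk:stat-fisher-intro} as the soundness engine.

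We maintain a current guess $\vG$ with $\Gamma_{\vG}\subseteq\Gamma_{\vH}$; initially $\vG=0$. We first estimate, once and for all, the quasi-local statistics $\tr[\vrho_{\vH}\CL^\dagger_{\vG',i}[\vP]]$ for all Paulis $\vP$ supported in a ball of radius $r=\poly(\beta)\cdot\polylog\eta^{-1}$ around each site $i$. Here $\CL_{\vG',i}$ is truncated to that ball, and $\vG'$ ranges over candidate Hamiltonians supported in the ball. Because each test is quasi-local, classical shadows, or a parallel measurement scheme that tiles the lattice into non-overlapping balls, give all estimates to accuracy $\epsilon$ with $O(\epsilon^{-2}\log n)$ samples times a locality overhead $e^{\poly(\beta)}$. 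Enumerating the candidate local clusters $\vU$ of $\sk$-local Paulis inside a radius-$r$ ball, with coefficients on an $\eta$-net, costs $n\cdot \exp(r^{\sD})$. This is roughly the source of the $n^{\poly(\beta)}$ time bound.

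The algorithm then repeats a local step. For each site $i$ it searches over perturbations $\vG+\vU$, with $\vU$ supported near $i$, and picks the one minimizing the local Fisher information proxy $\beta\,\mathsf{Test}_i[\vO,\vG+\vU;\vrho_{\vH}]$. The observable $\vO$ ranges over local Paulis, and the test is combined linearly so that it reproduces $\vO=\vH-\vG-\vU$ restricted to the ball. The key point is that $\vH-\vG-\vU$ is not known, but the test is linear in $\vO$. We can therefore minimize over the unknown local coefficients in a convex way: for a fixed candidate, the map $\vO\mapsto\mathsf{Test}_i$ is a linear functional on local observables, and we use $\sup_{\|\vO\|\le 1}|\mathsf{Test}_i[\vO,\cdot]|$ as the proxy. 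After fixing the best local guess at every site, we keep only the terms whose coefficients exceed $\eta$ in magnitude. This thresholding enforces $\Gamma_{\vG}\subseteq\Gamma_{\vH}$.

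Completeness is immediate from \cref{rmk:completeness-intro}. Taking $\vU$ equal to the true local neighborhood of $\vH$ gives test values $\approx 0$, up to truncation error $e^{-\Omega(r)}$ from quasi-locality and statistical error $\epsilon$.

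Soundness is the crux. We must show that if the local tests are $\le\epsilon$ for all $\vO$, then the coefficients of $\vG+\vU$ near $i$ are $\eta$-close to those of $\vH$. I would prove two properties of $\FI_{\vG,i}$. First, \emph{approximate locality}: $\FI_{\vG,i}(\vrho_{\vH}\|\vrho_{\vG})$ depends, up to $e^{-\Omega(r)}$, only on $\vG-\vH$ within the $r$-ball of $i$. Second, \emph{local strong convexity}: $\sum_{i}\FI_{\vG,i}\gtrsim e^{-\poly(\beta)}\beta^2\sum_{\gamma\text{ near } i}|g_\gamma-h_\gamma|^2$. I would attempt the convexity bound by writing the Fisher information as a Dirichlet-form-like quadratic in $\log\vrho_{\vH}-\log\vrho_{\vG}$, lower bounding it by the local variance of $\beta(\vH-\vG)$ under single-site jumps. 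I would then invoke a local non-degeneracy (variance lower bound) estimate for Gibbs states, as in prior parameter-learning work, to get the $e^{-\poly(\beta)}$ constant. This convexity step is the main obstacle: the Fisher information is nonlinear in $\vG$, and we only have approximate locality. I would first try a perturbative interpolation $\vG_s=\vH+s(\vG-\vH)$, controlling the second derivative via quasi-locality of the KMS-detailed-balance Lindbladian of \cite{chen2023efficient}. Combining this with the estimation error $\epsilon=\eta\cdot\beta\, e^{-\poly(\beta)}$ gives the stated $\mathsf{N_{samp}}$ and a union bound over all $n$ sites.
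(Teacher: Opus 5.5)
\textbf{There is a genuine gap.} Your ingredients are the right ones: the stationarity test as a linear proxy for the Fisher information, completeness, approximate convexity, and thresholding. But the single-scale algorithm you describe cannot meet the stated bounds, and the missing idea is accuracy bootstrapping.

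\textbf{Enumeration cost.} In structure learning the ``ball of radius $r$ around $i$'' is defined by the unknown graph. Your candidate clusters must therefore range over all sets of $\sk$-local Paulis connected to $i$. At radius $r=\poly(\beta)\polylog\eta^{-1}$, with an $\eta$-net on coefficients, that is $n^{\poly(\beta)\polylog\eta^{-1}}\cdot\eta^{-\polylog\eta^{-1}}$ candidates, not $n\cdot\exp(r^{\sD})$. The runtime is therefore quasi-polynomial.

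\textbf{Sample cost.} The sample bound fails for the same reason. The shadows overhead for observables on $r^{\sD}$ qubits is $2^{O(r^{\sD})}=2^{\polylog\eta^{-1}}$, not $e^{\poly(\beta)}$. The tiling/short-time-evolution alternative needs the unknown geometry and a separate Lindbladian evolution per candidate, costing polynomially many samples in $n$.

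\textbf{How the paper avoids both.} In round $r$ the guess $\vG^{(r)}$ is already $\eta_r=2^{-r}$ accurate. Candidates are $\vG^{(r)}+\eta_r\vU$, with $\vU$ on clusters that are $\ell$-connected to $i$ and preserve polynomial growth of $\Gamma_{\vG^{(r)}}\cup\Lambda$. The radius $\ell=\poly(\beta)$ and grid spacing $\kappa=e^{-\poly(\beta)}$ are independent of $\eta$. This is justified by the Lipschitz bound (\cref{lemma:lipschitz-within}): replacing $\vG^{(r)}+\eta_r\vU$ by its completion $\vK$ costs only $\propto\eta_r e^{-c\ell}$.

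Sample optimality then comes from the baseline-plus-correction split of \cref{lemma:struct-lip-stat}.
\begin{itemize}
\item The baseline $\mathsf{Test}_i[\vP,\vG^{(r)}_{\CB[i,\sR]};\vrho_{\vH}]$ depends only on the current guess. Its graph is a known sub-lattice of $\Gamma_{\vH}$, which is why the per-round thresholding at $\eta_r/4$ and the invariant $|h_{\vP}|\geq\frac{7}{16}\eta_r$ on $\Gamma_{\vG^{(r)}}$ matter. It is measured at radius $\sR\sim\log\eta_r^{-1}$ by parallel short-time evolution with a coloring.
\item The candidate-dependent correction lives at constant radius $\sL$, because the perturbation has size $\eta_r$. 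Pauli shadows with constant locality therefore handle all $n^{\poly(\beta)}$ candidates at only a logarithmic cost.
\end{itemize}

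\textbf{Soundness also needs bootstrapping.} Your ``local strong convexity'' is stronger than what is available. The identity $\FI_{\vK,i}(\vrho_{\vH}\Vert\vrho_{\vK})=\beta\,\mathsf{Test}_i[\vH-\vK,\vK;\vrho_{\vH}]$ only bounds the Fisher information by (coefficient size of $\vH-\vK$) times (test value).

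The provable convexity, \cref{lemma:fisher-convexity}, reads $\max_{\gamma\ni i}|g_\gamma-h_\gamma|\lesssim e^{\Omega}\FI^{1/2}+e^{-c\Omega}\max_\gamma|g_\gamma-h_\gamma|$. It comes from splitting the jump into low- and high-frequency operator Fourier components. The far-reaching high-frequency part produces an additive term in the \emph{global} error, and the bound only controls terms incident on $i$.

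In a one-shot search from $\vG=0$, the coefficients of $\vH-\vK$ are $O(1)$. Test accuracy $\epsilon$ then yields error $\sim e^{\Omega}\sqrt{\epsilon}$ with $\Omega\sim\beta\sd\log\eta^{-1}$. This forces $\epsilon\sim\eta^{2+O(\beta\sd)}$, i.e. $\mathsf{N_{samp}}$ polynomially worse than $\eta^{-2}$.

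When the current error is already $\eta_r$, the Fisher information is $O(\eta_r\cdot c_{\mathrm{meas}}\eta_r)$ and the additive term is $e^{-c\Omega}\eta_r$. Together these give the contraction $\eta_r\to\eta_r/2$ with test accuracy only $c_{\mathrm{meas}}\eta_r$. Summing $\eta_r^{-2}\log^{\sD}\eta_r^{-1}$ over rounds gives the claimed bound.
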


\end{tcolorbox}

\begin{remark} [Optimality]
    The sample complexity of the algorithm above is optimal in the system size $n$ and, up to logarithmic factors, in the accuracy $\eta$, matching the lower bound of \cite{Haah_2024}. However, it need not be optimal in $\beta$, Hamiltonian locality parameters, nor time-complexity. 
\end{remark}

We refer the reader to \cref{fig:st-learn} for an description of (a single iterative step of) the algorithm. Roughly speaking, our approach to \cref{theorem:structure-learning-intro} is largely inspired by the parameter learning algorithm of \cite{chen2025learning}, which proceeds by iteratively refining a guess Hamiltonian $\vG^{(1)}, \vG^{(2)}\cdots, \vG^{(r)}$ to increasing accuracy, by performing local perturbations to the Hamiltonian. Here, we replace their iterative step by careful application of the stationarity test; and crucially, maintain the invariant that in each iteration ``more of the structure of $\vH$ is revealed'':
\begin{equation}
    \Gamma_{\vG^{(i)}} \subseteq \Gamma_{\vG^{(i+1)}}\subseteq \cdots \subseteq  \Gamma_{\vH}.
\end{equation}

As a warm-up to our proof approach, we dedicate \cref{section:gibbs-alg} to a introductory derivation of prior parameter learning statements in our framework, recovering the results of \cite{bakshi2023learningquantumhamiltonianstemperature, chen2025learning} where the underlying graph is known. See \cref{section:overview} for further discussion on the proof techniques.

\subsubsection{Applications to thermal metastable states}
\label{section:metastab-intro}

\textbf{Definitions.} We next turn our attention to thermal metastable states. We operate under the definition introduced in \cite{BCV25}, by casting metastable states as the \textit{approximately stationary states} of the dissipative Gibbs sampling algorithm of \cite{chen2023efficient}:

\begin{definition}[Metastability as approximate stationarity {\cite{BCV25}}]
\label{def:intro-local-metastability}
A state \(\vsigma\) is said to be \(\epsilon\)-locally metastable\footnote{\cite{BCV25}  proposed a ``global'' definition, wherein $\vsigma$ is approximately stationary under the sum $\CL_{\vH}=\sum_i\CL_{\vH, i}$. However, these definitions were shown to be equivalent up to polynomial factors, and to imply a local minima condition for the free energy \cite[Theorem C.2, equation (2.2)]{BCV25}. See \cref{section:discussion} for further discussion.} with respect to a Hamiltonian
\(\vH\) if, for every site \(i\in[n]\),
\begin{equation}
    \left\|\CL_{\vH,i}[\vsigma]\right\|_1\leq\epsilon.
    \label{eq:intro-local-metastability}
\end{equation}
\end{definition}

The Gibbs state $\vrho_{\vH}$ is the unique exactly stationary state of $\CL_{\vH}$, however, a metastable state $\vsigma$ may be arbitrarily far from $\vrho_{\vH}$ in statistical distance (e.g. \cref{fig:dw-ising}).  
The virtue in the definition above is that although $\vrho_{\vH}$ may be worst-case hard to prepare \cite{sly2010computational}, a simple ``gradient descent'' argument implies metastable states are always efficiently preparable, even at low temperatures. 

\begin{lemma}
    [Metastable states are efficiently preparable, {\cite[Lemma II.1 and Lemma E.1]{BCV25}}]
    \label{rmk:efficiently-preparable}
   Let $\vsigma_t=e^{t\CL_{\vH}}[\vsigma_0]$ denote the time-evolution of an arbitrary initial state $\vsigma_0$. For time $t$ chosen uniformly at random up to some finite cutoff $\mathsf{T}>0:$
    \begin{align}
        \big\|\BE_{t\sim [0, \mathsf{T}]} \CL_{\vH, i}[\vsigma_t]\big\|_1\leq \poly(n, \mathsf{T}^{-1}), \quad \text{and}\quad \BE_{t\sim [0, \mathsf{T}]} \mathsf{FI}_{\vH, i}(\vsigma_t||\vrho_{\vH})\leq \poly(n, \mathsf{T}^{-1}).
    \end{align}
\end{lemma}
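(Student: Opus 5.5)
The approach is the standard ``gradient descent on the free energy'' argument. Relative entropy to the Gibbs state is a Lyapunov function for the detailed-balanced dynamics. Its time derivative is minus the Fisher information, so averaging over a random time bounds the average Fisher information by the initial relative entropy divided by $\mathsf{T}$.

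\emph{Step 1 (entropy decay identity).} Set $\mathsf{D}_t:=\mathsf{D}(\vsigma_t\|\vrho_{\vH})$ with $\vsigma_t=e^{t\CL_{\vH}}[\vsigma_0]$. Differentiate in $t$. Since $\CL_{\vH}=\sum_i\CL_{\vH,i}$, the derivative is linear in the generator, and the definition of $\FI$ in \cref{rmk:stat-fisher-intro} gives
\begin{equation}
-\frac{\rd}{\rd t}\mathsf{D}_t=\sum_{i\in[n]}\FI_{\vH,i}(\vsigma_t\|\vrho_{\vH}).
\end{equation}
Each summand is nonnegative. This is because each $\CL_{\vH,i}$ is itself KMS-detailed-balanced with fixed point $\vrho_{\vH}$, and the relative entropy is monotone under the CPTP maps $e^{s\CL_{\vH,i}}$.

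\emph{Step 2 (integrate and bound the initial entropy).} Integrate over $[0,\mathsf{T}]$:
\begin{equation}
\int_0^{\mathsf{T}}\sum_i\FI_{\vH,i}(\vsigma_t\|\vrho_{\vH})\,\rd t=\mathsf{D}_0-\mathsf{D}_{\mathsf{T}}\le \mathsf{D}(\vsigma_0\|\vrho_{\vH}).
\end{equation}
The initial relative entropy satisfies $\mathsf{D}(\vsigma_0\|\vrho_{\vH})\le \log\|\vrho_{\vH}^{-1}\|\le 2\beta\|\vH\|+n\log 2=O(\beta n)$, since $\|\vH\|=O(n)$ for a bounded-degree local Hamiltonian. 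Dividing by $\mathsf{T}$ and dropping the other nonnegative summands gives, for every $i$,
\begin{equation}
\BE_{t\sim[0,\mathsf{T}]}\FI_{\vH,i}(\vsigma_t\|\vrho_{\vH})\le O(\beta n)/\mathsf{T}.
\end{equation}
This is the second claim.

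\emph{Step 3 (from Fisher information to trace norm).} For the first claim there are two options. The direct route is a telescoping bound:
\begin{equation}
\BE_t\CL_{\vH}[\vsigma_t]=\frac{\vsigma_{\mathsf{T}}-\vsigma_0}{\mathsf{T}},
\end{equation}
whose trace norm is at most $2/\mathsf{T}$. This only controls the full sum $\CL_{\vH}$, not each $\CL_{\vH,i}$, so it does not suffice on its own.

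The second option handles the local terms directly. I would use a Pinsker/Cauchy--Schwarz-type inequality of the form
\begin{equation}
\|\CL_{\vH,i}[\vsigma]\|_1\le C\sqrt{\FI_{\vH,i}(\vsigma\|\vrho_{\vH})},
\end{equation}
where $C$ depends on $\beta$ and on the operator norms of the jump operators. To prove it, write $\CL_{\vH,i}[\vsigma]$ in the KMS-weighted divergence form $\sum_a\partial_a^{\dagger}(\cdot)$. The Fisher information is the weighted $L^2$ norm of $\partial_a\log\vsigma-\partial_a\log\vrho$ with respect to a state-dependent operator mean. Cauchy--Schwarz in that weighted inner product, combined with the $\vsigma$-weighted norm being at most a constant times the trace norm, yields the square-root bound. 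Convexity of $\vsigma\mapsto\|\CL_{\vH,i}[\vsigma]\|_1$ and Jensen's inequality then give
\begin{equation}
\big\|\BE_t\CL_{\vH,i}[\vsigma_t]\big\|_1\le \BE_t\|\CL_{\vH,i}[\vsigma_t]\|_1\le C\sqrt{\BE_t\FI_{\vH,i}(\vsigma_t\|\vrho_{\vH})}\le \poly(n,\mathsf{T}^{-1}),
\end{equation}
where ``$\poly(n,\mathsf{T}^{-1})$'' means a quantity that vanishes polynomially as $\mathsf{T}\to\infty$.

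\emph{Main obstacle.} The main obstacle is the inequality in Step 3. It requires the precise noncommutative chain-rule/divergence structure of the Chen–Kastoryano–Gilyén generator, whose coherent part complicates the gradient form. If it resists, I would fall back to a global-to-local reduction: the telescoping bound gives a $2/\mathsf{T}$ bound on $\|\BE_t\CL_{\vH}[\vsigma_t]\|_1$, and I would convert it to the local statement with a polynomial loss. This mirrors the equivalence of the global and local definitions cited in the footnote to \cref{def:intro-local-metastability}.
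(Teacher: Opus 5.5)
Your Steps 1--2 are correct. They are the gradient-descent/time-averaging argument that the paper imports from \cite{BCV25} without giving its own proof, and they establish the Fisher-information claim.

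The gap is in the first claim, at the inequality $\|\CL_{\vH,i}[\vsigma]\|_1\le C\sqrt{\FI_{\vH,i}(\vsigma\|\vrho_{\vH})}$. You assert it but do not prove it, and your Cauchy--Schwarz sketch does not go through for this generator, because no symmetric divergence form shares its kernel with the Fisher information.

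By \eqref{eq:test-expansion},
\[
\tr[\vO\,\CL_{\vH,a}[\vsigma]]=-\int\langle[\hat{\vA},\vO],[\vrho_{\vH}^{-s}\hat{\vA}\vrho_{\vH}^{s},\log\vsigma-\log\vrho_{\vH}]\rangle_{\vsigma,s}\,g(t)h(\omega)\,\rd\omega\,\rd t\,\rd s .
\]
Here the time kernel $g(t)$ enters at full strength for every $s\in[-\frac12,\frac12]$. By contrast, $\FI_{\vH,a}=\QF_a(\log\vsigma-\log\vrho_{\vH},\vH;\vsigma)$ only satisfies $g_s(t)\ge\cos(\pi s)\,g(t)$. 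For fixed $t\neq0$, $g_s(t)\to0$ as $s\to\pm\frac12$, and its mass concentrates at $t=0$.

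The frequency shift coming from $\vrho_{\vH}^{-s}\hat{\vA}\vrho_{\vH}^{s}$ is harmless. However, Cauchy--Schwarz against $\QF_a$ costs $\int_{-1/2}^{1/2}\rd s/\cos(\pi s)=\infty$. Cutting off at $|s|\le\frac12-\delta$, as in \cref{claim:multi-site-sint}, yields only
\[
\|\CL_{\vH,a}[\vsigma]\|_1\lesssim\sqrt{\log(1/\delta)\,\FI_{\vH,a}(\vsigma\|\vrho_{\vH})}+\delta\,\|\log\vsigma-\log\vrho_{\vH}\|,
\]
with $\beta$-dependent constants. This is the same logarithmic loss that appears in \cref{lem:metastable-FI-to-ADB}. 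For an arbitrary (e.g.\ pure) $\vsigma_0$ you have no a priori control on $\|\log\vsigma_t-\log\vrho_{\vH}\|$, so Step 3 does not close as written.

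Your fallback does not avoid this. The bound $\|\BE_t\CL_{\vH}[\vsigma_t]\|_1\le2/\mathsf{T}$ says nothing about the individual $\CL_{\vH,i}$, since local terms can cancel. The global-to-local equivalence of \cite{BCV25} passes through $\sum_i\FI_{\vH,i}(\vsigma\|\vrho_{\vH})\le\|\CL_{\vH}[\vsigma]\|_1\,\|\log\vsigma-\log\vrho_{\vH}\|$ and then the same log-corrected conversion.

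The repair is cheap, because $\CL_{\vH,i}[\vrho_{\vH}]=0$ and $e^{t\CL_{\vH}}$ fixes $\vrho_{\vH}$.
\begin{itemize}
\item Apply Steps 1--2 to $\vsigma_0':=\frac12(\vsigma_0+\vrho_{\vH})$. Then $\vsigma_t'=\frac12(\vsigma_t+\vrho_{\vH})$ satisfies $\frac12\vrho_{\vH}\le\vsigma_t'\le\vI$.
\item Operator monotonicity of $\log$ then gives $\|\log\vsigma_t'-\log\vrho_{\vH}\|\le\max\{\log2,\|\log\vrho_{\vH}\|\}=O(\beta n)$.
\item Meanwhile $\CL_{\vH,i}[\vsigma_t]=2\,\CL_{\vH,i}[\vsigma_t']$ exactly.
\item Fix $\delta=1/\mathsf{T}$ in the display above and average over $t$ using Jensen. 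Together with $\BE_t\FI_{\vH,i}(\vsigma_t'\|\vrho_{\vH})\le O(\beta n)/\mathsf{T}$, this gives $\|\BE_t\CL_{\vH,i}[\vsigma_t]\|_1\lesssim\sqrt{\beta n\log\mathsf{T}/\mathsf{T}}+\beta n/\mathsf{T}$.
\end{itemize}
This is the claimed polynomial bound.
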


Consequently, by time-averaging, evolving \textit{any} state for $\poly(n)$ time always results in an inverse $\poly(n)$ metastable state \cite{liu2024locally, BH16, Bairey_2019, evans2019scalablebayesianhamiltonianlearning}. We add that the convergence to metastable is expected to be much faster in specific systems \cite{GS23Ising, Bresler2024Metastable, GSS25}.


\paragraph{Parameter learning algorithms for metastable states.} In the language above, we can give a concise re-interpretation to our framework for Hamiltonian learning.

\begin{remark}
    [Metastability and the stationarity test] By Schatten duality, the stationarity test can be interpreted as an algorithmic, local check of the metastability condition:
    \begin{align}
        \|\CL_{\vG,i}[\vsigma]\|_1=\sup_{\|\vO\|=1} |\mathsf{Test}_i[\vO, \vG;\vsigma]|.
    \end{align}
\end{remark}

 \noindent In other words, if a Gibbs state $\vrho_{\vH}$ is sufficiently metastable under a guess $\vG$, then $\vH\approx \vG$! In this context, the question of whether Hamiltonian learning can be performed from samples of thermal metastable states, is informally captured by the following  ``triangle inequality'':

\begin{question}\label{question:triangle}
    If a state $\vsigma$ is $\epsilon_{\vH}$ locally metastable w.r.t. $\vH$, and $\epsilon_{\vG}$ locally metastable w.r.t. $\vG$, then is $\vrho_{\vH}$ locally metastable w.r.t. $\vG$? If so, then $\vH\approx \vG$. See \cref{fig:dw-ising} for an illustration.
\end{question}

 We make this discussion quantitative in \cref{section:metastable-learning}, where we argue that such metastable states suffice to replace exact Gibbs samples used in our parameter learning algorithm of \cref{section:gibbs-alg}.

\begin{figure}[t]
    \centering
    \includegraphics[width=0.6\linewidth]{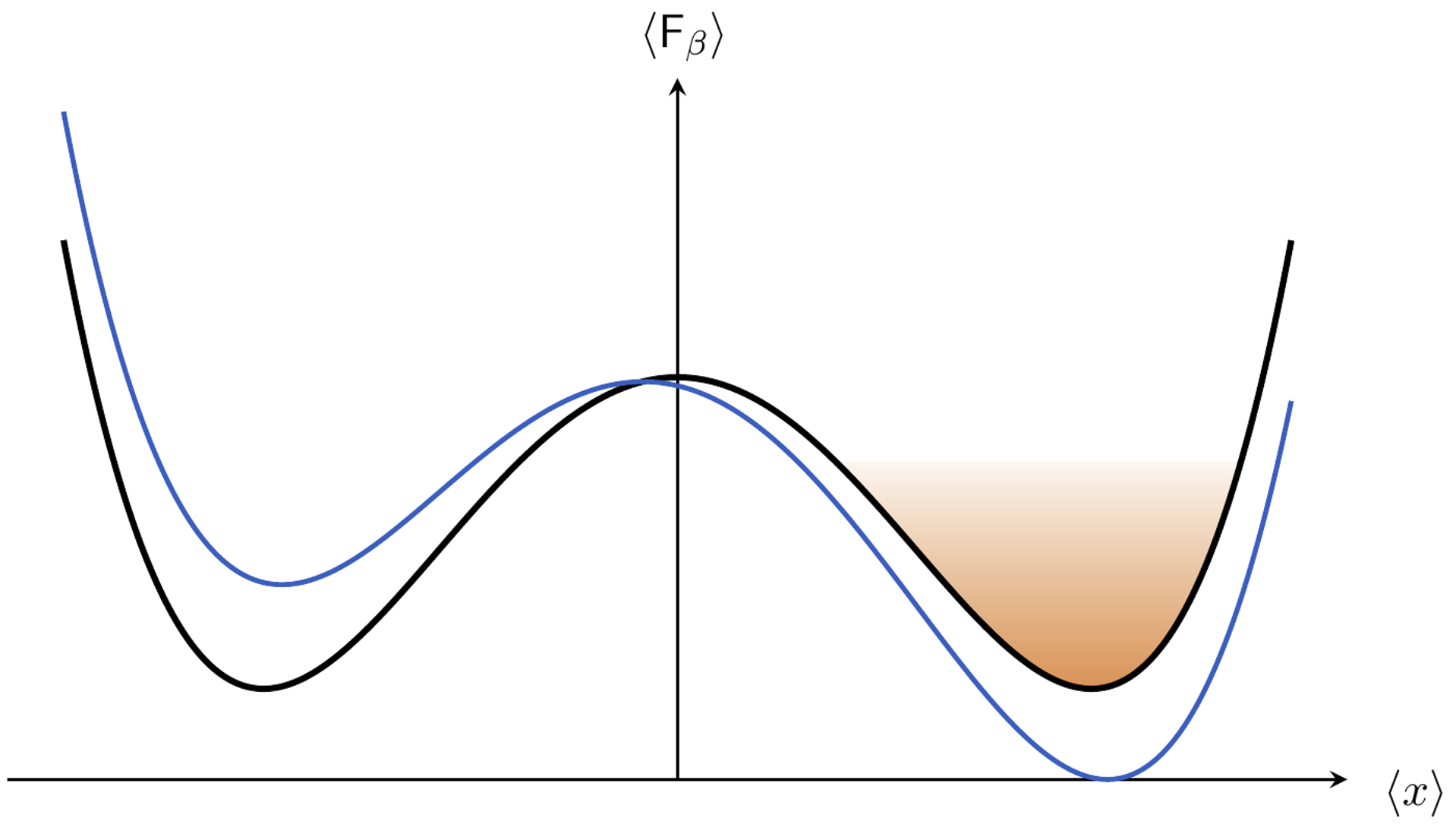}
    \caption{A cartoon of the Hamiltonian learning problem from metastable states. A metastable state (orange gradient) lies in a double well, depicting a toy model of the free energy landscape of low temperature, 2D Ising model configurations of average magnetization $x$ (black curve). The learning algorithm guesses a Hamiltonian, and attempts to ``fit'' to the free-energy landscape (blue curve), by checking if the sampled state is also metastable w.r.t. the guess. See \cref{question:triangle}.}
    \label{fig:dw-ising}
\end{figure}

\begin{tcolorbox}[
    breakable,
    colback=white,
    colframe=black,
    boxrule=0.5pt,
    arc=2mm,
    grow sidewards by=2mm,  
    left=8pt,
    right=8pt,
    top=5pt,
    bottom=6pt,
    before skip=9pt,
    after skip=14pt
]

\begin{theorem}[Parameter learning from metastable states, informal]
\label{thm:intro-metastable-learning}
Fix $\beta>0, \epsilon\geq 0$ and $\eta>\eta_{\mathsf{thr}}:=\mathsf{z}_\beta\cdot \epsilon$ for a $\beta$ dependent constant $\mathsf{z}_\beta\geq0$. Let $\vH=\sum_{\gamma\in \Gamma}h_\gamma\vP_\gamma$ be an $n$ qubit lattice Hamiltonian on a known set of interactions $\Gamma$, and $\vsigma$ be an $\epsilon-$locally metastable state with respect to $\vH$ (\cref{def:intro-local-metastability}). There exists an algorithm which on input 
\begin{equation}
    \mathsf{N_{samp}}:=
  \frac{e^{\poly(\beta)}}{\beta^2}
  \cdot O\bigg(\frac{\polylog \, \eta^{-1}}{\eta^2} \cdot \log n\bigg),
  \label{eq:metastab-complexity}
\end{equation}
samples of $\vsigma$, with probability $\geq 0.99$, outputs a Hamiltonian $\vG=\sum_{\gamma\in \Gamma}g_\gamma\vP_\gamma$ satisfying $\max_{\gamma\in \Gamma} |g_\gamma-h_\gamma|\leq \eta.$
The time complexity of the algorithm is at most $n\cdot \poly(\mathsf{N_{samp}})$.
\end{theorem}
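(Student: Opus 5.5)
The plan is to run the same iterative-refinement scheme used for Gibbs samples in \cref{section:gibbs-alg}, with exact stationarity replaced by approximate stationarity, and to carry the metastability error $\epsilon$ through the soundness analysis. The algorithm keeps a guess $\vG^{(r)}$ on the known interaction set $\Gamma$. In each round it performs local perturbations $\vG^{(r)}+\vU$, where $\vU$ is a small cluster of terms around a site $i$, and evaluates $\mathsf{Test}_i[\vO,\vG^{(r)}+\vU;\vsigma]$ on quasi-local observables $\vO$. Each test is estimated from samples of $\vsigma$ using truncated, quasi-local versions of $\CL^\dagger_{\vG,i}[\vO]$. Since the tests are local, they can be run in parallel over a coloring of the lattice. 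A union bound over the $O(n)$ sites gives the $\log n$ factor, and Hoeffding bounds on the bounded local estimators give the $\eta^{-2}$ factor.

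\textbf{Completeness.} By Schatten duality and \cref{def:intro-local-metastability}, the true Hamiltonian satisfies
\[
|\mathsf{Test}_i[\vO,\vH;\vsigma]| \le \|\vO\|\,\|\CL_{\vH,i}[\vsigma]\|_1 \le \epsilon\|\vO\|
\]
for every site $i$. The Gibbs case had exact zero here. Now the test value at the truth is $O(\epsilon)$ rather than zero, so the acceptance threshold in each iteration is raised by an additive $O(\epsilon)$.

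\textbf{Soundness.} This is the main obstacle. I need to show that if $\vG$ is $\eta$-far from $\vH$ on some local term near site $i$, then some local test $\mathsf{Test}_i[\vO,\vG;\vsigma]$ is large, on the order of $\beta e^{-\poly(\beta)}\eta$, whenever $\vsigma$ is merely metastable. The Gibbs argument used \cref{rmk:stat-fisher-intro}: with $\vO=\vH-\vG$, the test equals the relative Fisher information $\FI_{\vG,i}(\vrho_{\vH}\|\vrho_{\vG})$, which is then lower bounded using approximate convexity and locality of $\FI$. For a metastable state I would instead work with $\vsigma$ directly, i.e. answer \cref{question:triangle} quantitatively. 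First, write $\CL_{\vG,i}-\CL_{\vH,i}$ as a derivative along the interpolation $\vH+s(\vG-\vH)$; quasi-locality of the Lindbladian coefficients in the Hamiltonian makes this difference linear in $\vG-\vH$ up to second-order terms. Second, lower bound the linear term evaluated on $\vsigma$. I would do this through a local Fisher-information or convexity inequality in the spirit of the paper's approximate convexity property, $\FI_{\vG,i}(\vsigma\|\vrho_{\vG}) \gtrsim \|\text{local part of }\vG-\vH\|^2 - O(\FI_{\vH,i}(\vsigma\|\vrho_{\vH}))$. The key point is that this bound must hold uniformly over states $\vsigma$, not only for $\vrho_{\vH}$. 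The error term is then controlled by metastability: $\epsilon$ bounds $\|\CL_{\vH,i}[\vsigma]\|_1$, and via \cref{rmk:efficiently-preparable}, or an equivalence between the trace-norm and Fisher forms of metastability, the Fisher information as well.

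\textbf{Threshold and termination.} The $O(\epsilon)$ additive error on the right side is why accuracy only reaches $\eta>\mathsf{z}_\beta\epsilon$. The constant $\mathsf{z}_\beta$ is the inverse of the $\beta$-dependent strong-convexity modulus, of order $e^{\poly(\beta)}/\beta$. Above this threshold, each round halves the error, as in the Gibbs case. After $O(\log \eta^{-1})$ rounds this gives the $\polylog\,\eta^{-1}$ factor and the stated time complexity $n\cdot\poly(\mathsf{N_{samp}})$. I expect the hardest part to be establishing this state-independent, local convexity-type lower bound. If the uniform version fails, the fallback is to prove it only for states that are approximately stationary under $\CL_{\vH}$, using the Fisher form of metastability.
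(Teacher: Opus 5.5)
Your scaffolding matches the paper: the same iterative algorithm as in the Gibbs case, approximate completeness $|\mathsf{Test}_i[\vO,\vH;\vsigma]|\le\epsilon\|\vO\|$, and a threshold coming from completeness error versus a convexity modulus. The soundness step, however, is the entire content of the theorem, and there it has a genuine gap; the route you sketch would not close it, for four reasons.

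(i) Your target inequality is phrased through $\FI_{\vG,i}(\vsigma\|\vrho_{\vG})$. For non-Gibbs $\vsigma$, the log-likelihood $\log\vsigma+\beta\vG$ is neither local nor known. In the Gibbs case, $\log\vrho_{\vH}-\log\vrho_{\vG}=\beta(\vG-\vH)$ turns the Fisher information into a linear combination of local tests; here that fails, so small outcomes of $\mathsf{Test}_i[\vP_\gamma,\vG;\vsigma]$ do not control this quantity. What the algorithm actually controls is $\mathsf{Test}_i[\vG-\vH,\vG;\vsigma]$, and convexity must be proved for that quantity directly.

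(ii) Your error term $\FI_{\vH,i}(\vsigma\|\vrho_{\vH})$ is not controlled by the hypothesis. Trace-norm metastability only gives $\FI_{\vH,i}(\vsigma\|\vrho_{\vH})\le\epsilon\,\|\log\vsigma+\beta\vH\|$ up to a constant shift. This global log-likelihood norm generally grows with $n$ for a state far from $\vrho_{\vH}$. \cref{rmk:efficiently-preparable} does not help, since it concerns specific time-averaged states, not arbitrary $\epsilon$-metastable ones. Because the term enters additively rather than multiplied by $\|\vG-\vH\|^2$, even a valid version would give a threshold of order $\sqrt{\epsilon\cdot\poly(n)}$ rather than $\mathsf{z}_\beta\epsilon$.

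(iii) Passing from $\vsigma$-weighted commutator norms to coefficients needs a faithfulness property of $\vsigma$, and this cannot be state-independent. For $\vsigma\approx|00\rangle\langle 00|$ and $\vU=u\vZ_1\vZ_2$, every $\|[\vA^a,\vU]\|_{\vsigma}$ with $\vA^a$ on site $1$ is nearly zero. For metastable states such a property must be derived.

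(iv) Linearizing $\CL_{\vG,i}$ around $\vH$ leaves a remainder of order $\eta_r^3$ with a $\poly(\beta)$ prefactor. Against an $e^{-\poly(\beta)}\eta_r^2$ main term this fails in the early rounds, where $\eta_r\sim 1$.

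The missing idea is approximate detailed balance, used to prove convexity of the test difference itself (\cref{lem:metastable-convexity}).
\begin{itemize}
\item Apply the gradient-flow identity \eqref{eq:test-expansion} at both $\vK$ and $\vK+\vU$. By \cref{lem:metastable-differentiated-entropy-gradient}, $\mathsf{Test}_i[\vU,\vK+\vU;\vsigma]-\mathsf{Test}_i[\vU,\vK;\vsigma]$ equals $-\beta\QF_i(\vU,\vK+\vU;\vsigma)$ up to terms of size $\|\vU\|^2$ times either Lieb--Robinson tails or multi-site $\ADB$ errors. No linearization is needed.
\item These errors are made system-size independent by truncating to a ball $\sA$ of radius $\sL$ and smoothing the marginal, $\vsigma^\lambda_{\sA}=(1-\lambda)\vsigma_{\sA}+\lambda\vrho_{\vH_{\CB[i,\sL]}}$. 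The log-likelihood ratio then has norm $O(|\sA|+\log\lambda^{-1})$.
\item H\"older plus trace-norm metastability then bounds the local Fisher information. \cref{lem:metastable-FI-to-ADB} converts it to single-site $\ADB$, and \cref{lem:metastable-multisite-to-single-site-adb} upgrades this to multi-site $\ADB$.
\item \cref{lem:uniform-truncated-bare-reconstruction}, which is uniform in $\vsigma$, yields $\vsigma$-weighted commutator norms. \cref{lem:metastab-faithfulness}, again via $\ADB$, converts these to coefficients.
\end{itemize}
Because every such error multiplies $\|\vU\|^2$ and need only be a small constant, $\epsilon$ enters solely through completeness, $|\mathsf{Test}_i[\mathbf{\Delta},\vH;\vsigma]|\le\epsilon\|\mathbf{\Delta}\|$. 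Set against an $e^{-\poly(\beta)}\eta_r^2$ main term, this is what yields the linear threshold $\eta>\mathsf{z}_\beta\epsilon$.

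Separately, your estimator is underspecified. $\CL^\dagger_{\vG,i}[\vO]$ lives on a ball of logarithmic radius, so naive local estimation costs $2^{\polylog(1/\eta)}$ in samples or time. The paper instead uses short-time evolution under the truncated Lindbladian with a Duhamel bound; that bound needs $\|\CL[\vsigma]\|_1\lesssim\eta_r$, which is supplied here by $\epsilon\lesssim\eta_r$.
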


\end{tcolorbox}

Combining \cref{thm:intro-metastable-learning} and \cref{rmk:efficiently-preparable}, one concludes that metastable states produced by the polynomial-time evolution of the Lindbladian \cite{chen2023efficient}, \textit{on any input state}, suffice to learn the parameters of the underlying Hamiltonian up to inverse-polynomial accuracy. In a sense, this presents a concrete answer to the dilemma posed in the introduction, on whether thermal samples produced efficiently ``by nature'' (modeled as a Gibbs sampler) suffice to learn the Hamiltonian.

\begin{remark}
    [The accuracy plateau $\eta_{\mathsf{thr}}(\epsilon)$] \cref{thm:intro-metastable-learning} holds up to an accuracy threshold. At first glance it seems this is unavoidable (since completeness itself is approximate, \cref{rmk:metastab-complete}), and similarly appears in related classical inquiries \cite{Jayakumar_2026}, but we are unaware of any lower bounds.
\end{remark}

\paragraph{An area law for metastable states in the thermodynamic limit.} The work of \cite{BCV25} proved that such metastable states approximately satisfy an array of structural properties akin to that of the Gibbs state, including a local Markov property \cite{brown2012quantum, kuwahara2024clustering, chen2025GibbsMarkov, bakshi2025dobrushinconditionquantummarkov} and a thermal area law of mutual information\footnote{Defined by $\mathsf{I}(\sA:\overline{\sA})_{\vsigma}:=\mathsf{S}(\vsigma_{\sA})+\mathsf{S}(\vsigma_{\overline{\sA}})-\mathsf{S}(\vsigma)$. The thermal area law for Gibbs states \cite{wolf2008area} stipulates that $\mathsf{I}(\sA:\overline{\sA})_{\vrho}$ scales with the area of the interface between $\sA$ and $\overline{\sA}=[n]\setminus \sA$.} \cite{wolf2008area, SDHS16, KAA21, LS23, DKS25}. The additive error to their approximate area law scaled with the metastability parameter $\epsilon$, exponentially in the subregion size $\sA\subseteq [n]$, and polynomially in the full system size $n$. 

As a simple corollary to the derived quasi-locality properties for the stationarity test, we are able to refine the statement in \cite{BCV25} to remove the system size dependence, thereby providing a meaningful statement in the thermodynamic limit as $n\rightarrow \infty$.

\begin{tcolorbox}[
    breakable,
    colback=white,
    colframe=black,
    boxrule=0.5pt,
    arc=2mm,
    grow sidewards by=2mm,  
    left=8pt,
    right=8pt,
    top=5pt,
    bottom=6pt,
    before skip=9pt,
    after skip=14pt
]

\begin{theorem}[An area law for metastable states]
\label{thm:intro-metastability-area-law}
Fix $\beta\geq 0$. Let \(\sA\subseteq[n]\) be a patch of qubits, $\vH$ a lattice Hamiltonian, and \(\vsigma\) a state satisfying $ \|\CL_{\vH,i}[\vsigma]\|_1\leq\epsilon$ for each $i\in \sA$. Then, the quantum mutual information between the region $\sA$ and its complement in $\vsigma$ satisfies an approximate thermal area law:
\begin{equation}
    \mathsf{I}(\sA:\overline{\sA})_{\vsigma}
    \leq
    2\beta\cdot  \|\partial_{\sA}\vH\|
    +e^{\mu|\sA|}\epsilon^\lambda,
    \label{eq:intro-metastability-area-law}
\end{equation}
where \(\mu,\lambda^{-1}\leq\poly(\beta)\), and $\partial_{\sA}\vH$ are the Hamiltonian terms incident on $\sA$ and $\overline{\sA}$.
\end{theorem}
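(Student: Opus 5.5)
Our plan is to follow the standard Wolf--Verstraete--Hastings--Cirac variational argument, with the Gibbs state replaced by a local Gibbs-type reference state, and to control the error through the quasi-local Fisher information of the stationarity test.

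\textbf{Reference decomposition.} Write $\vH = \vH_{\sA} + \vH_{\overline{\sA}} + \partial_{\sA}\vH$. Define the decoupled reference state $\vrho_0 := \vrho_{\vH_{\sA}}\otimes \vrho_{\vH_{\overline{\sA}}}$. Nonnegativity of relative entropy and the product structure give
\begin{equation}
\mathsf{I}(\sA:\overline{\sA})_{\vsigma}
= \mathsf{D}(\vsigma\|\vrho_0) - \mathsf{D}(\vsigma_{\sA}\|\vrho_{\vH_{\sA}}) - \mathsf{D}(\vsigma_{\overline{\sA}}\|\vrho_{\vH_{\overline{\sA}}})
\le \mathsf{D}(\vsigma\|\vrho_0) - \mathsf{D}(\vsigma_{\overline{\sA}}\|\vrho_{\vH_{\overline{\sA}}}).
\end{equation}
Expanding both relative entropies through their energy and entropy terms reduces this to the free-energy form
\begin{equation}
\mathsf{I}(\sA:\overline{\sA})_{\vsigma} \le \beta \tr[\vsigma\vH_{\sA}] - \mathsf{S}(\vsigma) + \mathsf{S}(\vsigma_{\overline{\sA}}) + \log Z_{\sA}.
\end{equation}
The area term $2\beta\|\partial_{\sA}\vH\|$ is meant to come from swapping $\vH_{\sA}$ for $\vH_{\sA} + \partial_{\sA}\vH$, at cost $\|\partial_{\sA}\vH\|$ in the energy and again in the log-partition function. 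After this swap, what remains is
\begin{equation}
\mathsf{D}\big(\vsigma \,\big\|\, \vrho_{\vH}^{\mathrm{cond}}\big),
\end{equation}
a conditional free energy of $\vsigma$ on $\sA$ given $\overline{\sA}$. Here $\vrho_{\vH}^{\mathrm{cond}}$ is the state obtained by ``resampling'' $\sA$ in $\vsigma$ according to the local Gibbs measure. For Gibbs states this conditional free energy vanishes exactly up to boundary terms. For $\vsigma$ we must show it is small whenever $\vsigma$ is locally metastable on $\sA$.

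\textbf{Main step: entropy gap versus Fisher information.} We plan to bound this conditional relative entropy by the local Fisher information $\sum_{i\in\sA}\FI_{\vH,i}(\vsigma\|\vrho_{\vH})$. The tool is a local (conditional) modified log-Sobolev, or simply spectral gap, inequality for the restricted Lindbladian $\sum_{i\in\sA}\CL_{\vH,i}$ acting on the finite patch $\sA$. Its constant is at worst $e^{-O(\mu|\sA|)}$ with $\mu = \poly(\beta)$, by crude bounds on a Lindbladian on $|\sA|$ qubits with bounded norms and a unique fixed point in the conditional sense. This is the source of the $e^{\mu|\sA|}$ factor.

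Two points need care here. First, the Lindbladian terms $\CL_{\vH,i}$ are only quasi-local. We will use the paper's quasi-locality of the stationarity test (earlier lemmas) to truncate $\CL_{\vH,i}$ to a radius-$r$ neighbourhood. The truncation error $e^{-\Omega(r/\poly(\beta))}$ does not depend on $n$, and this is exactly what removes the system-size dependence present in \cite{BCV25}. Second, the truncated dynamics does not fix $\vrho_{\vH}$ exactly, so the conditional fixed point has to be defined with respect to the truncated generator.

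\textbf{Converting trace-norm metastability into Fisher information.} The hypothesis is only $\|\CL_{\vH,i}[\vsigma]\|_1\le\epsilon$. Via the stationarity-test identity (\cref{rmk:stat-fisher-intro}) and Schatten duality,
\begin{equation}
\FI_{\vH,i}(\vsigma\|\vrho_{\vH}) = \beta\,\tr\!\big[\CL_{\vH,i}[\vsigma]\,(\log\vsigma - \log\vrho_{\vH})\big]\cdot(\text{const}),
\end{equation}
which is bounded by $\epsilon$ times the operator norm of a log-ratio. That operator norm is unbounded, so we will regularize. We replace $\vsigma$ by a depolarized state $(1-\delta)\vsigma + \delta\,\vI/2^n$, locally on a neighbourhood, and then use quasi-locality to replace the global log-ratio by a local one of norm $O(\log(1/\delta) + \beta r)$. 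Optimizing over $\delta$ and $r$ against $\epsilon$ yields a power $\epsilon^{\lambda}$ with $\lambda^{-1}\le\poly(\beta)$. Finally, combining this with the local MLSI bound from the previous step gives the estimate in \eqref{eq:intro-metastability-area-law}.

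\textbf{Expected obstacle.} The hardest step is the local entropy-versus-Fisher inequality on $\sA$ when the environment $\overline{\sA}$ is arbitrary and non-Gibbs. We would first try the cruder route: bound relative entropy by trace distance to the local conditional fixed point using continuity (Fannes-type bounds with dimension $2^{|\sA|}$). Then bound that trace distance by the inverse gap $e^{O(|\sA|)}$ times the norm of the generator acting on $\vsigma$. This sidesteps MLSI entirely, at the price of the constants $\mu$ and $\lambda$.
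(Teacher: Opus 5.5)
Your proposal has a genuine gap at its main step. You need a local (conditional) MLSI, or at least a quantitative spectral gap, for the restricted generator $\sum_{i\in\sA}\CL_{\vH,i}$ with constant $e^{-O(\mu|\sA|)}$. This does not follow from ``crude bounds'': bounded norm plus a unique conditional fixed point gives no quantitative gap. The proposal also supplies no mechanism for a local gap or MLSI of the \cite{chen2023efficient} generator with non-commuting $\vH$, and the paper notes MLSIs are notoriously hard for quantum spin systems.

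More decisively, the generator does not act on $\sA$ alone. After quasi-local truncation it acts on $\sB=\sB[\sA,r]$, and making the truncation error $\lesssim\epsilon$ forces $r\gtrsim\log(1/\epsilon)$. So $|\sB|$ can be as large as $|\sA|\cdot O(\log^{\sD}(1/\epsilon))$. Any constant you could realistically obtain for this generator is $e^{O(|\sB|)}$. That covers both an inverse gap on $\sB$ and the $L^2(\vrho_{\vH_{\sB}})\to L^1$ conversion needed to pass from $\|\CL[\vsigma]\|_1$ to a trace distance. Such a factor cannot be absorbed into $e^{\mu|\sA|}\epsilon^{\lambda}$ with $\lambda$ independent of $|\sA|$, and your Fannes-plus-inverse-gap fallback has exactly this defect.

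Two further problems:
\begin{itemize}
\item The ``resampled'' conditional Gibbs state has no canonical quantum meaning, and neither does swapping $\partial_{\sA}\vH$ into a scalar log-partition function, because $\partial_{\sA}\vH$ straddles the cut. So the target quantity of your main step is not well defined.
\item Depolarizing $\vsigma$ while keeping $\vrho_{\vH}$ as the reference does not make the log-ratio bounded, since $\log\vsigma$ is not quasi-local. The fix is to compare against $\vrho_{\vH_{\sB}}\otimes\vsigma_{\bar\sB}$. This state is annihilated by the truncated generator and agrees with $\vsigma$ off $\sB$, and after smoothing it gives a log-ratio of norm $\Delta_\delta=O(|\sB|+\beta\|\vH_{\sB}\|+\log(1/\delta))$.
\end{itemize}

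The paper avoids any gap or MLSI. With the reference above, H\"older gives a Fisher information at most $2|\sA|\epsilon\Delta_\delta$. Then \cref{lem:metastable-FI-to-ADB} converts this into $\max_a\mathsf{ADB}_{\vH_{\sB},a}[\vsigma^\delta]\leq\mathrm{poly}(|\sA|)\,\epsilon^{\lambda}$. The local Markov property \cref{lem:bcv-area-inputs} of \cite{BCV25} then provides a channel $\CR$ on $\sB$ that fixes $\vrho_{\vH_{\sB}}\otimes\vsigma_{\bar\sB}$ and recovers $\vsigma$ from erasure of $\sA$ with error $e^{\mu|\sA|}\mathsf{ADB}^{\lambda}$. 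Finally, \cref{lem:bcv-area-law-passive-reference} yields the area law, via free energy relative to $\vrho_{\vH_{\sB}}\otimes\vsigma_{\bar\sB}$, data processing through $\CR$, and Alicki--Fannes. The key is that $|\sB|$ enters only polynomially, through $\Delta_\delta$ and the $\log\dim$ term, while the exponential factor involves $|\sA|$ alone.

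Your architecture could be repaired along the same lines. The map $\mathcal{E}=\lim_{t\to\infty}e^{t\CL_{\sA}^{(\sB)}}$ is itself a valid recovery map for \cref{lem:bcv-area-law-passive-reference}:
\begin{itemize}
\item by \cref{lem:Dicirchlet}, its Heisenberg range commutes with every Pauli on $\sA$, so $\mathcal{E}\circ\CN_{\sA}=\mathcal{E}$;
\item it fixes $\vrho_{\vH_{\sB}}\otimes\vsigma_{\bar\sB}$.
\end{itemize}
The single missing ingredient is then a bound $\|\vsigma-\mathcal{E}[\vsigma]\|_1\leq e^{O(|\sA|)}\epsilon^{\lambda}$ with no exponential dependence on $|\sB|$. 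That is precisely what the ADB-based Markov property supplies in place of a gap.
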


\end{tcolorbox}

We remark that prior related work has largely focused on deriving area laws for the \textit{exact} steady states of Lindbladian evolution, and generally assumes the dynamics is fast-mixing \cite{BCLSP15, Kastoryano_2013, Firanko_2024}. From these stronger assumptions, they generally are able to derive stronger results.

\begin{remark}
    Following the counter-example in \cite[Figure 3]{BCV25}, we do not expect the dependence in the additive correction in \eqref{eq:intro-metastability-area-law} to be improved beyond sub-exponential in $|\sA|$. 
\end{remark}

\subsection{Discussion and related work}
\label{section:discussion}


\paragraph{Learning graphical models.} There is a rich literature to the question of learning Ising or graphical models; and we do not attempt to discuss its full history here. Some of the more well-known approaches include neighborhood enumeration or ``Markov blanket'' algorithms \cite{abbeel2012learningfactorgraphspolynomial,bresler2008reconstruction} (which guesses a superset of the neighborhood, and certifies conditional independence using local statistics), greedy ``influence'' algorithms \cite{bresler2015efficiently,hamilton2017information}, the pseudolikelihood method and other relaxations of conditional log-likelihood \cite{Besag75,Ravikumar2010, wu2019sparselogisticregressionlearns}, etc. Notably \cite{bresler2015efficiently, klivans2017learning} devised algorithms which run in quadratic time $\Tilde{O}(n^2)$ (with prefactors dependent on degree $\sd$, for 2-local systems), beating the ``neighborhood guessing'' cost of $O(n^\sd)$. 

We add that our approach based on the stationarity test bears close resemblance to the pseudo-likelihood method when instantiated with classical heat-bath dynamics. In fact, in recent work \cite{Jayakumar_2026} gave learning algorithms based on this method for Ising parameters and structure from (classical) metastable distributions (and without imposing lattice assumptions!); and serves as some conceptual inspiration. However, the relationship at technical level to our approach based on the Fisher information is less clear.


\paragraph{Quantum Hamiltonian learning from thermal states.} \cite{anshu2020sample} gave sample-efficient parameter learning algorithms by proving a strong convexity guarantee for the free energy, albeit the resulting algorithms were not time-efficient. 
\cite{Haah_2024} gave time-efficient and sample-optimal algorithms at high temperatures, based on an algorithmic cluster expansion. \cite{bakshi2023learningquantumhamiltonianstemperature, narayanan2024improvedalgorithmslearningquantum} gave time-and-sample efficient algorithms \textit{at all temperatures} by combining an ingenious polynomial approximation scheme for a certain detailed-balance condition, with an SDP hierarchy. As mentioned, closely related to our work is that of \cite{chen2025learning}, who gave sample-optimal (on lattices), time-efficient algorithms at all temperatures based on the ``accuracy bootstrapping'' idea  (\cref{fig:st-learn}), and a quasi-local identifiability equation resembling a detailed-balance condition. 

The stationarity test proposed here is closely related to proposals by \cite{Bairey_2019, evans2019scalablebayesianhamiltonianlearning} (based on unitary evolution) and  \cite{artymowicz2024efficienthamiltonianlearninggibbs, artymowicz2024certifiedalgorithmsquantumhamiltonian} (Energy-Entropy Balance inequalities \cite{Fawzi_2024}), with impressive numerical performance in certain cases. However, these works do not establish sample-complexity guarantees for non-commuting Hamiltonians.  A series of works, see e.g. \cite{onorati2023efficient, bluhm2026certifyinglearninglocalquantum, Baumgratz_2013, Cramer_2010}, consider learning Gibbs states to weaker error guarantees.

At the broader intersection of learning theory and many-body physics, there is also significant practical and theoretical interest in learning quantum systems directly from \textit{dynamics}, either closed-system (unitary) evolution, see e.g. \cite{Dutkiewicz_2024,  zubida2021optimalshorttimemeasurementshamiltonian, Bakshi_2024, huang2022learning, li2023heisenberglimitedhamiltonianlearninginteracting, Caro_2024}, or open-system (Lindbladian) evolution, see e.g. \cite{möbus2026robuststructurelearningklocal, heightman2026lindbladianlearningneuraldifferential, 
ivashkov2026ansatzfreelearninglindbladiandynamics, romanov2026learningarbitrarylindbladiansquantum, lewis2026learningstructureopenquantum} and references within. We further refer the reader to \cite{anshu2023surveycomplexitylearningquantum} for an excellent survey. 


\paragraph{Quantum Gibbs sampling and metastable Markov chains.} The study of metastable or slow-mixing Markov chains is a centerpiece to the statistical physics literature (see e.g. \cite{Bovier2000MetastabilityAL, Bovier2004MetastabilityIR, Bovier2005MetastabilityIR, BH16}) with theoretical and empirical applications to optimization and inference \cite{SKC94, Baldassi_2016, liu2024locally}.  Inspired by modern developments in quantum Gibbs sampling algorithms \cite{Rall2023thermalstate,chen2023quantum,chen2023efficient, ding2024efficient, jiang2024quantum}, recent work has begun to characterize the structure and applications of quantum metastable states, including bottleneck conditions \cite{Gamarnik2024SlowMO, rakovszky2024bottlenecksquantumchannelsfinite}, connections to error-correction \cite{BGL25}, 
tomography \cite{chen2026efficientshadowtomographythermal, chen2026notestrongquantummarkov, li2026unifiedframeworklocallystable}, etc. 

As mentioned, closely related to our results is a line of work on using Lindbladian evolution to prove new structural properties about Gibbs states, 
including correlation-decay and approximate Markov properties \cite{chen2025GibbsMarkov, bakshi2025dobrushinconditionquantummarkov, BC26, rosaruiz2026staticfeaturesmixingshort} and area laws \cite{BCV25}. In particular, \cite{BCV25} proved an approximate Markov property for metastable states, and reasoned said property rigorously captured a local minima condition for the free energy. Finally, we add that there is a rich literature on the study of fast-mixing quantum Gibbs sampling algorithms, see e.g. \cite{kastoryano2016commuting, capel2021modified, rouze2024efficient, rouze2024optimal, Bardet2023Rapid, DLLZ24, bakshi2025dobrushinconditionquantummarkov, BC26, B26all, basso_quantum_2025} and references within. 


\paragraph{Open questions.} We conclude by highlighting three intriguing directions. First, a natural question is whether one can achieve ``the best of both worlds'' and learn Hamiltonian structure from metastable samples, akin to the classical work of \cite{Jayakumar_2026}. While this task should fit within the framework of this paper, we do not attempt it here for conciseness. More broadly, can we find other applications of quantum metastable states to inference and optimization? See e.g. \cite{liu2024locally} for a discussion on the independent set and community detection problems.

Next, can we decrease the processing runtime of the structure learning algorithm for 2-local systems to $\Tilde{O}(n^2)$, while maintaining sample-optimality, as in  \cite{bresler2015efficiently, klivans2017learning}? It is conceivable that replacing our naive ``neighborhood guessing'' approach (of runtime $n^{\poly(\beta)}$, omitting factors of $\sd$ in the exponent) by convex optimization methods as in \cite{lewis2026learningstructureopenquantum, klivans2017learning} could provide such a route.

Finally, it remains a remarkable open problem to prove rapid-mixing in 1D non-commuting quantum systems at all temperatures; see e.g. \cite{kastoryano2016commuting, BC26, Bardet_2021} for related efforts. In part, this is since modified log-Sobolev inequalities (which capture a scalar relationship between the Fisher information, and the quantum relative entropy) \cite{temme2010chi, kastoryano2013quantum, Bardet_2021, capel2021modified, Bardet2024, BCL24} are notoriously hard to prove for quantum spin systems. The derivation of the ``approximate convexity property'' for the quantum Fisher information (see equation \eqref{eq:approx-convexity-overview} and \cref{section:fisher-identifiability}) was originally an attempt at this question, and we hope it could inspire further attempts.

\subsection{Technical overview}
\label{section:overview}

We highlight four key new ideas behind our proof approach, which roughly justify the correctness, time-complexity, the sample-complexity (respectively) of our structure learning algorithm, and then the application to metastable states. We refer the reader to \cref{section:prelim} for relevant preliminaries on weighted inner products, operator Fourier transforms (OFTs), the \cite{chen2023efficient} Lindbladian, and Lieb-Robinson bounds.  

\paragraph{The approximate convexity of the Fisher information.} The starting point to our arguments is an explicit formula for the quantum Fisher information of the \cite{chen2023efficient} Lindbladian derived in \cite{BCV25}, which captures the rate-of-decay of the relative entropy in terms of a carefully weighted norm\footnote{Here, $\rd\mu_{\omega, t, s}$ is a measure decaying in frequency and time, and $s$ is an interpolation parameter, see \cref{def:relative-fisher-information}.} of certain non-commutative spatial gradients of the log-likelihood ratio. Fixing our attention to a single-site Pauli operator $\vA^a$:
\begin{align}
    \FI_{\vG,a}\big(\vrho_{\vH}\big|\big| \vrho_{\vG}\big)= \under{\beta\cdot  \tr[\CL_{\vG, a}[\vrho_{\vH}]\cdot (\vH-\vG)]  }{\mathsf{time-derivative}} = \beta^2\cdot \iiint \under{\|
    [\hat{\vA}_{\vG}^{a}(\omega,t), \vH-\vG]\|_{\vrho_{\vH},s}^{2} }{\textsf{spatial-derivative}} \rd\mu_{\omega, t, s}.\label{eq:overview-fisher-eq}
\end{align}
Indeed, the ``moves'' or updates of the Lindbladian $\CL_{\vG, a}$ are the operator Fourier transforms $\hat{\vA}_{\vG}^{a}(\omega,t)$ of $\vA^a$ (\cref{def:oft}), and thus commutators with $\hat{\vA}_{\vG}^{a}(\omega,t)$ play the role of spatial ``gradients'' (see \cite{carlen2024dynamics} for an excellent survey of this theory of quantum optimal transport).

Ultimately, we would like to relate this Fisher information between Gibbs states (which we can efficiently measure and bound, see \cref{section:fisher-identifiability}) to the distance between $\vH$ and $\vG$. A natural, local way to quantify this distance is captured by commutators $[\vA^a, \vG-\vH]$ with the proper (unfiltered) local Pauli operator $\vA^a$ \cite{anshu2020sample, chen2025learning}; and thus the task reduces to relating the norms of the commutators with $\hat{\vA}_{\vG}^{a}(\omega,t)$ \eqref{eq:overview-fisher-eq}, to that with $\vA^a$. We prove such an \textit{approximate convexity} condition in \cref{lem:uniform-truncated-bare-reconstruction}, of the form:
\begin{align}
  \beta^2\cdot  \| [\vA, \vH-\vG]\|_{\vrho_{\vH}}^{2} \lesssim  \FI_{\vG,a}\big(\vrho_{\vH}\big|\big| \vrho_{\vG}\big) \quad + \quad \text{error},
   \label{eq:approx-convexity-overview}
\end{align}
by building on certain Dirichlet form comparison techniques developed in the quantum fast-mixing literature \cite{BC26, B26all, basso2026spectralgapdaviesgenerator} and methods to regularize OFTs \cite{chen2025GibbsMarkov, chen2025learning}.\footnote{To the expert reader, we remark that this step is a comparison between the Fisher information of the \cite{chen2023efficient} Lindbladian, and the Dirichlet form of the log-likelihood under the `pseudo-Lindbladian' generator of \cite{BC26}.} The above discussion justifies the intuition that if the stationarity test $\approx 0$, then $\vG\approx \vH$.

\paragraph{Lipschitz continuity properties for the \cite{chen2023efficient} Lindbladian.} We next sketch a means to iteratively refine the guess Hamiltonian to increasing accuracy $\eta_1, \eta_2, \cdots, \eta_r=2^{-r}$ (refer back to \cref{fig:st-learn}). 
Following \cite{chen2025learning}, in each iteration the guess $\vG^{(1)}, \vG^{(2)}, \cdots, \vG^{(r)}$ is exhaustively locally perturbed $\vG^{(r)}+ \eta_r \vU$, and (roughly speaking) we pick the candidate that minimizes the stationarity test on all local observables.  For simplicity, in the parameter learning setting of \cref{section:gibbs-alg}, assume for now that $\vU$ lies in some finite radius $\ell$ around a qubit $i\in [n]$ in the lattice. 

By design, one such guess $\vG^{(r)}+ \eta_r  \vU^*$ will be closest to $\vH$ within the radius $\ell$, and we require a means to analyze the performance of the stationarity test under said perturbation. The following Lipschitz continuity property for the \cite{chen2023efficient} Lindbladian answers this question (\cref{lemma:lipschitz}).
 \begin{align}
        \|\CL^\dagger_{\vH, a}-\CL^\dagger_{\vG, a}\|_{\infty-\infty} \lesssim  \sum_{\gamma\in \Gamma} |g_\gamma-h_\gamma|\cdot e^{-\Omega(\mathsf{dist}(a, \gamma))} \label{eq:lipschitz-overview}
    \end{align}
Crucially, if $\vH$ and $\vG^{(r)}+ \eta_r  \vU^*$ agree within the radius $\ell$ (i.e. we guessed correctly, locally) and $\vH\approx \vG^{(r)}$ outside of $\ell$ to some coarse error $\eta_r$, then the error\footnote{The convergence of \eqref{eq:lipschitz-overview} only holds for lattice Hamiltonians, which is why we only enumerate over perturbations $\vG^{(r)}+ \eta_r  \vU$ which preserve the property that the interaction graph admits polynomial growth. Furthermore, a thresholding step implies we only accept guesses $\vG^{(r)}$ which are a subgraph of $\vH$.} of replacing $\vG$ with $\vH$ in the stationarity test decays with both $\eta_r$ and $\ell$! This will enable us to choose the enumeration radius $\ell\sim \poly(\beta)$ to be independent of the target accuracy, which is the key step to ensure time-efficiency. 

\begin{remark}
    By combining \eqref{eq:lipschitz-overview} and \eqref{eq:approx-convexity-overview}, we observe (informally) that we have learned the Hamiltonian if and only if $\vrho_{\vH}$ is metastable w.r.t $\vG$: 
    \begin{equation}
     \vG\approx \vH \quad \iff \quad \CL_{\vG}[\vrho_{\vH}]\approx \CL_{\vH}[\vrho_{\vH}] = 0
    \end{equation}
\end{remark}


\paragraph{A sample-optimal neighborhood guessing algorithm.} It follows from Lieb-Robinson bounds that to evaluate the stationarity test on some local observable $\vO$, to error $\eta$, it suffices to truncate the guess Hamiltonian $\vG$ to some radius $\sR\sim \log \eta^{-1}$ around the site $i\in [n]$. Naively, if one were to perform state tomography to compute such quasi-local expectations $\CL_{\vG, i}^\dagger[\vO]$, we would require a number of samples exponential in the number of qubits in the ball \cite{kuwahara2020clustering, huang2020predicting, aaronson2018shadowtomographyquantumstates}, resulting in a quasi-polynomial $\sim 2^{\polylog \eta^{-1}}$ sample-complexity (factors of success probability omitted). 

At least in the parameter learning setting of \cref{section:gibbs-alg}, there is a simple solution to this issue. We can simply run short-time Lindbladian dynamics (defined by the truncation $\vG_{\sR}$) in parallel, across non-overlapping subsets of radius $\sR$, to get optimal sample complexity $\sim \eta^{-2} \polylog \eta^{-1}$ using a median-of-means estimator (see \cref{section:H-complexity}). Unfortunately, by itself this strategy does not give optimal results in the structure learning setting. This is since we need to enumerate over $n^{\poly(\ell, \sd, \sk)}$ possible choices of each neighborhood guess, resulting in a total of $\poly(n, \eta^{-1})$ samples. 

To get optimal sample complexity for the structure learning problem, we instead combine the two ideas above by revisiting the continuity property for the \cite{chen2023efficient} Lindbladian. Let $\mathsf{L}\gg \ell$ be a (constant) length-scale, and for any perturbation $\eta\vU$ let $\vG_\mathsf{L}, (\vG+\eta\vU)_\mathsf{L}$ denote truncations\footnote{As a technical point, this latter truncation is relative to the interaction graph of the perturbed Hamiltonian.} to radius $\mathsf{L}$. We approximate the stationarity test under the perturbation $\vG+\eta\vU$ with two terms:
\begin{align}
    \mathsf{Test}_{i}[\vO,\vG+\eta\vU;\vrho_{\vH}] \approx \under{\mathsf{Test}_{i}[\vO,\vG_{\sR};\vrho_{\vH}]}{\textsf{a ``baseline'' estimate}} +  \under{\mathsf{Test}_{i}[\vO,(\vG+\eta\vU)_\mathsf{L};\vrho_{\vH}] - \mathsf{Test}_{i}[\vO,\vG_{\mathsf{L}};\vrho_{\vH}]}{\textsf{a correction at radius } \mathsf{L}, \textsf{ to error decaying with $\eta$ and $\mathsf{L}$}}\label{eq:intro-baseline+corr}
\end{align}
\begin{enumerate}
    \item a ``baseline'' estimate under the current guess $\vG$, by truncating the Lindbladian to radius $\sR\sim \log \eta^{-1}$, and measuring the rate-of-change of observables by short-time evolution (similar to the parameter learning setting).
    \item a perturbation-dependent correction, computed using Pauli shadow tomography \cite{huang2020predicting} on all constant-sized RDMs (determined by the radius $\mathsf{L}\sim \poly(\beta)\cdot \ell$) simultaneously.
\end{enumerate}

We refer the reader to \cref{fig:st-learn,fig:optimal-sample} for informal illustrations, and \cref{lemma:parallel-stat-meas} for further details.

\begin{figure}[t]
\centering
\captionsetup[subfigure]{
    justification=centering,
    singlelinecheck=false
}
\begin{subfigure}[t]{0.46\textwidth}
  \centering
  \includegraphics[width=\linewidth]{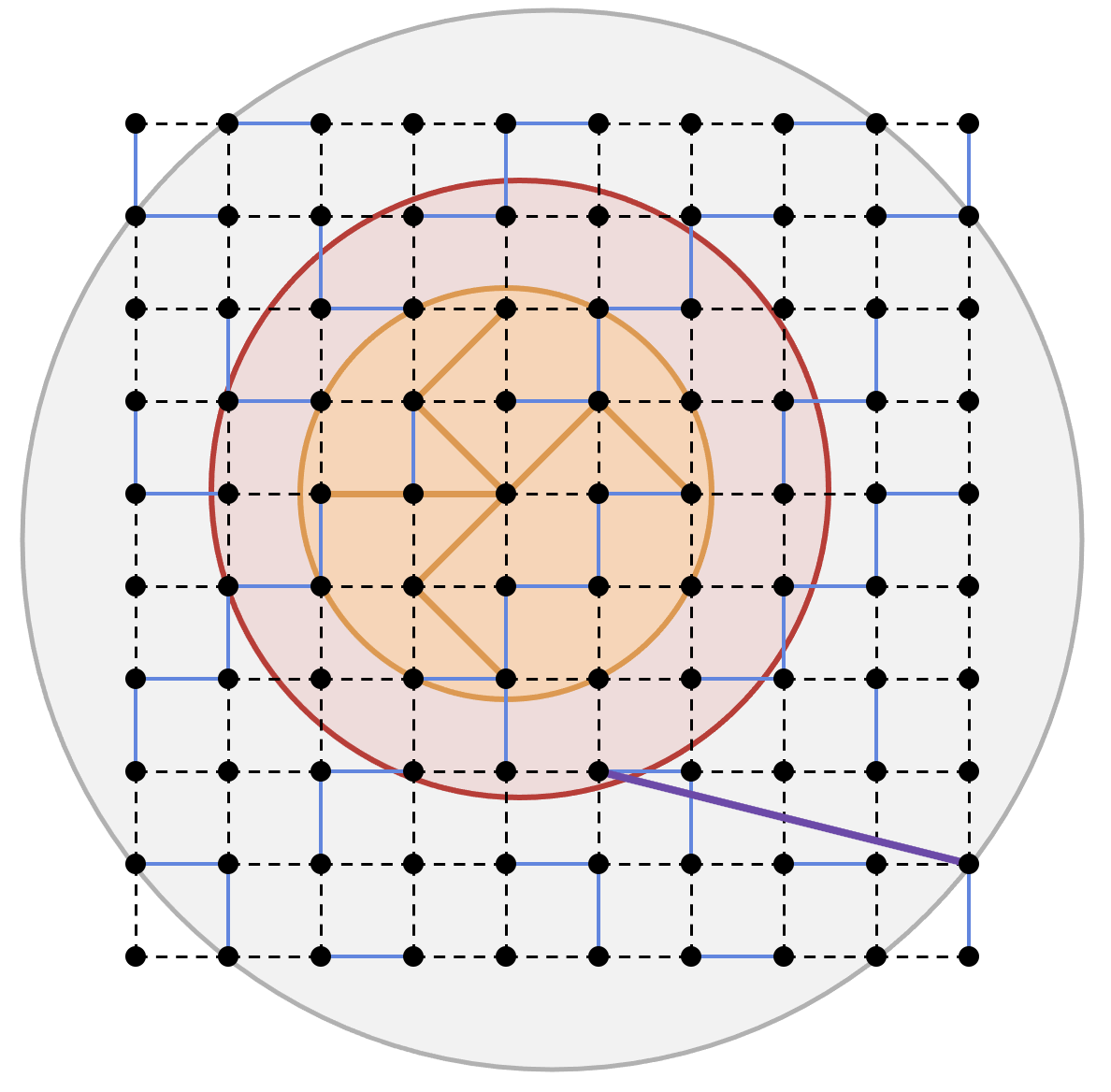}
  \caption{Each $\mathsf{Test}$ is approximated by a quasi-local  baseline and a strictly local correction.}
  \label{fig:optimal-sample-1}
\end{subfigure}
\begin{subfigure}[t]{0.42\textwidth}
  \centering
  \includegraphics[width=1.0\linewidth]{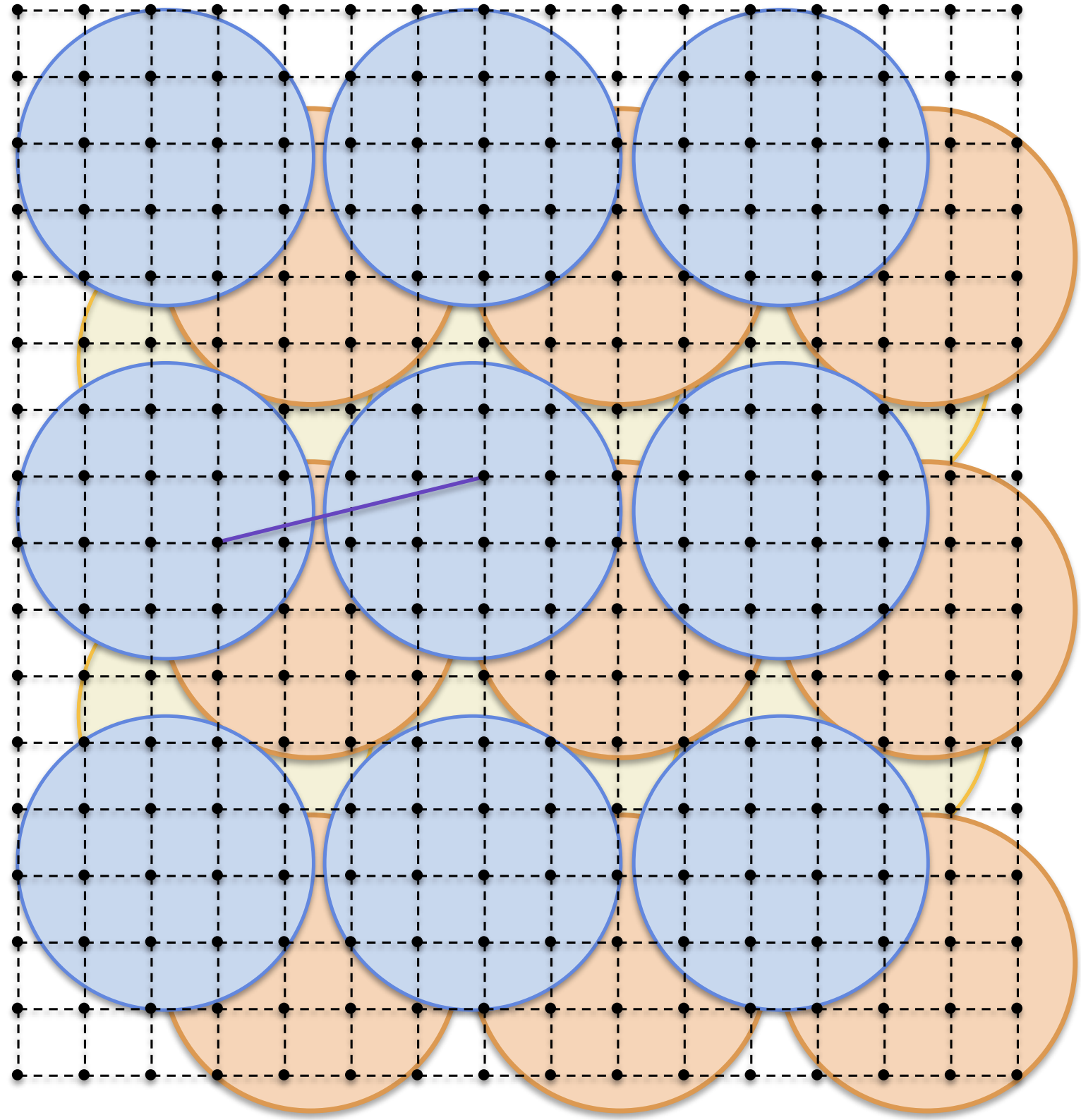}
  \caption{A coloring schedule measures the baseline estimates in parallel.}
  \label{fig:optimal-sample-2}
\end{subfigure}

\caption{The sample-optimal neighborhood guessing algorithm. \\(a) The $\mathsf{Test}$ associated to a locally perturbed Hamiltonian $\vG+\eta\vU$ is computed by first estimating the test under the current Hamiltonian $\vG$, truncated to radius $\sR\sim \log\eta^{-1}$ (gray). This `baseline' is refined by local measurements at radius $\sL$ (red) around the perturbation \eqref{eq:intro-baseline+corr}. \\ (b) The sites are partitioned into disjoint `colors' (only 3 pictured), each with non-overlapping radius $\sR$ balls, and performed in parallel. A randomization scheme is used to address observables (purple) crossing balls of the same color (see \cref{alg:struct-meas-baseline}).} 
\label{fig:optimal-sample}
\end{figure}

\paragraph{Approximate detailed-balance and generalizations to metastable states.} Conceptually, there are two central challenges to generalizing our approach to parameter learning to samples from a metastable state $\vsigma$ (w.r.t. $\vH$). First, $\vsigma$ is not necessarily a Gibbs state, and thereby the log-likelihood $\log \vsigma$ need not be local -- in particular, we cannot efficiently measure the associated Fisher information $\FI_{\vG,a}\big(\vsigma\big|\big| \vrho_{\vG}\big)$. What is more, even if we could compute it, metastable states only satisfy a certain rather rigid form of \textit{approximate detailed-balance} (see \cref{def:metastable-adb}, from \cite{BCV25}); and thus it is not clear that the convexity arguments for the Fisher information apply. 

Accordingly, we are forced to slightly re-route the proof of correctness, and instead directly prove convexity of the stationarity test $\mathsf{Test}_i[\vH-\vG, \vG;\vsigma]$ when measured on metastable states. Crucially, we leverage said approximate detailed balance condition to prove an analog of \eqref{eq:approx-convexity-overview}, in that the time derivative of $\vH-\vG$ when $\vsigma$ is evolved under $\CL_{\vG, a}$ is approximately a gradient-square:
\begin{align}
   \mathsf{Test}_i[\vH-\vG, \vG;\vsigma] = \tr[\CL_{\vG, a}[\vsigma]\cdot (\vH-\vG)]  \approx \beta\cdot \iiint \|
    [\hat{\vA}_{\vG}^{a}(\omega,t), \vH-\vG]\|_{\vsigma,s}^{2} \rd\mu_{\omega, t, s},
\end{align}
now weighted under $\vsigma$. We note that this step (\cref{lem:metastable-differentiated-entropy-gradient}) is quite technically involved, but ultimately relies on combining tools from the study of Markov properties in metastable states \cite[Section D]{BCV25} with our new Lipschitz continuity properties  \eqref{eq:lipschitz-overview} and \eqref{eq:intro-baseline+corr} for the \cite{chen2023efficient} Lindbladian. We refer the reader to the correctness argument in \cref{section:metastab-correct} for further details.

\section*{Organization}

In \cref{section:prelim}, we present the relevant preliminaries. In \cref{section:fisher-identifiability}, we present the connection between the stationarity test and the quantum Fisher information. As a warm-up, in \cref{section:gibbs-alg} we give a parameter learning algorithm from Gibbs states. In \cref{section:structure-learning} we give our structure learning algorithm from Gibbs states (\cref{theorem:structure-learning-intro}), and \cref{section:metastable-learning} our parameter learning algorithm from metastable states (\cref{thm:intro-metastable-learning}). In \cref{section:locality} we present the relevant locality estimates for \cite{chen2023efficient}, and in \cref{section:area-law} the refinements to the area law (\cref{thm:intro-metastability-area-law}).

\section*{Acknowledgements}

I thank Chi-Fang Chen, Umesh Vazirani, Yunchao Liu, Reza Gheissari, Tim M\"{o}bus, Daniel Stilck Fran\c{c}a and Cambyse Rouz\'e for collaborations on related work which inspired this inquiry \cite{BC26, BCV25, BGL25, möbus2026robuststructurelearningklocal}. Anurag Anshu, Sitan Chen, Sabee Grewal, Laura Lewis, Quynh Nguyen, Rolando Somma, for conversations on the literature \cite{anshu2020sample,
lewis2026learningstructureopenquantum} and on different approaches to this problem. Ryan Babbush, Tomas Jochym-O'Connor, Robin Kothari, Thomas Schuster for valuable related discussions. Finally, Sidhanth Mohanty, Amit Rajaraman, and David Wu for answering numerous questions on their related work \cite{ liu2024locally}.

I thank ChatGPT 5.5 - 6 and Gemini Argon for assistance in proof writing, verification, and rendering diagrams. In particular, ChatGPT 5.6 shaved a $\log n$ factor in the classical shadows argument of \cref{theorem:pauli-shadows-after-channels} and simplified the proofs of the Lipschitz continuity arguments of \cref{section:locality,lem:metastable-convexity}.\\

\noindent \textbf{Concurrent work.} Concurrently and independently, Bingrun Wang, Qi Ye, and Chi-Fang Chen derived similar results on quantum metastable states to our \cref{section:metastab-intro}. I thank the authors for discussing their work and coordinating an arxiv posting. Shukla, Jayakumar and Lokhov \cite{shukla2026operatorscorematchinglearning} and Dong and Leng \cite{dong2026quantumscorematchingapplications} introduced quantum analogs of ``score matching'' and used them to develop algorithms for Hamiltonian parameter learning. They report numerical success on specific instances at low-temperatures, and \cite{dong2026quantumscorematchingapplications} provides rigorous sample complexity guarantees in the high temperature regime.

\section{Preliminaries}
\label{section:prelim}

\noindent \textbf{Notation.} We consider quantum spin systems of finite size $n$, each of constant local dimension $2^\sq$. Each such spin $i\in [n]:=\{1, 2, \cdots, n\}$ will be referred to as a \textit{site}. The associated Hilbert space is denoted as $\calH = \otimes_{i=1}^{n} (\BC^{2^\sq})$. The space of linear operators on $\calH$ is referred to as $\CB(\calH)$; we use boldface letters to refer to operators. A superoperator $\CN:\CB(\calH)\rightarrow \CB(\calH)$ is a map between linear operators; we use in curly font~$\CL$ with matrix arguments in square brackets to refer to $\CL[\vrho]$. We refer to $[\cdot, \cdot]$ and $\{\cdot, \cdot\}$ as the commutator and anti-commutator respectively.

We denote the set of single-site $\sq$-weight Pauli strings, supported on the $i$th $2^\sq$ dimensional qudit as:
\begin{align}
    \CS_i &:= ( \{\vI_2,\vX,\vY,\vZ\}^{\otimes \sq}\} )\otimes \vI_{2^\sq}^{n-1},
\end{align}
and we will never care how the Pauli strings are actually embedded within a site. 

Sans serif is used to refer to subsets of sites $\sA\subseteq [n]$. For any subset $\sA\subseteq [n]$, we define the set of single-site Pauli strings, and multi-site Pauli strings:
\begin{align}
    \CS^1_\sA &:= \cup_{i\in \sA} \CS_i,\qquad \CS_\sA := ( \{\vI_2,\vX,\vY,\vZ\}^{\otimes \sq\labs{\sA}})\otimes \vI_{2^\sq}^{n-\labs{\sA}}.
\end{align}

We use $\CO(\cdot),\Omega (\cdot)$ to denote asymptotic upper and lower bounds; and in an abuse of notation $ a \lesssim b$ to mean $a \le c b$ for a universal constant $c$.  

\subsection{Hamiltonians and Lieb-Robinson bounds}

We fix the locality $\sk$ and local dimension $2^{\sq}$. On a set of $n$ qubits, a Hamiltonian is specified on a family $\Gamma\subseteq \CS_{[n]}^{\sk}$ of distinct, nonidentity Pauli strings, each supported on at most $\sk$ sites:
\begin{equation}
 \vH=\sum_{\gamma\in\Gamma}h_\gamma\vP_\gamma \in \CB(\calH),
 \qquad h_\gamma\in[-1,1].
\end{equation}
In a slight abuse of notation we write $\gamma\subseteq\sA$ when $\supp(\vP_\gamma)\subseteq\sA$, and $\gamma\ni i$ when $i\in\supp(\vP_\gamma)$.

\begin{definition}
    Given such a Hamiltonian $\vH=\sum_{\gamma\in\Gamma}h_\gamma\vP_\gamma$, we define the \textit{interaction graph} $G_{\vH}=(\Gamma, E)$ with vertices corresponding to the set $\Gamma$, and edges between $\gamma_1$ and $\gamma_2$ iff they have overlapping supports.
\end{definition}

When $\vH$ is otherwise implicit, $\dist(\gamma_1, \gamma_2)$ refers to the distance between clauses on said graph, and for $i\in [n]$,
$\dist_\Gamma(i,\gamma):=\min_{\gamma_0\ni i}\dist_\Gamma(\gamma_0,\gamma)$. For $r\geq0$ define the \emph{clause balls} and \emph{site balls}
\begin{align}
x\in[n]\cup\Gamma:\quad \CB_{\vH}[x,r]&:=\{\gamma\in\Gamma:\dist_\Gamma(x,\gamma)\leq r\},\quad 
 \sB_{\vH}[x,r]:=\supp(\CB_{\vH}[x,r])\subseteq [n].
 \label{eq:site-balls}
\end{align}
We similarly extend the definitions above to subsets $\sR\subseteq [n]\cup \Gamma$. When the Hamiltonian clauses $\Gamma$ are implicit, the subscript is omitted. We are now in a position to define the families of lattice Hamiltonians we consider. 

\begin{definition}[$(\sd,\sD,\scgr_\sgr)$ Hamiltonians]\label{def:polynomial-growth-hamiltonian}
For $\sd\geq1$, $\sD\geq1$, and $\scgr_\sgr \geq1$, a Hamiltonian $\vH$ is a \emph{$(\sd,\sD,\scgr_\sgr)$ Hamiltonian} if its interaction graph $G_{\vH}$ has maximum degree at most $\sd$ (including self-edges) and admits polynomial volume growth:
\begin{equation}
 |\CB_{\vH}[x,r]|\leq \scgr_\sgr\cdot  r^\sD
 \quad\text{for every }x\in[n]\cup\Gamma\text{ and every }r\geq1.
 \label{eq:polynomial-growth}
\end{equation}
\end{definition}

Throughout this paper, we will treat the Hamiltonian parameters $\sd, \sD, \mathsf{c_{gr}}, \sk, \sq$ as fixed constants and absorb them into ``big $O$'' notation. 

For any subset $\sR\subseteq [n]\cup \Gamma$ and distance $r\geq 1$ we denote the restriction of the Hamiltonian to the clauses around $\sR$ as $\vH_{\CB[\sR, r]}:=\sum_{\gamma\in\CB[\sR,r]}h_\gamma\vP_\gamma$.

\subsubsection{Lieb-Robinson bounds}

Lieb-Robinson bounds \cite{Lieb1972, hastings2006spectral, nachtergaele2006lieb, nachtergaele2018lieb} quantify the speed in which information propagates throughout a quantum spin system under time evolution. The intention is to understand how the locality of a local observable $\vX$ grows as it is time-evolved under a local Hamiltonian $\vH$. We require the following formulation, which expresses $\vX(t)$ as a sum over ``shells'':

\begin{lemma}[Lieb-Robinson bound]
\label{lem:lieb-robinson-commutator}
Let $\vH$ be a Hamiltonian of degree $\sd$. Then, for any observable $\vX$ supported on $\sA\subset[n]$, and time $t\in \BR$, the time-evolution $\vX(t)$ admits a decomposition into a telescoping sum:
\begin{align}
    \vX(t)
    &=
    \sum_{\ell\geq 0}
        \vX^{(\ell)}(t)-\vX^{(\ell-1)}(t), \quad \text{where}\quad \vX^{(\ell)}(t):=
    e^{it \vH_{\CB[\sA, \ell]} }\vX e^{-i t\vH_{\CB[\sA, \ell]}}
\end{align}
is the time-evolution under the truncated Hamiltonian $\vH_{\CB[\sA, \ell]}$. Furthermore, there exist constants $\mathsf{c_{LR}},\mu,v>0$ as a function of $\sq, \sd, \sk$, such that
\begin{equation}
  \forall t\in \BR, \quad \ell\geq 1,\quad  \theta\in (0, 1):\quad   \big\|
        \vX^{(\ell)}(t)-\vX^{(\ell-1)}(t)
    \big\| \leq \mathsf{c_{LR}}\,|\sA|\,\|\vX\| \cdot e^{\theta(v|t|-\mu \ell)}.
    \label{eq:LR-annulus-decomposition}
\end{equation}
\end{lemma}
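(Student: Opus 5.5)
The telescoping identity is immediate. Since $\vX$ is supported on $\sA$, set $\vX^{(-1)}(t):=\vX$; note $\vH_{\CB[\sA,0]}$ already contains every clause touching $\sA$. For a finite system, $\CB[\sA,\ell]=\Gamma$ once $\ell$ exceeds the diameter, so $\vX^{(\ell)}(t)=\vX(t)$ for all large $\ell$. The sum $\sum_{\ell\geq0}(\vX^{(\ell)}-\vX^{(\ell-1)})$ is therefore finite and telescopes to $\vX(t)$.

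The content is the annulus bound \eqref{eq:LR-annulus-decomposition}, which I would prove by the standard interaction-picture argument. Write $\vH_\ell:=\vH_{\CB[\sA,\ell]}$ and $\vV_\ell:=\vH_\ell-\vH_{\ell-1}$, the clauses at graph distance exactly $\ell$ from $\sA$. The Duhamel formula gives
\begin{equation}
\vX^{(\ell)}(t)-\vX^{(\ell-1)}(t)=i\int_0^t e^{i(t-s)\vH_{\ell}}\,\big[\vV_\ell,\vX^{(\ell-1)}(s)\big]\,e^{-i(t-s)\vH_{\ell}}\,\rd s .
\end{equation}
Conjugation by a unitary preserves norms, so
\begin{equation}
\|\vX^{(\ell)}(t)-\vX^{(\ell-1)}(t)\|\leq \int_0^{|t|}\sum_{\gamma:\dist(\sA,\gamma)=\ell}\|[\vP_\gamma,\vX^{(\ell-1)}(s)]\|\,\rd s ,
\end{equation}
using $|h_\gamma|\leq1$.

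It remains to bound each commutator $\|[\vP_\gamma,\vX^{(\ell-1)}(s)]\|$, where $\gamma$ is at distance $\ell$ from $\sA$ and the evolution is under $\vH_{\ell-1}$. For this I would use the usual Lieb–Robinson iteration on the interaction graph (as in \cite{hastings2006spectral,nachtergaele2006lieb}). Iterating the commutator differential inequality bounds the commutator by
\begin{equation}
2\|\vX\|\sum_{m\geq \ell}\frac{(2|s|)^m}{m!}\,N_m ,
\end{equation}
where $N_m$ counts paths of length $m$ in $G_{\vH}$ from a clause touching $\sA$ to $\gamma$. Degree at most $\sd$ gives $N_m\leq |\sA|\,\sd^{\,m+1}$. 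The series is dominated by $|\sA|\,\|\vX\|\,e^{2\sd|s|}\,(2\sd|s|)^\ell/\ell!$. The number of clauses $\gamma$ in the shell is at most $|\sA|\sd^{\ell+1}$. Integrating in $s$ then gives a bound of the form $C|\sA|\,\|\vX\|\,e^{v_0|t|-\mu_0\ell}$, with $v_0,\mu_0,C$ depending only on $\sd,\sk,\sq$. Here I use $(c|t|)^\ell/\ell!\leq e^{c|t|}$ and absorb the $\sd^{2\ell}$ shell and path counts by slightly enlarging $v_0$ relative to $\mu_0$. Concretely, $\frac{(2\sd^2|t|)^{\ell}}{\ell!}\leq e^{2e\sd^2|t|-\ell}$.

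This yields \eqref{eq:LR-annulus-decomposition} for $\theta=1$. For $\theta\in(0,1)$, interpolate with the trivial bound $\|\vX^{(\ell)}-\vX^{(\ell-1)}\|\leq 2\|\vX\|$ via $\min(a,b)\leq a^\theta b^{1-\theta}$. This gives $\leq 2\,C^{\theta}|\sA|^\theta\|\vX\|\,e^{\theta(v|t|-\mu\ell)}$, and the claim follows with $\mathsf{c_{LR}}=2\max(C,1)$, since $|\sA|^\theta\leq|\sA|$.

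The only delicate step is the middle one: making sure the combinatorial path and shell counts depend only on the degree $\sd$, not on $n$ or polynomial growth, so that $\mu,v$ are uniform. I would handle this by counting walks on the clause graph, where each clause has at most $\sd$ neighbours.
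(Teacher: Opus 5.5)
Your argument is the standard Duhamel-plus-Lieb--Robinson-iteration proof. That is exactly what the paper relies on: it states this lemma without proof, citing the Lieb--Robinson literature. The Duhamel identity, the restriction $m\geq \ell$ on chain length, the tail bound $x^\ell/\ell!\leq e^{ex-\ell}$, and the $\theta$-interpolation against the trivial bound $2\|\vX\|$ are all correct. Two bookkeeping steps need fixing, however.

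First, with your convention $\vX^{(-1)}(t):=\vX$, the sum telescopes to $\vX(t)-\vX$, not to $\vX(t)$. You need $\vX^{(-1)}:=0$, which is the convention the paper makes explicit in \cref{lemma:oft-loc}. This change is harmless, since the bound is only claimed for $\ell\geq 1$.

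Second, as written your counts give a factor $|\sA|^2$, not $|\sA|$. The per-clause commutator bound already carries $|\sA|$ through $N_m\leq |\sA|\sd^{m+1}$, and you then multiply by the shell size $|\sA|\sd^{\ell+1}$. This factor cannot be absorbed into $\mathsf{c_{LR}}$, and after interpolation it becomes $|\sA|^{2\theta}$, which exceeds $|\sA|$ for $\theta>1/2$.

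There are two easy remedies. You can count chains ending at a fixed $\gamma$ backwards from $\gamma$, which gives $N_m(\gamma)\leq \sd^{m}$ with no dependence on $\sA$, so the shell size is the only source of $|\sA|$. Alternatively, you can bound the double sum jointly, $\sum_{\gamma}N_m(\gamma)\leq |\sA|\sd^{m+1}$. The joint bound also removes the extra $\sd^{\ell}$ factor you had to absorb into the velocity.
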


\subsection{Weighted inner products and detailed-balance}

We dedicate this section to the relevant non-commutative weighted inner products. We assume $\vrho$ is full rank throughout this section. 
\begin{definition}
    [$(\vrho, s)$ Weighted Inner Product]\label{defn:s_inner} Given a full-rank state $\vrho$ and $s\in [-\frac{1}{2}, \frac{1}{2}]$, we define for any pair of operators $\vX, \vY$
    \begin{equation}
        \langle \vX,\vY\rangle_{\vrho, s}:=\tr[\vX^\dagger \vrho^{\frac{1}{2}+s}\vY\vrho^{\frac{1}{2}-s}]\,.
    \end{equation}
    We denote by $\|\vX\|_{\vrho, s} = \sqrt{\langle \vX,\vX\rangle_{\vrho, s}}$ the $(\vrho, s)$-weighted $2$-norm.
\end{definition}

\begin{remark}
    We recognize this notation is distinct from the convention in the literature, where $s\in [0, 1]$. We adopt this choice to simplify formulas that interpolate over different values of $s$.
\end{remark}

Of particular attention is the KMS inner product, the case where $s=0$. This choice plays such a central role that we refer to this case as the ``$\vrho$'' inner product:
\begin{equation}
     \langle \vX,\vY\rangle_{\vrho}:= \langle \vX,\vY\rangle_{\vrho, 0}
\end{equation}
Associated to a weighted inner product, we can define a notion of detailed-balance:
\begin{definition}
    [Detailed-balance] A superoperator $\CN$ is said to be $\vrho\mathsf{-detail-balanced}$ if it is self-adjoint w.r.t. the $\vrho$ inner product:
    \begin{equation}
        \forall \vX, \vY:\quad \langle \vX, \CN[\vY]\rangle_{\vrho}= \langle \CN[\vX], \vY\rangle_{\vrho} \label{eq:detailed-balance}
    \end{equation}
\end{definition}

We rely on the following unconditional bounds on these weighted inner products:
\begin{lemma}\label{lemma:kms-to-s}
    For any operator $\vX$ state $\vrho$ and $s\in [-\frac{1}{2}, \frac{1}{2}]$, $\|\vX\|_{\vrho, s}\leq \|\vX\|$ and $\|\vX\|_{\vrho, s}^2+\|\vX\|_{\vrho, -s}^2\geq 2 \|\vX\|_{\vrho}^2$.
\end{lemma}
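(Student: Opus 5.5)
The plan is to diagonalize $\vrho=\sum_k \lambda_k \ket{k}\bra{k}$, with all $\lambda_k>0$ since $\vrho$ is full rank and $\sum_k\lambda_k=1$, and write both norms entrywise in this eigenbasis. With $X_{jk}=\bra{j}\vX\ket{k}$, the definition gives
\begin{equation}
\|\vX\|_{\vrho,s}^2=\tr[\vX^\dagger\vrho^{\frac12+s}\vX\vrho^{\frac12-s}]=\sum_{j,k}|X_{jk}|^2\,\lambda_j^{\frac12+s}\lambda_k^{\frac12-s}.
\end{equation}
Both claims then reduce to scalar inequalities on the weights $w_s(j,k):=\lambda_j^{\frac12+s}\lambda_k^{\frac12-s}$, since the coefficients $|X_{jk}|^2$ are nonnegative.

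For the first claim, I would use weighted AM--GM. Since $\frac12+s$ and $\frac12-s$ lie in $[0,1]$ and sum to one,
\begin{equation}
w_s(j,k)\le(\tfrac12+s)\lambda_j+(\tfrac12-s)\lambda_k .
\end{equation}
Summing against $|X_{jk}|^2$ gives
\begin{equation}
\|\vX\|_{\vrho,s}^2\le(\tfrac12+s)\tr[\vrho\vX\vX^\dagger]+(\tfrac12-s)\tr[\vrho\vX^\dagger\vX].
\end{equation}
Each trace is at most $\|\vX\|^2$, because $\vrho$ is a state and $\vX\vX^\dagger,\vX^\dagger\vX\preceq\|\vX\|^2\vI$. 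So the right-hand side is at most $\|\vX\|^2$.

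The endpoint cases $s=\pm\frac12$ need a remark, since one exponent vanishes and $\vrho^0=\vI$. The formula still gives the same weighted sum with $w_{\pm 1/2}=\lambda_j$ or $\lambda_k$, which equals the AM--GM bound exactly, so no separate treatment is required.

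For the second claim, I would add the weights at $s$ and $-s$. Writing $a=\lambda_j^{1/2}\lambda_k^{1/2}$ and $r=(\lambda_j/\lambda_k)^s>0$, we get
\begin{equation}
w_s(j,k)+w_{-s}(j,k)=a\,(r+r^{-1})\ge 2a=2\,w_0(j,k).
\end{equation}
This is just AM--GM again, i.e. $\cosh\ge1$. Multiplying by $|X_{jk}|^2\ge0$ and summing yields $\|\vX\|_{\vrho,s}^2+\|\vX\|_{\vrho,-s}^2\ge 2\|\vX\|_{\vrho}^2$.

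There is no real obstacle here. The only care needed is in setting up the eigenbasis expansion correctly (which index carries which exponent) and in noting that full rank guarantees the fractional powers are well defined.
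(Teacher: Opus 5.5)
Your proof is correct and complete. The expansion $\|\vX\|_{\vrho,s}^2=\sum_{j,k}|X_{jk}|^2\lambda_j^{1/2+s}\lambda_k^{1/2-s}$ pairs each index with the right exponent. Weighted AM--GM then gives the first bound through $(\tfrac12+s)\tr[\vrho\vX\vX^\dagger]+(\tfrac12-s)\tr[\vrho\vX^\dagger\vX]\le\|\vX\|^2$, and $r+r^{-1}\ge 2$ gives the second. The paper states this lemma without proof, as a standard preliminary whose later uses simply cite ``the AM--GM inequality'', so your argument supplies the intended routine proof.
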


We make extensive use of the following version of H\"older's inequality for the $\vrho$ norm.

\begin{lemma}[H\"older in KMS Norm, e.g., {~\cite[Lemma IX.4]{chen2025GibbsMarkov}}]\label{lem:kms-holder}
For any pair of operators $\vX, \vY$, and full rank state $\vrho,$
\begin{align}
\|\vX\vY\|_{\vrho} &\le \|\vrho^{1/4}\vX\vrho^{-1/4}\| \cdot \|\vY\|_{\vrho}
\end{align}     
\end{lemma}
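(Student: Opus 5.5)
The plan is to write $\vZ:=\vrho^{1/4}\vX\vrho^{-1/4}$, so that $\vX=\vrho^{-1/4}\vZ\vrho^{1/4}$, and to push all the weights onto $\vY$. The starting point is the KMS norm in trace form: with $s=0$,
\begin{equation*}
\|\vX\vY\|_{\vrho}^2=\tr\big[\vY^\dagger\vX^\dagger\vrho^{1/2}\vX\vY\vrho^{1/2}\big].
\end{equation*}
Setting $\vW:=\vrho^{1/4}\vY\vrho^{1/4}$, we have the identity $\|\vY\|_{\vrho}^2=\tr[\vW^\dagger\vW]=\|\vW\|_2^2$, since $\tr[\vY^\dagger\vrho^{1/2}\vY\vrho^{1/2}]=\tr[(\vrho^{1/4}\vY\vrho^{1/4})^\dagger(\vrho^{1/4}\vY\vrho^{1/4})]$ by cyclicity.

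Applying the same manipulation to $\vX\vY$ gives
\begin{equation*}
\vrho^{1/4}\vX\vY\vrho^{1/4}=(\vrho^{1/4}\vX\vrho^{-1/4})(\vrho^{1/4}\vY\vrho^{1/4})=\vZ\vW.
\end{equation*}
Here we insert $\vrho^{-1/4}\vrho^{1/4}=\vI$, which is legitimate because $\vrho$ is full rank. Hence $\|\vX\vY\|_{\vrho}=\|\vZ\vW\|_2$.

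The final step is the standard Hilbert--Schmidt Hölder bound $\|\vZ\vW\|_2\le\|\vZ\|\,\|\vW\|_2$. It follows from $\vW^\dagger\vZ^\dagger\vZ\vW\preceq\|\vZ\|^2\vW^\dagger\vW$ after taking traces. Combining the three displays gives $\|\vX\vY\|_{\vrho}\le\|\vrho^{1/4}\vX\vrho^{-1/4}\|\cdot\|\vY\|_{\vrho}$, which is the claim.

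There is no real obstacle. The only points to check are that the split $\vrho^{1/2}=\vrho^{1/4}\vrho^{1/4}$ is placed on the correct sides, and that the full-rank assumption justifies the inverse power $\vrho^{-1/4}$.
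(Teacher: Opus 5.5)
Your proof is correct, and it is essentially the standard argument behind the cited Lemma IX.4 of \cite{chen2025GibbsMarkov}; the paper imports the statement without reproducing a proof. The argument writes $\|\vX\vY\|_{\vrho}=\|\vrho^{1/4}\vX\vY\vrho^{1/4}\|_2$, inserts $\vrho^{-1/4}\vrho^{1/4}=\vI$ to factor this as $(\vrho^{1/4}\vX\vrho^{-1/4})(\vrho^{1/4}\vY\vrho^{1/4})$, and then applies $\|\vZ\mathbf{W}\|_2\le\|\vZ\|\,\|\mathbf{W}\|_2$.
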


Of course, the efficacy of the above is contingent on the convergence of complex-time evolution of $\vX$. We also consider the following induced norm on superoperators:

\begin{definition}
    [KMS-Induced Superoperator Norm]\label{defn:induced_norm} For any superoperator $\CN:\CB(\calH)\rightarrow \CB(\calH)$ and full-rank state $\vrho$,
    \begin{align}
    \norm{\CN}_{\vrho}:= \sup_{\vO}\frac{\norm{\CN[\vO]}_{\vrho}}{\norm{\vO}_{\vrho}}.
\end{align}
\end{definition}

\subsection{The \texorpdfstring{\cite{chen2023efficient}}{[CKG23]} Gibbs sampler}

We dedicate this section to an introduction on the construction of \cite{chen2023efficient}. 

\subsubsection{Operator Fourier Transform}

\begin{definition}\label{def:oft}
    The operator FT of an operator $\vA$ at energy width $\sigma>0$, associated to a Hamiltonian $\vG$, is written as: 
    \begin{align}
\hat{\vA}_{\vG, \sigma}(\omega)=  \frac{1}{\sqrt{2\pi}}\int_{-\infty}^{\infty} e^{i \vG t} \vA e^{-i \vG t} e^{-i \omega t} f_{\sigma}(t)\rd t.\label{eq:OFT}
    \end{align}
\end{definition}
\noindent In the above, the function $f_\sigma(t)$ above is a Gaussian filter:
    \begin{align}
       f_{\sigma}(t) = e^{-\sigma^2t^2}\sqrt{\sigma\sqrt{2/\pi}}\quad \text{and}\quad  \hat{f}_{\sigma}(\omega)=\frac{e^{- \omega^2/4\sigma^2}}{\sqrt{\sigma\sqrt{2\pi}}}  = \frac{1}{\sqrt{2\pi}}\int_{-\infty}^{\infty}e^{-i\omega t} f_\sigma(t)\mathrm{d}t.\label{eq:fwft}
    \end{align}

\noindent Whenever implicit, we omit the subscripts ${\hat{\vA}_{\vG, \sigma}}(\omega)\equiv {\hat{\vA}_{\vG}}(\omega)$, $f(t)=f_\sigma(t)$.

\begin{lemma}[Decomposing an Operator by the Energy Change]\label{lem:sumoverenergies}
For any (not necessarily Hermitian) operator $\vA$, we have that
\begin{align}
\vA =  \frac{1}{\sqrt{2\sigma\sqrt{2\pi}}} \int_{-\infty}^{\infty} \hat{\vA}_{\vG}(\omega)\rd \omega.
\end{align}
\end{lemma}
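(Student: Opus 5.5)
We prove \cref{lem:sumoverenergies} by exchanging the order of integration and using Fourier inversion for the Gaussian filter. Substituting \eqref{eq:OFT} into the right-hand side gives
\begin{align}
\int_{-\infty}^{\infty} \hat{\vA}_{\vG}(\omega)\,\rd\omega = \frac{1}{\sqrt{2\pi}}\int_{-\infty}^{\infty} e^{i\vG t}\vA e^{-i\vG t} f_\sigma(t)\left(\int_{-\infty}^{\infty} e^{-i\omega t}\,\rd\omega\right)\rd t.
\end{align}
The inner integral is $2\pi\,\delta(t)$ in the distributional sense. The integral therefore collapses to $\sqrt{2\pi}\, f_\sigma(0)\,\vA$, and we have $f_\sigma(0)=\sqrt{\sigma\sqrt{2/\pi}}$.

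To avoid the formal delta function, we instead work in the eigenbasis of $\vG=\sum_j E_j \vP_j$, where the $\vP_j$ are the spectral projectors. Then
\begin{align}
\vP_j \hat{\vA}_{\vG}(\omega) \vP_k = \hat f_\sigma\big(\omega-(E_j-E_k)\big)\,\vP_j\vA\vP_k.
\end{align}
This follows from \eqref{eq:fwft}, since the time evolution contributes the phase $e^{i(E_j-E_k)t}$, which shifts the frequency argument of $\hat f_\sigma$. Integrating over $\omega$ and translating the variable gives
\begin{align}
\int_{-\infty}^{\infty} \hat f_\sigma(\omega)\,\rd\omega = \frac{1}{\sqrt{\sigma\sqrt{2\pi}}}\int_{-\infty}^{\infty} e^{-\omega^2/4\sigma^2}\,\rd\omega = \frac{2\sigma\sqrt{\pi}}{\sqrt{\sigma\sqrt{2\pi}}}.
\end{align}
This value is independent of the pair $(j,k)$. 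Summing over $j,k$ then yields $\int_{-\infty}^{\infty}\hat{\vA}_{\vG}(\omega)\,\rd\omega = c\,\vA$ with $c=2\sqrt{\pi\sigma}\,(2\pi)^{-1/4}$.

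It remains to check that $c$ equals the claimed normalization $\sqrt{2\sigma\sqrt{2\pi}}$. Squaring both sides gives $c^2 = 4\pi\sigma/\sqrt{2\pi} = 2\sigma\sqrt{2\pi}$, as required. Dividing by $c$ proves the lemma.

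This is a routine calculation, and the only point needing care is the Gaussian normalization constant. Because the spectrum of $\vG$ is finite and $\hat f_\sigma$ is integrable, Fubini's theorem applies and no convergence issues arise.
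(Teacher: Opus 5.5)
Your proof is correct, and it is the standard argument for this identity: the Bohr-frequency decomposition $\vP_j\hat{\vA}_{\vG}(\omega)\vP_k=\hat f_\sigma(\omega-(E_j-E_k))\,\vP_j\vA\vP_k$ followed by the Gaussian integral $\int\hat f_\sigma(\omega)\,\rd\omega=\sqrt{2\sigma\sqrt{2\pi}}$, and your constants check out. The paper states the lemma without proof, importing it from the operator Fourier transform literature. Your opening delta-function paragraph is only heuristic and can be dropped, since the eigenbasis computation never exchanges the $t$- and $\omega$-integrals and needs only linearity over the finitely many pairs $(j,k)$.
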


\begin{lemma}[Operator Parseval's identity {\cite[Proposition A.2]{chen2023efficient}}]\label{lem:OParseval}
    For any choice of Hamiltonian $\vG$, and operator $\vA$, and real-valued function $f,$
	\begin{align}
		\bigg\|\int_{-\infty}^{\infty} f(\omega)\hat{\vA}_{\vG, \sigma}(\omega)^{\dagger}\hat{\vA}_{\vG, \sigma}(\omega) \mathrm{d}\omega\bigg\|
		&\le \big\|f(\omega)\big\|_{\infty}\bigg\| \vA^{\dagger}\vA\bigg\|.\label{eq:contFourierIdentity}
	\end{align}
\end{lemma}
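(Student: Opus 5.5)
The plan is to combine an operator sandwich bound, which uses only that $f$ is real, with a vector-valued Plancherel identity that computes $\int \hat{\vA}_{\vG,\sigma}(\omega)^\dagger \hat{\vA}_{\vG,\sigma}(\omega)\,\rd\omega$ exactly. No spectral decomposition into Bohr frequencies should be needed.

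\textbf{Step 1 (reduction to $f\equiv 1$).} For every $\omega$, the operator $\hat{\vA}_{\vG,\sigma}(\omega)^\dagger\hat{\vA}_{\vG,\sigma}(\omega)$ is positive semidefinite. Since $f$ is real-valued, $-\|f\|_\infty\le f(\omega)\le \|f\|_\infty$ pointwise. Integrating gives the operator inequalities
\begin{equation*}
-\|f\|_\infty\,\vK\;\preceq\;\int_{-\infty}^{\infty} f(\omega)\,\hat{\vA}_{\vG,\sigma}(\omega)^\dagger\hat{\vA}_{\vG,\sigma}(\omega)\,\rd\omega\;\preceq\;\|f\|_\infty\,\vK,
\qquad \vK:=\int_{-\infty}^{\infty}\hat{\vA}_{\vG,\sigma}(\omega)^\dagger\hat{\vA}_{\vG,\sigma}(\omega)\,\rd\omega .
\end{equation*}
The middle operator is Hermitian, so its operator norm is at most $\|f\|_\infty\|\vK\|$. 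It therefore suffices to show $\|\vK\|\le\|\vA^\dagger\vA\|$.

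\textbf{Step 2 (Plancherel in the frequency variable).} Fix a vector $\ket{\psi}$. By \eqref{eq:OFT}, the map $\omega\mapsto \hat{\vA}_{\vG,\sigma}(\omega)\ket{\psi}$ is the (unitary, $\tfrac{1}{\sqrt{2\pi}}$-normalized) Fourier transform of the $\calH$-valued function
\begin{equation*}
t\mapsto f_\sigma(t)\,e^{i\vG t}\vA e^{-i\vG t}\ket{\psi}.
\end{equation*}
This function is Schwartz because of the Gaussian filter and the finite dimension. Applying Plancherel componentwise in an orthonormal basis of $\calH$ gives
\begin{equation*}
\bra{\psi}\vK\ket{\psi}=\int_{-\infty}^{\infty} \big\|\hat{\vA}_{\vG,\sigma}(\omega)\ket{\psi}\big\|^2\rd\omega=\int_{-\infty}^{\infty} f_\sigma(t)^2\,\bra{\psi}e^{i\vG t}\vA^\dagger\vA e^{-i\vG t}\ket{\psi}\,\rd t.
\end{equation*}
Both sides are quadratic forms of positive operators, so $\vK=\int f_\sigma(t)^2\, e^{i\vG t}\vA^\dagger\vA e^{-i\vG t}\,\rd t$.

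\textbf{Step 3 (normalization).} Conjugation by a unitary preserves the norm, so the triangle inequality gives $\|\vK\|\le \|\vA^\dagger\vA\|\int f_\sigma(t)^2\,\rd t$. From \eqref{eq:fwft},
\begin{equation*}
\int_{-\infty}^{\infty} f_\sigma(t)^2\,\rd t=\sigma\sqrt{2/\pi}\int_{-\infty}^{\infty} e^{-2\sigma^2t^2}\,\rd t=\sigma\sqrt{2/\pi}\cdot\frac{\sqrt{\pi/2}}{\sigma}=1.
\end{equation*}
Combining with Step 1 yields \eqref{eq:contFourierIdentity}.

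The only delicate point is Step 2. One must match the Fourier normalization in \eqref{eq:OFT} with the unitary convention, which it does since the same $\tfrac{1}{\sqrt{2\pi}}$ appears in \eqref{eq:fwft}. One must also justify exchanging integrals, which is routine given Gaussian decay. Everything else is positivity.
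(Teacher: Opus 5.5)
Your argument is correct and is essentially the standard proof of the cited result \cite[Proposition A.2]{chen2023efficient}, which the paper imports without proof. That proof reduces to $f\equiv 1$ via the sandwich $-\|f\|_\infty\vK\preceq\int f(\omega)\hat{\vA}_{\vG,\sigma}(\omega)^\dagger\hat{\vA}_{\vG,\sigma}(\omega)\,\rd\omega\preceq\|f\|_\infty\vK$, then identifies $\vK=\int f_\sigma(t)^2e^{i\vG t}\vA^\dagger\vA e^{-i\vG t}\,\rd t$ by Plancherel and uses $\|f_\sigma\|_2=1$; this is the same time-domain Parseval step the paper itself relies on in the proofs of \cref{lemma:lipschitz} and \cref{lem:filtered-duhamel-estimates}.
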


We further require the following bounds on the operator norm of operator Fourier transforms. In some sense, they quantify that the operator FT decays in norm in the large frequency $\omega$ regime. 

\begin{lemma}[A priori Norm bounds on Operator Fourier Transforms, Corollary IX.2 \cite{chen2025GibbsMarkov}]\label{lem:bounds_oft}
For any $\beta ,\omega\in \BR$ and operator $\vA$ with norm $\norm{\vA}\le 1$, the operator Fourier transform $\hat{\vA}_{\vG, \sigma}(\omega)$ with uncertainty $\sigma$ satisfies 
\begin{align}
        \|\hat{\vA}_{\vG, \sigma}(\omega)\| \leq \frac{e^{-\beta \omega + \sigma^2 \beta^2}}{\sqrt{{\sigma}\sqrt{2\pi}}} \|e^{\beta \vG}\vA e^{-\beta \vG}\|.
\end{align}
\end{lemma}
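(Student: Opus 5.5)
The plan is to prove \cref{lem:bounds_oft} by moving the time contour of the Fourier integral \eqref{eq:OFT} into the complex plane. Heisenberg evolution $\vA(t):=e^{i\vG t}\vA e^{-i\vG t}$ extends to an entire operator-valued function of $t$, since we are in finite dimension. The Gaussian filter $f_\sigma(t)$ is also entire. So the integrand
\[
F(t):=\vA(t)\,e^{-i\omega t}f_\sigma(t)
\]
is entire. We shift the contour from $\BR$ to $\BR + i\beta$, i.e.\ substitute $t = u + i\beta$ with $u\in\BR$.

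Under this shift the three factors behave as follows:
\[
\vA(u+i\beta) = e^{i\vG u}\big(e^{-\beta\vG}\vA e^{\beta\vG}\big)e^{-i\vG u},\qquad e^{-i\omega(u+i\beta)} = e^{-i\omega u}e^{\beta\omega},
\]
\[
f_\sigma(u+i\beta)\propto e^{-\sigma^2(u^2-\beta^2)}e^{-2i\sigma^2\beta u}.
\]
The sign conventions matter here. To get $e^{-\beta\omega}$ and $e^{\beta\vG}\vA e^{-\beta\vG}$, as in the stated bound, we should shift to $t = u - i\beta$ instead. Then $\vA(u-i\beta) = e^{i\vG u}(e^{\beta\vG}\vA e^{-\beta\vG})e^{-i\vG u}$, the phase gives $e^{-i\omega u}e^{-\beta\omega}$, and the Gaussian gives $e^{-\sigma^2 u^2}e^{\sigma^2\beta^2}e^{2i\sigma^2\beta u}\sqrt{\sigma\sqrt{2/\pi}}$. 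I will fix the sign once and check it against these factors.

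The shift is justified by Cauchy's theorem applied to the rectangle with vertical sides at $\pm R$, letting $R\to\infty$. On the vertical sides, $\|\vA(\pm R + i\tau)\|\le e^{2|\tau|\|\vG\|}\|\vA\|$ is bounded uniformly in $R$. Meanwhile the Gaussian factor decays like $e^{-\sigma^2R^2}$ and the phase factor is bounded. So the side contributions vanish as $R\to\infty$.

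After the shift,
\[
\hat{\vA}(\omega) = \frac{e^{-\beta\omega+\sigma^2\beta^2}}{\sqrt{2\pi}}\int_{\BR} e^{i\vG u}\vB e^{-i\vG u}\,e^{-i\omega u}e^{2i\sigma^2\beta u}\,e^{-\sigma^2u^2}\sqrt{\sigma\sqrt{2/\pi}}\,\rd u,
\]
where $\vB := e^{\beta\vG}\vA e^{-\beta\vG}$. Each conjugation $e^{i\vG u}\vB e^{-i\vG u}$ is unitary and so preserves operator norm. The triangle inequality for Bochner integrals then gives the bound
\[
\frac{e^{-\beta\omega+\sigma^2\beta^2}}{\sqrt{2\pi}}\,\|\vB\|\int e^{-\sigma^2u^2}\,\rd u\,\sqrt{\sigma\sqrt{2/\pi}}.
\]
The Gaussian integral is $\int e^{-\sigma^2u^2}\,\rd u = \sqrt{\pi}/\sigma$. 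The remaining constants combine as
\[
\frac{1}{\sqrt{2\pi}}\cdot\frac{\sqrt\pi}{\sigma}\cdot\sigma^{1/2}(2/\pi)^{1/4} = \sigma^{-1/2}(2\pi)^{-1/4} = \frac{1}{\sqrt{\sigma\sqrt{2\pi}}},
\]
which matches the stated prefactor. The hypothesis $\|\vA\|\le 1$ is not actually needed, since the bound is stated in terms of $\|e^{\beta\vG}\vA e^{-\beta\vG}\|$.

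There is no substantive obstacle. The only delicate points are choosing the direction of the contour shift so the signs match the statement, and the routine justification that the vertical contour pieces vanish.
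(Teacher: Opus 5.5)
Your proof is correct and is essentially the standard route behind the cited result. Shifting the contour to $\BR-i\beta$ yields $\hat{\vA}_{\vG,\sigma}(\omega)=e^{-\beta\omega+\sigma^2\beta^2}\,\hat{\vB}_{\vG,\sigma}(\omega-2\sigma^2\beta)$ with $\vB=e^{\beta\vG}\vA e^{-\beta\vG}$, which is the modular frequency-shift identity of \cref{lem:bounds_imaginary_conjugation} rearranged; the trivial bound $\|\hat{\vB}_{\vG,\sigma}(\nu)\|\le\|\vB\|\,\|f_\sigma\|_1/\sqrt{2\pi}$ then gives exactly the prefactor $1/\sqrt{\sigma\sqrt{2\pi}}$, and your sign choice and the vanishing of the vertical contour pieces are handled correctly.
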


\begin{lemma}
    [Gaussian Convolution of Operator Fourier Transforms]
    \label{lem:OFT_Gaussian_expansion} For any operator $\vA$ and uncertainties $\sigma_1, \sigma_2, \sigma_3\in \BR^+$ satisfying $1/\sigma_1^2 = 1/\sigma_2^2+1/\sigma_3^2$, we have:
    \begin{equation}
        \hat{\vA}_{\vG, \sigma_1}(\omega) =  \sqrt{\frac{\sigma_2\sigma_3}{\sigma_1\sqrt{2\pi}}} \int_{-\infty}^{\infty} \hat{\vA}_{\vG, \sigma_3}(\omega, t) e^{-i \omega t}f_{\sigma_2}(t) \rd t.
    \end{equation}
    
\end{lemma}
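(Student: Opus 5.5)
The identity should follow from a direct computation in the eigenbasis of $\vG$, where every operator Fourier transform becomes entrywise multiplication by a Gaussian in the Bohr frequency. I read the two-argument object as the Heisenberg-evolved transform, $\hat{\vA}_{\vG,\sigma_3}(\omega,t) := e^{i\vG t}\hat{\vA}_{\vG,\sigma_3}(\omega)e^{-i\vG t}$; this matches the notation $\hat{\vA}^a_{\vG}(\omega,t)$ in the overview, though I am inferring it rather than quoting a definition.

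\textbf{Step 1 (spectral form of the OFT).} Write $\vG=\sum_i E_i\ketbra{i}$ and $A_{ij}=\bra{i}\vA\ket{j}$, with Bohr frequency $\nu_{ij}=E_i-E_j$. Then $e^{i\vG t}\vA e^{-i\vG t}$ has entries $A_{ij}e^{i\nu_{ij}t}$. Substituting into \eqref{eq:OFT} and using the convention \eqref{eq:fwft} gives
\[
\bra{i}\hat{\vA}_{\vG,\sigma}(\omega)\ket{j}=A_{ij}\,\hat f_\sigma(\omega-\nu_{ij}).
\]
All integrals converge absolutely because the filters are Gaussian, and the dimension is finite. Hence Fubini may be applied freely, and the identity can be checked entry by entry.

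\textbf{Step 2 (the $t$-integral).} By the Step 1 formula, the $(i,j)$ entry of $\hat{\vA}_{\vG,\sigma_3}(\omega,t)$ is $A_{ij}\hat f_{\sigma_3}(\omega-\nu_{ij})e^{i\nu_{ij}t}$. Multiplying by $e^{-i\omega t}f_{\sigma_2}(t)$ and integrating over $t$ gives
\[
\int e^{-i(\omega-\nu_{ij})t}f_{\sigma_2}(t)\,\rd t=\sqrt{2\pi}\,\hat f_{\sigma_2}(\omega-\nu_{ij}).
\]
So the right-hand side has entries
\[
\sqrt{\tfrac{\sigma_2\sigma_3}{\sigma_1\sqrt{2\pi}}}\;\sqrt{2\pi}\;A_{ij}\,\hat f_{\sigma_2}(\omega-\nu_{ij})\,\hat f_{\sigma_3}(\omega-\nu_{ij}).
\]

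\textbf{Step 3 (Gaussian product and constant matching).} Write $x=\omega-\nu_{ij}$. From \eqref{eq:fwft},
\[
\hat f_{\sigma_2}(x)\,\hat f_{\sigma_3}(x)=\frac{e^{-x^2(1/\sigma_2^2+1/\sigma_3^2)/4}}{\sqrt{2\pi\,\sigma_2\sigma_3}}=\frac{e^{-x^2/4\sigma_1^2}}{\sqrt{2\pi\,\sigma_2\sigma_3}},
\]
where the second equality uses the hypothesis $1/\sigma_1^2=1/\sigma_2^2+1/\sigma_3^2$. Multiplying by the prefactor from Step 2 gives
\[
\sqrt{\tfrac{\sigma_2\sigma_3}{\sigma_1\sqrt{2\pi}}}\cdot\sqrt{2\pi}\cdot\frac{1}{\sqrt{2\pi\,\sigma_2\sigma_3}}=\frac{1}{\sqrt{\sigma_1\sqrt{2\pi}}}.
\]
This is exactly the normalization of $\hat f_{\sigma_1}$. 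Therefore each entry equals $A_{ij}\hat f_{\sigma_1}(\omega-\nu_{ij})$, which by Step 1 is the $(i,j)$ entry of $\hat{\vA}_{\vG,\sigma_1}(\omega)$.

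There is no real analytic obstacle here; the argument is a convolution identity for Gaussians. The step most likely to go wrong is the constant bookkeeping: the $\sqrt{2\pi}$ in the Fourier convention, and the $\sqrt{\sigma}$ normalizations of $f_\sigma$ and $\hat f_\sigma$. I would double-check Step 3 against the case $\sigma_2\to\infty$, where $f_{\sigma_2}$ should concentrate at $t=0$ and the identity should reduce to $\hat{\vA}_{\vG,\sigma_1}=\hat{\vA}_{\vG,\sigma_3}$. The one interpretive risk is the meaning of $\hat{\vA}_{\vG,\sigma_3}(\omega,t)$. If it is instead defined with some other time-dependence, Step 2 would be modified only by the resulting phase, and the Gaussian matching in Step 3 would be unchanged.
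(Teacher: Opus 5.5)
Your proof is correct. The reading of the two-argument transform is not actually an inference: \cref{def:ckg-lind} defines $\hat{\vA}_{\vG,\sigma}(\omega,t):=e^{i\vG t}\hat{\vA}_{\vG,\sigma}(\omega)e^{-i\vG t}$, exactly as you assumed. With that definition, each of your steps checks out:
\begin{itemize}
\item the entrywise formula $\bra{i}\hat{\vA}_{\vG,\sigma}(\omega)\ket{j}=A_{ij}\hat f_\sigma(\omega-\nu_{ij})$;
\item the $t$-integral $\int e^{-i(\omega-\nu_{ij})t}f_{\sigma_2}(t)\,\rd t=\sqrt{2\pi}\,\hat f_{\sigma_2}(\omega-\nu_{ij})$;
\item the constant bookkeeping in Step 3;
\item your $\sigma_2\to\infty$ sanity check, where the prefactor times $\int f_{\sigma_2}$ equals $\sqrt{\sigma_3/\sigma_1}\to 1$.
\end{itemize}

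The paper states this lemma without proof, importing it as a preliminary, so there is no in-paper argument to compare against. Your Bohr-frequency computation is the standard way to verify it.

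An equivalent route stays entirely in the time domain. Write $\vA_{\vG}(u)=e^{i\vG u}\vA e^{-i\vG u}$ and substitute $u=s+t$, so that $\hat{\vA}_{\vG,\sigma_3}(\omega,t)=\frac{1}{\sqrt{2\pi}}\int \vA_{\vG}(u)\,e^{-i\omega(u-t)}f_{\sigma_3}(u-t)\,\rd u$. Multiplying by $e^{-i\omega t}f_{\sigma_2}(t)$ and integrating over $t$ then produces $e^{-i\omega u}(f_{\sigma_2}*f_{\sigma_3})(u)$. The Gaussian convolution identity $f_{\sigma_2}*f_{\sigma_3}=\sqrt{\sigma_1\sqrt{2\pi}/(\sigma_2\sigma_3)}\,f_{\sigma_1}$ finishes the proof, since that constant is exactly the reciprocal of the lemma's prefactor.

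The two routes trade off differently. The time-domain version never diagonalizes $\vG$. Yours converts the convolution into a pointwise product of Gaussians, which makes the role of the hypothesis $1/\sigma_1^2=1/\sigma_2^2+1/\sigma_3^2$ and the constant matching more transparent.
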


\begin{lemma}
    [Imaginary Time Conjugation of Paulis {\cite[Corollary IX.1]{chen2025GibbsMarkov}}]\label{lem:norm_bound_pauli} Let $\vH$ denote a bounded degree $\sd$ Hamiltonian. 
    Then, for any $\beta\in \mathbb{R}$ s.t. $|\beta|<(2\sd)^{-1}$ and $w$-qubit Pauli $\vA$, 
    \begin{equation}
        \|e^{-\beta\vH} \vA e^{\beta\vH}\|\leq \bigg(\frac{1}{1-2\sd|\beta|}\bigg)^w
    \end{equation}
\end{lemma}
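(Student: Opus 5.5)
The plan is to expand the imaginary-time conjugation as a power series in nested commutators, bound the norm of each nested commutator by counting the sequences of Hamiltonian clauses that can contribute, and sum the resulting series in closed form. Since we are on $n$ sites with finite-dimensional operators, the map $\beta\mapsto e^{-\beta\vH}\vA e^{\beta\vH}$ is entire, and we have
\begin{equation}
    e^{-\beta\vH}\vA e^{\beta\vH}=\sum_{k\geq 0}\frac{(-\beta)^k}{k!}\,\mathrm{ad}_{\vH}^k(\vA),\qquad \mathrm{ad}_{\vH}(\vX):=[\vH,\vX].
\end{equation}
By the triangle inequality it therefore suffices to prove $\|\mathrm{ad}_{\vH}^k(\vA)\|\leq (2\sd)^k\, w(w+1)\cdots(w+k-1)$. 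With this bound, the negative binomial series gives
\begin{equation}
\sum_k \frac{(2\sd|\beta|)^k}{k!}\, w(w+1)\cdots(w+k-1)=\sum_k\binom{w+k-1}{k}(2\sd|\beta|)^k=(1-2\sd|\beta|)^{-w},
\end{equation}
which converges exactly in the regime $|\beta|<(2\sd)^{-1}$.

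To bound $\mathrm{ad}_{\vH}^k(\vA)$, I expand $\vH=\sum_\gamma h_\gamma\vP_\gamma$ in each commutator slot. This writes $\mathrm{ad}^k_{\vH}(\vA)$ as a sum over ordered sequences $(\gamma_1,\dots,\gamma_k)$ of the terms $h_{\gamma_k}\cdots h_{\gamma_1}[\vP_{\gamma_k},[\cdots,[\vP_{\gamma_1},\vA]]]$.

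Each such term has norm at most $2^k$, because $|h_\gamma|\leq 1$, $\|\vA\|=1$, and $\|[\vP,\vX]\|\leq 2\|\vX\|$. A term vanishes unless each $\gamma_j$ overlaps the support of the previous nested commutator. That support is contained in $\supp(\vA)\cup\gamma_1\cup\dots\cup\gamma_{j-1}$, since commuting with an operator supported disjointly gives zero. So only "connected" sequences need to be counted.

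The counting step is the crux. First, the clauses containing a fixed site pairwise overlap, so they form a clique in $G_{\vH}$. The degree bound (which counts self-edges) then gives at most $\sd$ clauses per site, and hence at most $\sd w$ choices for $\gamma_1$. For $j\geq 2$, $\gamma_j$ must either touch one of the $w$ sites of $\vA$ (at most $\sd w$ options) or neighbor one of $\gamma_1,\dots,\gamma_{j-1}$ in the interaction graph. A crude union bound gives at most $\sd(w+j-1)$ options only if we charge each previously chosen clause just $\sd$ new clauses; the degree bound supplies exactly this, since each $\gamma_i$ has at most $\sd$ neighbors including itself. Multiplying over $j=1,\dots,k$ gives at most $\sd^k\,w(w+1)\cdots(w+k-1)$ contributing sequences, hence the claimed bound on $\|\mathrm{ad}_{\vH}^k(\vA)\|$.

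The main obstacle is this counting: one must check that the overlap-based constraint really yields the increment $\sd$ per step rather than $\sd\cdot\sk$ (one factor of $\sd$ for each site of a new clause). I would handle this, as above, by counting neighbors of clauses in $G_{\vH}$ rather than clauses touching new sites. Any looser constant here would only change the radius of convergence, not the structure of the argument.
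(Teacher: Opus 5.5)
Your proof is correct and self-contained. The paper does not prove this lemma itself; it imports it from \cite[Corollary IX.1]{chen2025GibbsMarkov}. So the comparison below is with the route suggested by that citation, not with an argument in the text.

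Your counting is right under the paper's convention that the degree of $G_{\vH}$ includes self-edges. The clauses through a fixed site form a clique, so there are at most $\sd$ of them. Each previously chosen clause contributes at most $\sd$ candidates, itself included. This gives at most $\sd(w+j-1)$ choices at step $j$, hence $\|\mathrm{ad}_{\vH}^k(\vA)\|\le(2\sd)^k\,w(w+1)\cdots(w+k-1)$, and summing gives exactly $(1-2\sd|\beta|)^{-w}$.

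The ``corollary'' label suggests a shorter route. First prove the case $w=1$: the same nested-commutator count, now with at most $\sd^k k!$ contributing sequences, summing to the geometric series $(1-2\sd|\beta|)^{-1}$. Then factor the Pauli string as $\vA=\vA_1\cdots\vA_w$ into single-site Paulis. Conjugation is multiplicative, $e^{-\beta\vH}\vA e^{\beta\vH}=\prod_{j}e^{-\beta\vH}\vA_j e^{\beta\vH}$, so submultiplicativity of the operator norm gives the $w$-th power. The identity $\sum_k\binom{w+k-1}{k}x^k=(1-x)^{-w}=\bigl(\sum_k x^k\bigr)^w$ shows the two counts are the same combinatorics.

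What your direct version buys is generality. It uses only $\supp(\vA)$ and $\|\vA\|$, so it yields $\|e^{-\beta\vH}\vA e^{\beta\vH}\|\le\|\vA\|(1-2\sd|\beta|)^{-w}$ for an arbitrary operator supported on $w$ sites. The factorized route needs the product structure. To handle a general operator it would have to expand in Pauli strings, paying a factor exponential in $w$.
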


Finally, we treat the norms of the imaginary time evolution of OFTs.

\begin{lemma}[Norm Bounds on Imaginary Time Conjugation~{\cite[Lemma IX.2]{chen2025GibbsMarkov}}]\label{lem:bounds_imaginary_conjugation}
For any $\beta ,\omega\in \BR$, uncertainty width $\sigma\in \BR^+$, and operator $\vA$ with norm $\norm{\vA}\le 1$, the operator Fourier transform \eqref{eq:OFT} satisfies 
\begin{align}
        e^{\beta \vG} \hat{\vA}_{\vG, \sigma}(\omega) e^{-\beta \vG}
& = e^{\beta\omega} \cdot \hat{\vA}_{\vG, \sigma}(\omega+2\sigma^2\beta) e^{\sigma^2\beta^2}.
\end{align}
Thus, 
\begin{align}
    \norm{e^{\beta \vG} \hat{\vA}_{\vG, \sigma}(\omega) e^{-\beta \vG}} \le \frac{e^{\sigma^2\beta^2}}{\sqrt{{\sigma}\sqrt{2\pi}}} e^{\beta\omega} \norm{\vA}.
\end{align}
\end{lemma}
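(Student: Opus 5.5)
The identity follows from a direct spectral computation in the eigenbasis of $\vG$, after which the norm bound is read off from the Gaussian factor. Write $\vG=\sum_j E_j \vP_j$ with spectral projectors $\vP_j$, and decompose $\vA=\sum_{j,k}\vP_j\vA\vP_k$ into Bohr-frequency components $\vA_\nu=\sum_{E_j-E_k=\nu}\vP_j\vA\vP_k$. Then $e^{i\vG t}\vA e^{-i\vG t}=\sum_\nu e^{i\nu t}\vA_\nu$. Inserting this into \eqref{eq:OFT} and using the definition of $\hat f_\sigma$ in \eqref{eq:fwft} gives
\begin{equation*}
\hat{\vA}_{\vG,\sigma}(\omega)=\sum_\nu \hat f_\sigma(\omega-\nu)\vA_\nu .
\end{equation*}
Since $e^{\beta\vG}\vA_\nu e^{-\beta\vG}=e^{\beta\nu}\vA_\nu$, we obtain
\begin{equation*}
e^{\beta\vG}\hat{\vA}_{\vG,\sigma}(\omega)e^{-\beta\vG}=\sum_\nu e^{\beta\nu}\hat f_\sigma(\omega-\nu)\vA_\nu .
\end{equation*}

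The key step is a completing-the-square identity for the Gaussian. For every $\nu$,
\begin{equation*}
e^{\beta\nu}e^{-(\omega-\nu)^2/4\sigma^2}=e^{\beta\omega+\sigma^2\beta^2}\,e^{-(\omega+2\sigma^2\beta-\nu)^2/4\sigma^2}.
\end{equation*}
To check it, expand $(\omega+2\sigma^2\beta-\nu)^2=(\omega-\nu)^2+4\sigma^2\beta(\omega-\nu)+4\sigma^4\beta^2$. After dividing by $4\sigma^2$, the exponent on the right-hand side becomes $-(\omega-\nu)^2/4\sigma^2-\beta(\omega-\nu)-\sigma^2\beta^2+\beta\omega+\sigma^2\beta^2$, which equals the left-hand exponent $-(\omega-\nu)^2/4\sigma^2+\beta\nu$. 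Since the identity holds termwise with a $\nu$-independent prefactor, summing over $\nu$ yields
\begin{equation*}
e^{\beta\vG}\hat{\vA}_{\vG,\sigma}(\omega)e^{-\beta\vG}=e^{\beta\omega}e^{\sigma^2\beta^2}\,\hat{\vA}_{\vG,\sigma}(\omega+2\sigma^2\beta).
\end{equation*}
Equivalently, this can be done in the time domain by shifting the integration contour $t\mapsto t-i\beta$ and using Gaussian analyticity. The spectral route avoids any justification of the contour shift, which is why I prefer it; in finite dimensions all sums are finite.

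For the norm bound, it then suffices to show $\|\hat{\vA}_{\vG,\sigma}(\omega')\|\le \|\vA\|/\sqrt{\sigma\sqrt{2\pi}}$ uniformly in $\omega'$. From \eqref{eq:OFT}, the triangle inequality for the Bochner integral and unitarity of the conjugation give
\begin{equation*}
\|\hat{\vA}_{\vG,\sigma}(\omega')\|\le \frac{\|\vA\|}{\sqrt{2\pi}}\int|f_\sigma(t)|\,\rd t=\|\vA\|\,\hat f_\sigma(0)=\frac{\|\vA\|}{\sqrt{\sigma\sqrt{2\pi}}},
\end{equation*}
because $f_\sigma\ge 0$ and $\hat f_\sigma(0)=\frac{1}{\sqrt{2\pi}}\int f_\sigma$. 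Combining this with the identity gives the stated bound.

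The only delicate point is bookkeeping of the normalization constants of $f_\sigma$ versus $\hat f_\sigma$, which I would verify against \eqref{eq:fwft}. No real obstacle is expected.
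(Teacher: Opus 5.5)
Your proof is correct: the Bohr-frequency expansion $\hat{\vA}_{\vG,\sigma}(\omega)=\sum_\nu \hat f_\sigma(\omega-\nu)\vA_\nu$, the termwise completion of the square, and the uniform bound $\|\hat{\vA}_{\vG,\sigma}(\omega')\|\le \|\vA\|\,\hat f_\sigma(0)=\|\vA\|/\sqrt{\sigma\sqrt{2\pi}}$ all check out against the normalizations in \eqref{eq:fwft}. The paper does not reprove this lemma but imports it from \cite{chen2025GibbsMarkov}, and your spectral computation is the standard argument for it, so there is nothing to add.
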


\subsubsection{The Lindbladian}

We are now in a position to define the family of Lindbladians of \cite{chen2023efficient}.

\begin{definition}
    [The \cite{chen2023efficient} Lindbladian]\label{def:ckg-lind} Fixed $\vG$, $\beta>0$, $0<\sigma\leq \beta^{-1}$, and a single self-adjoint jump $\vA^a = \vA^{a\dagger}$. The Lindbladian~\cite{chen2023efficient} is written as
\begin{align}
		\CL_{\vG, a}[\cdot] = \underset{\text{``coherent''}}{\underbrace{ -i [\vC^a, \cdot]}} + 
		\int_{-\infty}^{\infty} \gamma(\omega) \bigg(\underset{\text{``transition''}}{\underbrace{\hat{\vA}_{\vG, \sigma}^a(\omega)(\cdot)\hat{\vA}_{\vG, \sigma}^{a}(\omega)^\dagger}} - \underset{\text{``decay''}}{\underbrace{\frac{1}{2}\{\hat{\vA}_{\vG, \sigma}^{a}(\omega)^\dagger\hat{\vA}_{\vG, \sigma}^a(\omega),\cdot\}}}\bigg)\rd\omega,\label{eq:exact_DB_L}
\end{align}
\noindent the ``coherent part'' $\vC^{a}$ is a Hermitian operator:
\begin{align}
    &\vC^a = \iint_{-\infty}^{\infty} \gamma(\omega) \cdot c(t) \cdot \hat{\vA}_{\vG, \sigma}^a(\omega,t)^{\dagger}\hat{\vA}_{\vG, \sigma}^a(\omega,t)  \rd t\rd \omega, \quad  \\\text{with}\quad  &c(t) := \frac{1}{\beta\sinh(2\pi t/\beta)} \quad \text{and}\quad  \hat{\vA}_{\vG, \sigma}^a(\omega,t):=e^{i\vG t}\hat{\vA}_\sigma^a(\omega)e^{-i\vG t}.
\end{align}
\end{definition}
\noindent The choice of $\gamma$ ensures detailed-balance; here we  use the shifted-Metropolis weight function:
\begin{align}
    \gamma_\mathsf{M}(\omega) := \exp\bigg(-\beta\max\left(\omega +\frac{\beta \sigma^2}{2},0\right)\bigg).\label{eq:Metropolis}
\end{align} 

Henceforth we restrict our attention to $\sigma:=\beta^{-1}$, and often consider the generator defined by all single-site Pauli jumps on $i\in [n]:\CL_{\vG, i}:=\sum_{a\in \CS_{i}}\CL_{\vG, a}.$

\begin{lemma}
    [{\cite[Corollary A.1]{chen2025GibbsMarkov}}]
    When $\norm{\vA^a}=1$, we have system-size independent norm bounds $\norm{\CL_{\vG,a}^\dagger}_{\infty-\infty}\le O(1+\beta\sd)$ under $\gamma_\mathsf{M}$.
\end{lemma}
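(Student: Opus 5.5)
The plan is to write $\CL_{\vG,a}^\dagger[\vX]= i[\vC^a,\vX]+\int\gamma(\omega)\big(\hat{\vA}^a(\omega)^\dagger\vX\hat{\vA}^a(\omega)-\tfrac12\{\vM,\vX\}\big)\rd\omega$ with $\vM:=\int\gamma(\omega)\hat{\vA}^a(\omega)^\dagger\hat{\vA}^a(\omega)\rd\omega$. I would then bound the dissipative part by an absolute constant and the coherent part by $O(1+\beta\sd)$. I will assume (possibly wrongly) that the implicit constant in $O(1+\beta\sd)$ is allowed to depend on $\sk,\sq$.

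\textbf{Dissipative part.} The transition map $\vX\mapsto\int\gamma\,\hat{\vA}^\dagger\vX\hat{\vA}\,\rd\omega$ is completely positive. Its $\infty\to\infty$ norm therefore equals the norm of its value at the identity, which is $\|\vM\|$. Since $0\le\gamma_\mathsf{M}\le1$, \cref{lem:OParseval} gives $\|\vM\|\le\|\vA^{a\dagger}\vA^a\|=1$. The decay term has norm at most $\|\vM\|\le 1$ as well, so the dissipative part contributes at most $2$.

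\textbf{Coherent part.} Since $\hat{\vA}(\omega,t)=e^{i\vG t}\hat{\vA}(\omega)e^{-i\vG t}$, we have $\vC^a=\int c(t)\,\vM(t)\,\rd t$ with $\vM(t):=e^{i\vG t}\vM e^{-i\vG t}$. Because $c$ is odd,
\begin{equation*}
\vC^a=\int_0^\infty c(t)\big(\vM(t)-\vM(-t)\big)\rd t .
\end{equation*}
I would split this integral at $t=\beta$.

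For $t\ge\beta$, $|c(t)|\lesssim\beta^{-1}e^{-2\pi t/\beta}$, so this tail contributes $O(1)\cdot 2\|\vM\|=O(1)$.

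For $t\le\beta$, write $\vM(t)-\vM(-t)=i\int_{-t}^{t}e^{i\vG s}[\vG,\vM]e^{-i\vG s}\rd s$, which has norm at most $2t\,\|[\vG,\vM]\|$. Since $|c(t)|\le\frac{1}{2\pi t}$, the integrand is $O(\|[\vG,\vM]\|)$ uniformly, and integrating over $[0,\beta]$ gives $O(\beta\,\|[\vG,\vM]\|)$. It therefore suffices to prove $\|[\vG,\vM]\|\lesssim\sd$.

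\textbf{Main step: bounding $[\vG,\vM]$.} This is where a naive $\|\vG\|\sim n$ bound must be avoided. Commuting through the time evolution in \cref{def:oft} gives $[\vG,\hat{\vA}_{\vG}(\omega)]=\widehat{[\vG,\vA]}_{\vG}(\omega)$. Here $\vB:=[\vG,\vA^a]$ involves only the $O(\sd)$ terms of $\vG$ touching site $a$, so $\|\vB\|\le 2\cdot O(\sd)$ using $|g_\gamma|\le1$. The Leibniz rule then gives
\begin{equation*}
[\vG,\vM]=\int\gamma\big(\hat{\vB}(\omega)^\dagger\hat{\vA}(\omega)+\hat{\vA}(\omega)^\dagger\hat{\vB}'(\omega)\big)\rd\omega ,
\end{equation*}
where $\vB'=\vB$ up to sign and adjoint. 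Each cross term I would bound by the operator Cauchy--Schwarz inequality
\begin{equation*}
\Big\|\int\gamma\,\hat{\vB}^\dagger\hat{\vA}\Big\|\le\Big\|\int\gamma\,\hat{\vB}^\dagger\hat{\vB}\Big\|^{1/2}\Big\|\int\gamma\,\hat{\vA}^\dagger\hat{\vA}\Big\|^{1/2},
\end{equation*}
obtained by viewing $\int\gamma\,\hat{\vB}^\dagger\hat{\vA}$ as $V_B^\dagger V_A$ for column operators $V_A$ and $V_B$. Applying \cref{lem:OParseval} to both factors bounds them by $\|\vB\|$ and $1$ respectively. Hence $\|[\vG,\vM]\|\lesssim\sd$, the short-time contribution is $O(\beta\sd)$, and $\|\CL^\dagger_{\vG,a}\|_{\infty-\infty}\le 2+2\|\vC^a\|=O(1+\beta\sd)$.

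I expect the main technical obstacle to be this commutator estimate, in particular justifying the Cauchy--Schwarz step rigorously for the continuous family $\hat{\vA}(\omega)$ weighted by $\gamma$.
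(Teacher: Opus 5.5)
Your proof is correct: Russo--Dye together with \cref{lem:OParseval} bounds the dissipative part by $2$. Antisymmetrizing the odd kernel $c(t)$, and controlling the short-time piece through the local commutator $[\vG,\hat{\vA}(\omega)]=\widehat{[\vG,\vA^a]}(\omega)$, gives $\|\vC^a\|=O(1+\beta\sd)$. The operator Cauchy--Schwarz step is rigorous exactly as you indicate, via column operators into $\calH\otimes L^2(\BR)$. The paper gives no argument of its own here and simply imports the bound from \cite[Corollary A.1]{chen2025GibbsMarkov}, so yours is a valid self-contained proof. Incidentally, splitting the $t$-integral at $\min(\beta,\sd^{-1})$ rather than at $\beta$ improves the coherent contribution to $O(1+\log(1+\beta\sd))$.
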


The defining feature of this family of Lindbladians \cite{chen2023efficient} is that it simultaneously satisfies KMS detailed-balance w.r.t to the Gibbs state $\vrho_{\vG}$, as defined in \eqref{eq:detailed-balance}, while still satisfying some form of (quasi-)locality; in a sense inheriting the geometry of the underlying Hamiltonian. This can be made precise by the following lemma:

\begin{lemma}
    [Quasi-locality of the Lindbladian dynamics, {\cite[Lemma VII.3]{chen2025GibbsMarkov}}]\label{lemma:lindblad-LR} Let $\vA^a$ be a single-site operator on site $i\in [n]$ s.t. $\|\vA^a\|\leq 1$. In the context of \cref{def:ckg-lind}, there exists constants $c_1, c_2>0$ such that any $\ell>1$, the super-operator $\CL_{\vG_{\CB[i,\ell]}, a}$ satisfies:
    \begin{equation}
        \|\CL_{\vG,a}^\dagger - \CL_{\vG_{\CB[i,\ell]}, a}^\dagger\|_{\infty-\infty} \leq c_1\cdot e^{-c_2\ell}.
    \end{equation}
   
\end{lemma}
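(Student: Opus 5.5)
The plan is to control the difference $\CL^\dagger_{\vG,a}-\CL^\dagger_{\vG_{\CB[i,\ell]},a}$ term by term in the decomposition of \cref{def:ckg-lind}: the transition part, the decay part, and the coherent part $\vC^a$. Each term is built from operator Fourier transforms $\hat{\vA}^a_{\vG}(\omega)$, or their time-evolved versions $\hat{\vA}^a_{\vG}(\omega,t)$. These in turn are Gaussian-weighted integrals of the Heisenberg evolution $\vA^a(t)=e^{i\vG t}\vA^a e^{-i\vG t}$. Write $\vG':=\vG_{\CB[i,\ell]}$. The central estimate is the Lieb--Robinson bound of \cref{lem:lieb-robinson-commutator}. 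Since $\vA^a$ is supported on the single site $i$, telescoping over shells $\ell'>\ell$ gives
\begin{equation}
\|e^{i\vG t}\vA^a e^{-i\vG t}-e^{i\vG' t}\vA^a e^{-i\vG' t}\|\lesssim \min\!\big(2,\ e^{\theta(v|t|-\mu\ell)}\big).
\end{equation}
Integrating this against the filter $f_\sigma(t)$, which is a Gaussian of width $\sigma^{-1}=\beta$, shows that $\hat{\vA}^a_{\vG}(\omega)-\hat{\vA}^a_{\vG'}(\omega)$ has norm $e^{-\Omega(\ell)}$ pointwise in $\omega$. One splits $|t|\le \mu\ell/(2v)$ from the Gaussian tail, so the resulting constant depends on $\beta$.

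For the transition and decay terms, I use the telescoping identity $\hat{\vA}\vO\hat{\vA}^\dagger-\hat{\vA}'\vO\hat{\vA}'^\dagger=(\hat{\vA}-\hat{\vA}')\vO\hat{\vA}^\dagger+\hat{\vA}'\vO(\hat{\vA}-\hat{\vA}')^\dagger$. The issue is the $\omega$-integral against $\gamma_{\mathsf M}(\omega)$: a pointwise bound is not integrable over $\omega\to-\infty$, because there $\gamma_{\mathsf M}=1$. Instead of working pointwise in $\omega$, I will use Parseval in time. Apply \cref{lem:OParseval}, or equivalently Cauchy--Schwarz on the operator-valued integral $\int\gamma(\omega)\hat{\vX}(\omega)\vO\hat{\vY}(\omega)^\dagger\rd\omega$ with $\|\gamma\|_\infty\le 1$. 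This reduces the bound to $\|\int\hat{\vX}^\dagger\hat{\vX}\,\rd\omega\|^{1/2}$ with $\vX=\vA^a(t)-\vA'^a(t)$, and that quantity equals a time integral of $f_\sigma(t)^2\|\vX(t)\|^2$. The integral is $e^{-\Omega(\ell)}$ by the Lieb--Robinson estimate above. The anticommutator term is handled the same way, with $\vO$ replaced by the identity.

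The coherent term is the main obstacle. First, $c(t)=1/(\beta\sinh(2\pi t/\beta))$ is singular at $t=0$. Second, $\vC^a$ involves the additional conjugation $e^{i\vG t}(\cdot)e^{-i\vG t}$. For the singularity, I would use the fact that the $c(t)$ integral is a principal value: $c$ is odd, so one can subtract the $t=0$ value. The integrand then becomes $c(t)\big(F(t)-F(0)\big)$ with $F(t)=\hat{\vA}(\omega,t)^\dagger\hat{\vA}(\omega,t)$. Bound $|F(t)-F(0)|\lesssim|t|\,\|[\vG,\cdot]\|$ only locally: the commutator with $\vG$ of the $\omega$-integrated object $\int\gamma(\omega)F\,\rd\omega$ is quasi-local, so it is bounded independently of $n$. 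The difference between the $\vG$ and $\vG'$ versions is then handled by Lieb--Robinson again. For $|t|\ge1$, $c(t)$ decays like $e^{-2\pi|t|/\beta}$. I will choose the shell-decomposition parameter $\theta$ small enough that $\theta v<2\pi/\beta$, so the factor $e^{\theta v|t|}$ is absorbed. Alternatively, I would use the known integral representation of $\vC^a$ as a Gaussian-filtered expression from \cite{chen2023efficient}, which avoids the singularity entirely; I would try that first. Combining the three terms gives $c_1e^{-c_2\ell}$, with $c_1,c_2$ depending on $\beta,\sd,\sk,\sq$.
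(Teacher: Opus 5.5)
Your proposal is correct and follows essentially the paper's route. The paper obtains this lemma as a direct consequence of the Lieb--Robinson shell bound (\cref{lem:lieb-robinson-commutator}), and its appendix argument for \cref{lemma:lipschitz} handles both parts as you do: the dissipative part by operator Cauchy--Schwarz plus Parseval (sidestepping the non-integrable $\omega$ tail where $\gamma_{\mathsf M}=1$), and the coherent part by exploiting the oddness of $c(t)$ through a $\pm t$ symmetrization. The one step you leave implicit is that near $t=0$ you must combine your $O(|t|)$ bound with the Lieb--Robinson bound on the $\vG$ versus $\vG_{\CB[i,\ell]}$ difference, for instance via $\min(|t|,e^{-\Omega(\ell)})$, which integrates against $c(t)$ to $O(\ell e^{-\Omega(\ell)})$; the paper's symmetrized expansion instead yields the product $|t|\cdot e^{-\Omega(\ell)}$ directly.
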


The lemma above is a simple consequence of standard Lieb-Robinson bounds (\cref{lem:lieb-robinson-commutator}); we dedicate \cref{section:locality} to further discussion on quasi-locality properties.

Instead of thinking in the Schrodinger picture (i.e., the evolution of the state under the dynamics), it will be instructive to consider the evolution of operators. \cite[Lemma C.2]{rouze2024efficient} derived a clean and explicit expression for the Dirichlet form of the Lindbladians \eqref{eq:exact_DB_L} we study:
\begin{lemma}[The Dirichlet form {\cite[Lemma C.2]{rouze2024efficient},\cite[Lemma X.3]{chen2025GibbsMarkov}}]
\label{lem:Dicirchlet}
    The Dirichlet form of the Lindbladian  $\CL_{\vG, a}$~\eqref{eq:exact_DB_L}, is a $\vrho_{\vG}$-weighted inner product of commutators:
\begin{align}
   -\langle \vX,\CL_{\vG, a}^{\dagger}[\vY]\rangle_{\vrho_{\vG}} = \iint_{-\infty}^{\infty} \langle [\hat{\vA}_{\vG}^a(\omega,t),\vX ], [\hat{\vA}_{\vG}^a(\omega,t),\vY ]\rangle_{\vrho_{\vG}}\cdot g(t) h(\omega) \cdot \rd t \rd \omega,
\end{align}
where the specific functions $g, h$ are time and frequency filter functions:
\begin{align}
 h(\omega)&= e^{-\sigma^2\beta^2/8} e^{-\labs{\omega}\beta/2}\ge 0, \quad g(t) = \frac{1}{\beta\cosh(2\pi t/ \beta)}\ge 0.
\end{align}
\end{lemma}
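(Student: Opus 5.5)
The plan is a direct computation in the Heisenberg picture. I would start from \cref{def:ckg-lind} and write
\begin{equation*}
\CL_{\vG,a}^\dagger[\vY]= i[\vC^a,\vY]+\int \gamma(\omega)\Big(\hat{\vA}^a(\omega)^\dagger \vY\hat{\vA}^a(\omega)-\tfrac12\{\hat{\vA}^a(\omega)^\dagger\hat{\vA}^a(\omega),\vY\}\Big)\rd\omega .
\end{equation*}
The dissipative integrand can be rewritten in commutator form using the elementary identity
\begin{equation*}
\hat{\vA}^\dagger\vY\hat{\vA}-\tfrac12\{\hat{\vA}^\dagger\hat{\vA},\vY\}=\tfrac12\big(\hat{\vA}^\dagger[\vY,\hat{\vA}]+[\hat{\vA}^\dagger,\vY]\hat{\vA}\big).
\end{equation*}
Pairing with $\vX$ in the KMS inner product $\tr[\vX^\dagger\vrho^{1/2}(\cdot)\vrho^{1/2}]$, with $\vrho=\vrho_{\vG}$, then produces terms of the form $\tr[\vX^\dagger\vrho^{1/2}\hat{\vA}^\dagger[\vY,\hat{\vA}]\vrho^{1/2}]$. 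I would move the loose $\hat{\vA}^\dagger$ (respectively $\hat{\vA}$) across $\vrho^{1/2}$ and cyclically around the trace, so that it lands next to $\vX^\dagger$ and forms a second commutator $[\hat{\vA},\vX]^\dagger$.

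Each such move conjugates an operator Fourier transform by a power of $\vrho_{\vG}$. I would handle these conjugations with \cref{lem:bounds_imaginary_conjugation}, which states that $e^{\beta'\vG}\hat{\vA}(\omega)e^{-\beta'\vG}=e^{\beta'\omega+\sigma^2\beta'^2}\hat{\vA}(\omega+2\sigma^2\beta')$. Applied with $\beta'=\pm\beta/4$, this turns the asymmetric weights $\vrho^{1/2}\cdots\vrho^{1/2}$ into symmetric $\vrho^{1/4}$-balanced expressions, at the cost of a frequency shift and an exponential factor. The Metropolis weight \eqref{eq:Metropolis}, with $\sigma=\beta^{-1}$, is designed so that after the change of variables $\omega\mapsto\omega\pm\beta\sigma^2/2$ these factors combine into a weight that is symmetric under the KMS reflection. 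I expect that weight to be $h(\omega)=e^{-\sigma^2\beta^2/8}e^{-|\omega|\beta/2}$, whose symmetric part is exactly what detailed balance requires.

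After symmetrisation, the dissipative part splits into two pieces:
\begin{itemize}
\item a symmetric piece that is already of the form $\int h(\omega)\,\langle[\hat{\vA}(\omega),\vX],[\hat{\vA}(\omega),\vY]\rangle$, but with the operators sandwiched by fractional powers $\vrho^{1/4\pm i t}$ rather than by the bare KMS weight;
\item an antisymmetric remainder.
\end{itemize}
To convert fractional imaginary-time conjugations into real-time conjugations $\hat{\vA}(\omega,t)=e^{i\vG t}\hat{\vA}(\omega)e^{-i\vG t}$ with a nonnegative density, I would use a Fourier identity of the type
\begin{equation*}
\frac{1}{\cosh(\beta\nu/4)}\;\propto\;\int \frac{e^{-i\nu t}}{\beta\cosh(2\pi t/\beta)}\,\rd t .
\end{equation*}
This is the source of $g(t)$. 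Its Hilbert-transform partner is $c(t)=1/(\beta\sinh(2\pi t/\beta))$, which appears in the coherent term $\vC^a$.

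The main obstacle is the exact cancellation of the antisymmetric remainder against $\langle\vX,i[\vC^a,\vY]\rangle_{\vrho}$. This is the step that makes \eqref{eq:exact_DB_L} detailed-balanced in \cite{chen2023efficient}, so I would first re-express $\vC^a$ using its defining integral in terms of $\hat{\vA}(\omega,t)^\dagger\hat{\vA}(\omega,t)$. I would then verify on the Bohr-frequency decomposition that the kernel $(\tanh)$-type difference produced by the asymmetric remainder equals the Fourier transform of $c(t)$, using the identity $\frac{1}{\cosh}-\text{(shifted)}=\text{Fourier}[1/\sinh]$ in the principal-value sense.

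Once this cancellation is verified, collecting terms gives the stated formula, with the nonnegativity of $g$ and $h$ evident. As a consistency check, setting $\vX=\vY$ should give $-\langle\vX,\CL^\dagger\vX\rangle_{\vrho}\ge0$.
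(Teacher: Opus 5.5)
Your plan is correct and is essentially the standard derivation of this identity. The paper does not prove the lemma itself but imports it from the references in the statement, and that derivation reduces, as you anticipate, to Bohr-frequency bookkeeping with three ingredients: the modular frequency shift of the OFT, the shifted Metropolis weight symmetrising to $h$, and the Fourier pairs $1/\cosh(\beta\nu/4)\leftrightarrow 2g(t)$ and $\tanh(\beta\nu/4)\leftrightarrow c(t)$ (principal value).

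The one step to tighten is your claim that moving a loose $\hat{\vA}$ across $\vrho_{\vG}^{1/2}$ closes a commutator with $\vX$. It does not: the two resulting terms carry oppositely shifted frequencies, so no single commutator forms. The cleanest execution is to expand the right-hand side into its four terms and compare Bohr components $\nu,\nu'$, using $\hat{\vA}(\omega)^\dagger=\hat{\vA}(-\omega)$. The $g$-averaged cross terms then reproduce $-\int\gamma_{\mathsf M}(\omega)\hat{\vA}(\omega)^\dagger\vY\hat{\vA}(\omega)\,\rd\omega$ via $\frac{e^{-\beta\nu/2}+e^{-\beta\nu'/2}}{2\cosh(\beta(\nu-\nu')/4)}=e^{-\beta(\nu+\nu')/4}$. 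The $g$-averaged diagonal terms give $\tfrac12\{\mathbf{R},\vY\}$, where $\mathbf{R}:=\int\gamma_{\mathsf M}(\omega)\hat{\vA}(\omega)^\dagger\hat{\vA}(\omega)\,\rd\omega$, plus a $\tanh$-weighted term that equals exactly $-i[\vC^a,\vY]$.
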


Our analysis will hinge crucially on the structure of the quantum Fisher information. 

\begin{definition}[The Fisher information, {\cite[Theorem B.1]{BCV25}}]
\label{def:relative-fisher-information} Given a jump operator $\vA^a$, we define the Fisher information of a full rank state $\vsigma$ relative to $\vrho_\vG$ as 
\begin{equation}
    \FI_{\vG,a}(\sig|| \vrho_{\vG})
  :=
  -\Tr\left[
    \CL_{\vG,a}[\sig]( \log\sig-\log\vrho_{\vG})
  \right] =  \QF_{a}(\log\sig-\log\vrho_{\vG}, \vG;\vsigma),
\end{equation}
where we define the functional:
\begin{equation}
 \QF_{a}(\vX, \vG;\vsigma)
  :=
  \int_{-1/2}^{1/2}
  \int_{\BR}\int_{\BR}
  \left\|
    [\hat{\vA}_{\vG}^{a}(\omega,t),\vX]
  \right\|_{\vsigma,s}^{2}
  h_s(\omega)g_s(t)
  \,\rd t\,\rd\omega\,\rd s \geq 0.
  \label{eq:fisher-quadratic}
\end{equation}
In terms of the positive kernels 
\begin{align}
  h_s(\omega)
  &:=
  \exp\!\left(
    \frac{s\beta(2\omega-s\beta\sigma^{2})}{2}
  \right)
  h(\omega-s\beta\sigma^{2}), \quad 
  g_s(t)
  :=
  \frac{2}{\beta}\,
  \frac{
    \cos(\pi s)\cosh(2\pi t/\beta)
  }{
    \cosh(4\pi t/\beta)+\cos(2\pi s)
  },
  \label{eq:gs-explicit}
\end{align}
\end{definition}

\subsection{Pauli shadow tomography}

In our structure learning algorithm, we will require the following formulation of Pauli Shadow tomography \cite{huang2020predicting, Huangderandomized2021, aaronson2018shadowtomographyquantumstates, evans2019scalablebayesianhamiltonianlearning, Huang_2021-information, Cotler_2020}, which enables sample and time efficient learning of collections of local Pauli expectation values. 

\begin{theorem}
    [Pauli shadow tomography, {\cite[Theorem 1, Proposition 3]{huang2020predicting}}]\label{theorem:pauli-shadows}
    Fix $\eta, \delta>0$. Let $\vsigma$ be an $n$ qubit quantum state and $\vP_1, \cdots, \vP_{\mathsf{m}}\in \CS_{[n]}^{\mathsf{w}}$ be a collection of $\mathsf{m}$ distinct $\mathsf{w}-$local Pauli observables. Then, there exists a learning algorithm using 
    \begin{equation}
        \mathsf{N_{samp}}:=O\bigg( 2^{2\mathsf{w}}\cdot \eta^{-2}\cdot \log \frac{\mathsf{m}}{\delta}\bigg)
    \end{equation}
    copies of $\vsigma$ which, with probability $\geq 1-\delta$ returns estimates for $\tr[\vsigma \vP_i]$ up to additive error $\eta$ for each $i\in [\mathsf{m}]$. The algorithm runs in $\poly(n, \mathsf{m},\mathsf{N_{samp}} )$ classical and quantum time.
\end{theorem}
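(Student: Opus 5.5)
We would prove this with the standard classical-shadows argument, specialised to random single-qubit Pauli measurements. Each copy of $\vsigma$ gets a fresh measurement: for every qubit $j\in[n]$ we draw a basis $b_j\in\{\vX,\vY,\vZ\}$ uniformly and independently, measure qubit $j$ in that basis, and record the outcome $o_j\in\{\pm1\}$. For a Pauli string $\vP=\bigotimes_{j\in\supp(\vP)}\vP^{(j)}$ of weight $|\supp(\vP)|\leq\mathsf{w}$, the single-shot estimator is
\begin{equation}
\hat{x}_{\vP}:=\prod_{j\in\supp(\vP)} 3\cdot \mathbb{1}[b_j=\vP^{(j)}]\cdot o_j .
\end{equation}
The bases on different qubits are independent, and $\BE[3\cdot\mathbb{1}[b_j=\vP^{(j)}]\,o_j\mid b_j=\vP^{(j)}]$ factors correctly through the measurement statistics. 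From this we get unbiasedness, $\BE[\hat{x}_{\vP}]=\tr[\vsigma\vP]$. (Equivalently, the inverse of the measurement channel is $\bigotimes_j(3\,\cdot-\vI)$, so the estimator is $\tr[\vP\,\hat{\vsigma}]$ for the snapshot $\hat{\vsigma}$.)

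Next we bound the variance. Since $|\hat{x}_{\vP}|\leq 3^{\mathsf{w}}$ and it is nonzero only when all $|\supp(\vP)|$ bases match, which happens with probability $3^{-|\supp(\vP)|}$, we get $\BE[\hat{x}_{\vP}^2]\leq 3^{2\mathsf{w}}\cdot 3^{-\mathsf{w}}=3^{\mathsf{w}}\leq 2^{2\mathsf{w}}$. This is the shadow-norm bound and is the only place the locality $\mathsf{w}$ enters.

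We then use a median-of-means estimator. We split the $\mathsf{N_{samp}}$ snapshots into $K=O(\log(\mathsf{m}/\delta))$ batches of size $N_0=O(2^{2\mathsf{w}}\eta^{-2})$. In each batch we average $\hat{x}_{\vP_i}$, and the final estimate is the median of the $K$ batch means. By Chebyshev's inequality, each batch mean is $\eta$-accurate with probability at least $3/4$. A Hoeffding bound on the number of bad batches then shows the median fails with probability at most $e^{-\Omega(K)}\leq\delta/\mathsf{m}$. A union bound over the $\mathsf{m}$ observables gives simultaneous $\eta$-accuracy with probability at least $1-\delta$, using $N_0K=O(2^{2\mathsf{w}}\eta^{-2}\log(\mathsf{m}/\delta))$ copies. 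The same snapshots serve all observables, which is why $\mathsf{m}$ enters only logarithmically.

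For runtime, each copy costs $O(n)$ random single-qubit rotations and measurements. Each estimator evaluation costs $O(\mathsf{w})$ classical time, so the total is $\poly(n,\mathsf{m},\mathsf{N_{samp}})$. None of these steps is difficult; the step that needs the most care is the variance computation, specifically verifying the factorisation across qubits so that the bound is exponential only in $\mathsf{w}$ and not in $n$.
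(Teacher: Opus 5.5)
Your proposal is correct and is essentially the argument behind the cited result: classical shadows from random single-qubit Pauli measurements, a second-moment bound exponential only in the locality, and median-of-means with a union bound. The paper does not reprove this theorem; it imports it from \cite{huang2020predicting}. Your direct computation for Pauli observables even gives $3^{\mathsf{w}}$ in place of the general $4^{\mathsf{w}}$ shadow-norm bound.

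One small fix in the variance step: use $|\hat{x}_{\vP}|\leq 3^{|\supp(\vP)|}$, so the second moment is exactly $3^{2|\supp(\vP)|}\cdot 3^{-|\supp(\vP)|}=3^{|\supp(\vP)|}\leq 3^{\mathsf{w}}$. As written, you pair the magnitude bound $3^{2\mathsf{w}}$ with the matching probability $3^{-\mathsf{w}}$, but that probability is actually $3^{-|\supp(\vP)|}$. The written inequality therefore only holds when $|\supp(\vP)|=\mathsf{w}$.
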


We require a simple corollary of this statement, concerning the computation of the same Pauli expectation values when the unknown state $\vsigma$ is subjected to a set of disjoint channels.

\begin{lemma}
    [Pauli shadows after disjoint channels]
    \label{theorem:pauli-shadows-after-channels}
    In the same context as \cref{theorem:pauli-shadows}, suppose we partition the $n$ qubits into disjoint subsets $\sA_1, \sA_2\cdots , \sA_\mathsf{s} \subseteq [n]$, and define a set of efficient CPTP maps $\Phi_1[\cdot ],\cdots, \Phi_\mathsf{s}[\cdot ]$ where $\Phi_j$ acts only on $\sA_j$. Then, there exists a learning algorithm using 
    \begin{equation}
        \mathsf{N_{samp}}:=2^{O(\mathsf{w})}\cdot O\bigg(\eta^{-2}\cdot \log \frac{\mathsf{m\cdot \mathsf{s}}}{\delta}\bigg)
    \end{equation}
    copies of $\vsigma$ which, with probability $\geq 1-\delta$ returns estimates for $\tr[\Phi_j[\vsigma] \vP_i]$ up to additive error $\eta$ for each $i\in [\mathsf{m}], j\in [\mathsf{s}]$. The algorithm runs in $\poly(n, \mathsf{m},\mathsf{N_{samp}} )$ classical and quantum time.
\end{lemma}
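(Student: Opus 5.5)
The plan is to reduce to \cref{theorem:pauli-shadows} applied to a single product channel. Since the $\sA_j$ are disjoint and partition $[n]$, the map $\Phi:=\Phi_1\otimes\cdots\otimes\Phi_\mathsf{s}$ is an efficient CPTP map on all $n$ qubits. The difficulty is that we want $\tr[\Phi_j[\vsigma]\vP_i]$ with only \emph{one} channel applied at a time, not all of them jointly. I would therefore randomize.

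\textbf{Random subset of channels.} For each copy of $\vsigma$, draw an independent uniformly random subset $\mathsf{T}\subseteq[\mathsf{s}]$, equivalently a bit $b_j\in\{0,1\}$ per block. Apply $\Phi_j$ on $\sA_j$ when $b_j=1$ and the identity otherwise. Then perform random single-qubit Pauli measurements as in the classical shadows protocol of \cite{huang2020predicting}, recording the outcome and the bits $b$.

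\textbf{Unbiased estimator.} A $\mathsf{w}$-local Pauli $\vP_i$ meets at most $\mathsf{w}$ blocks. Let $J_i$ be the set of blocks it touches, and suppose $j\in J_i$ (if $j\notin J_i$, it suffices to estimate $\tr[\vsigma\vP_i]$ with $b=0$ on $J_i$). Restrict to copies with $b_j=1$ and $b_{j'}=0$ for all $j'\in J_i\setminus\{j\}$. This event has probability $2^{-|J_i|}\ge 2^{-\mathsf{w}}$, and conditioned on it the reduced state on $\supp(\vP_i)$ is exactly that of $\Phi_j[\vsigma]$. Here we use that channels on other blocks act on disjoint qubits, so they leave the marginal on $\supp \vP_i$ unchanged by trace preservation. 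The usual shadow estimator for $\vP_i$, averaged over these conditioned copies, is thus unbiased for $\tr[\Phi_j[\vsigma]\vP_i]$ and bounded by $3^{\mathsf{w}}$.

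\textbf{Concentration and union bound.} Each relevant pair $(i,j)$ has $|J_i|\le \mathsf{w}$ candidate values of $j$ that matter. With $N$ copies, the number of usable copies for a pair is at least $2^{-\mathsf{w}}N/2$ with high probability by Chernoff. Median-of-means or Hoeffding then gives error $\eta$ with failure probability $\delta/(\mathsf{m}\mathsf{s})$, provided $N=2^{O(\mathsf{w})}\eta^{-2}\log(\mathsf{m}\mathsf{s}/\delta)$. A union bound over all $\mathsf{m}\mathsf{s}$ pairs finishes the argument, covering both the trivial pairs $j\notin J_i$ and the Chernoff events.

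\textbf{Runtime and main obstacle.} The runtime is polynomial since each copy applies efficient channels and shadow post-processing is polynomial. The main point to check carefully is the conditioning step: the selection event must depend only on the recorded $b$, which is independent of the measurement outcomes. This independence is what keeps the conditioned estimator unbiased and keeps the samples i.i.d.
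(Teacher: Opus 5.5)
Your proposal is correct and is essentially the paper's argument: toggle each $\Phi_j$ with an independent random bit $b_j$, and condition on $b_j=1$ and $b_{j'}=0$ for the other (at most $\mathsf{w}$) blocks meeting $\supp(\vP_i)$, which costs only a $2^{O(\mathsf{w})}$ factor. The paper stores the bits as a classical register of a state $\valpha$ and rewrites each conditioned quantity as a combination of $2^{O(\mathsf{w})}$ many $2\mathsf{w}$-local Paulis, so that \cref{theorem:pauli-shadows} applies as a black box, whereas you condition by explicit subsampling with Chernoff and Hoeffding; one small wording fix is that $b$ is independent of the random measurement bases but not of the outcomes, and what you actually need is only that the selection event is a function of $b$ alone.
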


The only issue here lies in that any given Pauli operator $\vP_i$ may not be entirely contained within a single subset $\sA_j$. Nevertheless a simple coloring argument reduces this setting to \cref{theorem:pauli-shadows}.

\begin{proof}
    Suppose we sample a bitstring $b_1, \cdots, b_{\mathsf{s}}\in \{0, 1\}$ at random, and apply each $\Phi_j$ to the state $\vsigma$ if $b_j=1$. The mixed state $\valpha$ captures the outcome of this experiment:
    \begin{equation}
        \valpha = 2^{-\mathsf{s}} \sum_{b\in \{0, 1\}^{\mathsf{s}}} \ketbra{b}\otimes \bigg(\bigotimes_{j}^{\mathsf{s}} \Phi^{b_j}\bigg)[\vsigma] 
    \end{equation}
     Moreover, since each $\vP_i$ is at most $\mathsf{w}$ local, for any $j\in [\mathsf{s}]$ the desired estimate $\tr[\Phi_j[\vsigma] \vP_i]$ can be written as the expectation value on $2\mathsf{w}$ qubits of $\valpha$, simply by conditioning on $b_j=1$ and $b_k=0$ for each subset $\mathsf{A}_k\neq\mathsf{A}_j$ which intersects $\mathsf{supp}(\vP_i)$. The resulting expectation value is a linear combination of $2^{O(\mathsf{w})}$ Pauli operators, each of which is $2\mathsf{w}$ local. \cref{theorem:pauli-shadows} then gives the desired bound.
\end{proof}

\section{The stationarity test and the quantum Fisher information}
\label{section:fisher-identifiability}

We dedicate this section to a description of the \textit{stationarity test}, a (quasi)-local and efficiently-checkable criterion which verifies if one has correctly identified or learned the underlying, ``ground truth'' Hamiltonian $\vH$ from samples of its Gibbs state $\vrho_\vH$. Simply put, given a candidate ``guess'' Hamiltonian $\vG$, the test asks whether a prescribed local observable is approximately stationary under the dynamics generated by
\(\vG\). We present a formal account of the test in \cref{subsection:stationarity-test}.

The central virtue of the stationarity test, and the main technical content of this section, is to prove its soundness: namely, if for all local observables, $\vrho_\vH$ is approximately stationary under $\vG$, then we must have $\vH\approx \vG$ (in some local sense). We make this argument precise in \cref{subsection:fixed-lattice-local-faithfulness}, where we rely on the fact that the Fisher information (\cref{def:relative-fisher-information}) between these Gibbs states $\vrho_{\vG}$ and $\vrho_\vH$ happens to be captured by a linear combination of stationarity tests, and furthermore satisfies a certain convexity property.

\subsection{The stationarity test}
\label{subsection:stationarity-test}

We begin with a restatement of the definition. Fix the inverse temperature $\beta>0$ and frequency filter $\sigma:=\beta^{-1}$ in the Lindbladian \cref{def:ckg-lind}.

\begin{definition}[The stationarity test]
\label{def:stationarity-test}
Let $\CA$ be a set of Pauli jumps, and
\(\CL_{\vG,\CA}:=\sum_{a\in \CA}\CL_{\vG,a}\).  For an observable \(\vO\), a
guess Hamiltonian $\vG$, and a state $\vsigma$, define
\begin{equation}
  \mathsf{Test}_{\CA}[\vO,\vG;\sig]
  :=
  \Tr\!\left[
    \sig\,\CL_{\vG,\CA}^{\dagger}[\vO]
  \right].
  \label{eq:stationarity-test}
\end{equation}
\end{definition}

\noindent For a single jump $a\in \CA$, we write \(\mathsf{Test}_{a}\), and in a slight abuse of notation for a site $i\in [n]$ we write \(\mathsf{Test}_{i} = \mathsf{Test}_{\CS_i}\) for the sum over single-site Pauli operators. We refer the reader back to \cref{section:intro-stat-test} for its basic properties, namely the completeness condition (\cref{rmk:completeness-intro}) and the relationship to the Fisher information (\cref{rmk:stat-fisher-intro}). 

As we discuss in \cref{section:gibbs-alg} (\cref{lemma:stationarity-test-single-measurement}), the efficiency of a single measurement under said test can be reduced to simulation of the Lindbladian evolution \cite[Theorem I.2]{chen2023efficient}. Due to Lieb-Robinson bounds, such tests can also be performed quasi-locally:

\begin{lemma}
    [Locality of the stationarity test]\label{lemma:stest-local}
    Let $\vO$ be a local observable of support $\sA\subseteq [n]$, $\vA^a$ be a single-site Pauli operator with support within $\sA$, and $\sB[\sA, \ell]\subseteq [n]$ the region of qubits of radius $\ell>0$ around $\vO$. Then, 
    \begin{equation}
    \bigg| \mathsf{Test}_{a}[\vO,\vG;\sig] - \mathsf{Test}_a[\vO,\vG_{\sB[\sA, \ell]};\sig_{\sB[\sA, \ell]}]\bigg| \leq c_1\cdot e^{-c_2\cdot \ell} \cdot \|\vO\|,
\end{equation}
with $c_1, c_2$ explicit constants as a function of the Hamiltonian parameters. 
\end{lemma}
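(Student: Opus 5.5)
The plan is to reduce the statement to the quasi-locality of the Lindbladian in \cref{lemma:lindblad-LR}, and then use the fact that a strictly local observable only sees the reduced density matrix.

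\textbf{Step 1: replace the Hamiltonian.} Let $i$ be the site carrying $\vA^a$, and write
\begin{equation}
\mathsf{Test}_a[\vO,\vG;\vsigma]=\tr\big[\vsigma\,\CL^\dagger_{\vG,a}[\vO]\big].
\end{equation}
Since $\vO$ is supported on $\sA$, we cannot directly cite \cref{lemma:lindblad-LR}, which is stated for truncation around the jump site and with the difference applied to general operators. Instead I will rerun its proof: the Lieb--Robinson shell decomposition of \cref{lem:lieb-robinson-commutator}, applied to the time-evolved jumps $\hat{\vA}^a_{\vG}(\omega,t)$. Because $i\in\sA$, we have $\CB[i,\ell]\subseteq\CB[\sA,\ell]$. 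The three ingredients of $\CL^\dagger_{\vG,a}$ (the transition term, the decay term, and the coherent part $\vC^a$) are then integrals over $\omega,t$ of products of Heisenberg-evolved copies of $\vA^a$. Truncating $\vG$ to $\vG_{\CB[\sA,\ell]}$ changes each evolved copy by at most $e^{\theta(v|t|-\mu\ell)}$. The Gaussian filter $f_\sigma$ in the operator Fourier transform, the kernel $c(t)$, and the weight $\gamma(\omega)$ (combined with \cref{lem:OParseval} to control the $\omega$-integral) suppress large $|t|$. Choosing $\theta$ small and splitting the time integrals at $|t|\sim \mu\ell/2v$ should give
\begin{equation}
\big\|\CL^\dagger_{\vG,a}[\vO]-\CL^\dagger_{\vG_{\sB[\sA,\ell]},a}[\vO]\big\|\le c_1e^{-c_2\ell}\|\vO\|.
\end{equation}
By H\"older's inequality with $\|\vsigma\|_1=1$, the same bound holds for the difference of the two tests evaluated on $\vsigma$. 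Here $\vG_{\sB[\sA,\ell]}$ is read as the sum of the clauses of $\vG$ supported inside $\sB[\sA,\ell]$, which contains $\vG_{\CB[\sA,\ell]}$. The extra clauses in $\vG_{\sB[\sA,\ell]}$ lie at distance about $\ell$ from $\sA$, so the same shell argument absorbs them.

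\textbf{Step 2: replace the state.} The operator $\CL^\dagger_{\vG_{\sB[\sA,\ell]},a}[\vO]$ is built from $\vO$, $\vA^a$, and a Hamiltonian all supported in $\sB:=\sB[\sA,\ell]$. It is therefore exactly supported on $\sB$, i.e. of the form $\vX_{\sB}\otimes\vI$, so
\begin{equation}
\tr[\vsigma\,\vX_{\sB}\otimes\vI]=\tr[\vsigma_{\sB}\vX_{\sB}].
\end{equation}
This is an identity with no additional error, and combining it with Step 1 gives the claimed bound.

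\textbf{Main obstacle.} The hard part is Step 1: controlling the coherent term $\vC^a$, whose kernel $c(t)=1/(\beta\sinh(2\pi t/\beta))$ is singular at $t=0$. I would handle it by the standard principal-value treatment used in \cite{chen2023efficient,chen2025GibbsMarkov}. Since $c$ is odd, the singular contribution near $t=0$ becomes an integral of a difference of evolved operators. I would bound that difference by the $t$-derivative, which is controlled by the commutator with the truncated Hamiltonian, and then apply the Lieb--Robinson estimate to that derivative. The tail $|t|\gtrsim\beta$ is integrable, and the $\omega$-integral is controlled by $\gamma_{\mathsf M}$ together with \cref{lem:OParseval}. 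Alternatively, if \cref{lemma:lindblad-LR} is taken to apply to any truncation region containing $\CB[i,\ell]$, Step 1 follows immediately from it.
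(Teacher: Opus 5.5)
Your proposal is correct and follows essentially the paper's route. The paper's one-line proof also reduces the Hamiltonian replacement to the Lieb--Robinson bound, with explicit constants from \cref{lemma:lipschitz} (truncation treated as a coefficient perturbation), followed by H\"older and the exact state replacement, since $\CL^\dagger_{\vG_{\sB[\sA,\ell]},a}[\vO]$ is supported on $\sB[\sA,\ell]$. Your Step 1 also follows directly by applying \cref{lemma:lindblad-LR} to both $\vG$ and $\vG_{\sB[\sA,\ell]}$, whose radius-$\ell$ clause balls around the jump site both equal $\CB[i,\ell]$, and then using the triangle inequality, so you need not redo the coherent-term analysis. (Where the paper does treat that term, in \cref{lemma:lipschitz}, it symmetrizes in $\pm t$ and gets the small-$t$ factor from the difference of shifted Gaussian filters rather than from a commutator with the Hamiltonian.)
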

\begin{proof}
    Follows directly from the Lieb-Robinson bound of \cref{lem:lieb-robinson-commutator}, see \cref{lemma:lipschitz} for explicit constants.
\end{proof}

The main result of this section is the following quantitative \textit{soundness} guarantee. If a given guess $\vG$ approximately satisfies the stationarity test for every choice of local observable $\vO$ (and in particular, the Fisher information of the sample $\vrho_{\vH}$ relative to $\vrho_{\vG}$ is small), then it must also be close to the ground truth $\vH$. 

Henceforth we assume $\beta\cdot \sd\geq 1$ (i.e. low temperatures) and simply rescale later (see \cref{remark:rescale}).

\begin{lemma}[Convexity of the Fisher Information]\label{lemma:fisher-convexity}
    Fix $\beta\geq \sd^{-1}$. Let $\vG=\sum_{\gamma\in \Gamma} g_\gamma \vP_\gamma, \vH=\sum_{\gamma\in \Gamma} h_\gamma \vP_\gamma$ be $(\sd, \sD, \scgr_{\sgr})$ Hamiltonians (\cref{def:polynomial-growth-hamiltonian}) such that $\max_{\gamma\in \Gamma}|h_\gamma-g_\gamma|\leq \eta$. 
    
    Then, for any site $i\in [n]$ and $\Omega>0$:
    \begin{align}
        \max_{\gamma\ni i} |g_\gamma-h_\gamma|\leq c_1\cdot \bigg(e^{\Omega}\cdot \FI_{\vG,i}(\vrho_{\vH}|| \vrho_{\vG})^{1/2} + \scgr_{\sgr}\cdot e^{-c_2\Omega}\cdot \eta  \bigg)
    \end{align}
    for an appropriate set of constants $0<c_1\leq e^{\poly(\beta\sd, \sD)} , c_2\propto (\beta\cdot \sd)^{-1}$. 
\end{lemma}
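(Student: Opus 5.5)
We plan to chain three inequalities: a lower bound on the Fisher information by a filtered commutator norm, an approximate convexity step converting filtered commutators into bare commutators $[\vA^a,\vH-\vG]$, and a local identifiability step extracting the coefficients $|g_\gamma-h_\gamma|$ for $\gamma\ni i$ from those bare commutators.

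\textbf{Step 1 (Fisher information as a gradient-square).} Since $\log\vrho_{\vH}-\log\vrho_{\vG}=-\beta(\vH-\vG)$ up to an additive scalar, and scalars commute with everything, \cref{def:relative-fisher-information} gives
\begin{equation}
\FI_{\vG,a}(\vrho_{\vH}\|\vrho_{\vG})=\beta^2\int_{-1/2}^{1/2}\iint\big\|[\hat{\vA}^a_{\vG}(\omega,t),\vH-\vG]\big\|^2_{\vrho_{\vH},s}\,h_s(\omega)g_s(t)\,\rd t\,\rd\omega\,\rd s .
\end{equation}
Summing over $a\in\CS_i$ gives $\FI_{\vG,i}$. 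Restricting the $s$-integral to a neighborhood of $s=0$ where $g_s,h_s$ are bounded below, and pairing $\pm s$ via \cref{lemma:kms-to-s}, we lower bound this by a KMS-norm quantity $\beta^2\iint\|[\hat{\vA}^a_{\vG}(\omega,t),\vH-\vG]\|^2_{\vrho_{\vH}}\,\tilde h(\omega)\tilde g(t)$ with explicit positive kernels. The constants lost here are $e^{O(\beta\sd)}$.

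\textbf{Step 2 (approximate convexity: remove the filter).} The goal is $\|[\vA^a,\vH-\vG]\|_{\vrho_{\vH}}\lesssim e^{\Omega}(\text{Step 1 quantity})^{1/2}+e^{-c_2\Omega}\eta$. Using \cref{lem:sumoverenergies} together with time averaging, we write $\vA^a$ as a superposition $\iint \hat{\vA}^a_{\vG}(\omega,t)\,\phi(\omega,t)\,\rd\omega\,\rd t$. The commutator with $\vH-\vG$ is linear, so the triangle inequality and Cauchy--Schwarz apply. We truncate to $|\omega|\le\Omega/\beta$ and $|t|\le\Omega\beta$. On the truncated region the ratio $\phi^2/(\tilde h\tilde g)$ is at most $e^{O(\Omega)}$, which produces the $e^{\Omega}\FI^{1/2}$ term. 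On the tails we bound the commutator in operator norm. The operator Fourier transform decays in $\omega$ (\cref{lem:bounds_oft}, \cref{lem:bounds_imaginary_conjugation}) and the Gaussian filter decays in $t$. The commutator $\|[\hat{\vA}^a_{\vG}(\omega,t),\vH-\vG]\|$ is controlled quasi-locally by Lieb--Robinson (\cref{lem:lieb-robinson-commutator}): terms $\gamma$ at distance $\ell$ contribute $\eta\, e^{\theta(v|t|-\mu\ell)}$, and summing with polynomial growth $\scgr_\sgr \ell^{\sD}$ gives $\scgr_\sgr\,\eta\,e^{-c_2\Omega}$ after optimizing $\theta$ with $c_2\propto(\beta\sd)^{-1}$. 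This is the main obstacle. The reconstruction kernel $\phi$ must be chosen, for example through a Gaussian convolution in the spirit of \cref{lem:OFT_Gaussian_expansion}, so that it is simultaneously dominated by the Fisher kernels on the bulk and rapidly decaying on the tails. Large-$t$ tails are delicate because Lieb--Robinson growth $e^{v|t|}$ competes with $f_\sigma(t)=e^{-t^2/\beta^2}$. I would first try a Gaussian-in-time smoothing $\phi(\omega,t)\propto f(t)\hat f(\omega)$ and then verify the ratio bound.

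\textbf{Step 3 (local identifiability).} We lower bound $\sum_{a\in\CS_i}\|[\vA^a,\vH-\vG]\|^2_{\vrho_{\vH}}$ by $e^{-\poly(\beta\sd)}\max_{\gamma\ni i}|g_\gamma-h_\gamma|^2$. Write $\vH-\vG=\sum_\gamma \delta_\gamma\vP_\gamma$. The commutator $[\vA^a,\vH-\vG]=\sum_{\gamma\ni i,\ \{\vA^a,\vP_\gamma\}=0}2\delta_\gamma \vA^a\vP_\gamma$ is a Pauli polynomial of $\poly(\sd)$ terms supported near $i$. Its KMS norm is at least $e^{-O(\beta\sd)}$ times its Hilbert--Schmidt norm. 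This follows by conjugating with $\vrho^{1/4}$ via \cref{lem:kms-holder}, where \cref{lem:norm_bound_pauli} controls the conjugation only for small imaginary times. Extending it to arbitrary $\beta$ requires the standard local-indistinguishability argument as in \cite{anshu2020sample,chen2025learning}, which may be invoked, and that argument gives $e^{-\poly(\beta\sd)}$. Choosing $\vA^a$ to anticommute with a given $\vP_\gamma$ then isolates $|\delta_\gamma|$. Combining Steps 1--3 and absorbing constants into $c_1\le e^{\poly(\beta\sd,\sD)}$ gives the claim.
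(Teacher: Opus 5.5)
Your three-step outline is the paper's proof. It uses $\FI_{\vG,a}(\vrho_{\vH}\|\vrho_{\vG})=\beta^2\QF_a(\vH-\vG,\vG;\vrho_{\vH})$, then an operator-Fourier reconstruction of $\vA^a$ split into a low-frequency part compared with $\QF_a$ by Cauchy--Schwarz (passing to the KMS norm via \cref{lemma:kms-to-s} on $|s|\le 1/4$) and a high-frequency part bounded in operator norm. It finishes with the faithfulness bound \cref{lemma:faithfulness}, which is the local-indistinguishability input you defer to, and the Pauli identity \eqref{eq:can-one-site-coefficient-identity}. The faithfulness constant is $c^{|\sR|\beta\sd\log 2\beta\sd}$, not $e^{O(\beta\sd)}$, but that is still within $e^{\poly(\beta\sd)}$. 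The real problem is in the step you flag as the main obstacle, the reconstruction kernel $\phi$.

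\textbf{The proposed kernel does not reconstruct $\vA^a$.} The kernel you would try first, $\phi\propto f_\sigma(t)\hat f_\sigma(\omega)$, fails because $\int\hat f_\sigma(\omega)e^{-i\omega u}\rd\omega=\sqrt{2\pi}f_\sigma(u)$. Already the frequency integral gives $\int\hat\vA_{\vG}(\omega)\hat f_\sigma(\omega)\rd\omega=\int f_\sigma(u)^2\,e^{i\vG u}\vA^a e^{-i\vG u}\rd u$. This is a time average that damps the Bohr component of $\vA^a$ at frequency $\nu$ by a Gaussian in $\nu$. More generally, a valid kernel must weight every Bohr frequency by exactly one, so it cannot be ``rapidly decaying'' in $\omega$ as you require.

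\textbf{What the paper does instead.} It writes $\sqrt{2\sigma_1\sqrt{2\pi}}\,\vA^a=\int\hat\vA_{\vG,\sigma_1}(\omega)\rd\omega$ with $\sigma_1=1/(\sqrt2\beta)$. Only on $|\omega|\le\Omega$ does it expand $\hat\vA_{\vG,\sigma_1}(\omega)$ by \cref{lem:OFT_Gaussian_expansion}, with $\sigma_2=\sigma_3=\beta^{-1}$, as a $t$-integral of $\hat\vA_{\vG,\sigma}(\omega,t)e^{-i\omega t}f_\sigma(t)$. This kernel is flat in $\omega$, so its ratio to $h_s(\omega)g_s(t)$ grows like $e^{O(\beta|\omega|)}$. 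That is why the hard frequency cutoff and the $e^{\beta\Omega}$ loss are unavoidable. No time cutoff is needed, since $f_\sigma(t)^2/g_s(t)\sim e^{-2t^2/\beta^2+2\pi|t|/\beta}$ is integrable, so the ``delicate'' large-$t$ tail you worry about never appears.

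\textbf{The high-frequency tail needs simultaneous decay.} All decay on $|\omega|>\Omega$ must come from $\hat\vA_{\vG,\sigma_1}(\omega)$ itself, and you need a bound that decays in frequency and in distance at once. Since $\vH-\vG$ has $\Theta(n)$ terms, the frequency bound (\cref{lem:bounds_oft} with \cref{lem:norm_bound_pauli}) alone gives an $n$-dependent estimate. Lieb--Robinson alone gives no $\omega$-decay, so the tail integral diverges. \cref{lemma:oft-loc} resolves this by applying both bounds to each shell difference $\hat\vA^{(\ell)}_{\vG,\sigma}(\omega)-\hat\vA^{(\ell-1)}_{\vG,\sigma}(\omega)$ and taking their geometric mean. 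This gives $e^{-|\omega|/8\sd-\mu\ell/2}$, which then sums against the polynomial growth $\scgr_{\sgr}\ell^{\sD}$.
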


\subsection{Convexity of the Fisher information}
\label{subsection:fixed-lattice-local-faithfulness}


The first step of the proof of \cref{lemma:fisher-convexity} lies in relating the Fisher information to gradients (i.e. commutators) of the log-likelihood function. In \cref{lem:uniform-truncated-bare-reconstruction} we present a general statement which holds for arbitrary sample states $\vsigma$, which will later be relevant in our generalizations. A relevant special case (discussed in our overview, \cref{section:overview}) is presented in \cref{rmk:fisher-special}.

\begin{lemma}[The Fisher Information and gradients of the log-likelihood]
\label{lem:uniform-truncated-bare-reconstruction}  Fix $\beta\geq \sd^{-1}$.
Let $\vG=\sum_{\gamma\in \Gamma} g_\gamma \vP_\gamma$ be a $(\sd, \sD, \scgr_{\sgr})$ Hamiltonian, $\vsigma$ a full-rank state, and $\vA^a$ a single-site Pauli operator. 
Then, for any $\Omega>0$ and operator $\vX:=\sum_{\gamma\in \Gamma} x_\gamma \vP_\gamma$ s.t. $|x_\gamma|\leq \eta$:
 \begin{align}
        \|[\vA^a,\vX]\|_{\vsigma}^2 \leq c_1 (e^{\Omega}\cdot \QF_{a}(\vX, \vG;\vsigma) + \scgr_{\sgr}^2\cdot   \eta^2 \cdot e^{-c_2\Omega}),
    \end{align}
    where $\QF_{a}$ is the Fisher information functional \eqref{eq:fisher-quadratic}. 
    
    In addition, a similar (but weaker) statement is true in terms of the log-likelihood:
    \begin{align}\label{eq:sigma-fisher-bound}
         \|[\vA^a,\beta \vG+\log \vsigma]\|_{\vsigma}^2 \leq  c_1\cdot \big( e^{\Omega}\cdot \FI_{\vG,a}(\vsigma|| \vrho_{\vG}) + e^{-c_2\Omega}\cdot \|\beta \vG+\log \vsigma\|^2\big).
    \end{align}
    In the above, we pick appropriate constants  $c_1 \leq e^{\poly(\beta\sd, \sD)}, c_2 \propto (\beta \sd)^{-1}$.
\end{lemma}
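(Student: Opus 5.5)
Our plan is to reconstruct the bare commutator $[\vA^a,\vX]$ from the filtered commutators $[\hat{\vA}^a_{\vG}(\omega,t),\vX]$ that appear in $\QF_a$, paying a weight $e^{\Omega}$ on a compact window of frequencies and times and bounding the remainder by a tail term.

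\textbf{Step 1: reconstruct on a window.} By \cref{lem:sumoverenergies}, $\vA^a$ is proportional to $\int \hat{\vA}^a_{\vG}(\omega)\,\rd\omega$, and hence $[\vA^a,\vX]$ is proportional to $\int [\hat{\vA}^a_{\vG}(\omega),\vX]\,\rd\omega$. We split this integral into the window $|\omega|\le \Omega/\beta$ and its complement. Inside the window we replace $\hat{\vA}^a_{\vG}(\omega)$ by a time average against a normalized bump, using \cref{lem:OFT_Gaussian_expansion}. This writes the OFT at width $\sigma_1$ as a Gaussian-weighted integral over $t$ of $\hat{\vA}^a_{\vG,\sigma_3}(\omega,t)$ at a slightly narrower width $\sigma_3$. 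After this step every integrand has the form $[\hat{\vA}^a_{\vG}(\omega,t),\vX]$. We then apply Cauchy--Schwarz in $(\omega,t)$. The window has length $O(\Omega/\beta)$, so we must divide by the kernels there:
\begin{itemize}
\item $h_s(\omega)\gtrsim e^{-O(\Omega)}$ for $|\omega|\le\Omega/\beta$ and $|s|\le 1/2$;
\item $g_s(t)$ is bounded below for $|t|\lesssim\beta$ whenever $|s|$ is bounded away from $1/2$.
\end{itemize}
To handle $s$, we use \cref{lemma:kms-to-s} and average over $s\in[-1/4,1/4]$, which dominates $\|\cdot\|_{\vsigma}^2$. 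The Gaussian time weight $f_{\sigma_2}$ decays on scale $\beta$, so truncating to $|t|\lesssim\beta\sqrt{\Omega}$ costs at most another $e^{-\Omega}$-type tail. Altogether the window contribution is at most $e^{O(\Omega)}\,\QF_a(\vX,\vG;\vsigma)$, after rescaling $\Omega$.

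\textbf{Step 2: the frequency tail.} This is the main obstacle: the tail must be controlled in terms of $\eta$, not $\|\vX\|\sim n\eta$. We use locality. Since $\vA^a$ is single-site, $[\vA^a,\vX]=[\vA^a,\vX_{\CB[i,1]}]$. For the tail we first truncate $\vG$ in the OFT to the ball $\CB[i,\ell]$ via Lieb--Robinson (\cref{lem:lieb-robinson-commutator}); the shells contribute errors $e^{\theta(v|t|-\mu\ell)}$. With that truncation, $[\hat{\vA}^a(\omega,t),\vX]$ only involves terms of $\vX$ within distance $\sim\ell$ of $i$. Their norm is at most $\eta\,|\CB[i,\ell]|\le \scgr_\sgr\eta\,\ell^{\sD}$, while the shells beyond $\ell$ decay exponentially. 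Summing shell by shell, with polynomial growth, gives a bound $\scgr_\sgr\eta\cdot\mathrm{poly}$ times geometric decay. For the large-$|\omega|$ part we use the Gaussian decay of $\hat f_\sigma(\omega)$ in the OFT of each truncated piece. The bare $|\omega|$ decay from \cref{lem:bounds_oft} involves $e^{\pm\beta\omega}$, so we would rather exploit that $\vA^a$ restricted to the ball has Bohr frequencies bounded by $O(\ell\,\sd)$. Concretely, we decompose $\vA^a=\sum_\ell(\vA^{(\ell)}-\vA^{(\ell-1)})$ as in Lieb--Robinson, and use that the OFT of an operator with bounded frequency range is Gaussian-small outside that range. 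Choosing $\ell\propto\Omega/(\beta\sd)$ then balances the two tails, producing $\scgr_\sgr^2\eta^2 e^{-c_2\Omega}$ with $c_2\propto(\beta\sd)^{-1}$. The $\|\cdot\|_{\vsigma}$ norms are dominated by operator norms throughout (\cref{lemma:kms-to-s}), so no property of $\vsigma$ is needed.

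\textbf{Step 3: the log-likelihood version.} For \eqref{eq:sigma-fisher-bound} we run the same argument with $\vX=\beta\vG+\log\vsigma$, using $\FI_{\vG,a}(\vsigma\|\vrho_{\vG})=\QF_a(\log\vsigma-\log\vrho_{\vG},\vG;\vsigma)$. We note that $\log\vrho_{\vG}=-\beta\vG-\log Z$, and scalars commute. Here there is no locality, so in Step 2 we simply bound the tail by $\|\vX\|^2$ times the Gaussian frequency/time tail, which gives $e^{-c_2\Omega}\|\beta\vG+\log\vsigma\|^2$.
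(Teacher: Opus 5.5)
Your Step 1 is essentially the paper's low-frequency argument: Gaussian convolution with $\sigma_2=\sigma_3=\beta^{-1}$, averaging over $s\in[-1/4,1/4]$, then Cauchy--Schwarz against $h_s g_s$. The extra cutoff $|t|\lesssim\beta\sqrt{\Omega}$ is unnecessary. For $|s|\le 1/4$ one has $g_s(t)\gtrsim\beta^{-1}e^{-2\pi|t|/\beta}$, so $\int f_\sigma(t)^2/g_s(t)\,\rd t=O(\beta)$ over all of $\BR$. If you do cut, the discarded time tail needs its own $\eta$-bound, which you don't supply.

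The genuine gap is the frequency decay in Step 2, and Step 3 inherits it. You assert that $\vA^a$ has Bohr frequencies of size $O(\ell\sd)$ with respect to $\vG_{\CB[i,\ell]}$; this is false when $\sD\ge 2$. The only hard cutoff is the spectral width $2\|\vG_{\CB[i,\ell]}\|$, which is of order $|\CB[i,\ell]|\sim\scgr_\sgr\ell^{\sD}$. For a generic non-commuting $\vG$, a single-site Pauli has nonzero (merely exponentially small) matrix elements across that whole range. With the correct width, your balancing forces $\ell\sim(\Omega/\beta)^{1/\sD}$. That yields only a stretched-exponential tail $e^{-c(\Omega/\beta)^{1/\sD}}$, not $e^{-c_2\Omega}$ with $c_2\propto(\beta\sd)^{-1}$; your argument works only for $\sD=1$. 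In Step 3 it is worse: for the untruncated $\vG$ the Bohr range is $O(n)$, so there is no $n$-independent ``Gaussian frequency tail'' to invoke.

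The missing idea is that \cref{lem:bounds_oft} holds for an arbitrary real imaginary-time parameter, not just the inverse temperature.
\begin{itemize}
\item Take that parameter to be $\pm 1/(4\sd)$, with the sign of $\omega$. By \cref{lem:norm_bound_pauli}, the conjugated Pauli $e^{\pm\vG'/(4\sd)}\vA^a e^{\mp\vG'/(4\sd)}$ is bounded by a constant for every degree-$\sd$ Hamiltonian $\vG'$, including each truncation $\vG_{\CB[i,\ell]}$ and $\vG$ itself.
\item This gives $\|\hat{\vA}_{\vG'}(\omega)\|\lesssim\sigma^{-1/2}e^{-|\omega|/(4\sd)}$ uniformly in $\ell$ and $n$. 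That is exactly the suppression of large frequencies you wanted, with no hard cutoff.
\item Taking the geometric mean with the Lieb--Robinson shell bound gives the joint decay $\|\vE_{\ell,\omega}\|\lesssim e^{-|\omega|/(8\sd)-\mu\ell/2}$ of \cref{lemma:oft-loc}. No balancing of $\ell$ against $\Omega$ is then needed.
\item Summing $\eta\,|\CB[i,\ell+1]|\,e^{-\mu\ell/2}$ over shells and integrating $e^{-|\omega|/(8\sd)}$ over $|\omega|>\Omega/\beta$ gives a tail of order $\scgr_\sgr\eta\,e^{-\Omega/(8\beta\sd)}$.
\item The same decomposition, with $\|[\vE_{\ell,\omega},\vX]\|\le 2\|\vE_{\ell,\omega}\|\,\|\vX\|$, handles the log-likelihood version.
\end{itemize}
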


We pay particular attention to the case $\vsigma$ is the Gibbs state $\vrho_{\vH}$:

\begin{remark}\label{rmk:fisher-special}
    A special case of the above, when $\vX:=\vG-\vH$ and $\vsigma:=\vrho_{\vH}$, is the following bound:
    \begin{align}
     \|[\vA^a,\vG-\vH]\|_{\vrho_{\vH}}^2 \leq c_1 (e^{\Omega}\cdot \FI_{\vG,a}(\vrho_{\vH}|| \vrho_{\vG}) + \scgr_{\sgr}^2\cdot   \eta^2 \cdot e^{-c_2\Omega}).
\end{align}
\end{remark}

To conclude the proof of \cref{lemma:fisher-convexity}, we use the following lemma of  \cite{chen2025learning} which relates the norms of commutators to the local coefficients. We point out that related lower bounds on weighted variances/norms of local observables were leveraged in \cite{anshu2020sample}.

\begin{lemma}
    [{\cite[Lemma~III.6]{chen2025learning}}]\label{lemma:faithfulness} Let $\vB$ be a local operator with support on a region $\sR$. Then, for any  Hamiltonian $\vG$ of bounded degree $\sd$ and $\beta\geq\sd^{-1}$:
    \begin{equation}
        \|\vB\|_{\vtau_\sR}\leq c^{|\sR|\cdot \beta \sd\log 2\beta \sd}\cdot \|\vB\|_{\vrho_{\vG}}.
    \end{equation}
    where $c$ is a universal constant. 
\end{lemma}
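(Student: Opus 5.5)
The plan is to interpolate between the KMS norm and the local trace norm in two stages: an energy-filtering step that reduces $\|\vB\|_{\vrho_{\vG}}$ to an ordinary $\vrho_{\vG}$-weighted expectation of $\vB^\dagger\vB$, and then a local lower bound on the reduced state of $\vrho_{\vG}$ on $\sR$ to compare that expectation with $\|\vB\|_{\vtau_\sR}$, where $\vtau_\sR$ is the maximally mixed state on $\sR$. Throughout I write $\vrho:=\vrho_{\vG}$.

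\textbf{Step 1 (frequency decomposition).} Expand $\vB$ in the eigenbasis of $\vG$, with eigenvalues $E_i$ and eigenweights $p_i\propto e^{-\beta E_i}$:
\begin{equation*}
\|\vB\|_{\vrho}^2=\sum_{i,j}\sqrt{p_ip_j}\,|B_{ij}|^2,\qquad \sqrt{p_ip_j}=p_j\, e^{-\beta(E_i-E_j)/2}.
\end{equation*}
I would split $\vB=\vB_{\le\Lambda}+\vB_{>\Lambda}$ according to whether the Bohr frequency $|E_i-E_j|$ is at most $\Lambda$. To keep the splitting smooth and quasi-local, I would implement it with the operator Fourier transform, integrating $\hat{\vB}_{\vG}(\omega)$ over $|\omega|\le\Lambda$ as in \cref{lem:sumoverenergies}. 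On the low-frequency part the KMS weight satisfies $\sqrt{p_ip_j}\ge e^{-\beta\Lambda/2}p_j$, so
\begin{equation*}
\|\vB_{\le\Lambda}\|_{\vrho}^2\ge e^{-\beta\Lambda/2}\,\tr[\vrho\,\vB_{\le\Lambda}^\dagger\vB_{\le\Lambda}].
\end{equation*}

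\textbf{Step 2 (high frequencies are negligible).} Since $\vB$ is supported on $\sR$, commuting it through $\vG$ only involves the $O(\sd|\sR|)$ terms touching $\sR$. The a priori OFT decay (\cref{lem:bounds_oft}), combined with the imaginary-time bound of \cref{lem:norm_bound_pauli} applied at a small imaginary time $\sim(4\sd)^{-1}$ rather than at $\beta$, should give $\|\vB_{>\Lambda}\|\lesssim e^{-\Omega(\Lambda/\sd)+O(|\sR|)}\|\vB\|$. This step carries a subtlety: the resulting estimate is in operator norm, whereas the target is $\|\vB\|_{\vtau_\sR}$. These are comparable on $\sR$ only up to a factor $2^{O(|\sR|)}$, which is affordable. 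Choosing $\Lambda\sim\sd|\sR|\log(2\beta\sd)$ makes the tail smaller than the main term from Step 3. This choice produces the $\beta\sd\log(2\beta\sd)$ in the exponent: the main term pays $e^{-\beta\Lambda/2}$, and $\Lambda$ must beat the reference factor from Step 3. I expect this balancing, especially converting the operator-norm tail into something comparable to the Step 3 lower bound, to be the main obstacle.

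\textbf{Step 3 (local lower bound on the Gibbs marginal).} It remains to lower bound $\tr[\vrho\,\vB^\dagger\vB]=\tr[\vrho_\sR\,\vB^\dagger\vB]$, after reinstating $\vB$ in place of $\vB_{\le\Lambda}$ at the cost of the Step 2 error. The goal is $\vrho_\sR\ge e^{-O(\beta\sd|\sR|)}\,\vtau_\sR$. Write $\vG=\vG_{\mathrm{out}}+\vV$, where $\vV$ collects the terms intersecting $\sR$, so that $\|\vV\|\le\sd|\sR|$. With $\vG_{\mathrm{out}}$ alone, the marginal on $\sR$ is exactly $\vtau_\sR$. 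To compare, I would first try twirling over Paulis on $\sR$ together with the Peierls--Bogoliubov / Golden--Thompson inequalities to bound $\tr[e^{-\beta\vG}\vP]\ge e^{-2\beta\|\vV\|}\tr[e^{-\beta\vG_{\mathrm{out}}}\vP]\cdot\tfrac{Z_{\vG}}{Z_{\mathrm{out}}}$-type ratios for projectors $\vP$ on $\sR$. If that fails, I would fall back on the fact that the dependence is only exponential in $|\sR|$, and compare $\vrho$ with $\vrho_{\mathrm{out}}$ term by term over the $O(\sd|\sR|)$ boundary terms.

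Combining the three steps gives $\|\vB\|_{\vtau_\sR}^2\le c^{|\sR|\beta\sd\log 2\beta\sd}\|\vB\|_{\vrho}^2$.
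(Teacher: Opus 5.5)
The paper gives no proof of this lemma; it imports it from \cite[Lemma~III.6]{chen2025learning}. Taken on its own, your outline has a genuine gap: Step 2 cannot be balanced against Step 3 so as to give the exponent $|\sR|\beta\sd\log(2\beta\sd)$.

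The main term you keep is $\sqrt{\tr[\vrho\vB^\dagger\vB]}\geq e^{-O(\beta\sd|\sR|)}\|\vB\|_{\vtau_\sR}$. The tail that \cref{lem:bounds_oft,lem:norm_bound_pauli} give is $\|\vB_{>\Lambda}\|\leq 2^{O(|\sR|)}e^{-\Lambda/4\sd}\|\vB\|_{\vtau_\sR}$, since these tools only control imaginary time up to $(2\sd)^{-1}$.

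For the subtraction to leave anything positive, you need $e^{-\Lambda/4\sd}\ll e^{-O(\beta\sd|\sR|)}$, that is, $\Lambda\gtrsim\beta\sd^2|\sR|$. Your choice $\Lambda\sim\sd|\sR|\log(2\beta\sd)$ only makes the tail $(2\beta\sd)^{-O(|\sR|)}$. That is exponentially larger in $\beta\sd$ than the main term, so the lower bound is vacuous. With the forced choice of $\Lambda$, the factor $e^{-\beta\Lambda/2}$ from Step 1 yields only $\|\vB\|_{\vtau_\sR}\leq e^{O(\beta^2\sd^2|\sR|)}\|\vB\|_{\vrho_{\vG}}$.

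This is not a bookkeeping problem.
\begin{itemize}
\item Step 3 is tight. Take a classical spin coupled ferromagnetically to $\sd$ pinned neighbours, with $\vB$ the projector onto the disfavoured value; then $\|\vB\|_{\vrho}^2=\tr[\vrho\vB]=e^{-\Theta(\beta\sd)}$.
\item Suppose all you know is that the GNS mass is at least $m$ and the mass above frequency $\Lambda$ is at most $e^{-\Lambda/\sd}$. Then all of the mass may sit at $\nu^*\approx\sd\log(1/m)$, where the KMS weight $e^{-\beta\nu^*/2}$ costs a further $m^{\Theta(\beta\sd)}$.
\end{itemize}
Reaching $\beta\sd\log(2\beta\sd)$ therefore needs a different ingredient. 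One example would be control of the high-frequency GNS mass relative to $\tr[\vrho\vB^\dagger\vB]$, rather than an absolute operator-norm tail.

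There are two smaller points.
\begin{itemize}
\item You implicitly use $\|\vB\|_{\vrho}^2\geq\|\vB_{\le\Lambda}\|_{\vrho}^2$. This holds because the split is a Schur multiplier in the eigenbasis of $\vG$ with nonnegative weights, either a sharp cutoff or a Gaussian-smoothed indicator. By contrast, bounding $\|\vB_{>\Lambda}\|_{\vrho}$ by $\|\vB_{>\Lambda}\|$ and using the triangle inequality would fail outright, since $e^{-\Lambda/4\sd}$ can never beat $e^{-\beta\Lambda/4}$ when $\beta\sd\geq1$.
\item For Step 3, Golden--Thompson points the wrong way for a lower bound. The clean route is Jensen's inequality $\langle\phi|e^{X}|\phi\rangle\geq e^{\langle\phi|X|\phi\rangle}$. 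Sum it over an eigenbasis of the compression of $-\beta\vG$ to the range of $\vQ=|\psi\rangle\langle\psi|\otimes\vI$. This gives $\tr[\vQ e^{-\beta\vG}]\geq e^{-\beta\|\vV\|}Z_{\mathrm{out}}$, where $Z_{\mathrm{out}}:=\tr_{\overline{\sR}}e^{-\beta\vG_{\mathrm{out}}}$. Combined with $\tr e^{-\beta\vG}\leq e^{\beta\|\vV\|}2^{\sq|\sR|}Z_{\mathrm{out}}$, this yields $\vrho_{\sR}\geq e^{-2\beta\|\vV\|}\vtau_{\sR}$.
\end{itemize}

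Finally, the weaker $e^{O(\beta^2\sd^2|\sR|)}$ bound your argument does give would suffice for every use of this lemma in the paper. For instance, \cref{lemma:fisher-convexity} only needs $e^{\poly(\beta\sd)}$ constants. It is still not the lemma as stated.
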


We are now in a position to conclude the proof of \cref{lemma:fisher-convexity}. We note this final computation is based on \cite[Lemma III.7]{chen2025learning}.

\begin{proof}
    Let $i\in [n]$ be a single qubit, $\sR\subseteq [n]$ be the constant-sized neighborhood of said qubit and $\vX=\sum_{\gamma\ni i} x_\gamma \vP_\gamma$ an operator with support on $\sR$. Then, the coefficients of $\vX$ are determined by commutators of Paulis on $i$:
    \begin{equation}
  \sum_{\vA^a\in\CS_i}
  \|[\vA^a,\vX]\|_{\vtau_{\sR}}^2
  =
  2^{2\sq+1}
  \sum_{\gamma\ni i}
  x_\gamma^2.
  \label{eq:can-one-site-coefficient-identity}
\end{equation}
\cref{lemma:faithfulness} and \cref{lem:uniform-truncated-bare-reconstruction} (\cref{rmk:fisher-special}) then imply that for any $\Omega>0$
\begin{align}
    \max_{\gamma\ni i} |h_\gamma-g_\gamma| \leq c_1\cdot \bigg(e^{\Omega}\cdot \FI_{\vG,i}(\vrho_{\vH}|| \vrho_{\vG})^{1/2} + \scgr_{\sgr}\cdot e^{-c_2\Omega}\cdot \max_{\gamma} |h_\gamma-g_\gamma|  \bigg)
\end{align}
for an appropriate set of constants $c_1\leq e^{\poly(\beta\sd, \sD)}, c_2\propto (\beta\cdot \sd)^{-1}$.

\end{proof}

\subsubsection{Proof of \cref{lem:uniform-truncated-bare-reconstruction}}

We note to the reader that the computation here is based on the Dirichlet form comparison argument performed in \cite[Theorem VI.1]{BC26}, developed in a study of mixing times for 1D non-commuting systems. However, here we focus our attention to a specific observable $\vX$ (the log-likelihood) and replace the Dirichlet form of \cite{chen2023efficient} by its Fisher information functional. At a high level, the proof proceeds by expanding the local operator $\vA$ as a linear combination of its operator Fourier transforms; the low frequency components can easily be compared to the Fisher information, while the high-frequency components are amenable to locality approximations.

\begin{proof}

We first rewrite a single operator $\vA$ in terms of a linear combination of its OFT $\hat{\vA}_{\vG, \sigma_1}(\omega)$, for a suitably chosen uncertainty $\sigma_1 = 1/(\sqrt{2}\beta)$. Then, we split the integral over $\omega$ into ``high'' and ``low'' frequency components, tuned by a threshold $\Omega>0.$
\begin{align}
   \sqrt{2\sigma_1\sqrt{2\pi}} \vA &= \int_{-\infty}^{\infty} \hat{\vA}_{\vG, \sigma_1}(\omega) \rd \omega  \\
    & = \undersetbrace{\text{low-frequency}}{\sqrt{\frac{\sigma_2\sigma_3}{\sigma_1\sqrt{2\pi}}}\cdot  \int_{\labs{\omega}\le \Omega} \int_{-\infty}^{\infty} \hat{\vA}_{\vG, \sigma_2}(\omega)(t) e^{-i \omega t}f_{\sigma_3}(t) \rd t \rd \omega} + \undersetbrace{\text{high-frequency}}{\int_{\labs{\omega}> \Omega} \hat{\vA}_{\vG, \sigma_1}(\omega) \rd \omega}.\label{eq:A_high_low}
\end{align}
Where we applied the convolution property of OFTs (\cref{lem:OFT_Gaussian_expansion}). The uncertainties are defined to satisfy $1/\sigma_1^2 = 1/\sigma_2^2 +1/\sigma_3^2$, and in particular,  $\sigma_2 = \sigma_3 = 1/\beta:=\sigma$. We proceed by taking the $\vsigma$-weighted norm with the commutator $\norm{[\cdot,\vX]}_{\vsigma}$ on both sides.\\

\noindent \textbf{Low-frequency components.} The low frequency $\labs{\omega} \le \Omega$ part can be bounded by the Fisher information functional of the local Lindbladian $\CL_{\vG, a}$, with respect to the sampled state $\vsigma$. In order to do so, we first observe that over $|s|\leq 4^{-1}$ and $|\omega|\leq \Omega$, the coefficients of the Fisher information functional (\cref{def:relative-fisher-information}) are bounded by that of a Gaussian:
\begin{equation}
   \int_{-1/4}^{1/4} \int \int_{|\omega| \leq \Omega}  \frac{f_{\sigma}^2(t)}{g_s(t)h_s(\omega)} \cdot \rd\omega \rd t \rd s \lesssim e^{\beta\Omega}
\end{equation}
and furthermore, by the AM-GM inequality, the $(\vsigma, s)$-weighted inner products can be related to the KMS inner product \cref{lemma:kms-to-s}:
\begin{align}
   \norm{[\hat{\vA}_{\vG, \sigma}(\omega, t),\vX]}_{\vsigma} \leq 4 \int_{-1/4}^{1/4}  \norm{[\hat{\vA}_{\vG, \sigma}(\omega, t),\vX]}_{\vsigma, s}  \rd s
\end{align}
The Cauchy-Schwarz inequality over $\rd \omega \rd t\rd s$ then produces the desired Fisher information functional: 
\begin{align}
    \int_{\labs{\omega}\le \Omega} \int_{-\infty}^{\infty} \norm{[\hat{\vA}_{\sigma}(\omega)(t),\vX]}_{\vsigma}\cdot  \labs{f_{\sigma}(t)} \rd t \rd \omega 
    &\leq 4\sqrt{\iiint_{\labs{\omega}\le \Omega} \frac{f^2_{\sigma}(t)}{g_s(t)h_s(\omega)}\rd t \rd \omega \rd s} \sqrt{\QF_{a}(\vX, \vG;\vsigma)}\\
    & \lesssim e^{\beta\Omega/2}\cdot \sqrt{\QF_{a}(\vX, \vG;\vsigma)}.
\end{align}
Finally, we note that when $\vX:=\vG-\vH$ we have $\beta^2\cdot \QF_{a}(\vG-\vH, \vG;\vrho_\vH)=  \FI_{\vG,a}(\vrho_{\vH}|| \vrho_{\vG})$.\\

\noindent \textbf{High frequency components.} Next, we turn to the high-frequency components. Here, we leverage a shell decomposition for the OFT which decays in both frequency and space (\cref{lemma:oft-loc}), presented in the appendix for conciseness:
\begin{align}
    \hat{\vA}_{\vG, \sigma_1}(\omega) = \sum_{\ell} \vE_{\ell, \omega}, \quad \|\vE_{\ell, \omega}\|\leq c_1\cdot e^{-|\omega|/8\sd-c_2 \ell}
\end{align}
where the support of $\vE_{\ell, \omega}$ is the ball $\CB(a, \ell)$ of radius $\ell$ around $\vA^a$, and $c_1\leq e^{\poly(\beta\sd)}$. By applying the unconditional bound on the KMS norm as in \cref{lemma:kms-to-s}:
\begin{align}
    \int_{|\omega|\geq \Omega}  \norm{[\hat{\vA}_{\vG, \sigma_1}(\omega),\vX]}_{\vsigma}\rd \omega \leq \sum_\ell  \int_{|\omega|\geq \Omega} \norm{[\vE_{\ell,\omega},\vX]}_{\vsigma} \rd\omega  \leq \sum_\ell  \int_{|\omega|\geq \Omega} \norm{[\vE_{\ell,\omega},\vX]}\rd\omega
\end{align}

The statement in \eqref{eq:sigma-fisher-bound} follows from the statement above with $\vX:=\beta\vG+\log \vsigma$. 

To conclude the proof, we address the case where $\vX:=\sum_{\gamma\in \Gamma} x_\gamma \vP_\gamma$ with $|x_\gamma|\leq \eta$. To proceed, we leverage the fact $\vX$ is a sum over local terms on $\Gamma$ and thus by assumption has the same geometry as a $(\sd, \sD, \scgr_{\sgr})$ Hamiltonian. We then have  
\begin{align}
    \sum_\ell  \int_{|\omega|\geq \Omega} \norm{[\vE_{\ell,\omega},\vX]} \rd\omega &\leq \eta \cdot c_1\cdot \sum_\ell |\CB(a, \ell+1)|\cdot  e^{-c_2\ell}  \int_{|\omega|\geq \Omega}e^{-|\omega|/8\sd} \rd \omega \\
    &\leq c_1'\cdot \eta\cdot \scgr_{\sgr} \cdot  e^{-\Omega/8\sd} \cdot \sum_{\ell}(\ell+1)^\sD\cdot e^{-c_2 \cdot \ell} \\
    &\leq c''_1\cdot \eta\cdot \scgr_{\sgr} \cdot  e^{-\Omega/8\sd}
\end{align}
where again $c''_1\leq e^{\poly(\beta\sd)}$. Appropriately relabeling the constants then gives the desired bound. 
\end{proof}

\newpage

\section{Learning Hamiltonian parameters from their Gibbs states}
\label{section:gibbs-alg}

In this section, we reason that measurements given by the stationarity test can be leveraged to design a learning algorithm which recovers the coefficients of a given lattice Hamiltonian $\vH$ from samples of its Gibbs state $\vrho_\vH$. Subsequently, in \cref{section:structure-learning,section:metastable-learning}, we present modifications to structure learning and to metastable states.
Although the results of the next sections are strict generalizations, we find the presentation here serves an instructive starting point to the analysis. 

The main result of this section is the following theorem, on the sample and time complexity of the learning algorithm. Henceforth we treat the Hamiltonian parameters $\sd, \sD, \scgr_{\sgr}, \sq, \sk$ as fixed constants, and consider the asymptotics in $n, \eta, \delta, \beta$. 

\begin{theorem}[Learning Hamiltonians from their Gibbs states]\label{theorem:gibbs-learning} Let $\vH:=\sum_{\gamma\in \Gamma} h_\gamma \vP_\gamma$ be a $(\sd, \sD, \scgr_{\sgr})$ Hamiltonian on a known set of interactions $\Gamma$. For any $\eta, \delta\in (0, 1)$, $\beta\in \BR^+$, with probability $\geq 1-\delta$, \cref{alg:gibbs-learner,alg:one-round} return a Hamiltonian $\vG=\sum_{\gamma\in \Gamma} g_\gamma \vP_\gamma$ satisfying $\max_{\gamma\in \Gamma}|h_\gamma-g_\gamma|\leq \eta$ using at most 
    \begin{equation}
     \mathsf{N_{samples}}:=  \frac{ e^{\poly(\beta)}}{\beta^2}\cdot O\bigg(\eta^{-2}\log \frac{n}{\delta\eta}\log^\sD \frac{1}{\eta}\bigg)
    \end{equation}
    copies of the Gibbs state $\vrho_{\vH}$ and $\poly(n,  \mathsf{N_{samples}})$ quantum and classical processing time. 
\end{theorem}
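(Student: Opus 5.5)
We plan to follow the accuracy bootstrapping template of \cite{chen2025learning}, with the stationarity test as the per-round certificate. We maintain guesses $\vG^{(0)}=0,\vG^{(1)},\dots,\vG^{(r)}$ with the invariant $\max_\gamma|g^{(k)}_\gamma-h_\gamma|\leq \eta_k:=2^{-k}$ (up to an initial constant-accuracy phase), and stop after $r=\lceil\log_2\eta^{-1}\rceil$ rounds. In round $k$, for each site $i$ and each candidate perturbation $\vU$ supported on the clause ball $\CB[i,\ell]$ with coefficients on a grid of spacing $c\,\eta_{k}$ and magnitude $\leq\eta_k$, we estimate $\mathsf{Test}_{\CS_i}[\vO,\vG^{(k)}+\vU;\vrho_{\vH}]$ for every Pauli $\vO$ on $\sB[i,\ell']$. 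We then update the coefficients $\gamma\ni i$ according to the candidate minimizing $\max_{\vO}|\mathsf{Test}|$, and assemble the local updates across $i$. Here $\ell,\ell'\sim\poly(\beta)$ are independent of $\eta$, so the enumeration has constant size per site.

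\textbf{Correctness.} Completeness comes from \cref{rmk:completeness-intro} together with the Lipschitz bound \eqref{eq:lipschitz-overview} (\cref{lemma:lipschitz}). The grid candidate $\vU^*$ agrees with $\vH-\vG^{(k)}$ up to $c\eta_k$ inside radius $\ell$, and is off by at most $\eta_k$ outside. Hence every test value for $\vU^*$ is at most $O(c\eta_k + \eta_k e^{-\Omega(\ell)}\ell^{\sD})$ plus statistical error.

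Soundness goes through \cref{lemma:fisher-convexity}. For the chosen candidate $\vG'$, we write $\FI_{\vG',i}(\vrho_{\vH}\|\vrho_{\vG'})=\beta\,\mathsf{Test}_i[\vH-\vG',\vG';\vrho_{\vH}]$ as in \cref{rmk:stat-fisher-intro}. We then expand $\vH-\vG'=\sum_\gamma(h_\gamma-g'_\gamma)\vP_\gamma$ and use the quasi-locality of \cref{lemma:lindblad-LR}. This shows $\mathsf{Test}_i[\vP_\gamma,\vG';\vrho_{\vH}]$ decays like $e^{-c\,\dist(i,\gamma)}$, so the Fisher information is bounded by
\[
\eta_k\cdot\Big(\ell'^{\sD}\max_{\vO\subseteq \sB[i,\ell']}|\mathsf{Test}_i[\vO,\vG';\vrho_\vH]| + e^{-\Omega(\ell')}\Big).
\]
Since $\vG'$ is the minimizer, the bracketed quantity is at most $O(c\eta_k+\eta_ke^{-\Omega(\ell)})$, and therefore $\FI\lesssim \eta_k^2(c+e^{-\Omega(\ell)})$. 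Plugging this into \cref{lemma:fisher-convexity} with $\Omega$ a suitable constant gives the halved accuracy $\eta_{k+1}$ at the coefficients $\gamma\ni i$.

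The main obstacle is that \cref{lemma:fisher-convexity} needs $\vG'$ to be globally $O(\eta_k)$-close to $\vH$, with the same interaction set $\Gamma$, while we update many sites at once. We handle this in two steps. First, every accepted candidate is still within $2\eta_k$ of $\vH$ globally, so the hypothesis holds at every point with $\eta\to 2\eta_k$. Second, we choose $\Omega$ so that $c_1\scgr_\sgr e^{-c_2\Omega}\leq 1/8$, which forces $c_1e^{\Omega}$ to be $e^{\poly(\beta)}$. This in turn fixes the grid spacing $c=e^{-\poly(\beta)}$ and $\ell=\poly(\beta)$. To make the simultaneous site updates consistent, each coefficient takes its value from the test at one designated incident site. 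Since the guarantee above is pointwise in $\gamma\ni i$, this conflict resolution costs nothing.

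\textbf{Complexity.} For sampling, \cref{lemma:stest-local} lets us truncate $\vG'$ to radius $\sR\sim\log(1/\eta)$ in round $k$, with error $\ll\eta_k^2$. Each test value is then the expectation of a bounded observable under short-time evolution by the truncated local Lindbladian. We estimate these on disjoint radius-$\sR$ balls in parallel, using an $O(\sR^{\sD})$-coloring of sites. A median-of-means estimator then needs $O(\eta_k^{-2}e^{\poly(\beta)}\log(n/\delta\eta))$ samples per color to reach precision $\eta_k^2\cdot e^{-\poly(\beta)}$ relative to the Fisher scale. Multiplying by the $\log^{\sD}(1/\eta)$ colors and summing the geometric series over $k$ yields the stated $\mathsf{N_{samples}}$; the $\beta^{-2}$ factor comes from the $\beta$ in the Fisher–test relation. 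For processing time, each round enumerates $e^{\poly(\beta)}$ candidates per site and simulates the Lindbladian via \cite{chen2023efficient}, which is $\poly(n,\mathsf{N_{samples}})$ in total.
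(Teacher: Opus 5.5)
Your overall scheme (a grid of spacing $c\eta_k$ on $\CB[i,\ell]$, keeping the candidate that minimizes the largest test, then invoking \cref{lemma:fisher-convexity}) matches the paper's. However, the soundness step has a genuine gap: the far-field part of the Fisher information is only first order in $\eta_k$. You take the Fisher information relative to the candidate $\vG'=\vG^{(k)}+\vU$ itself, so $\vH-\vG'$ has coefficients of size up to $2\eta_k$ on every clause of $\Gamma$. For clauses outside $\sB[i,\ell']$ you bound $|\mathsf{Test}_i[\vP_\gamma,\vG';\vrho_{\vH}]|$ only through \cref{lemma:lindblad-LR}, i.e.\ by $e^{-c\,\dist(i,\gamma)}$ with no factor of $\eta_k$. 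Summed against the coefficients, this produces the $\eta_k$-independent term $e^{-c'\ell'}$ in your bracket, and hence a contribution of order $\beta\eta_k e^{-c'\ell'}$ to $\FI_{\vG',i}(\vrho_{\vH}\|\vrho_{\vG'})$. The minimization does not control this term, so the step ``the bracketed quantity is at most $O(c\eta_k+\eta_k e^{-\Omega(\ell)})$'' does not follow. What your estimates actually give is $\FI\lesssim\eta_k^2(\cdots)+\eta_k e^{-c'\ell'}$. \cref{lemma:fisher-convexity} then yields only $\max_{\gamma\ni i}|g'_\gamma-h_\gamma|\lesssim e^{\Omega}\sqrt{\eta_k e^{-c'\ell'}}+\cdots$, where $\Omega=\poly(\beta)$ is the frequency cutoff. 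This no longer certifies $\eta_k/2$ once $\eta_k\lesssim e^{2\Omega-c'\ell'}$. Since you fix $\ell'=\poly(\beta)$ independently of $\eta$, your argument cannot push the accuracy below a constant $e^{-\poly(\beta)}$.

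The paper avoids this in \cref{lemma:gibbs-one-round} by never computing the Fisher information relative to $\vG'$. It instead uses the completed Hamiltonian $\vK^{(\vV)}=\vH+\sum_{\gamma\in\CB[i,\ell]}(g'_\gamma-h_\gamma)\vP_\gamma$. This coincides with $\vH$ outside $\CB[i,\ell]$ and with $\vG'$ on $\CB[i,\ell]$, in particular on every $\gamma\ni i$. Then $\vH-\vK^{(\vV)}$ involves only the tested clauses, so
$\FI_{\vK^{(\vV)},i}(\vrho_{\vH}\|\vrho_{\vK^{(\vV)}})\le 2\beta\eta_k|\CB[i,\ell]|\max_{\gamma\in\CB[i,\ell]}|\mathsf{Test}_i[\vP_\gamma,\vK^{(\vV)};\vrho_{\vH}]|$ with no tail at all. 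The test values transfer from $\vG'$ to $\vK^{(\vV)}$ at cost $O(\eta_k e^{-\Omega(\ell)})$ by \cref{lemma:lipschitz-within}, because the two agree on $\CB[i,\ell]$ and differ by at most $\eta_k$ elsewhere. \cref{lemma:fisher-convexity} is then applied to the pair $(\vK^{(\vV)},\vH)$.

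If you prefer to keep $\vG'$, you need the sharper tail estimate $|\mathsf{Test}_i[\vP_\gamma,\vG';\vrho_{\vH}]|\lesssim\eta_k\poly(d)e^{-cd}$ with $d=\dist(i,\gamma)$. You can get it by writing the test as $\tr[\vrho_{\vH}(\CL^\dagger_{\vG',i}-\CL^\dagger_{\vH,i})[\vP_\gamma]]$ (using completeness) and combining \cref{lemma:lipschitz-within} with the shell bound \cref{lemma:lipschitz-within-shell}. Quasi-locality of a single Lindbladian alone cannot produce the factor $\eta_k$.

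Two smaller points:
\begin{enumerate}
\item The short-time finite-difference estimator costs only $\eta_k^{-2}$ samples because $\vrho_{\vH}$ is $O(\eta_k)$-stationary under the truncated $\CL_{\vG'}$. This keeps the Duhamel remainder at $O(t\eta_k)$ with $t$ a constant, as in \cref{lemma:stationarity-test-single-measurement}.
\item The $\beta^{-2}$ prefactor comes from the rescaling of \cref{remark:rescale}, not from the Fisher--test relation. You need that rescaling anyway, since \cref{lemma:fisher-convexity} is stated only for $\beta\sd\ge1$.
\end{enumerate}
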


Roughly speaking, the Hamiltonian learning algorithm follows an iterative procedure akin to that designed in \cite{chen2025learning}, adapted to our stationarity test. In each (say, the $r$th) iteration a guess $\vG^{(r)}$ for the Hamiltonian is refined to increasing accuracy $\eta_r$ by enumerating over constant-sized $\eta_r-$nets (ensuring time-efficiency), while the quasi-locality of the stationarity test (\cref{lemma:stest-local}) ensures the measurements are parallelizable.

In \cref{section:H-alg}, we give a formal account of the algorithm. In \cref{section:H-correct}, we prove its correctness, and in \cref{section:H-complexity}, we describe the sample and the time complexities.

\subsection{The algorithm}
\label{section:H-alg}

We begin with a broad description of the iterative learning algorithm.

\begin{algorithm}[H]
\caption{Iterative Hamiltonian learner from Gibbs samples}
\label{alg:gibbs-learner}
\begin{algorithmic}[1]
\Require Copies of \(\vrho_{\vH}\), accuracy \(0<\eta\le1\), failure probability
\(0<\delta<1\)
\State Set \(\vG^{(0)}\gets0\) and
  \(T\gets\lceil\log_2(1/\eta)\rceil\)
\For{\(r=0,\ldots,T-1\)}
  \State Set \(\eta_r\gets2^{-r}\)
  \State Run \Cref{alg:one-round} with failure probability
    \(\delta/\max\{1,T\}\), and let its output be \(\vG^{(r+1)}\)
\EndFor
\State \Return \(\vG^{(T)}\)
\end{algorithmic}
\end{algorithm}

\Cref{alg:one-round} will be based on the stationarity test (\cref{def:stationarity-test}), where we enumerate over local perturbations of $\vG^{(r)}$, and accept that which minimizes the Fisher information. To make this description precise, we introduce three algorithm parameters to be specified in the correctness argument (\cref{section:H-correct}). 

\begin{enumerate}
    \item A (sufficiently large) constant \(\ell\), and denote the Hamiltonian terms within distance $\ell$ away from a given site $i\in [n]$ as $\CB[i, \ell]
  :=
  \left\{
    \gamma\in\Gamma:
    \dist(\supp(\vP_\gamma),i)\le \ell
  \right\}$. The intention will be to enumerate over possible Hamiltonian terms in this radius.

  \item A grid spacing \(0<\kappa<1\), and denote the net of local perturbations of $\vG^{(r)}$ as $\vG^{(r, \vU)}
  :=
  \vG^{(r)}+\eta_r\vU$, where $ \vU\in\Grid_i:$
\begin{equation}
  \Grid_i
  :=
  \left\{
    \vU=\sum_{\gamma\in \CB[i, \ell]}u_\gamma\vP_\gamma:
    u_\gamma\in\kappa\BZ\cap[-1,1]
  \right\}.
  \label{eq:fixed-local-grid}
\end{equation}
\item A quasi-locality radius $R^{(r)}=2\ell + c_{\mathrm{rad}}\cdot r$, which scales logarithmically in the target accuracy $\eta_r=2^{-r}$ in said round.
\end{enumerate}

We are now in a position to describe the iteration. 

\begin{algorithm}[H]
\caption{The $(r+1)$st iterative round.}
\label{alg:one-round}
\begin{algorithmic}[1]
\Require Current estimate \(\vG^{(r)}\), current error guarantee \(\eta_r\), fresh copies
of \(\rhoH\)
\For{each site \(i\), in parallel}
  \For{each \(\vU\in\Grid_i\) and   \(\gamma\in\mathcal \CB[i, \ell]\)}
    \State Compute estimates $\hat{\mathsf{T}}_i[\vU, \vP_\gamma]$ for $|\mathsf{Test}_{i}[\vP_\gamma,\vG^{(r)}_{\CB[i, R^{(r)}]} + \eta_r \vU;\vrho_{\vH}]|$  to error
      \(c_{\mathrm{meas}}\cdot \eta_r\)
  \EndFor
  \State Choose \(\vU_i\) minimizing the above test, $ \vU_i:=\arg \min_{\vU\in\Grid_i} \max_{\gamma\in\mathcal \CB[i, \ell]} \hat{\mathsf{T}}_i[\vU, \vP_\gamma].$
  \For{each \(\gamma\in\CB[i, 0]\)}
    \State Set \(g_\gamma^{(r+1)}\gets g_\gamma^{(r)}+\eta_r u_{i,\gamma}\)
  \EndFor
\EndFor
\State \Return \(\vG^{(r+1)}\)
\end{algorithmic}
\end{algorithm}

That is, the current Hamiltonian $\vG^{(r)}$ is perturbed by $\vU$ within the radius $\ell$, and subsequently truncated within the radius $R^{(r)}$ to ensure efficient and quasi-local implementation. We detail how to compute said estimates by short-time Lindbladian evolution in \cref{section:H-complexity}. After the minimizing refinement is found, the Hamiltonian terms incident on the center $i$ are updated. 

\subsection{Correctness}
\label{section:H-correct}

Correctness of the algorithm consists of proving that upon each iteration, the target error $|g_\gamma^{(r)}-h_\gamma|\leq \eta_r$ halves at each round. 

\begin{lemma}[Correctness of a single iteration]
\label{lemma:gibbs-one-round}
Fix $\beta\sd\geq1$. There exists a choice of algorithm parameters $\ell, \kappa, c_{\mathrm{meas}}, c_{\mathrm{rad}}$ satisfying the following guarantee. Suppose at the $r$th iteration, $\max_{\gamma\in\Gamma}|g_\gamma^{(r)}-h_\gamma|
  \le\eta_r.$ 
If every estimate in \Cref{alg:one-round} is within its specified error $c_{\mathrm{meas}}\cdot \eta_r$, then
its output satisfies
\begin{equation}
  \max_{\gamma\in\Gamma}|g_\gamma^{(r+1)}-h_\gamma|
  \leq  \eta_{r+1}=\eta_r/2.
  \label{eq:one-round-conclusion}
\end{equation}
\end{lemma}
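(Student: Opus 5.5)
Fix a site $i$. The argument has two halves: completeness (some grid point passes the test with a small value) and soundness (any grid point passing the test is accurate near $i$). Write $\vH=\vG^{(r)}+\eta_r\vV$ with $\|\vV\|_{\max}\le 1$. Since $\kappa\BZ\cap[-1,1]$ is a $\kappa$-net, there is $\vU^*\in\Grid_i$ with $|u^*_\gamma-v_\gamma|\le\kappa$ for all $\gamma\in\CB[i,\ell]$.

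\textbf{Completeness.} Let $\vG^*:=\vG^{(r)}_{\CB[i,R^{(r)}]}+\eta_r\vU^*$. We have $\mathsf{Test}_i[\vP_\gamma,\vH;\vrho_\vH]=0$ by detailed balance (\cref{rmk:completeness-intro}). So it suffices to bound $\|\CL^\dagger_{\vG^*,i}-\CL^\dagger_{\vH,i}\|_{\infty-\infty}$, using the Lipschitz estimate \eqref{eq:lipschitz-overview} (\cref{lemma:lipschitz}) together with \cref{lemma:lindblad-LR}. The coefficient differences $\vG^*-\vH$ are of three kinds:
\begin{enumerate}
\item at most $\kappa\eta_r$ inside $\CB[i,\ell]$;
\item at most $\eta_r$ on $\CB[i,R^{(r)}]\setminus\CB[i,\ell]$;
\item unrestricted (of order $1$) outside $R^{(r)}$.
\end{enumerate}
Summing against $e^{-\Omega(\mathrm{dist})}$ with polynomial growth, every $\gamma\in\CB[i,\ell]$ satisfies
\[
|\mathsf{Test}_i[\vP_\gamma,\vG^*;\vrho_\vH]|\lesssim \kappa\eta_r+\eta_r e^{-c\ell}+e^{-cR^{(r)}}.
\]
Choosing $c_{\mathrm{rad}}$ large makes the last term at most $\eta_r e^{-c\ell}$, since $R^{(r)}\ge 2\ell+c_{\mathrm{rad}}r$. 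The chosen $\vU_i$ minimizes the estimates, so it satisfies $\max_\gamma|\mathsf{Test}_i[\vP_\gamma,\vG^{(r)}_{\CB[i,R^{(r)}]}+\eta_r\vU_i;\vrho_\vH]|\le \epsilon_0\eta_r$, where $\epsilon_0:=C(\kappa+e^{-c\ell})+2c_{\mathrm{meas}}$.

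\textbf{Soundness.} Let $\vG':=\vG^{(r)}+\eta_r\vU_i$ (untruncated). The same Lipschitz and truncation bound shows the tests for $\vG'$ and for its truncation differ by at most $e^{-cR^{(r)}}\le\eta_r e^{-c\ell}$. Next, \cref{rmk:stat-fisher-intro} gives $\FI_{\vG',i}(\vrho_\vH\|\vrho_{\vG'})=\beta\,\mathsf{Test}_i[\vH-\vG',\vG';\vrho_\vH]$. Here $\vH-\vG'$ has coefficients at most $2\eta_r$ everywhere, but only those on $\CB[i,\ell]$ are directly controlled by the measured tests. So I split $\vH-\vG'=\vX_{\le\ell}+\vX_{>\ell}$:
\begin{itemize}
\item The near part contributes at most $\beta|\CB[i,\ell]|\cdot 2\eta_r\cdot\epsilon_0\eta_r$, by linearity of the test over the $\vP_\gamma$.
\item The far part contributes $\lesssim\beta\eta_r^2 e^{-c\ell}$, because $\CL^\dagger_{\vG',a}[\vP_\gamma]$ is exponentially small in $\mathrm{dist}(i,\gamma)$ for far $\gamma$ (quasi-locality, \cref{lemma:lindblad-LR}: commutation with distant Paulis).
\end{itemize}
Thus $\FI_{\vG',i}\lesssim \beta\eta_r^2(\ell^\sD\epsilon_0+e^{-c\ell})$.

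\textbf{Conversion to coefficients.} Apply \cref{lemma:fisher-convexity} to $\vG'$ with local-error parameter $2\eta_r$:
\[
\max_{\gamma\ni i}|g'_\gamma-h_\gamma|\le c_1\bigl(e^{\Omega}\sqrt{\beta}\,\eta_r(\ell^\sD\epsilon_0+e^{-c\ell})^{1/2}+\scgr_\sgr e^{-c_2\Omega}\,2\eta_r\bigr).
\]
I choose the parameters in the following order:
\begin{enumerate}
\item $\Omega$, so that $2c_1\scgr_\sgr e^{-c_2\Omega}\le 1/4$;
\item $\ell$, so that the $e^{-c\ell}$ term is tiny relative to $e^{-2\Omega}$;
\item then $\kappa$ and $c_{\mathrm{meas}}$, small relative to $\ell^{-\sD}e^{-2\Omega}$;
\item finally $c_{\mathrm{rad}}$.
\end{enumerate}
With these choices the right-hand side is at most $\eta_r/2$. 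Since only the terms $\gamma\in\CB[i,0]$ are updated using $\vU_i$, each updated coefficient satisfies the bound, which is \eqref{eq:one-round-conclusion}.

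\textbf{Main obstacle.} The delicate step is the soundness bound: the Fisher information involves the global operator $\vH-\vG'$, while the algorithm only tests $\vP_\gamma$ in a ball. Controlling the far tail requires a quantitative decay of $\tr[\vrho_\vH\CL^\dagger_{\vG',i}[\vP_\gamma]]$ in $\mathrm{dist}(i,\gamma)$. I would derive it from \cref{lemma:lindblad-LR}: $\CL^\dagger_{\vG'_{\CB[i,m]},a}$ annihilates identity-padded $\vP_\gamma$ supported outside $\sB[i,m]$, and the truncation error is $e^{-c_2m}$. A second point to watch is that the parameter choices are not circular, since $\Omega$ depends only on $\beta\sd$ and the constants must be independent of $r$.
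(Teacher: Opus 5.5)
\textbf{There is a genuine gap in the soundness half, in the far-tail estimate.} Your completeness half is fine.

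You write
\[
\FI_{\vG',i}(\vrho_\vH\|\vrho_{\vG'})=\beta\sum_\gamma (h_\gamma-g'_\gamma)\,\mathsf{Test}_i[\vP_\gamma,\vG';\vrho_\vH]
\]
and bound the far part using only quasi-locality, i.e.\ $\|\CL^\dagger_{\vG',i}[\vP_\gamma]\|\lesssim e^{-c\,\mathrm{dist}(i,\gamma)}$. That estimate carries no factor of $\eta_r$. The only $\eta_r$ comes from the coefficient $|h_\gamma-g'_\gamma|\le\eta_r$, so the far part is $\lesssim\beta\eta_r e^{-c\ell}$, not $\beta\eta_r^2e^{-c\ell}$.

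After the square root in \cref{lemma:fisher-convexity}, this contributes $e^{\Omega}\sqrt{\beta\eta_r e^{-c\ell}}$, which is not $O(\eta_r)$. It cannot certify the contraction to $\eta_r/2$ once $\eta_r$ drops below a threshold depending only on $\ell,\Omega,\beta$. Since $\ell$ must be independent of $r$ (otherwise $|\Grid_i|$ blows up), the argument fails for all later rounds.

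To recover the second factor of $\eta_r$ along your route, you would need a joint Lipschitz-and-locality estimate
\[
\|(\CL^\dagger_{\vG',i}-\CL^\dagger_{\vH,i})[\vP_\gamma]\|\lesssim\eta_r e^{-c\,\mathrm{dist}(i,\gamma)},
\]
combined with completeness at $\vH$. Nothing you cite gives this. \cref{lemma:lipschitz} is uniform over observables, with no decay in $\mathrm{dist}(i,\gamma)$. \cref{lemma:lindblad-LR} has no $\eta_r$. Proving such a bound for the extensive perturbation $\vG'-\vH$, including the singular coherent term, would be real extra work.

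\textbf{How the paper avoids the tail.} It replaces $\vG'$ by the completion
\[
\vK^{(\vU_i)}:=\vH+\sum_{\gamma\in\CB[i,\ell]}\bigl(g^{(r)}_\gamma+\eta_r u_{i,\gamma}-h_\gamma\bigr)\vP_\gamma,
\]
which agrees with $\vG'$ on $\CB[i,\ell]$ and with $\vH$ outside it. This buys two things.
\begin{enumerate}
\item Since $\vG'$ and $\vK^{(\vU_i)}$ agree on the ball and differ by at most $\eta_r$ per term outside, \cref{lemma:lipschitz-within} transfers your measured bounds to $|\mathsf{Test}_i[\vP_\gamma,\vK^{(\vU_i)};\vrho_\vH]|\lesssim\eta_r(\epsilon_0+e^{-c\ell})$ for $\gamma\in\CB[i,\ell]$. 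This carries the needed factor of $\eta_r$.
\item Because $\vH-\vK^{(\vU_i)}$ is supported on $\CB[i,\ell]$, the Fisher information $\FI_{\vK^{(\vU_i)},i}(\vrho_\vH\|\vrho_{\vK^{(\vU_i)}})$ is a finite linear combination of exactly those tested quantities, with coefficients at most $2\eta_r$. Hence it is $O(\beta\eta_r^2|\CB[i,\ell]|(\epsilon_0+e^{-c\ell}))$, and there is no tail at all.
\end{enumerate}
Applying \cref{lemma:fisher-convexity} to the pair $(\vK^{(\vU_i)},\vH)$ then bounds $|g^{(r)}_\gamma+\eta_r u_{i,\gamma}-h_\gamma|$ for $\gamma\ni i$. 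These are the same coefficients the algorithm updates. With this substitution, the rest of your argument, including your parameter ordering, goes through.
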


It then follows trivially that repeated application of \cref{lemma:gibbs-one-round} guarantees the correctness of \cref{alg:gibbs-learner} + \cref{alg:one-round}.

\begin{proof}
Fix a site $i\in [n]$, an iteration round $r$, an accuracy $\eta_r$, and a grid point \(\vU\in\Grid_i\). We denote as
\begin{equation}
\vK^{(\vU)}
  :=
  \vH+
  \sum_{\gamma\in \CB[i,\ell]}
  (g_\gamma^{(r)}+\eta_r\cdot u_\gamma -h_\gamma)\vP_\gamma.
  \label{eq:completed-candidate}
\end{equation}
the Hamiltonian which agrees with $\vH$ outside of $\CB[i,\ell]$, and agrees with $\vG^{(r, \vU)}$ on $ \CB[i,\ell]$. Then,

\begin{enumerate}
    \item By the Lieb-Robinson bound of \cref{lemma:lipschitz} (\cref{lemma:lipschitz-within}), the test outcomes on $\vG^{(r, \vU)}$ and $\vK^{(\vU)}$ are close, up to an error scaling with both $\eta_r$ and $\ell$:
    \begin{align}
    \max_{\gamma\in\CB[i,\ell]}  |\mathsf{Test}_{i}[\vP_\gamma,\vG^{(r, \vU)};\vrho_{\vH}] - \mathsf{Test}_{i}[\vP_\gamma,\vK^{(\vU)};\vrho_{\vH}]| \leq \poly(\beta\sd)\cdot \eta_r\cdot  e^{-c_1\cdot \ell}\label{eq:LB1}
\end{align}
where $c_1 \propto (\beta \sd)^{-1}$. Here we explicitly suppress the dependence on $\sD, \mathsf{c}_{\sgr}$.

\item There exists a choice of $\vU^*\in \Grid_i$ such that $|g_\gamma^{(r)}+\eta_r\cdot u_\gamma^* -h_\gamma|\leq \kappa\eta_r$ for each $\gamma\in \CB[i, \ell]$. Now by Lipschitz continuity \cref{lemma:lipschitz} (\cref{lemma:lipschitz-outside}), the test outcomes on $\vH$ and this $\vK^{(\vU^*)}$ are close, up to an error scaling with $\eta_r$ and the spacing $\kappa$.
\begin{equation}
     \max_{\gamma\in\CB[i,\ell]}  |\mathsf{Test}_{i}[\vP_\gamma, \vH;\vrho_{\vH}] - \mathsf{Test}_{i}[\vP_\gamma,\vK^{(\vU^*)};\vrho_{\vH}]| \leq \poly(\beta\sd)\cdot \eta_r\cdot |\CB[i,\ell]|\cdot \kappa\label{eq:Lip1}
\end{equation}

\item Let $\vV\in \Grid_i$ be the perturbation chosen by the inner loop of \cref{alg:one-round}, which minimizes the truncated stationarity tests. By the quasi-locality of the stationarity test (\cref{lemma:stest-local}, with parameters given by \cref{lemma:lipschitz-within}) and the estimation error bound, it approximately minimizes the stationarity test:
\begin{align}
   \max_{\gamma\in\CB[i,\ell]}  |\mathsf{Test}_{i}[\vP_\gamma,\vG^{(r, \vV)};\vrho_{\vH}]| \leq&   \max_{\gamma\in\CB[i,\ell]}  |\mathsf{Test}_{i}[\vP_\gamma,\vG^{(r, \vU^*)};\vrho_{\vH}]| \\ &+ 2\eta_r\cdot c_{\mathrm{meas}} + \poly(\beta\sd)\cdot |\CB[i, R^{(r)}]| \cdot e^{-c_1\cdot R^{(r)}}\label{eq:LB2}
\end{align}
\end{enumerate}

\noindent \textbf{Put together,} the completeness condition $\mathsf{Test}_{i}[\vP_\gamma, \vH;\vrho_{\vH}]=0$ and the above estimates enable us to compute the value of the test on the idealized version of the algorithm output, $\vK^{(\vV)}$:
\begin{align}
     &\max_{\gamma\in\CB[i,\ell]}  |\mathsf{Test}_{i}[\vP_\gamma,\vK^{(\vV)};\vrho_{\vH}] |  \\ \leq \quad  &\eta_r \cdot c_{\beta, d}  \cdot \big (|\CB[i,\ell]|\cdot  \kappa +|\CB[i,\ell]|\cdot e^{-c_2\cdot \ell} + c_{\mathrm{meas}}\big) + c_{\beta, d}\cdot |\CB[i, R^{(r)}]| \cdot e^{-c_1\cdot R^{(r)}} \\
     := \quad & \eta_r\cdot \theta
\end{align}
where to compact notation we introduced the scalar $\theta$ (to be specified shortly) and $c_{\beta, d}\leq \poly(\beta \sd)$. This enables us to now bound the Fisher information between $\vK^{(\vV)}$ and $\vH$:
\begin{align}
    0 \leq \FI_{\vK^{(\vV)}, i} (\rho_{\vH}||\vrho_{\vK^{(\vV)}})
  &= -\beta \sum_{\gamma\in\CB[i,\ell]}  (g^{(r)}_\gamma+\eta_r\cdot  v_\gamma -h_\gamma)\cdot \mathsf{Test}_{i}[\vP_\gamma,\vK^{(\vV)};\vrho_{\vH}]| \\
  &\leq 2\beta\cdot \eta_r^2\cdot \theta\cdot  |\CB[i,\ell]|
\end{align}
Finally, applying the approximate convexity statement for the Fisher information \cref{lemma:fisher-convexity}, we conclude the error of the coefficient $g^{(r+1)}_\gamma:=g^{(r)}_\gamma+\eta_r\cdot  v_\gamma$ is bounded by
\begin{equation}
  \max_{\gamma\ni i}
  |g^{(r+1)}_\gamma-h_\gamma|
  \le
  e^{\poly(\beta\sd)}\left[
    e^\Omega\sqrt{\beta |\CB[i,\ell]|\theta}
    +
    2e^{-\Omega/(c_4\cdot \beta\cdot \sd)}
  \right]\eta_r.
  \label{eq:inverse-after-score}
\end{equation}
for any choice of $\Omega>0$, and a universal constant $c_4$.\\

\noindent \textbf{Choosing constants.} It only remains now to choose the relevant parameters $\ell, \kappa, \Omega, c_{\mathrm{meas}}$ and $R^{(r)}$. We first pick $\Omega = \poly(\beta\sd)$ to ensure the second term in the RHS above is $\leq \eta_r / 100.$ Subsequently, we pick $\ell$ in order to ensure 
\begin{equation}
  e^{\poly(\beta\sd)+2\Omega}  |\CB[i, \ell]|^2 e^{-c_2\cdot \ell} \leq \frac{1}{10^2} \Rightarrow \ell = \poly(\beta\sd). 
\end{equation}
Analogously, we select the grid spacing $\kappa = e^{-\poly(\beta\sd)}$, measurement error $c_\mathrm{meas}=e^{-\poly(\beta\sd)}$, and finally $R^{(r)} = 2\ell+ \poly(\beta\sd)\cdot \log \frac{1}{\eta_r}$. Summing the sources of error then gives $\eqref{eq:inverse-after-score}\leq \eta_r/3$ which satisfies the desired bound. 

\end{proof}

\subsection{Sample and time complexity}
\label{section:H-complexity}

To conclude our parameter learning algorithm as described in \cref{alg:one-round}, it remains to describe a mechanism to measure the stationarity test (under perturbations to the Hamiltonian). The main observation (captured in \cref{lemma:stationarity-test-single-measurement}), is that such measurements can be performed by evolving the associated local Lindbladian, truncated to the logarithmic radius, for \textit{constant time}.

\begin{lemma}[Measuring the stationarity test by short-time
Lindbladian evolution]
\label{lemma:stationarity-test-single-measurement}
Fix $\beta\sd\geq1$, $\eta, \delta>0, 1\geq  c_{\mathrm{meas}}> 0$. Let $\vH=\sum_{\gamma\in\Gamma}h_\gamma\vP_\gamma$ and
$\vG=\sum_{\gamma\in\Gamma}g_\gamma\vP_\gamma$ be Hamiltonians
on a common $(\sd,\sD,\scgr_{\sgr})$ interaction graph and
$\max_{\gamma\in\Gamma}|g_\gamma-h_\gamma|\leq\eta$.

For any site $i$, Pauli observable $\vP$, and $\delta\in(0,1)$,
the stationarity test $ \mathsf{Test}_i[\vP,\vG;\vrho_{\vH}]$ can be estimated to additive error $c_{\mathrm{meas}}\eta$,
with probability at least $1-\delta$, using
\begin{equation}
    \mathsf{N_{samp}}=\poly(\beta\sd)\cdot 
    O\!\left( (c_{\mathsf{meas}}^2 \eta)^{-2}\cdot \log\frac{2}{\delta}\right)
\end{equation}
samples of $\vrho_{\vH}$. Furthermore, the procedure evolves the samples of  $\vrho_{\vH}$ under a truncation of the Lindbladian associated to $\vG$ to radius $\poly(\beta  d) \cdot \log (c_{\mathsf{meas}}\cdot \eta)^{-1}$ for time $c_{\mathsf{meas}} / \poly(\beta\sd)$.
\end{lemma}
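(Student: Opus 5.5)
The plan is to estimate the test as a finite-difference time derivative, $\mathsf{Test}_i[\vP,\vG;\vrho_{\vH}]=\frac{\rd}{\rd t}\tr[\vP\, e^{t\CL_{\vG,i}}[\vrho_{\vH}]]\big|_{t=0}$, and to exploit the completeness condition (\cref{rmk:completeness-intro}) so that the quantity being estimated is small, of order $\eta$. First, replace $\vG$ by its truncation $\vG_{\CB[i,\sR]}$ with $\sR=\poly(\beta\sd)\log(c_{\mathrm{meas}}\eta)^{-1}$. By \cref{lemma:lindblad-LR} and \cref{lemma:stest-local}, this changes the test by at most $c_1e^{-c_2\sR}\le c_{\mathrm{meas}}\eta/4$. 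The truncated generator $\CL' :=\CL_{\vG_{\CB[i,\sR]},i}$ acts on $O(\sR^{\sD})$ qubits and can be simulated efficiently by \cite[Theorem I.2]{chen2023efficient}.

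Next, write the estimator as $\hat T=\frac{1}{\tau}\big(\tr[\vP e^{\tau\CL'}[\vrho_{\vH}]]-\tr[\vP\vrho_{\vH}]\big)$. Each term is obtained by measuring $\vP$ on fresh samples: half of them are evolved under $\CL'$ for time $\tau$, and half are left untouched. The Taylor remainder is bounded by $\frac{\tau}{2}\|(\CL'^\dagger)^2[\vP]\|\le \tau\cdot\poly(\beta\sd)$, using the norm bound $\|\CL_{\vG,a}^\dagger\|_{\infty-\infty}\le O(1+\beta\sd)$ summed over the $4^{\sq}$ jumps on site $i$.

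The main obstacle is that this remainder naively forces $\tau\lesssim c_{\mathrm{meas}}\eta/\poly(\beta\sd)$. With measurements of $\pm1$ outcomes, that would give a variance of order $\tau^{-2}$ and hence a sample count of order $\eta^{-4}$, rather than the $\eta$-dependence of order $\eta^{-2}$ claimed for a time $\tau$ independent of $\eta$. To fix this I would bound the second-order term relative to $\eta$ instead. We have $(\CL_{\vG}'^{\dagger})^2[\vP]$ evaluated on $\vrho_{\vH}$, and $\tr[\vrho_{\vH}\CL_{\vH}^\dagger[\vY]]=0$ for every $\vY$. Therefore
\begin{equation}
\tr[\vrho_{\vH}(\CL_{\vG}^{\dagger})^2[\vP]]=\tr\big[\vrho_{\vH}(\CL_{\vG}^\dagger-\CL_{\vH}^\dagger)\CL_{\vG}^\dagger[\vP]\big],
\end{equation}
and by the Lipschitz bound \eqref{eq:lipschitz-overview} (\cref{lemma:lipschitz}) together with polynomial growth this is $\lesssim\poly(\beta\sd)\,\eta$. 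The same argument applies to all higher-order terms. Hence the bias is at most $\tau\eta\,\poly(\beta\sd)$, and it suffices to take $\tau=c_{\mathrm{meas}}/\poly(\beta\sd)$, which matches the stated evolution time. This also requires the truncation and the Lipschitz estimates to be compatible after truncating; those errors are already $\le c_{\mathrm{meas}}\eta$ by the choice of $\sR$.

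Finally, each single-shot estimate of the difference quotient has magnitude $O(\tau^{-1})$. Reaching additive error $c_{\mathrm{meas}}\eta/4$ with a median-of-means estimator therefore needs $O(\tau^{-2}(c_{\mathrm{meas}}\eta)^{-2}\log(2/\delta))=\poly(\beta\sd)\cdot O((c_{\mathrm{meas}}^2\eta)^{-2}\log\frac{2}{\delta})$ samples. Summing the truncation, bias and statistical errors gives total error $\le c_{\mathrm{meas}}\eta$ with probability $\ge1-\delta$.
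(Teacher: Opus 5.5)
Your proposal is correct and follows essentially the same route as the paper. It truncates $\vG$ to radius $\sR\sim\poly(\beta\sd)\log(c_{\mathrm{meas}}\eta)^{-1}$, uses the same finite-difference estimator, and controls the second-order remainder by $O(\tau\eta\,\poly(\beta\sd))$ via stationarity of $\vrho_{\vH}$ under $\CL_{\vH,i}$ together with the Lipschitz bound, which permits $\tau=c_{\mathrm{meas}}/\poly(\beta\sd)$ and hence the $(c_{\mathrm{meas}}^2\eta)^{-2}$ sample count. The only difference is that the paper works in the Schr\"odinger picture. There it shows $\|\CL[\vrho_{\vH}]\|_1\le\poly(\beta\sd)\,\eta$ for the truncated generator (combining the truncation and Lipschitz errors exactly as you anticipate) and bounds the Duhamel remainder via $\|e^{s\CL}\CL^2[\vrho_{\vH}]\|_1\le\|\CL\|_{1\to1}\|\CL[\vrho_{\vH}]\|_1$, which is the dual of your Heisenberg-picture identity applied inside the integral remainder.
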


\begin{proof}
Let $\CL$ denote the Lindbladian associated to the truncation of $\vG_{\CB[i, \mathsf{R}]}$ to radius $\sR$. It follows from the Lieb-Robinson bound (e.g. \cref{lemma:lipschitz-within}) and the continuity lemma  \cref{lemma:lipschitz}, that the derivative of $\vrho_{\vH}$ under $\CL$ is given by:
\begin{align}
    \|\CL[\vrho_{\vH}]\|_1
    &\leq
    \|\CL-\CL_{\vH,i}\|_{1\to1} \leq
    \|\CL-\CL_{\vG,i}\|_{1\to1} +   \|\CL_{\vG,i}-\CL_{\vH,i}\|_{1\to1} \\
&\leq \poly(\beta\sd) \cdot \bigg(e^{-\theta\cdot \sR}+\eta\cdot \sum_{z\geq 0} |\CB[i, z]|\cdot e^{-\theta\cdot z}\bigg) \\ &\leq  \poly(\beta\sd)\cdot \eta \cdot \bigg(1+\sum_{z\geq 0} z^{\sD}\cdot e^{-\theta\cdot z}\bigg) \\
    &\leq \poly(\beta\sd) \cdot \eta,
\end{align}
where we omit factors of $\mathsf{c_{gr}}, \sD$, and make a suitable choice of $\poly(\beta\cdot d) \cdot \log (\min(1, c_{\mathsf{meas}})\cdot \eta)^{-1}$. In other words, $\vrho_{\vH}$ is $\sim \eta$ locally metastable (\cref{def:intro-local-metastability}). The algorithm introduces $t>0$, estimates $\tr[\vP\cdot \vrho_{\vH}]$ and $\tr[\vP\cdot e^{t\CL}[\vrho_{\vH}]]$ to error $ \frac{1}{4}\cdot c_{\mathsf{meas}}\cdot \eta\cdot t$. It then returns:
\begin{align}
    \hat{\mathsf{T}}[\vP]:=\frac{1}{t}\cdot \bigg(\tr[\vP\cdot e^{t\CL}[\vrho_{\vH}]]-\tr[\vP\cdot \vrho_{\vH}]\bigg).
\end{align}
Duhamel's formula entails that the resulting error due to the finite time estimation is:
\begin{align}
   \bigg | \hat{\mathsf{T}}[\vP]-\mathsf{Test}_i[\vP,\vG;\vrho_{\vH}]\bigg| &\leq \|\CL-\CL_{\vG, i}\|_{1-1} +    \bigg | \hat{\mathsf{T}}[\vP]-\tr[\CL^\dagger[\vP]\vrho_{\vH}]\bigg| \\
   &\leq \frac{1}{t}\int_0^t(t-s)\cdot 
    \|e^{s\CL}\CL^2[\vrho_{\vH}]\|_1\rd s + \poly(\beta\cdot d) e^{-\theta\cdot \sR} \\ 
    &\leq \frac{1}{4} c_{\mathsf{meas}}\cdot \eta + t\cdot \eta\cdot  \poly(\beta\cdot \sd) \leq \frac{1}{2} c_{\mathsf{meas}}\cdot \eta \label{eq:duhamel-estimation-efficient}
\end{align}
where we made the explicit choice of $t_\beta:= c_{\mathrm{meas}} / \poly(\beta\sd)$. Finally, a median-of-means estimation returns each desired Pauli expectation value w.p. $1-\delta/2$ using the claimed sample complexity.
\end{proof}

We now are in a position to settle the sample complexity of \cref{alg:gibbs-learner,alg:one-round}. We essentially combine \cref{lemma:stationarity-test-single-measurement} with a coloring argument to ensure parallel measurements of the various perturbed stationarity tests.

\begin{lemma}
    [Sample complexity]\label{lemma:gibbs-sample} Assume $\beta > \sd^{-1}$. The Hamiltonian learning algorithm described in \cref{alg:gibbs-learner,alg:one-round} requires a total number of 
    \begin{equation}
        \mathsf{N_{samp}}=  e^{\poly(\beta\sd)}\cdot  O\bigg(\eta^{-2}\cdot \log \frac{n}{\delta\eta }\cdot \log^\sD \eta^{-1}\bigg)
    \end{equation}
    copies of $\vrho_{\vH}$.
\end{lemma}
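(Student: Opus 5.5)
The plan is to count samples round by round, using \cref{lemma:stationarity-test-single-measurement} as the primitive and parallelizing across sites with a coloring argument. Everything is then summed over the $T=\lceil\log_2(1/\eta)\rceil$ rounds of \cref{alg:gibbs-learner}.

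\textbf{Step 1: one estimate.} Fix round $r$, with accuracy $\eta_r=2^{-r}$ and quasi-locality radius $R^{(r)}=2\ell+\poly(\beta\sd)\log\eta_r^{-1}$. Each required estimate is $\mathsf{Test}_i[\vP_\gamma,\vG^{(r)}_{\CB[i,R^{(r)}]}+\eta_r\vU;\vrho_{\vH}]$, for $\vU\in\Grid_i$ and $\gamma\in\CB[i,\ell]$, to additive error $c_{\mathrm{meas}}\eta_r$. Since $\vG^{(r)}+\eta_r\vU$ is within $O(\eta_r)$ of $\vH$ coefficientwise, \cref{lemma:stationarity-test-single-measurement} applies. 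With $c_{\mathrm{meas}}=e^{-\poly(\beta\sd)}$, it estimates one such quantity using $e^{\poly(\beta\sd)}\eta_r^{-2}\log(2/\delta')$ samples. The procedure only evolves the qubits in $\sB[i,R^{(r)}]$ under a truncated Lindbladian for a short time, and then measures a local Pauli.

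\textbf{Step 2: parallelize over sites.} Color the sites so that any two sites of the same color are at distance greater than $2R^{(r)}+O(1)$. By polynomial growth, $\chi_r\lesssim \scgr_\sgr (R^{(r)})^{\sD}=O(\poly(\beta\sd)^{\sD}\log^{\sD}\eta_r^{-1})$ colors suffice. For a fixed color and a fixed index, meaning a pair $(\vU,\gamma)$ read through a canonical identification of the local grids (whose sizes $|\Grid_i|\cdot|\CB[i,\ell]|\le (2/\kappa)^{|\CB[i,\ell]|}\cdot|\CB[i,\ell]|=e^{\poly(\beta\sd)}$ are bounded uniformly), I would do the following on a single copy of $\vrho_{\vH}$. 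Apply the truncated channels $e^{t\CL}$ simultaneously on the disjoint balls $\sB[i,R^{(r)}]$ of that color. Then measure each $\vP_\gamma$, which lies inside its own ball. Because the balls are disjoint and each observable sits in its own ball, the marginal statistics for every site coincide with the single-site experiment. Hence one median-of-means batch serves all sites of that color at once.

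A union bound over $n$ sites, $e^{\poly(\beta\sd)}$ indices and $T$ rounds fixes $\delta'=\delta/(nTe^{\poly(\beta\sd)})$. This gives a per-round cost of
\begin{equation}
e^{\poly(\beta\sd)}\cdot\chi_r\cdot\eta_r^{-2}\log\frac{nT}{\delta}.
\end{equation}

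\textbf{Step 3: sum over rounds.} The sum $\sum_{r<T}\eta_r^{-2}\log^{\sD}\eta_r^{-1}$ is geometric and dominated by the last term, giving $O(\eta^{-2}\log^{\sD}\eta^{-1})$. Using $\log(nT/\delta)\le\log\frac{n}{\delta\eta}$, the total is the claimed $e^{\poly(\beta\sd)}\,O\big(\eta^{-2}\log\frac{n}{\delta\eta}\log^{\sD}\eta^{-1}\big)$.

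\textbf{Main obstacle.} The step that needs the most care is the validity of parallel measurement within a color class. The short-time estimator of \cref{lemma:stationarity-test-single-measurement} has to remain accurate when neighboring balls are simultaneously evolved. This holds because each channel acts only on its own ball, so after tracing out the other balls, the reduced state on a given ball is exactly that of the single-site experiment. The other point to check is that the error analysis of \cref{lemma:stationarity-test-single-measurement}, via Duhamel together with approximate metastability of $\vrho_{\vH}$ under the truncated generator, still holds for the perturbed guess. It does, because the perturbation $\eta_r\vU$ is only of size $O(\eta_r)$.
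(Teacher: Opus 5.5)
Your proposal is correct and follows essentially the same route as the paper's proof. Each test is estimated via \cref{lemma:stationarity-test-single-measurement} with failure probability $\delta/(nT e^{\poly(\beta\sd)})$, the tests of round $r$ are split into $e^{\poly(\beta\sd)}\cdot O((R^{(r)})^{\sD})$ classes run in parallel on the same copies, and the per-round costs are summed geometrically over $r<T$. Your argument that trace-preserving channels on disjoint balls leave each ball's marginal statistics unchanged spells out the coloring step that the paper states in one line.
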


Rescaling the desired error by $\beta\cdot \sd$ then gives the claimed bound on the sample complexity. 

\begin{remark}[Rescaling $\beta$]\label{remark:rescale} We assume wlog throughout the proof that  $\alpha:=\beta \sd\geq 1$ i.e. the low temperature regime. If $\alpha<1$ we rescale $(\beta', \vH', \eta') = (\alpha^{-1}\beta, \alpha\vH, \alpha\eta)$. The Gibbs state and Lindbladian are unchanged, and thus it suffices to learn $\vH'$ at inverse temperature $\beta'$
to coefficient error $\eta'$ and divide the resulting estimate by $\alpha$. By appropriately rescaling $\eta_r:=\alpha\cdot 2^{-r}$, the iteration count and truncation radius can be chosen independent of $\alpha$. Thereby this has the effect of introducing multiplicative factors of $\alpha^{-2}$ to the sample complexity of \cref{lemma:stationarity-test-single-measurement}. See \cite[Section IV]{chen2025learning} for related arguments.
\end{remark}

\begin{proof}

    [of \cref{lemma:gibbs-sample}] The total number of iterations is $T=\ceil{\log\eta^{-1}}$. We refer the reader to the proof of \cref{lemma:gibbs-one-round} for the choice of constants.

    The expectation value $\mathsf{Test}_{i}[\vP_\gamma,\vG^{(r)}+\eta_r{\vU};\vrho_{\vH}]$ in \cref{alg:one-round} can be estimated to additive error $\epsilon_\mathsf{test} = c_{\mathrm{meas}}\cdot \eta_r=\eta_r \cdot e^{-\poly(\beta\sd)}$ with probability of failure $\delta_\mathsf{test}= O(\delta\cdot n^{-1}\cdot \kappa^{|\CB[i, \ell]|}\cdot 1/T)$ using $e^{\poly(\beta\sd)}\cdot O(\epsilon_\mathsf{test}^{-2}\cdot \log \frac{1}{\delta_\mathsf{test}}) = e^{\poly(\beta\sd)}\cdot O(\eta_r^{-2}\cdot \log \frac{nT}{\delta})$ samples by \cref{lemma:stationarity-test-single-measurement}. By a union bound, all experiments in \cref{alg:gibbs-learner,alg:one-round} succeed with probability $\geq 1-\delta$.
    
    We assume these Hamiltonians are $(\sd, \sD, \scgr)$ local, and thus any given test in round $r$ overlaps with at most $\mathsf{s}_r=e^{\poly(\beta
    \sd)}O((R^{(r)})^{\sD}) = e^{\poly(\beta
    \sd)}\cdot O(\log^{\sD}\eta_r^{-1})$ other tests due to their finite radius. By a coloring argument, we can partition the set of tests into $\mathsf{s}_r$ disjoint subsets, such that all tests in each subset can be performed in parallel on the same samples. The total sample complexity of such tests, across the $T$ rounds is:
    \begin{align}
      e^{\poly(\beta\sd)}\cdot \sum_{r=1}^T  O\bigg((R^{(r)})^{\sD}\cdot  \eta_r^{-2}\cdot \log \frac{nT}{\delta}\bigg) &\leq  e^{\poly(\beta\sd)}\cdot O(\log \frac{nT}{\delta})\cdot \sum_{r=1}^T (1+r)^{\sD}\cdot 4^r  \\ &\leq  e^{\poly(\beta\sd)} O\bigg(\eta^{-2}\cdot \log \frac{n}{\delta\eta }\cdot \log^\sD \eta^{-1}\bigg)\label{eq:geo-sum}
    \end{align}
    as advertised.
\end{proof}

Next we argue about the time-complexity of the learning algorithm, which ultimately reduces to implementing the truncated Lindbladian. We refer the reader to \cite[Theorem I.2, Section III.C]{chen2023efficient} for details and \cite[Lemma IV.5]{chen2025learning} for related calculations. 

\begin{lemma}
    [Time complexity]\label{lemma:gibbs-time} The quantum and classical circuit complexity of \cref{alg:gibbs-learner,alg:one-round} is $e^{\poly(\beta\sd)}\cdot n \cdot \poly(\eta^{-1}, \log n,  \log \delta^{-1})$.
\end{lemma}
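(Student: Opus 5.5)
The plan is to count the cost of one round of \cref{alg:one-round}, multiply by the number of rounds $T=\lceil\log_2\eta^{-1}\rceil$, and separate the quantum cost (implementing the truncated Lindbladian evolutions used in \cref{lemma:stationarity-test-single-measurement}) from the classical cost (enumeration over $\Grid_i$, median-of-means post-processing, and the arg-min selection).

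\textbf{Quantum cost.} In round $r$, each estimate is produced by evolving samples of $\vrho_{\vH}$ under the Lindbladian of the truncated Hamiltonian $(\vG^{(r)}+\eta_r\vU)_{\CB[i,R^{(r)}]}$ for time $t=c_{\mathrm{meas}}/\poly(\beta\sd)=e^{-\poly(\beta\sd)}$. This Hamiltonian acts on $|\sB[i,R^{(r)}]|=O((R^{(r)})^{\sD})=e^{\poly(\beta\sd)}\polylog\eta^{-1}$ qubits. I will invoke \cite[Theorem I.2]{chen2023efficient}, which simulates $e^{t\CL}$ to diamond-norm error $\epsilon$ with gate complexity polynomial in the number of qubits $m$, in $\beta$, $t$, $\|\vG\|\le m\cdot O(1)$, and in $\log\epsilon^{-1}$. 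I will take $\epsilon$ to be a small fraction of $c_{\mathrm{meas}}\eta_r t$, so that the simulation error fits inside the Duhamel budget \eqref{eq:duhamel-estimation-efficient}. Each run then costs $e^{\poly(\beta\sd)}\polylog(\eta^{-1})$ gates. Measuring the Pauli $\vP_\gamma$ is free by comparison.

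\textbf{Total count.} By \cref{lemma:gibbs-sample}, the total number of samples is $\mathsf{N_{samp}}=e^{\poly(\beta\sd)}\eta^{-2}\log\frac{n}{\delta\eta}\log^{\sD}\eta^{-1}$. Each sample feeds at most $n$ parallel local evolutions, one per test in a color class, so the quantum gate count is at most $n\cdot\mathsf{N_{samp}}\cdot e^{\poly(\beta\sd)}\polylog\eta^{-1}$. This has the claimed form $e^{\poly(\beta\sd)}\, n\,\poly(\eta^{-1},\log n,\log\delta^{-1})$.

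\textbf{Classical cost.} For each site the algorithm does the following:
\begin{itemize}
\item it enumerates $|\Grid_i|\le(2/\kappa+1)^{|\CB[i,\ell]|}=e^{\poly(\beta\sd)}$ perturbations, each paired with $|\CB[i,\ell]|$ observables;
\item it computes the classical description of the truncated Hamiltonian, which compiles into the circuit above;
\item it aggregates the median-of-means estimates and takes the arg-min;
\item it computes the coloring of the tests into $\mathsf{s}_r$ classes greedily, in time $\poly(n)$ (in fact $n\cdot e^{\poly(\beta\sd)}\polylog\eta^{-1}$ using bounded degree).
\end{itemize}
Summed over sites and rounds, this gives $n\cdot e^{\poly(\beta\sd)}\poly(\eta^{-1},\log n,\log\delta^{-1})$.

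\textbf{Rescaling.} The case $\beta\sd<1$ is handled by \cref{remark:rescale}, which changes only constants.

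\textbf{Main obstacle.} The step I expect to be most delicate is the precise use of \cite[Theorem I.2]{chen2023efficient}. Its cost depends on the operator norm of the truncated Hamiltonian, which scales with the number of its terms, $\polylog\eta^{-1}$. It also requires implementing the operator Fourier transforms with a time cutoff and frequency discretization. I must check that all of these contribute only $\poly(\beta,\log\eta^{-1})$ overhead, and that the resulting diamond-norm error, divided by $t$, remains below $c_{\mathrm{meas}}\eta_r$. I would follow the bookkeeping of \cite[Lemma IV.5]{chen2025learning} here.
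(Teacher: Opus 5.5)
Your proposal is correct and follows the paper's proof: you bound the per-shot cost of simulating the \cite{chen2023efficient} Lindbladian, truncated to a patch of radius $\poly(\beta\sd)\log\eta^{-1}$, via \cite[Theorem I.2]{chen2023efficient}, and multiply by the at most $n\cdot\mathsf{N_{samp}}$ total shots. Setting the simulation error to a fraction of $c_{\mathrm{meas}}\eta_r t$ (so it survives division by $t$) and spelling out the classical bookkeeping are refinements the paper leaves implicit; the one slip is your claim that rescaling for $\beta\sd<1$ changes only constants, since by \cref{remark:rescale} it introduces $\poly((\beta\sd)^{-1})$ factors through $\eta'=\beta\sd\,\eta$, which is why the lemma is really stated for $\beta\sd\geq 1$.
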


\begin{proof}
    Following \cref{lemma:stationarity-test-single-measurement}, a single sample in \cref{alg:one-round} reduces to simulation of the \cite{chen2023efficient} local Lindbladian on a patch of radius $O(\poly(\beta\sd)\log \frac{1}{\eta})$ of a finite-dimensional lattice Hamiltonian. Following \cite[Theorem I.2, Section III.C]{chen2023efficient}, said simulation can be approximated to error $\eta$ using $O\big(\poly(\beta\sd,\log \frac{1}{\eta})\big) $ gates. Multiplied by the total number of shots $n e^{\poly(\beta\sd)}\cdot \poly(\log n, \frac{1}{\eta}, \log \frac{1}{\delta})$ gives the desired claims. 
\end{proof}

\section{Learning Hamiltonian structure from their Gibbs states}
\label{section:structure-learning}

The parameter learning algorithm of \cref{section:gibbs-alg} assumes that the set of Hamiltonian interactions $\Gamma$ is known, and the task is simply to infer their coefficients. In this section, we remove this assumption, and argue that a related algorithm can infer both the underlying $\sD-$dimensional lattice connectivity and the coefficients of the Hamiltonian $\vH$. 

Let $\CS_{[n]}^{ \sk}$ denote the nonidentity Pauli strings supported on at most $\sk$ sites.  We write
\begin{equation}
  \vH=\sum_{\vP\in\mathcal \CS_{[n]}^{\sk}}h_{\vP}\vP,
  \qquad |h_{\vP}|\leq 1,
  \qquad
  \Gamma_{\vH}:=\{\vP:h_{\vP}\neq 0\}.
  \label{eq:structure-universal-expansion}
\end{equation}

As before we assume $\vH$ is a $(\sd, \sD, \scgr_\sgr)$ Hamiltonian for a fixed set of constants and study the asymptotics as $n, \eta, \delta, \beta$ are allowed to scale.  
\begin{theorem}[Structure learning from Gibbs states]
\label{theorem:structure-learning}
Let $\eta,\delta\in(0,1)$, $\beta>0$ and let $\vH$ be an $n$ qudit, $(\sd, \sD, \scgr_\sgr)$ Hamiltonian. Then there exists an algorithm which, with probability $1-\delta$ outputs a Hamiltonian $\vG=\sum_{\vP\in\CS_{[n]}^{ \sk}}
   g_{\vP}\vP$ such that $\Gamma_{\vG}\subseteq \Gamma_{\vH}$ and $\forall \vP\in\CS_{[n]}^{\sk}: |g_\vP-h_\vP|\leq \eta$. The algorithm uses at most
\begin{equation}
    \mathsf{N_{samp}}:=
  \frac{e^{\poly(\beta)}}{\beta^2}
  O\bigg(\eta^{-2}\cdot \log\frac{n}{\delta\cdot \eta}\cdot \log^\sD \frac{1}{\eta}\bigg),
  \label{eq:structure-complexity}
\end{equation}
samples of $\vrho_{\vH}$, and at most $\poly(\mathsf{N_{samp}})\cdot n^{\poly(\beta)}$ processing time.
\end{theorem}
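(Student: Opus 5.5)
The plan is to run the iterative bootstrapping of \cref{alg:gibbs-learner,alg:one-round}, now over the universal Pauli dictionary $\CS_{[n]}^{\sk}$, while preserving the invariant
\begin{equation}
\Gamma_{\vG^{(r)}}\subseteq\Gamma_{\vH}\quad\text{and}\quad \max_{\vP\in\CS_{[n]}^{\sk}}|g^{(r)}_{\vP}-h_{\vP}|\leq\eta_r .
\end{equation}
In round $r$, for each site $i$ we enumerate \emph{neighborhood guesses}. Such a guess is a set $\Gamma'$ of $\sk$-local Paulis near $i$, within graph distance $\ell=\poly(\beta\sd)$ of $i$ in the interaction graph of $\Gamma_{\vG^{(r)}}\cup\Gamma'$, and we require that $\Gamma_{\vG^{(r)}}\cup\Gamma'$ is still a $(\sd,\sD,\scgr_\sgr)$ interaction graph. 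A guess comes with grid coefficients $\vU\in\kappa\BZ\cap[-1,1]$ on $\CB[i,\ell]$. There are $n^{\poly(\ell,\sd,\sk)}=n^{\poly(\beta)}$ such guesses, which accounts for the stated runtime.

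The invariant makes sense for two reasons. Missing terms of $\vH$ have $|h_{\vP}|\leq\eta_r$ by the invariant. Moreover, the ``true'' local guess $\Gamma'=\Gamma_{\vH}\cap\CB_{\vH}[i,\ell]$ is always in the enumeration, with some $\vU^*$ that is $\kappa\eta_r$-close. Because of this, the analysis of \cref{lemma:gibbs-one-round} carries over verbatim on the common graph $\Gamma_{\vH}$. We compare $\vG^{(r)}+\eta_r\vU$ with its completion $\vK^{(\vU)}$, which agrees with $\vH$ outside the ball. We then use \cref{lemma:lipschitz} to make the outside error $O(\eta_r e^{-c\ell})$, and \cref{lemma:fisher-convexity} to convert a small test value into coefficient error $\leq\eta_r/3$.

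One new point is that the minimizing guess $\vV$ might include spurious terms outside $\Gamma_{\vH}$. Its completion $\vK^{(\vV)}$ is still a polynomial-growth Hamiltonian on the union graph $\Gamma_{\vH}\cup\Gamma'$. This holds with constants $(2\sd,\sD,2\scgr_\sgr)$, since the local perturbation has bounded size. So \cref{lemma:fisher-convexity} applies on that union graph, and it gives $|g^{(r+1)}_{\vP}-h_{\vP}|\leq\eta_r/3$ for every $\vP\ni i$, including $\vP$ with $h_{\vP}=0$.

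We then threshold. After updating, we zero out every coefficient with $|g_{\vP}|\leq\eta_r/3$. This keeps the error at most $\eta_r/2=\eta_{r+1}$, because a zeroed term has $|h_{\vP}|\leq 2\eta_r/3$, which is off by less than we need after rebalancing the constants: use $\eta_r/6$ in place of $\eta_r/3$ in the convexity step. It also removes every $\vP\notin\Gamma_{\vH}$, which restores $\Gamma_{\vG^{(r+1)}}\subseteq\Gamma_{\vH}$. The final output is then exact on structure up to terms of magnitude $\leq\eta$. If needed, one could instead state structure learning only for terms above threshold, but the invariant already gives $\Gamma_{\vG}\subseteq\Gamma_{\vH}$.

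The sample complexity is the main obstacle. There are $n^{\poly(\beta)}$ guesses per site, so estimating each perturbed test by the truncated-Lindbladian method of \cref{lemma:stationarity-test-single-measurement} would cost $\poly(n)$ samples. I would use the baseline-plus-correction decomposition \eqref{eq:intro-baseline+corr}, which rests on two pieces.

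The \emph{baseline} is $\mathsf{Test}_i[\vO,\vG^{(r)}_{\sR};\vrho_{\vH}]$ with $\sR\sim\poly(\beta)\log\eta_r^{-1}$. It is independent of the guess and involves only the $\poly(\beta)$-many local observables $\vO$ per site. It is measured by short-time evolution as in \cref{lemma:stationarity-test-single-measurement}, run in parallel across colour classes of radius-$\sR$ balls. Observables that straddle balls of the same colour are handled by applying each ball's channel independently at random, as in \cref{theorem:pauli-shadows-after-channels}. This costs $\eta_r^{-2}\log^{\sD}(\eta_r^{-1})\log(n/\delta)$ samples.

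The \emph{correction} is $\mathsf{Test}_i[\vO,(\vG+\eta_r\vU)_{\sL};\cdot]-\mathsf{Test}_i[\vO,\vG_{\sL};\cdot]$ at constant radius $\sL=\poly(\beta)\ell$. We need a Lipschitz estimate, a variant of \cref{lemma:lipschitz}, showing that this difference approximates the full perturbation effect to error $\eta_r e^{-c\sL}$. The reason is that the two dynamics differ by $\eta_r\vU$, and the difference of their truncation errors decays in $\sL$ with a prefactor $\eta_r$. The correction is then a fixed linear functional of the $O(\sL^{\sD})$-site reduced density matrix. So all $n^{\poly(\beta)}$ corrections can be estimated simultaneously to error $c_{\mathrm{meas}}\eta_r$ by Pauli shadow tomography (\cref{theorem:pauli-shadows}) with $\mathsf{w}=\poly(\beta)$. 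This costs $2^{\poly(\beta)}\eta_r^{-2}\log(n^{\poly(\beta)}/\delta)$ samples, which is still logarithmic in $n$.

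The delicate step is proving that the correction error carries the factor $\eta_r$ uniformly over all guesses, including spurious ones whose graph differs from $\Gamma_{\vH}$. This needs the growth condition to be enforced in the enumeration. Summing the geometric series over rounds as in \eqref{eq:geo-sum}, and rescaling $\beta$ via \cref{remark:rescale}, gives \eqref{eq:structure-complexity}.
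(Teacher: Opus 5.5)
Your route is the paper's. You use neighborhood guessing with the growth constraint imposed on $\Gamma_{\vG^{(r)}}\cup\Lambda$, and the ideal guess $\CB_{\Gamma_{\vH}}[i,\ell]$ lies in the enumeration because $\Gamma_{\vG^{(r)}}\subseteq\Gamma_{\vH}$. You then complete the minimizer on the union graph, apply \cref{lemma:fisher-convexity}, threshold, and measure with the baseline-plus-correction estimator of \cref{lemma:parallel-stat-meas}.

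Your weaker invariant is fine: you keep only $\Gamma_{\vG^{(r)}}\subseteq\Gamma_{\vH}$, thresholding at $\eta_r/3$ against accuracy $\eta_r/6$. The paper's extra condition $|h_{\vP}|\geq\frac{7}{16}\eta_r$ on accepted terms only buys monotonicity $\Gamma_{\vG^{(r)}}\subseteq\Gamma_{\vG^{(r+1)}}$, which the theorem does not need. Two of your claims are wrong, though.

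\textbf{The union graph's growth constant depends on $\ell$.} $\Gamma_{\vH}\cup\Lambda$ is not $(2\sd,\sD,2\scgr_{\sgr})$. $\Lambda$ is only required to be local relative to $\Gamma_{\vG^{(r)}}\cup\Lambda$, and $\Gamma_{\vG^{(r)}}$ can be very sparse (it is empty at $r=0$). So $\Lambda$ may join sites that are far apart in $\Gamma_{\vH}$, and a ball in the union can contain $\Gamma_{\vH}$-balls around every site of $\supp\Lambda$. The growth constant is therefore of order $\sk\,\scgr_{\sgr}|\Lambda|\sim\ell^{\sD}$ (\cref{lemma:connectivity-perturb}); only the degree, $O(\sk\sd)$, is $\ell$-independent.

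Consequently the parameter choice of \cref{lemma:gibbs-one-round} does not carry over verbatim. The term $\hat{\scgr}\,e^{-c_2\Omega}\eta_r$ in \cref{lemma:fisher-convexity} forces $\Omega\gtrsim\poly(\beta\sd)\log\ell$, so $\Omega$ must be chosen as a function of $\ell$. Only then do you set $\kappa=c_{\mathrm{meas}}=e^{-c\ell}$ and take $\ell=\poly(\beta\sd)$ large enough that $e^{\Omega}\poly(\ell)e^{-c\ell/2}$ is small. This closes because polynomial growth in $\ell$ loses to exponential decay, but it is exactly the ordering \cref{lemma:structure-core-soundness} has to verify.

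\textbf{The tested observables cannot be $\poly(\beta)$ per site.} The Fisher-information bound is a linear combination of $\mathsf{Test}_i[\vP,\vK^{(\vV)};\vrho_{\vH}]$ over the terms of $\vH-\vK^{(\vV)}$. These include undiscovered terms of $\Gamma_{\vH}$ near $i$ whose other sites can be anywhere in $[n]$, since the lattice embedding is unknown. If you only test observables that are local in the guessed graph, this step fails.

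You must instead control $\max_{\vP\in\CS_{[n]}^{\sk}}|\mathsf{Test}_i[\vP,\vG^{(r)}+\eta_r\vU;\vrho_{\vH}]|$, i.e. $n^{O(\sk)}$ observables per site, as the paper does. This costs nothing asymptotically, since both \cref{theorem:pauli-shadows-after-channels} and the shadow-based correction step scale logarithmically in the number of observables.
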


\subsection{The neighborhood guessing algorithm}
\label{subsection:structure-algorithm}



Much like \cref{section:gibbs-alg}, the learning algorithm will proceed in iterations exactly the same as in \cref{alg:gibbs-learner}. In round $r\in (1, 2, \cdots, \lceil \log \frac{1}{\eta}\rceil)$ a candidate ``guess'' Hamiltonian $\vG^{(r)}$ will be iteratively refined to accuracy $\eta_r:=2^{-r}$, where partial information about the underlying structure will be identified; namely, the interaction graph and associated Hamiltonian terms of coefficient magnitude roughly larger than $\eta_r$. 

The only issue lies in how to specify the next guess. The crux will be to only iterate over perturbations to the Hamiltonian which preserve its $(\sd, \sD, \scgr_\sgr)$ locality properties.

\begin{definition}
    Given a site $i\in [n]$, a subset of clauses $\Lambda\subseteq \CS_{[n]}^{\sk}$ is said to be $t-\mathsf{connected}$ to $i$ if for any $e\in \Lambda$ there exists a path of overlapping clauses within $\Lambda$ incident on $i$, of length $\leq t$. We denote as $\mathsf{CC}_{i, t}$ the set of all such subsets.
\end{definition}

Following the description in \cref{section:gibbs-alg}, we fix constants $\ell>1, \kappa, c_{\mathrm{meas}}>0$ to be specified shortly.

The refinement step will be to enumerate over choices of interaction $\Lambda$ (the neighborhood of $i$), and coefficients on said interactions. Following \eqref{eq:fixed-local-grid} we denote the coefficient grid on a given subset of clauses $\Lambda\in \mathsf{CC}_{i, \ell}$ as:
\begin{equation}
  \Grid(\Lambda)
  :=\left\{
    \vU=\sum_{\vP\in\Lambda}u_{\vP}\vP:
    u_{\vP}\in\kappa\BZ\cap[-1,1]
  \right\}.
  \label{eq:structure-core-grid}
\end{equation}

We note that once the $r$th guess $\vG^{(r)}$, and the perturbation $\Lambda, \vU$ are fixed, there is a well-posed notion of distance on the set of qubits $[n]$ induced by the graph $\Gamma_{\vG^{(r)}, \Lambda}:=\Gamma_{\vG^{(r)}}\cup \Lambda$ of the union of clauses. 
Thus we enumerate over Hamiltonian perturbations which preserve the property that the resulting interaction graph is $(\sd, \sD, \scgr_{\mathsf{gr}})-$local. Imposing this locality constraint will later enable us to import the approach from \cref{section:gibbs-alg}.

We are now in a position to state the iterative step:

\begin{algorithm}[H]
\caption{A single round of the iterative structure learning algorithm}
\label{alg:structure-learner}
\begin{algorithmic}[1]
\Require Current estimate \(\vG^{(r)}\), current error guarantee \(\eta_r\), fresh copies
of \(\rhoH\)
\For{each site \(i\in [n]\) }
  \For{each $\Lambda\in \mathsf{CC}_{i,\ell}$ such that $\Gamma_{\vG^{(r)}, \Lambda}$ is $(\sd, \sD, \scgr_\sgr)$ local, and \(\vU\in\Grid(\Lambda)\)}
    \State Compute $\max_{\vP\in\mathcal \CS^{\sk}_{[n]}} |
    \mathsf{Test}_i
    [\vP,\vG^{(r)}+\eta_r\cdot \vU;\vrho_{\vH}]|$ to error
      \(c_{\mathrm{meas}}\cdot \eta_r\) using \cref{lemma:parallel-stat-meas}.
  \EndFor
  \State Choose \(\Lambda_i, \vU_i\) minimizing the above test. 
  \For{each \(\vP\in \CS^{\sk}_{[n]}\) s.t. $i\in \mathsf{supp}(\vP)$}
    \begin{align}
       g_{\vP}^{(r+1)}\gets \begin{cases}
         g_{\vP}^{(r)}+\eta_r u_{i, \vP} & | g_{\vP}^{(r)}+\eta_r u_{i, \vP}|\geq \frac{1}{4}\cdot \eta_r\\
        0& \text{otherwise}.\label{eq:thresholding}
    \end{cases}
    \end{align}
  \EndFor
\EndFor
\State \Return \(\vG^{(r+1)}\)
\end{algorithmic}
\end{algorithm}

The role of the threshold in \eqref{eq:thresholding} will be to ensure each iteration of the algorithm reveals more of the structure of $\vH$: $\Gamma_{\vG^{(r)}}\subseteq \Gamma_{\vG^{(r+1)}}\subseteq \Gamma_{\vH}$, without issuing false positives. The description above omits precisely how the stationarity tests are measured to the desired error, for every $\vP\in\mathcal \CS^{\sk}_{[n]}$. Although a straightforward approach is possible with polynomial sample complexity in system size $n$ (simply by enumerating and measuring akin to \cref{section:gibbs-alg}), for conciseness we defer a discussion on how to achieve the near-optimal sample complexity to \cref{lemma:parallel-stat-meas} in \cref{section:structure-sample}. 

Next, in \cref{subsection:structure-correctness}, we argue the correctness of the structure learning algorithm.

\subsection{Correctness}
\label{subsection:structure-correctness}

The main result of this subsection is \cref{lemma:structure-core-soundness}, which quantifies the accuracy increase after a single step of the algorithm.  A key observation will be that the invariant $\Gamma_{\vG^{(r)}}\subseteq \Gamma_{\vH}$ enables us to leverage that the guess $\Gamma_{\vG^{(r)}}$ -- and its local perturbations $\Gamma_{\vG^{(r)}, \Lambda}:=\Gamma_{\vG^{(r)}}\cup \Lambda$ -- are always lattice Hamiltonians. This allows us to invoke the previously derived locality and convexity statements. We begin with a simple lemma on the geometry of the perturbed Hamiltonians. 

\begin{lemma}
    \label{lemma:connectivity-perturb}
    Let $\vH$ be a $(\sd, \sD, \scgr_\sgr)$ Hamiltonian, and $\Lambda\subseteq \CS^{\sk}_{[n]}$ be a subset of clauses whose interaction graph is of degree $\leq \Delta$. Then, the union of interaction graphs $\Gamma_{\vH, \Lambda}:=\Gamma_{\vH}\cup \Lambda$ is a $(\hat{\sd}, \sD, \hat{\scgr}_{\mathsf{gr}})$ Hamiltonian, where:
    \begin{align}
        \hat{\sd} \leq 2\sk\cdot  (\sd+\Delta), \quad \hat{\scgr}_{\mathsf{gr}}\leq \mathsf{c}_\sgr\cdot (1+2\sk\cdot |\Lambda|).
    \end{align}
\end{lemma}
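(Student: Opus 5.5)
The plan is to bound the degree and the volume growth of the union graph $\Gamma_{\vH,\Lambda}$ separately, comparing distances in the new graph to distances in $G_{\vH}$. Throughout I would interpret the degree hypothesis on $\Lambda$ as bounding the overlaps among clauses of $\Lambda$. In the application, $\Lambda\in\mathsf{CC}_{i,\ell}$ has constant size, so the added edges form only a local perturbation.

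\textbf{Degree.} Fix a clause $x\in\Gamma_{\vH}\cup\Lambda$. Its support has at most $\sk$ sites. Every neighbor of $x$ in the union graph shares one of these sites, and it is either a clause of $\Gamma_{\vH}$ or a clause of $\Lambda$. For each site $j\in\supp(x)$, I would count the clauses of $\Gamma_{\vH}$ containing $j$, which is at most $\sd$ because they pairwise overlap (a clique in $G_{\vH}$ has size at most the degree bound). Likewise, the clauses of $\Lambda$ containing $j$ number at most $\Delta$. Summing over the at most $\sk$ sites gives $\sk(\sd+\Delta)$. The extra factor of $2$ absorbs the case $x\in\Lambda\setminus\Gamma_{\vH}$ and the self-edge convention.

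\textbf{Growth.} Consider a path of length $r$ in the union graph starting from $x$. Each time it passes through a clause of $\Lambda\setminus\Gamma_{\vH}$, I would ``reroute'' it. The key observation is that every clause in the ball $\CB_{\Gamma_{\vH,\Lambda}}[x,r]$ either lies in $\Lambda$, or is within $G_{\vH}$-distance $r$ of $x$, or is within $G_{\vH}$-distance $r$ of some clause of $\Gamma_{\vH}$ that touches a site of some $\lambda\in\Lambda$. To see this, take the last time the shortest path visits a clause of $\Lambda$; the remainder of the path lies in $\Gamma_{\vH}$ and starts at a clause overlapping $\lambda$. The set of such ``anchor'' clauses touching $\supp(\Lambda)$ has size at most $\sk|\Lambda|\cdot\sd$, and each anchor can be replaced by a site of $\supp(\Lambda)$, of which there are at most $\sk|\Lambda|$. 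Balls around sites are allowed in \eqref{eq:polynomial-growth}. Hence
\[
|\CB_{\Gamma_{\vH,\Lambda}}[x,r]|\le |\Lambda| + \scgr_\sgr r^\sD + \sk|\Lambda|\cdot \scgr_\sgr r^\sD,
\]
which is at most $\scgr_\sgr(1+2\sk|\Lambda|)r^\sD$ for $r\ge1$, using $\scgr_\sgr\ge1$.

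\textbf{Main obstacle.} The delicate step is the rerouting claim: when $x$ itself lies in $\Lambda$, or the path re-enters $\Lambda$ several times, I need to make sure the bound never multiplies over visits. Taking the \emph{last} $\Lambda$-clause on the path handles this, since all prior segments are simply discarded. The other point to check is that site-centered balls in $G_{\vH}$, via $\dist_\Gamma(i,\gamma)$, correctly dominate the tail segment. I would verify that the tail's first clause contains a site of $\lambda$, so its distance from that site is $0$ and the remaining length is at most $r$.
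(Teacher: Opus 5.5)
Your proposal is correct and follows essentially the same route as the paper. For the degree you count, over the at most $\sk$ sites of a clause, the clique-bounded number of clauses of $\Gamma_{\vH}$ and of $\Lambda$ through each site. For the growth you use the containment $\CB_{\Gamma_{\vH}\cup\Lambda}[x,r]\subseteq \CB_{\Gamma_{\vH}}[x,r]\cup\Lambda\cup\CB_{\Gamma_{\vH}}[\supp(\Lambda),r]$, which yields $(1+(\sk+1)|\Lambda|)\,\scgr_\sgr r^{\sD}$. Your ``last $\Lambda$-clause on a shortest path'' argument just makes explicit the containment step that the paper asserts without detail.
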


We make the combinatorial computations above explicit as later, $\Lambda$ and thus $\hat{\mathsf{c}}_\sgr$ will be allowed to scale with $\ell$, while the degree $\hat{\sd}$ and dimension $\sD$ are $\ell$ independent. 

\begin{proof}
    \textbf{The degree.} By the degree bound, at most $\sd+1$ (resp. $\Delta+1$) clauses of $\Gamma_{\vH}$ (resp. $\Gamma_{\Lambda}$) contain any given site $i\in [n]$. Summing over the $\leq \sk$ qubits per clause, we conclude that any clause of $\Gamma_{\vH}$ gains at most $\sk\cdot (\Delta+1)$ neighbors. Thus
    \begin{align}
        \hat{\sd} \leq \max(\sd+\sk(\Delta+1), \Delta+\sk(\sd+1)) \leq 2(\sd+\Delta)\cdot \sk,
    \end{align}
    assuming $\sd, \Delta, \sk\geq 1$ wlog. \\
    
    \noindent \textbf{The growth constant.} Any path in $\hat{\Gamma}$ is either entirely contained in $\Gamma_{\vH}$, or crosses at least one clause in $\Lambda$. We can thus bound the size of the clause balls from any given starting point $x\in [n]\cup \hat{\Gamma}$ by the clause balls in $\Gamma_{\vH}$ of $\{x\}\cup \Lambda.$ For any $r\geq 1:$
    \begin{align}
        \CB_{\hat{\Gamma}}[x, r] \subseteq \CB_{\Gamma_{\vH}}[x, r] \cup \Lambda\cup\CB_{\Gamma_{\vH}}[\mathsf{supp}(\Lambda), r] \quad \Rightarrow \quad |\CB_{\hat{\Gamma}}[x, r]|\leq  (1+(\sk+1) |\Lambda|)\cdot \mathsf{c_{gr}}\cdot r^{\sD}
    \end{align}
    which concludes the claimed bound. 
\end{proof}

We are now in a position to prove the correctness of the iterative step. Roughly speaking, we assume that up to the $r$th iteration, the current guess coefficients $g_{\vP}^{(r)}\approx h_{\vP}$, and furthermore that all sufficiently non-zero coefficients $g_{\vP}$ correspond to sufficiently large coefficients of $h_{\vP}$.\footnote{We remark that this latter condition is a mild strengthening of the assumption $\Gamma_{\vG^{(r)}}\subseteq \Gamma_{\vH}$.}

\begin{lemma}[Correctness of a single iteration]
\label{lemma:structure-core-soundness}
Fix $\beta\sd\geq 1$. There exists choices of $(\ell, \kappa, c_{\mathsf{meas}})$ as a function of $\beta$ and the Hamiltonian parameters, such that the following holds. Suppose that in the $r$th iteration, the current guess Hamiltonian $\vG^{(r)}=\sum_{\vP} g^{(r)}_{\vP}\vP$ satisfies:
\begin{equation}
  \max_{\vP\in\mathcal P_{\leq \sk}}
  |g_{\vP}^{(r)}-h_{\vP}|\leq\eta_r, \quad \text{and} \quad \forall\vP\in \Gamma_{\vG^{(r)}}: \quad |h_{\vP}|\geq \frac{7}{16}\cdot  \eta_r
  \label{eq:structure-round-invariant}.
\end{equation}
Then, if every estimate in \cref{alg:structure-learner} is within the error tolerance $c_{\mathsf{meas}}\cdot \eta_r$, the output Hamiltonian  $\vG^{(r+1)}=\sum_{\vP} g^{(r+1)}_{\vP}\vP$ satisfies $\Gamma_{\vG^{(r)}}\subseteq \Gamma_{\vG^{(r+1)}} \subseteq \Gamma_{\vH}$, and moreover:
\begin{equation}
  \max_{\vP\in\mathcal P_{\leq \sk}}
  |g_{\vP}^{(r+1)}-h_{\vP}|\leq\eta_{r+1}, \quad \text{and} \quad \forall\vP\in \Gamma_{\vG^{(r+1)}}: \quad |h_{\vP}|\geq \frac{7}{16}\cdot  \eta_{r+1}
 \label{eq:structure-local-contraction}.
\end{equation}
\end{lemma}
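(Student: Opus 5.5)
The proof will mirror that of \cref{lemma:gibbs-one-round}, run for a fixed site $i$. The main new ingredient is that every Hamiltonian we compare must be a lattice Hamiltonian on a common interaction graph. We take that common graph to be $\Gamma_{\vH}\cup\Lambda$ for the relevant perturbation $\Lambda$. By the invariant \eqref{eq:structure-round-invariant}, $\Gamma_{\vG^{(r)}}\subseteq\Gamma_{\vH}$, so \cref{lemma:connectivity-perturb} makes $\Gamma_{\vH}\cup\Lambda$ a $(\hat\sd,\sD,\hat{\scgr}_\sgr)$ Hamiltonian. Here $\hat\sd$ is independent of $\ell$, while $\hat{\scgr}_\sgr\lesssim |\Lambda|\le \scgr_\sgr \ell^{\sD}$.

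First we fix the ``correct'' guess. Let $\Lambda^*:=\CB_{\vH}[i,\ell]\in\mathsf{CC}_{i,\ell}$. Then $\Gamma_{\vG^{(r)},\Lambda^*}\subseteq\Gamma_{\vH}$ is $(\sd,\sD,\scgr_\sgr)$ local, so $\Lambda^*$ is enumerated. We pick $\vU^*\in\Grid(\Lambda^*)$ with $|g^{(r)}_\vP+\eta_r u^*_\vP-h_\vP|\le\kappa\eta_r$ on $\Lambda^*$; this is possible because $|g^{(r)}_\vP-h_\vP|\le\eta_r$.

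Next, for any enumerated pair $(\Lambda,\vU)$, we define the hybrid $\vK^{(\Lambda,\vU)}$. It agrees with $\vG^{(r)}+\eta_r\vU$ on the clauses within distance $\ell$ of $i$, and with $\vH$ elsewhere. Its coefficients differ from those of $\vG^{(r)}+\eta_r\vU$ by at most $\eta_r$, and only at distance $>\ell$. This is where \eqref{eq:structure-round-invariant} is used: off $\Lambda$, $\vG^{(r)}$ is supported inside $\Gamma_\vH$ with error $\le\eta_r$.

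We then repeat steps 1--3 of \cref{lemma:gibbs-one-round} using \cref{lemma:lipschitz} on the common graph:
\begin{enumerate}
\item $|\mathsf{Test}_i[\vP,\vG^{(r)}+\eta_r\vU;\vrho_\vH]-\mathsf{Test}_i[\vP,\vK;\vrho_\vH]|\lesssim \eta_r\,\hat{\scgr}_\sgr e^{-c\ell}$.
\item For the correct guess, $|\mathsf{Test}_i[\vP,\vK^{(\Lambda^*,\vU^*)};\vrho_\vH]|\lesssim\eta_r\kappa|\Lambda^*|$, using completeness.
\item The minimizer $(\Lambda_i,\vU_i)$, with measurement error $c_{\mathrm{meas}}\eta_r$, therefore has every test on its hybrid bounded by $\eta_r\theta$, with $\theta$ small.
\end{enumerate}
Since $\vK-\vH$ is supported on $\CB[i,\ell]$ of the common graph, the identity of \cref{rmk:stat-fisher-intro} bounds $\FI_{\vK,i}(\vrho_\vH\|\vrho_\vK)\lesssim\beta\eta_r^2\theta|\CB[i,\ell]|$. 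Then \cref{lemma:fisher-convexity}, applied on the common graph with global coefficient error $\le 2\eta_r$, gives
\[
\max_{\vP\ni i}|g^{(r)}_\vP+\eta_r u_{i,\vP}-h_\vP|\le \eta_r/16 .
\]
For this we choose $\Omega=\poly(\beta\sd)$ first, then $\ell$, then $\kappa$ and $c_{\mathrm{meas}}$, as before. This covers every $\vP\ni i$, including $\vP\notin\Lambda_i$ (then $u_{i,\vP}=0$ and $g^{(r)}_\vP$ is measured against $h_\vP$).

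\textbf{Main obstacle.} The delicate point is that $\hat{\scgr}_\sgr$ grows polynomially in $\ell$. This factor enters both the Lipschitz error and the convexity error as $\ell^{\sD}$. It is absorbed because these errors decay as $e^{-c\ell}$ and $e^{-c_2\Omega}$, where the constants $c$ and $c_2$ depend only on $\hat\sd$. So we must check that every rate constant depends only on $\hat\sd$ and $\sD$, not on $\hat{\scgr}_\sgr$; I would verify this in the explicit constants of \cref{lemma:lipschitz}.

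\textbf{Thresholding.} Write $x:=g^{(r)}_\vP+\eta_r u_{i,\vP}$, so $|x-h_\vP|\le\eta_r/16$.
\begin{itemize}
\item If $h_\vP=0$, then $|x|\le\eta_r/16<\eta_r/4$, so $\vP$ is zeroed. This gives $\Gamma_{\vG^{(r+1)}}\subseteq\Gamma_\vH$.
\item If $\vP$ is kept, then $|h_\vP|\ge\eta_r/4-\eta_r/16=\tfrac{3}{16}\eta_r\ge\tfrac{7}{16}\eta_{r+1}$.
\item If $\vP$ is zeroed, then $|h_\vP|\le \tfrac{5}{16}\eta_r<\eta_{r+1}$, so the error is still $\le \eta_{r+1}$.
\item Kept coefficients have error $\le\eta_r/16\le\eta_{r+1}$.
\item If $\vP\in\Gamma_{\vG^{(r)}}$, then $|h_\vP|\ge\tfrac{7}{16}\eta_r$, so $|x|\ge\tfrac{6}{16}\eta_r\ge\eta_r/4$ and $\vP$ survives. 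This gives $\Gamma_{\vG^{(r)}}\subseteq\Gamma_{\vG^{(r+1)}}$.
\end{itemize}
A clause may contain several sites, each of which updates it. Every such update satisfies the same bound, so I will either specify a fixed rule, such as the smallest site, or note that any choice works.
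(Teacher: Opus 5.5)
Your argument follows the paper's proof step for step. It uses the ideal neighborhood $\Lambda^*=\CB_{\vH}[i,\ell]$ with its nearest grid point, hybrids on the common graph $\Gamma_{\vH}\cup\Lambda$ controlled by \cref{lemma:connectivity-perturb}, the Lipschitz, completeness and minimality chain, the identity $\FI=\beta\,\mathsf{Test}_i[\vH-\vK,\vK;\vrho_{\vH}]$, \cref{lemma:fisher-convexity}, and thresholding. However, two steps fail as written.

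The first is the threshold arithmetic. You target $|x-h_{\vP}|\le\eta_r/16$. A kept coefficient then only yields $|h_{\vP}|\ge\frac14\eta_r-\frac1{16}\eta_r=\frac{6}{32}\eta_r$, while the invariant demands $\frac{7}{16}\eta_{r+1}=\frac{7}{32}\eta_r$. So your inequality $\frac{3}{16}\eta_r\ge\frac{7}{16}\eta_{r+1}$ is false. The constants $\frac{7}{16}$ and $\frac14$ are calibrated to a contraction of exactly $\eta_r/32$, since $\frac14-\frac1{32}=\frac{7}{32}$, and this is what the paper enforces. With $\eta_r/32$ all your other cases still go through, because $\frac{7}{16}-\frac1{32}\ge\frac14$ and $\frac14+\frac1{32}\le\frac12$.

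The second is the order in which you fix constants. In \cref{lemma:fisher-convexity} the growth constant multiplies the high-frequency error. On the common graph that term is of order $e^{\poly(\beta\sd)}\,\hat{\scgr}_{\sgr}\,e^{-c_2\Omega}\eta_r\sim \ell^{\sD}e^{-c_2\Omega}\eta_r$. If you fix $\Omega=\poly(\beta\sd)$ first, as in the Gibbs case, then enlarging $\ell$ makes this term grow, and no later choice of $\kappa, c_{\mathrm{meas}}$ affects it. The fact that $c_2$ depends only on $\hat\sd$ does not help here.

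You must couple the two, for example $\Omega=\poly(\beta\sd)\log\ell$ as in the paper. Then $e^{\Omega}=\ell^{\poly(\beta\sd)}$ in the other term is still beaten by the $e^{-c\ell}$ from the Lipschitz step and from $\kappa=c_{\mathrm{meas}}=e^{-c\ell}$. A sufficiently large $\ell=\poly(\beta\sd)$ then closes the argument.

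With these two repairs your proof is the paper's. Your remarks on clauses updated from several sites, and on $\vP\ni i$ with $\vP\notin\Lambda_i$, are correct details that the paper leaves implicit.
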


The proof is largely based on minor modifications to that of \cref{lemma:gibbs-one-round}, so we refer the reader back to \cref{section:gibbs-alg} for a reminder. The base case of $\eta_0=1, \vG^{(0)}=0$ satisfies the conditions of \cref{lemma:structure-core-soundness}, and thus iteration of \cref{lemma:structure-core-soundness} for $\log \eta^{-1}$ rounds trivially implies the correctness of the structure learning algorithm.

\begin{proof}

[of \cref{lemma:structure-core-soundness}]
Fix the site $i\in [n]$ and for conciseness abbreviate $\eta:=\eta_r$ and $\vG:=\vG^{(r)}$. Let $(\Lambda_i, \vV:=\sum_{\vP\in \Lambda_i} v_{\vP}\vP)$ be the output of the inner loop of \cref{alg:structure-learner}, and let \((\Lambda_i^*, \vU^*:=\sum_{\vP\in \Lambda_i^*} u_{\vP}^*\vP)\) be the ``ideal'' output in that $\Lambda_i^\star:=\CB_{\Gamma_{\vH}}[i,\ell]$ is the neighborhood of $i$ in the interaction graph of $\vH$ and $\vU^*\in \mathsf{Grid}(\Lambda_i^\star)$ is the closest grid point to the ground truth, $\left|
    g_{\vP}+\eta u^\star_{\vP}-h_{\vP}
  \right|
  \leq\kappa\cdot \eta$. \\

\noindent  \textbf{Characterizing the ideal solution $\vU^*$.} Following \cref{lemma:gibbs-one-round} we ``complete'' the ideal candidate $(\Lambda_i^*, \vU^*)$ by defining the Hamiltonian:
\begin{equation}
  \vK^{(\vU^\star)}
  :=\vH+
  \sum_{\vP\in\Lambda_i^\star}
  ( g_{\vP}+\eta u^\star_{\vP}-h_{\vP})\vP,
\end{equation}
which agrees with $\vH$ outside $\Lambda_i^\star$ and with $\vG+\eta \vU^*$ on $\Lambda_i^\star$. We observe that the interaction graph $\Gamma_{\vK^{(\vU^*)}}\subseteq \Gamma_{\vH}$, since $\Gamma_{\vG}\subseteq \Gamma_{\vH}$, and thereby we can invoke, with no modifications, Lipschitz continuity and the Lieb-Robinson bound for $\CL$ as in \cref{lemma:lipschitz} (both corollaries), to conclude
\begin{equation}
    |\mathsf{Test}_i[\vP, \vG+\eta\vU^*; \vrho_{\vH}]| \leq \poly(\beta\sd)\cdot \eta\cdot |\CB[i, \ell]|\cdot (\kappa + e^{-c_1\cdot \ell})
\end{equation}
with $c_1\propto (\beta d)^{-1}$. We note this is simply steps 1, 2 of \cref{lemma:gibbs-one-round} in the ideal case. Here we explicitly suppress the dependence on $\sD, \mathsf{c}_{\sgr}$. \\

\noindent \textbf{Characterizing the minimizing solution $(\Lambda_i, \vV)$.} Next, we ``complete'' the minimizing solution $(\Lambda_i, \vV)$, output by the inner loop of \cref{alg:structure-learner}. In order to do so, we let $\hat{\Gamma}:=\Gamma_{\vH}\cup\Gamma_{\Lambda_i}$ denote the interaction graph containing the terms of $\vH$ and $\Lambda_i$, and denote as $\CB_{\hat{\Gamma}}[i, b]$ the clauses at radius $\leq b$ around $i\in [n]$ in this graph. For a fixed $b\geq \ell$, denote the Hamiltonian:
\begin{equation}
  \vK^{(\vV)}
  :=\vH+
  \sum_{\vP\in\CB_{\hat{\Gamma}}[i, b]}
  ( g_{\vP}+\eta v_{\vP}-h_{\vP})\vP,
\end{equation}
which agrees with $\vH$ outside $\CB_{\hat{\Gamma}}[i, b]$, and with $\vG+\eta \vV$ on $\CB_{\hat{\Gamma}}[i, b]$.

We note that since $\vG+\eta \vV$ is $(\sd, \sD, \mathsf{c_{gr}})$ local, the clause set satisfies the volume bound $|\Lambda_i|\leq \mathsf{c_{gr}}\cdot \ell^\sD$. It then follows from \cref{lemma:connectivity-perturb} that $\hat{\Gamma}$ and thus $\vK^{(\vV)}$ also admit polynomial growth, i.e. is $(\leq 4\sk \sd, \sD, \leq 2\sk\cdot  \mathsf{c_{gr}}^2\cdot \ell^\sD)$ local. This enables us to invoke the Lieb-Robinson bound \cref{lemma:lipschitz} (\cref{lemma:lipschitz-within}), and the fact $(\Lambda_i, \vV)$ is minimal, to write:
\begin{align}
    &\max_{\vP\in\mathcal \CS^{\leq \sk}_{[n]}}
  |\mathsf{Test}_i[\vP,\vK^{(\vV)};\vrho_{\vH}]|
 \\ \leq&\max_{\vP\in\mathcal \CS^{\leq \sk}_{[n]}}
  |\mathsf{Test}_i[\vP,\vG+\eta \vV;\vrho_{\vH}]| + \poly(\beta\cdot \sd)\cdot \ell^\sD\cdot \eta\cdot  e^{-c_1\cdot b} \\
  \leq &\max_{\vP\in\mathcal \CS^{\leq \sk}_{[n]}}
  |\mathsf{Test}_i[\vP,\vG+\eta \vU^*;\vrho_{\vH}]| + \poly(\beta\cdot \sd)\cdot \ell^\sD\eta e^{-c_1\cdot b}+2c_{\mathsf{meas}}\cdot\eta  \\
  \leq & \eta \cdot \poly(\beta\cdot \sd)\cdot (b\ell)^\sD\cdot \big(\kappa+c_{\mathsf{meas}} + e^{-c_1\ell}+ e^{-c_1 b}) := \eta\cdot \theta
\end{align}
where we introduce the notation $\theta$ to suppress notation, and we omit constant pre-factors in $\mathsf{c_{gr}}, \sD$. \\

\noindent \textbf{Computing the Fisher Information.} We now reuse the last part of the proof of
\cref{lemma:gibbs-one-round}.  By the relationship between the stationarity-test and the Fisher information (\cref{rmk:stat-fisher-intro}) and
linearity,
\begin{align}
  \FI_{\vK^{(\vV)},i}(\vrho_{\vH}\Vert\vrho_{\vK^{(\vV)}})
  &=\beta\,
    \mathsf{Test}_i[\vH-\vK^{(\vV)},\vK^{(\vV)};\vrho_{\vH}]\notag\\
  &\leq
  2\beta \eta^2 \cdot |\CB_{\hat{\Gamma}}[i, b]|\cdot 
  \theta.
  \label{eq:structure-fisher-bound}
\end{align}
The convexity of the Fisher information \cref{lemma:fisher-convexity} then applies to $\vK^{(\vV)}$ and $\vH$, since again the union graph is $(\leq 4\sk \sd, \sD, \leq \sk\cdot  \mathsf{c_{gr}}^2\cdot \ell^\sD)$ local by \cref{lemma:connectivity-perturb}. Thus, for every $\Omega>0$,
\begin{equation}
  \max_{\vP:\,i\in\supp(\vP)}
  |g_{\vP}+\eta v_{\vP}-h_{\vP}|
  \leq
  e^{\poly(\beta\cdot d)}\bigg(
    e^\Omega\sqrt{\FI_{\vK^{(\vV)},i}(\vrho_{\vH}\Vert\vrho_{\vK^{(\vV)}})}
    +e^{-c_1\Omega}\cdot \ell^\sD\cdot \eta
  \bigg).
  \label{eq:structure-fisher-inversion}
\end{equation}
where again we omit constant pre-factors in $\mathsf{c_{gr}}, \sD$. \\

\noindent \textit{Choosing constants.} We first pick $\Omega = \poly(\beta\cdot \sd)\log \ell$ to suppress the second term above, then subsequently $b=\ell$, $\kappa=c_{\mathsf{meas}}=e^{-c_1\cdot \ell}$, and finally a sufficiently large $\ell = \poly(\beta\cdot \sd)$ which then ensures $\eqref{eq:structure-fisher-inversion}\leq \eta \cdot (32)^{-1}$.\\

\noindent \textbf{Maintaining the invariant $\Gamma_{\vG^{(r)}}\subseteq \Gamma_{\vG^{(r+1)}}\subseteq \Gamma_{\vH}$.} We consider the cases:
\begin{enumerate}

\item If at the $r$th iteration, $g_{\vP}^{(r)} \neq 0$, then 
\begin{align}
    |g_{\vP}^{(r)}+\eta_r v_{\vP}| &\geq |h_{\vP}| - \eta_r \cdot 32^{-1}  \geq \eta_r\bigg(\frac{7}{16}-32^{-1}\bigg) \geq \eta_r \cdot 4^{-1}.
\end{align}
where we applied \eqref{eq:structure-fisher-inversion}, and the second assumption in \eqref{eq:structure-round-invariant}. By the threshold cutoff \eqref{eq:thresholding}, $g_{\vP}^{(r+1)}\neq 0$ and thus $\Gamma_{\vG^{(r)}}\subseteq \Gamma_{\vG^{(r+1)}}$. 

\item  If after the $r$th round we have $g_{\vP}^{(r+1)} \neq 0$, then 
\begin{equation}
    |h_{\vP}|\geq |g_{\vP}^{(r)}+\eta_r v_{\vP}| - \eta_r \cdot 32^{-1} \geq \eta_r (4^{-1}-32^{-1})\geq \eta_r \frac{7}{32}  = \eta_{r+1} \cdot \frac{7}{16}
\end{equation}
where we applied the guarantee on \eqref{eq:structure-fisher-inversion} and the threshold cutoff \eqref{eq:thresholding}. This implies the second invariant of \eqref{eq:structure-local-contraction}, and in particular $ \Gamma_{\vG^{(r+1)}}\subseteq  \Gamma_{\vH}$.

\item Finally, if $h_{\vP}\neq 0$ but $g^{(r+1)}_{\vP} = 0$, then 
\begin{equation}
        |g_{\vP}^{(r+1)}-h_{\vP}| = |h_{\vP}| \leq |g_{\vP}^{(r)}+\eta_r v_{\vP}| + \eta_r \cdot 32^{-1}  \leq \eta_r(4^{-1}+32^{-1})\leq 2^{-1}\eta_r.
    \end{equation}
    This implies the desired accuracy guarantee in  \eqref{eq:structure-local-contraction}. 
\end{enumerate} 
\end{proof}

\subsection{Sample and time complexity}
\label{section:structure-sample}

It remains to argue the sample complexity of \cref{alg:structure-learner}, which in turn relies on a procedure to measure the outcome of the stationarity test,
\begin{equation}
    |\mathsf{Test}_i
    [\vP,\vG^{(r)}+\eta_r\cdot \vU;\vrho_{\vH}]|,\label{eq:stat-test-struct}
\end{equation}
for each $i\in [n]$, arbitrary local choices of neighborhood $\Lambda\in \mathsf{CC}_{i,\ell}$, any perturbation \(\vU\in\Grid(\Lambda)\) to the coefficients in said neighborhood, and every $\sk$ local Pauli $\vP\in\CS_{[n]}^{\sk}$. We refer the reader to \cref{section:overview} for a brief overview of our approach, and the simpler solutions to this problem with suboptimal sample complexity. The main claim of this subsection is the following lemma:

\begin{lemma}[Parallel measurements of the stationarity test]\label{lemma:parallel-stat-meas}
    Fix $\beta\geq \sd^{-1}$. Recall the context of \cref{alg:structure-learner}, and the choice of algorithm parameters $\ell, \kappa, c_{\mathsf{meas}}$ determined by \cref{lemma:structure-core-soundness}, after the $r$th iteration let $\vG^{(r)}$ be the current guess Hamiltonian accurate to error $\eta_r$.

   Then, there exists an algorithm which with probability $\geq 1-\delta$ outputs estimates for the stationarity test \eqref{eq:stat-test-struct} for every $i\in [n]$, $\Lambda\in \mathsf{CC}_{i,\ell}$, perturbation \(\vU\in\Grid(\Lambda)\), and $\vP\in \CS_{[n]}^{\sk}$, to error $c_{\mathsf{meas}}\eta_r$ using a total of 
   \begin{align}
       \mathsf{N_{samp}} := e^{\poly(\beta\sd)}\cdot O\bigg( \frac{\log^{\sD} \eta_r^{-1}}{\eta_r^2}\cdot \log \frac{n}{\delta} \bigg)
   \end{align}
   samples of $\vrho_{\vH}$. In addition, the classical and quantum time complexity is $n^{\poly(\beta\sd)}\cdot \poly(\mathsf{N_{samp}})$.
\end{lemma}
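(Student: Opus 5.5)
We follow the decomposition \eqref{eq:intro-baseline+corr} from the overview. Fix the round $r$, write $\vG:=\vG^{(r)}$ and $\eta:=\eta_r$. For each site $i$, each admissible $(\Lambda,\vU)$ and each $\vP$, we approximate
\begin{equation}
\mathsf{Test}_i[\vP,\vG+\eta\vU;\vrho_{\vH}]\approx \mathsf{Test}_i[\vP,\vG_{\sR};\vrho_{\vH}]+\Big(\mathsf{Test}_i[\vP,(\vG+\eta\vU)_{\sL};\vrho_{\vH}]-\mathsf{Test}_i[\vP,\vG_{\sL};\vrho_{\vH}]\Big).
\end{equation}
Here $\sR=\poly(\beta\sd)\log\eta^{-1}$, and the truncation to radius $\sL$ is taken in the union graph $\Gamma_{\vG}\cup\Lambda$. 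We only need $\vP$ within distance $\sim\sR$ of $i$; for farther $\vP$ the test is exponentially small by \cref{lemma:stest-local}, so we may output $0$.

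\textbf{Step 1: bound the approximation error.} By \cref{lemma:stest-local}, the baseline term is within $e^{-\Omega(\sR)}\le c_{\mathsf{meas}}\eta/4$ of $\mathsf{Test}_i[\vP,\vG;\vrho_{\vH}]$. The bracketed difference requires a second-order Lipschitz estimate. Write $\Delta(\vK):=\CL^\dagger_{\vK+\eta\vU,i}-\CL^\dagger_{\vK,i}$. We want $\|\Delta(\vG)-\Delta(\vG_{\sL})\|\lesssim \eta\, e^{-\Omega(\sL-\ell)}$, which expresses that the effect of a perturbation supported at radius $\ell$ depends on the Hamiltonian far away only through exponentially decaying tails. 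We plan to get this from the shell decomposition underlying \cref{lemma:lipschitz}. Write the difference as a Duhamel integral $\eta\int \cdots \vU(s)\cdots$ over the OFT time evolutions, then replace each evolution by its truncation to $\sL$ using \cref{lem:lieb-robinson-commutator}. Since \cref{lemma:connectivity-perturb} guarantees the union graph is $(\hat\sd,\sD,\hat\scgr_\sgr)$ local, the constants are $e^{\poly(\beta\sd)}\poly(\ell)$. Choosing $\sL=\poly(\beta\sd)\ell$, a constant, brings this error below $c_{\mathsf{meas}}\eta/4$. We expect this step to be the main obstacle: the coherent part $\vC^a$ and the $\gamma(\omega)$ integrals must be handled uniformly, and the error has to scale with $\eta$ rather than be a constant.

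\textbf{Step 2: estimate the baseline terms.} Their number is independent of $(\Lambda,\vU)$; there are $n\cdot\poly(\sR)$ pairs $(i,\vP)$. Following \cref{lemma:stationarity-test-single-measurement}, each is estimated by evolving the samples under the truncated Lindbladian of $\vG_{\sR}$ at site $i$ for time $t=c_{\mathsf{meas}}/\poly(\beta\sd)$. We then take a finite difference of Pauli expectations, which costs $\eta^{-2}e^{\poly(\beta\sd)}$ samples per estimate. To run these in parallel, we color the sites so that same-colored radius-$\sR$ balls are disjoint; $e^{\poly(\beta\sd)}O(\sR^{\sD})$ colors suffice. Within a color, the channels $e^{t\CL_i}$ act on disjoint regions, so \cref{theorem:pauli-shadows-after-channels} applies. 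For each site it estimates $\tr[\Phi_j[\vrho_{\vH}]\vP]$ with $\mathsf m=\poly(n)$ observables, and it resolves observables that cross balls of the same color through its randomized on/off trick. This costs $e^{\poly(\beta\sd)}\eta^{-2}t^{-2}\log(n/\delta)$ samples per color. Multiplying by the number of colors gives the claimed $\log^{\sD}\eta^{-1}$ factor.

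\textbf{Step 3: estimate the correction terms and finish.} Each correction term is a linear combination of expectations of $\CL^\dagger[\vP]$ for Lindbladians truncated to constant radius $\sL$. These are observables on $\le e^{\poly(\beta\sd)}$ qudits, each expandable into $2^{O(|\sB[i,\sL]|)}$ Paulis of weight $\mathsf w=e^{\poly(\beta\sd)}$ with coefficients bounded by $\|\CL^\dagger\|\le O(1+\beta\sd)$. We compute these expansions classically. Because the correction is multiplied by nothing larger than $O(1)$, we need all Pauli expectations of weight $\mathsf w$ to accuracy $c_{\mathsf{meas}}\eta\, 2^{-O(\mathsf w)}$. \cref{theorem:pauli-shadows} gives this with $2^{O(\mathsf w)}\eta^{-2}\log(n^{\mathsf w}/\delta)=e^{\poly(\beta\sd)}\eta^{-2}\log(n/\delta)$ samples, shared across all $n^{\poly(\beta\sd)}$ choices of $(\Lambda,\vU)$; the enumeration only affects classical time. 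Summing the three errors and applying a union bound gives accuracy $c_{\mathsf{meas}}\eta$ with probability $1-\delta$. The runtime is $n^{\poly(\beta\sd)}\poly(\mathsf{N_{samp}})$, which comes from enumerating $\mathsf{CC}_{i,\ell}\times\Grid(\Lambda)$ and from simulating the truncated Lindbladians as in \cref{lemma:gibbs-time}.
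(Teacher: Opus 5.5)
Your proposal is correct and follows the paper's proof essentially step for step. It uses the same three pieces: the baseline-plus-local-correction decomposition (the paper's \cref{lemma:struct-lip-stat}, where the second-order estimate you plan to derive is exactly \cref{lemma:lipschitz-within-shell}); the coloring plus \cref{theorem:pauli-shadows-after-channels} scheme of \cref{alg:struct-meas-baseline} for the baseline; and plain Pauli shadows shared across all $(\Lambda,\vU)$ for the corrections. One bookkeeping slip: the correction observables live on $|\sB_{\widehat\Gamma}[i,\sL]|\le \scgr_{\sgr}\sL^{\sD}=\poly(\beta\sd)$ qudits, so the weight is $\mathsf w=\poly(\beta\sd)$ rather than $e^{\poly(\beta\sd)}$, and this is what makes $2^{O(\mathsf w)}=e^{\poly(\beta\sd)}$ instead of doubly exponential.
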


Before we prove \cref{lemma:parallel-stat-meas}, we present a simple conclusion to the proof of the structure learning result \cref{theorem:structure-learning}.

\begin{proof}

    [Proof of \cref{theorem:structure-learning}] Let $T=\ceil{\log \eta^{-1}}$ denote the iteration count. By invoking \cref{lemma:parallel-stat-meas} on failure probability $\delta / T$, it follows that every estimate in \cref{alg:structure-learner} is accurate to error $\mathsf{c_{meas}}\eta_r$ with probability $1-\delta$. By \cref{lemma:structure-core-soundness}, it then follows that the resulting structure learning algorithm outputs a Hamiltonian $\vG^{(T)}$ which is a subgraph of $\vH$ and coefficient-wise accurate to error $\eta$. 

    The claimed sample complexity over all $T$ rounds then follows from the convergence of the geometric sum in $\eta_r=2^{-r}$ in \eqref{eq:geo-sum} and the bounds in \cref{lemma:parallel-stat-meas}. The resulting time complexity trivially follows from \cref{lemma:parallel-stat-meas} as well. Rescaling $\beta$ and $\vH$ in the high temperature case gives the claimed result (\cref{remark:rescale}).
\end{proof}

Let us now return to \cref{lemma:parallel-stat-meas}.\\

\noindent \textbf{Outline.} The starting point to our approach is the  continuity lemma in \cref{lemma:struct-lip-stat}, which enables us to write \eqref{eq:stat-test-struct} in terms of two components: 

\begin{enumerate}
    \item a ``baseline'' estimate which can be computed just from the \textit{current guess} $\vG^{(r)}$, truncated to a logarithmic radius $\sR^{(r)} = \ell+O(\log \eta_r^{-1})$, akin to \cref{lemma:stationarity-test-single-measurement}:
    \begin{equation}
    |\mathsf{Test}_i
    [\vP,\vG^{(r)}_{\CB[i, \sR^{(r)}]};\vrho_{\vH}]|,\label{eq:stat-test-struct-trunc}
\end{equation}
    \item a perturbative correction (depending on $\vU$) on patches of some constant radius $\mathsf{L}\gg \ell$. This will enable us to compute the expectation value \eqref{eq:stat-test-struct} for each $i\in [n]$, $\Lambda\in \mathsf{CC}_{i,\ell}, \vU\in\Grid(\Lambda)$ simultaneously in parallel, by a simple tomography routine \cite{huang2020predicting} on all (constant-sized) regions of radius (roughly) $\mathsf{L}.$
\end{enumerate}

Let us begin with the relevant continuity lemma. 

\begin{lemma} \label{lemma:struct-lip-stat} Fix $\beta\sd\geq 1$ and $i\in [n]$, $\Lambda\in \mathsf{CC}_{i,\ell}, \vU\in\Grid(\Lambda)$ considered by \cref{alg:structure-learner}. Let $\widehat{\vG}_{\mathsf{L}}:=\vG_{\CB_{\widehat{\Gamma}}[i, \mathsf{L}]}^{(r)}$ denote the truncation of $\vG^{(r)}$ to all the clauses at radius $\leq \mathsf{L}$ in the union $\widehat{\Gamma}:=\Gamma_{\vG^{(r)}}\cup \Lambda$. Then, 
    \begin{align}
        \mathsf{Test}_i
    [\vP,\vG^{(r)}+\eta_r\cdot \vU;\vrho_{\vH}] \approx \mathsf{Test}_i
    [\vP,\vG^{(r)}_{\CB[i, \sR^{(r)}]};\vrho_{\vH}] + \tr\bigg[(\CL_{\widehat{\vG}_{\mathsf{L}}+\eta_r\vU, i}^\dagger-\CL_{\widehat{\vG}_{\mathsf{L}}, i}^\dagger)[\vP]\vrho_{\vH}\bigg] 
    \end{align}
    to error 
    \begin{equation}
        \leq \poly(\beta\cdot\sd, \ell)\cdot \bigg( e^{-c_1\cdot \sR^{(r)}} + \eta_r\cdot e^{-c_1(\mathsf{L}-\ell)}\bigg) \label{eq:struct-lip-stat}
    \end{equation}
    where $c_1\propto (\beta\cdot \sd)^{-1}.$
\end{lemma}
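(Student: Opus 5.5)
Since the stationarity test is linear in the generator, the plan is to write the target as a telescoping sum of differences of generators, each acting on $\vP$ and tested against $\vrho_{\vH}$, and to bound each difference with the Lipschitz and Lieb--Robinson estimates of \cref{lemma:lipschitz}. Abbreviate $\vG:=\vG^{(r)}$, $\eta:=\eta_r$, and work throughout on the union graph $\widehat\Gamma=\Gamma_{\vG}\cup\Lambda$. This graph is $(\hat\sd,\sD,\hat{\scgr}_\sgr)$ local by \cref{lemma:connectivity-perturb}, because the algorithm only enumerates $\Lambda$ for which $\Gamma_{\vG,\Lambda}$ is local. Here $\hat\sd$ is independent of $\ell$ and $\hat\scgr_\sgr=\poly(\ell)$, which is where the $\poly(\ell)$ prefactor comes from. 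The decomposition is
\begin{align}
\CL^\dagger_{\vG+\eta\vU,i}
={}&\CL^\dagger_{\vG_{\CB[i,\sR^{(r)}]},i}
+\big(\CL^\dagger_{\widehat\vG_{\sL},i}-\CL^\dagger_{\vG_{\CB[i,\sR^{(r)}]},i}\big)
+\big(\CL^\dagger_{\vG,i}-\CL^\dagger_{\widehat\vG_{\sL},i}\big)\\
&+\big(\CL^\dagger_{\widehat\vG_{\sL}+\eta\vU,i}-\CL^\dagger_{\widehat\vG_{\sL},i}\big)
+\Big[\big(\CL^\dagger_{\vG+\eta\vU,i}-\CL^\dagger_{\vG,i}\big)-\big(\CL^\dagger_{\widehat\vG_{\sL}+\eta\vU,i}-\CL^\dagger_{\widehat\vG_{\sL},i}\big)\Big].
\end{align}
The first and fourth terms are the two retained quantities in the statement, so the task is to bound the second, third and bracketed terms in $\infty$-$\infty$ norm; pairing with $\vrho_{\vH}$ and $\|\vP\|=1$ then finishes.

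\emph{Truncation terms.} The second and third terms compare $\CL^\dagger_{\vG,i}$ with truncations of $\vG$ to radii $\sR^{(r)}$ and $\sL$ around $i$. By \cref{lemma:lindblad-LR} (equivalently \cref{lemma:lipschitz-within}), both truncations are within $\poly(\beta\sd,\ell)\,e^{-c_1\min(\sR^{(r)},\sL)}$ of $\CL^\dagger_{\vG,i}$. The truncations are taken in different graphs ($\Gamma_{\vG}$ versus $\widehat\Gamma$), but balls in $\widehat\Gamma$ contain the corresponding balls in $\Gamma_{\vG}$, so distances only shrink by at most $2\ell$. I would choose $\sR^{(r)}\le \sL$ up to this $\ell$ shift, or simply absorb the $e^{-c_1\sL}$ term, so that both are bounded by $e^{-c_1\sR^{(r)}}$ up to the polynomial prefactor. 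This gives the first error term.

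\emph{Perturbative correction (main obstacle).} The bracketed term is a second-order difference, and the goal is $O(\eta\, e^{-c_1(\sL-\ell)})$ rather than the trivial $O(\eta)$ from Lipschitz continuity. I would write
\[
\CL^\dagger_{\vK+\eta\vU,i}-\CL^\dagger_{\vK,i}=\int_0^1 \frac{\rd}{\rd s}\CL^\dagger_{\vK+s\eta\vU,i}\,\rd s
\]
for $\vK\in\{\vG,\widehat\vG_\sL\}$. The derivative is a sum over the clauses $\gamma\in\Lambda$ of $\eta u_\gamma$ times a directional derivative $\partial_\gamma\CL^\dagger_{\vK',i}$. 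This derivative arises from Duhamel expansions of the OFTs $\hat{\vA}_{\vK'}(\omega,t)$ and of the coherent term $\vC^a$, and the time filters make these integrals converge. The key step is to show that $\partial_\gamma\CL^\dagger_{\vK',i}$ is itself quasi-local: replacing $\vK'$ by its truncation to radius $\sL$ changes it by $e^{-c_1(\sL-\dist(i,\gamma))}$. The argument is the same Lieb--Robinson shell argument as in \cref{lemma:lipschitz}, applied now to a two-operator nested commutator structure, namely $\vP_\gamma$ inserted into the Heisenberg evolution of $\vA^a$. Since $\dist(i,\gamma)\le\ell$ for $\gamma\in\Lambda$ and $|\Lambda|\le\poly(\ell)$, summing gives $\poly(\beta\sd,\ell)\,\eta\,e^{-c_1(\sL-\ell)}$.

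If establishing this derivative-level quasi-locality directly proves cumbersome, a fallback is to split the difference via the shell decomposition of \cref{lem:lieb-robinson-commutator}. Shells at radius below $\sL$ coincide between $\vK=\vG$ and $\vK=\widehat\vG_\sL$. On shells beyond $\sL$, one uses the fact that the Lipschitz estimate of \cref{lemma:lipschitz} for a perturbation at distance $\le\ell$ is weighted by $e^{-\Omega(\text{shell radius}-\ell)}$. Adding the three bounds yields \eqref{eq:struct-lip-stat}.
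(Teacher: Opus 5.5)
\textbf{The correction term.} Your treatment of the perturbative correction matches the paper. The derivative-level quasi-locality you describe (interpolate in the coupling, apply Duhamel and a Lieb--Robinson shell bound, then integrate against the filters) is exactly \cref{lemma:lipschitz-within-shell}. The paper sums that lemma over shells $b\geq\sL$ of $\widehat\Gamma$ to get the $\eta_r e^{-c_1(\sL-\ell)}$ term, with $\poly(\ell)$ coming from the growth bound on $\widehat\Gamma$. When you write it out, note that $c(t)$ is not integrable at $t=0$, so the coherent part needs the $\pm t$ symmetrization from the proof of \cref{lemma:lipschitz}; "the time filters make these integrals converge" is not enough there. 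Your fallback does not work as stated: \cref{lemma:lipschitz} weights a perturbation by its own distance to $i$, which is at most $\ell$ here, so it gives no decay in the shell radius. The mixed second-order estimate you need is precisely your main route.

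\textbf{The truncation terms fail as written.} By routing $\CL^\dagger_{\vG,i}-\CL^\dagger_{\vG_{\CB[i,\sR^{(r)}]},i}$ through $\widehat\vG_{\sL}$ and bounding your second and third terms separately, you pick up an error of order $e^{-c_1\sL}$ with no factor of $\eta_r$. You cannot fix this by choosing $\sR^{(r)}\le\sL$, because both radii are inputs to the lemma. In its only use (\cref{lemma:parallel-stat-meas}), $\sL=\poly(\beta\sd)$ is a constant while $\sR^{(r)}\sim\log\eta_r^{-1}$ grows. So $e^{-c_1\sL}$ is an $\eta_r$-independent constant: it is not dominated by $e^{-c_1\sR^{(r)}}+\eta_r e^{-c_1(\sL-\ell)}$, and it would swamp the target accuracy $c_{\mathrm{meas}}\eta_r$.

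\textbf{The fix.} Your second and third terms sum exactly to $\CL^\dagger_{\vG,i}-\CL^\dagger_{\vG_{\CB[i,\sR^{(r)}]},i}$. \cref{lemma:lipschitz-within} (or \cref{lemma:lindblad-LR}) bounds this by $\poly(\beta\sd)\,e^{-c_1\sR^{(r)}}$ working entirely in $\Gamma_{\vG^{(r)}}$, and this is the paper's three-term decomposition. The fix also removes the need for your claim that $\widehat\Gamma$-distances are at least $\Gamma_{\vG}$-distances minus $2\ell$. That claim is false in general: $\Gamma_{\vG^{(r)}}$ may be disconnected, and a clause of $\Lambda$ can join two of its components. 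The result is finite $\widehat\Gamma$-distances from $i$ that are infinite in $\Gamma_{\vG^{(r)}}$.
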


\begin{proof}
We apply \cref{lemma:lipschitz-within} to truncate the stationarity test to radius $\sR^{(r)}$, defined by the distance on the interaction graph of $\vG^{(r)}$. 
\begin{equation}
    \mathsf{Test}_i
    [\vP,\vG^{(r)}_{\CB[i, \sR^{(r)}]};\vrho_{\vH}]\approx \mathsf{Test}_i
    [\vP,\vG^{(r)};\vrho_{\vH}]
\end{equation}
Subsequently, \cref{lemma:lipschitz-within-shell} to truncate the refinement to radius $\mathsf{L}$, defined by the distance on the union of interaction graphs of $\widehat{\Gamma}$, up to error scaling with the perturbation strength $\eta_r\cdot |\CB_{\hat{\Gamma}}[i, \ell]|$ and with $\mathsf{L}$: 
\begin{align}
     &\mathsf{Test}_i
    [\vP,\vG^{(r)}+\eta_r\vU;\vrho_{\vH}]-\mathsf{Test}_i
    [\vP,\vG^{(r)};\vrho_{\vH}] \\ 
    \approx &  \mathsf{Test}_i
    [\vP,\widehat{\vG}_{\mathsf{L}}+\eta_r\vU;\vrho_{\vH}]-\mathsf{Test}_i
    [\vP,\widehat{\vG}_{\mathsf{L}};\vrho_{\vH}]
\end{align}

Finally, we note \cref{alg:structure-learner} only considers perturbations $(\Lambda, \vU)$ where $\widehat{\Gamma}:=\Gamma_{\vG^{(r)}}\cup \Lambda$ is still a $(\sd, \sD, \mathsf{c})$ local Hamiltonian, and thereby admits polynomial growth $|\CB_{\hat{\Gamma}}[i, \ell]|\leq \poly(\ell).$
\end{proof}

Next we discuss how to implement \cref{lemma:struct-lip-stat} algorithmically, and thus prove \cref{lemma:parallel-stat-meas}. We focus on how to compute the baseline estimates, as the refinement step is a simple application of Pauli shadow tomography \cref{theorem:pauli-shadows}. In \cref{alg:struct-meas-baseline} we describe a procedure which combines the coloring argument of \cref{section:H-complexity} with a Pauli shadows step \cite{huang2020predicting}.

\begin{proof}

    [of \cref{lemma:parallel-stat-meas}]
     From \cref{lemma:struct-lip-stat}, given the choices of $\ell=\poly(\beta\cdot \sd), c_{\mathsf{meas}}\leq e^{-\poly(\beta\cdot \sd)}$ as defined in the proof of \cref{lemma:structure-core-soundness}, we make appropriate choices of $\mathsf{L}=\poly(\beta\cdot d)$, $\sR^{(r)} = \poly(\beta\cdot \sd)(1+\log \eta_r^{-1})$ ensuring the error in the continuity lemma does not exceed:
     \begin{align}
         \eqref{eq:struct-lip-stat} \leq \eta_r \cdot e^{-\poly(\beta\cdot \sd)}\leq \frac{1}{100} \cdot \eta_r \cdot c_{\mathsf{meas}}
     \end{align}
     Next we turn our attention to computing said quantities. \\
     
     \noindent \textbf{The refinement step.} Following the outline above, this step considers all the quantities of the form:
     \begin{align}
    \mathsf{Test}_i
    [\vP,\vG^{(r)}_{\CB_{\hat{\Gamma}}[i, \mathsf{L}]}+\eta_r\vU;\vrho_{\vH}] - \mathsf{Test}_i
    [\vP,\vG^{(r)}_{\CB_{\hat{\Gamma}}[i, \mathsf{L}]};\vrho_{\vH}], 
\end{align}
     We note that \cref{alg:structure-learner} considers only neighborhoods $\Lambda\in \mathsf{CC}_{i, \ell}$ such that the union $\widehat{\Gamma}:=\Gamma_{\vG^{(r)}}\cup \Lambda$ admits polynomial growth, namely the same geometry as a $(\sd, \sD, \mathsf{c}_{\mathsf{sgr}})$ Hamiltonian. Consequently, the resulting collection of observables is of total count:
     \begin{align}
         n\cdot |\CS_{[n]}^{\sk}|\cdot  \max_i|\mathsf{CC}_{i, \ell}|\cdot \max_{\Lambda}|\mathsf{Grid}[\Lambda]| \leq n^{\sk+1}\cdot n^{\poly(\ell)}\cdot e^{\poly(\beta\sd, \ell)} = n^{\poly(\beta\sd)},
     \end{align}
     and each acts on $\leq \sk+ \max_i\CB_{\widehat{\Gamma}}[i, \mathsf{L}]\leq \poly(\mathsf{L})$ qubits at a time. It then follows verbatim by \cref{theorem:pauli-shadows} that with probability $1-\delta/2$, said expectation values can be estimated to error $\frac{1}{100} \cdot \eta_r \cdot c_{\mathsf{meas}}$ using the asserted sample complexity. \\

     \noindent \textbf{The baseline estimates.} We refer the reader to \cref{alg:struct-meas-baseline} for a description of the estimation algorithm. The correctness of the parallel short-time estimation follows akin to the coloring argument of \cref{section:gibbs-alg}. Indeed, it follows from \cref{lemma:stationarity-test-single-measurement} (in particular, the application of Duhamel's formula in \eqref{eq:duhamel-estimation-efficient}) that the finite-time difference estimates are accurate to error:
     \begin{align}
        \bigg| \frac{1}{t}\bigg(\tr[\vP\cdot  \Phi_i[\vrho_{\vH}]]-\tr[\vP\cdot  \vrho_{\vH}]\bigg) - \mathsf{Test}_i
    [\vP,\vG^{(r)};\vrho_{\vH}] \bigg| \leq \frac{1}{16}c_{\mathsf{meas}}\eta_r + \eta_r\cdot t\cdot \poly(\beta\sd) \label{eq:alg-baseline-error}
     \end{align}
     so long as we pick $\sR \geq \poly(\beta\sd) \log (c_{\mathsf{meas}}\eta_r)^{-1}$. Consequently, with a suitable choice of short-time evolution $t = e^{-\poly(\beta\sd)}$ it follows that the error in $\eqref{eq:alg-baseline-error} \leq \frac{1}{8} c_{\mathsf{meas}}\eta_r$. 
     
     Next we address the sample complexity of this step. \cref{theorem:pauli-shadows-after-channels} implies that we can estimate every desired Pauli expectation value, within one of the $\mathsf{s} = \min(n, \poly(\beta\sd)\log^\sD\eta^{-1})$ colors to the intended accuracy of $\frac{1}{8}t c_{\mathsf{meas}}\cdot \eta_r$ with probability of failure $\delta$ with:
     \begin{align}
         e^{\poly(\beta\sd)}\cdot O\bigg( \frac{1}{\eta_r^2}\cdot \log \frac{n \mathsf{s}}{\delta} \bigg) =  e^{\poly(\beta\sd)}\cdot O\bigg( \frac{1}{\eta_r^2}\cdot \log \frac{n}{\delta} \bigg)
     \end{align}
     appropriately rescaling $\delta$ to union bound over the $\mathsf{s}$ colors, and summing over their samples gives the advertised bound.\\
     
     \noindent \textbf{Time complexity.} This follows simply from the implementation costs described in \cref{lemma:stationarity-test-single-measurement}, that of enumerating all the neighborhoods, and that of  computing all the measurements, resulting in $n^{\poly(\beta)}\cdot \poly(\mathsf{N_{samp}})$.
\end{proof}

\begin{algorithm}[H]
\caption{Parallel measurements of the ``baseline'' tests.}
\label{alg:struct-meas-baseline}
\begin{algorithmic}[1]
\Require Current estimate \(\vG\), error \(\eta\), samples of \(\rhoH\), integer $\mathsf{R}$, and time $t>0$.

\State Compute estimates $\hat{\mathsf{T}}[\vP; \vrho_{\vH}]$ for  $\tr[\vP \vrho_{\vH}]$ of every $\sk$ local Pauli operator $\vP\in \CS_{[n]}^{\sk}$ to error $8^{-1}\cdot t\cdot c_{\mathsf{meas}}\cdot \eta$ using \cref{theorem:pauli-shadows}.

\State Partition $[n]$ into $\mathsf{s}:=O(\sR^{\sD})$ disjoint subsets of sites $\mathsf{C}_1, \cdots, \mathsf{C_{s}}\subseteq [n]$, such that the balls $\sB_{\vG}[i_1, \sR]$, $\sB_{\vG}[i_2, \sR]\subseteq [n]$ for any two $i_1, i_2$ of the same color do not intersect.

\For{Each color $\mathsf{C}$ in the partition}

\For{Each $i\in \mathsf{C}$} define the map $\Phi_i[\cdot ]$, truncating the evolution $\CL_{\vG_{\CB[i, \sR]}, i}$  to radius $\sR$:
\begin{equation}
    \Phi_i[\cdot ]:=\exp\bigg[t\cdot \CL_{\vG_{\CB[i, \sR]}, i}\bigg][\cdot ],
\end{equation}
\EndFor

\State  Compute estimates for $\tr[\vP\cdot  \Phi_i[\vrho_{\vH}]]$,  for each $ \vP\in\CS^{\sk}_{[n]}, i\in \mathsf{C}$ to error $8^{-1}\cdot t\cdot c_{\mathsf{meas}}\cdot \eta$ using \cref{theorem:pauli-shadows-after-channels}. Denote the outcome as  $\hat{\mathsf{T}}[\vP; \Phi_i[\vrho_{\vH}]]$.
\EndFor

\State For each $i\in [n]$, $\vP\in \CS_{[n]}^{\sk}$ output
\begin{equation}
    \frac{1}{t}\cdot \bigg(\hat{\mathsf{T}}[\vP; \Phi_i[\vrho_{\vH}]]- \hat{\mathsf{T}}[\vP; \vrho_{\vH}]\bigg)
\end{equation}
as an estimate for $\tr[\CL_{\vG_{\CB[i, \sR]}, i}^\dagger[\vP]\vrho_{\vH}]$.
\end{algorithmic}
\end{algorithm}

\newpage

\section{Learning Hamiltonian parameters from their metastable states}
\label{section:metastable-learning}

In this section, we prove that the parameter learning algorithm devised in \cref{section:gibbs-alg} succeeds even when input samples from metastable states (as opposed to the Gibbs state). We operate under the local formulation of metastability presented in \cref{def:intro-local-metastability}, and restated here.

\begin{definition}[Metastability as approximate stationarity {\cite{BCV25}}]
\label{def:local-metastability}
A state \(\vsigma\) is said to be \(\epsilon\)-locally metastable with respect to a Hamiltonian
\(\vH\) if, for every site \(i\in[n]\),
\begin{equation}
    \left\|\CL_{\vH,i}[\vsigma]\right\|_1\leq\epsilon.
    \label{eq:local-metastability}
\end{equation}
\end{definition}

\begin{remark}\label{rmk:metastab-complete}
\cref{def:local-metastability} entails an approximate version of the completeness condition for the stationarity test. For any observable $\vO$,
    \begin{equation}
        |\mathsf{Test}_{i}[\vO,\vH;\sig]| \leq \epsilon\cdot \|\vO\|. \tag{Completeness}
    \end{equation}
\end{remark}

In other words, the ground truth Hamiltonian $\vH$ approximately passes the stationarity test. The following \cref{thm:end-to-end-metastable} is the main result of this section, which learns the coefficients of the Hamiltonian up to an accuracy threshold ``plateau'' which depends on $\epsilon$. As before, we consider $(\sd, \sD, \scgr)$, $\sk$ local qubit Hamiltonians and treat said parameters as fixed constants, in order to study the asymptotics in $n, \eta, \beta, \delta$.

\begin{theorem}[Learning from metastable states]
\label{thm:end-to-end-metastable} 
    Fix $\beta\in \BR^+, \epsilon\in (0, 1)$, an $n$ site $(\sd, \sD, \scgr_{\sgr})$ Hamiltonian $\vH:=\sum_{\gamma\in \Gamma} h_\gamma \vP_\gamma$  on a fixed set of interactions $\Gamma$ with $|h_\gamma|\leq 1$. Let $\vsigma$ be an $\epsilon-$locally metastable state w.r.t. $\vH$. Fix $\delta\in (0, 1)$ and $\eta >\eta_{\mathsf{thr}}:=\scgr_\beta\cdot \epsilon$ where $\scgr_\beta\leq \beta^{-1}\cdot e^{\poly(\beta)}$. 
    
    Then, with probability $\geq 1-\delta$, the learning algorithm described in \cref{section:gibbs-alg} returns a Hamiltonian $\vG=\sum_{\gamma\in \Gamma} g_\gamma \vP_\gamma$ satisfying $\max_{\gamma\in \Gamma}|h_\gamma-g_\gamma|\leq \eta$ using at most 
    \begin{equation}
      \mathsf{N_{samp}}:= \frac{ e^{\poly(\beta)}}{\beta^2}\cdot O\bigg(\eta^{-2}\log \frac{n}{\delta\eta}\cdot \log^{\sD}\frac{1}{\eta}\bigg)
    \end{equation}
    copies of $\vsigma$ and $\poly(n,  \mathsf{N_{samp}})$ quantum and classical processing time. 
\end{theorem}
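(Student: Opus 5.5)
The plan is to run the proof of \cref{theorem:gibbs-learning} essentially verbatim, replacing $\vrho_{\vH}$ by $\vsigma$ throughout. The only ingredient that does not transfer is the convexity statement \cref{lemma:fisher-convexity}, which was stated for Gibbs inputs. The per-round analysis mirrors \cref{lemma:gibbs-one-round}. We fix a site $i$, round $r$, and the completed candidates $\vK^{(\vU)}$. The Lipschitz and Lieb--Robinson estimates \eqref{eq:LB1}, \eqref{eq:Lip1} and \eqref{eq:LB2} are bounds on $\|\CL^\dagger_{\vG,i}-\CL^\dagger_{\vG',i}\|_{\infty-\infty}$. They therefore hold for any input state, and in particular for $\vsigma$. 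The single change is that exact completeness $\mathsf{Test}_i[\vP_\gamma,\vH;\vrho_{\vH}]=0$ becomes the approximate completeness of \cref{rmk:metastab-complete}, $|\mathsf{Test}_i[\vP_\gamma,\vH;\vsigma]|\le\epsilon$. We thus obtain
\[
\max_{\gamma\in\CB[i,\ell]}|\mathsf{Test}_i[\vP_\gamma,\vK^{(\vV)};\vsigma]|\le \eta_r\theta+\epsilon .
\]
By linearity this gives $|\mathsf{Test}_i[\vH-\vK^{(\vV)},\vK^{(\vV)};\vsigma]|\le 2\eta_r|\CB[i,\ell]|(\eta_r\theta+\epsilon)$.

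The key new step is a convexity bound for the stationarity test itself on $\vsigma$, replacing the Fisher-information identity of \cref{rmk:stat-fisher-intro}. Set $\vX:=\vH-\vK$. I would aim to show
\[
\beta\,\QF_a(\vX,\vK;\vsigma)\le \mathsf{Test}_a[\vX,\vK;\vsigma]+\text{err}_{\epsilon},
\]
with $\text{err}_\epsilon\lesssim e^{\poly(\beta)}\,\epsilon\,\|\vX\|_{\mathrm{loc}}$ (the $\epsilon$-dependent error) plus a term quadratic in the local size of $\vX$ times something small. The route is as follows. For the Gibbs state, \cref{lem:Dicirchlet} writes $\tr[\vrho\,\CL^\dagger[\vX]]$ as a gradient square once one uses $\log\vrho_{\vH}-\log\vrho_{\vK}=-\beta\vX$ up to a constant. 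For $\vsigma$ I would instead invoke the approximate detailed-balance property of metastable states from \cite{BCV25}. This property says that $\vsigma$ is locally close, in the relevant weighted sense, to satisfying KMS detailed balance for $\CL_{\vH,a}$, with error $\poly(\epsilon)$ or $\epsilon e^{\poly(\beta)}$. I would then use the Lipschitz continuity \eqref{eq:lipschitz-overview} to pass from $\CL_{\vH,a}$ to $\CL_{\vK,a}$, at a cost proportional to the local size of $\vX$, which is $O(\eta_r)$ near $i$ and zero outside $\CB[i,\ell]$. Feeding this into \cref{lem:uniform-truncated-bare-reconstruction} (first inequality, valid for arbitrary full-rank $\vsigma$) controls $\|[\vA^a,\vX]\|_{\vsigma}^2$. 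The remaining ingredient is a faithfulness bound in the spirit of \cref{lemma:faithfulness} for the $\vsigma$-weighted norm. This should follow from the local Markov/area-law structure of metastable states in \cite{BCV25}, or more simply by reusing the approximate detailed balance to compare $\vsigma$ locally with the Gibbs weights. Together these give
\[
\max_{\gamma\ni i}|g^{(r+1)}_\gamma-h_\gamma|\le e^{\poly(\beta)}\big[e^{\Omega}\sqrt{\beta|\CB[i,\ell]|(\eta_r^2\theta+\eta_r\epsilon)}+e^{-c\Omega}\eta_r\big].
\]

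Choosing constants as in \cref{lemma:gibbs-one-round}, the $\eta_r\epsilon$ term is at most $\eta_r^2/100$ precisely when $\eta_r\ge \scgr_\beta\epsilon$. So the halving invariant holds for every round with $\eta_r>\eta_{\mathsf{thr}}$, and the iteration stops once $\eta_r$ reaches $\eta$. For sample complexity, \cref{lemma:stationarity-test-single-measurement} applies unchanged. It only needs $\|\CL[\vsigma]\|_1\lesssim\eta$ for the truncated $\vG$-Lindbladian, and this holds since $\|\CL_{\vH,i}[\vsigma]\|_1\le\epsilon\le\eta$ together with the continuity estimate. The coloring argument and the geometric sum \eqref{eq:geo-sum} then carry over verbatim, as does the time complexity from \cref{lemma:gibbs-time}. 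The main obstacle is the approximate-convexity step: getting the detailed-balance defect of $\vsigma$ to enter \emph{linearly} in $\epsilon$, times a local norm of $\vX$, with only $e^{\poly(\beta)}$ prefactors and no $n$-dependence. I would first attempt it by writing the defect as $\langle\vX,\CL^\dagger_{\vK,a}[\vX]\rangle$-type cross terms localized by \cref{lemma:lindblad-LR}.
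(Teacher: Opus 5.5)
Your outer architecture matches the paper's. It reuses the same algorithm and notes that the Lipschitz and Lieb--Robinson bounds are $\infty\to\infty$ estimates, hence state-independent. Approximate completeness (\cref{rmk:metastab-complete}) is the only source of the additive $\epsilon$ term, which yields the threshold $\epsilon\lesssim\eta_r$. Sample complexity reuses \cref{lemma:stationarity-test-single-measurement}. The gap is the step you flag yourself: convexity of the stationarity test on $\vsigma$, the paper's \cref{lem:metastable-convexity}. You state it as a goal, and the route you sketch would not prove it.

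Passing from $\CL_{\vH,a}$ to $\CL_{\vK,a}$ via \eqref{eq:lipschitz-overview}, with $\vX$ as the observable, costs $\poly(\beta\sd)\,|\CB[i,\ell]|^2\eta_r^2$. That is the same order as the lower bound $e^{-\poly(\beta\sd)}\max_{\gamma\ni i}|x_\gamma|^2$ you must extract from $\beta\QF_a(\vX,\vK;\vsigma)$, and it has a worse constant, so it erases the signal. Nor can you move the defect to $\vK$: $\vsigma$ is approximately stationary for $\vH$, not $\vK$, so the ADB at $\vK$ is itself of order $\eta_r^2$.

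What is needed is an exact decomposition in which every error is $\|\vU\|^2$ times a small factor. The paper gets this in \cref{lem:metastable-differentiated-entropy-gradient}. Take $\vK_0:=\vH_{\CB[i,\sL]}$ and $\vU=-\vX$, write both tests through the gradient-flow identity \eqref{eq:test-expansion} (\cref{lemma:expose-log}), and subtract $\mathsf{Test}_a[\vU,\vK_0;\cdot]$ from $\mathsf{Test}_a[\vU,\vK_0+\vU;\cdot]$. The leading term is exactly $-\beta\QF_a(\vU,\vK_0+\vU;\cdot)$. There are two error terms:
\begin{itemize}
\item One cross term pairs $[\widehat{\vA}_{\vK_0+\vU}-\widehat{\vA}_{\vK_0},\vU]$ against the ADB defect at $\vK_0$, using Cauchy--Schwarz and \cref{lem:filtered-duhamel-estimates}.
\item The other Pauli-expands the change of the modular-conjugated OFT. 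Near shells are controlled by multi-site ADB (\cref{lem:metastable-multisite-to-single-site-adb}, \cref{claim:multi-site-sint}), far shells by $\|\log\vsigma+\beta\vK_0\|e^{-\mu(b-\ell)}$.
\end{itemize}
So the ADB enters through its square root, as a \emph{relative} error. The linear-in-$\epsilon$ piece comes only from completeness, which means making the detailed-balance defect itself enter linearly in $\epsilon$ is the wrong target.

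Second, you treat an $n$-independent ADB bound of size $\poly(\epsilon)$ as available from \cite{BCV25}. It is not. \cref{lem:metastable-FI-to-ADB} and the far-shell term both carry $\|\log\vsigma-\log\vrho\|$, which for a metastable state is unbounded or grows with $n$. The state may sit in a wrong well, far from $\vrho_{\vH}$.

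The paper's fix is Step~1 of \cref{lem:metastable-convexity}. It truncates the test to a ball $\sA$ of radius $\sL$ (\cref{lemma:lipschitz-within-shell}) and replaces $\vsigma$ by $\vsigma^\lambda_{\sA}=(1-\lambda)\vsigma_{\sA}+\lambda\vrho_{\vH_{\CB[i,\sL]}}$. The log-likelihood ratio is then bounded independently of $n$, and the resulting Fisher information and ADB on the inner ball are exponentially small in $\sL-b$. This holds once $\epsilon$ is below a $\beta$-dependent constant, which $\epsilon\le c_{\mathrm{noise}}\eta_r$ guarantees.

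Finally, your faithfulness step, comparing $\vsigma$ locally with Gibbs weights, fails for the same reason: a metastable state need not be locally close to $\vrho_{\vH}$. The paper instead uses a Pauli twirl with \cref{lem:metastable-exposing-kms} (\cref{lem:metastab-faithfulness}). That costs an additive $\sqrt{\ADB}\,\|\vY\|$ error, which the smoothed ADB bound makes negligible.
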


We use the exact same algorithm as described in \cref{alg:gibbs-learner,alg:one-round}; however, with minor modifications to the choice of parameters as discussed in \cref{section:metastab-correct} which do not change the asymptotic sample or time complexity. As we discuss shortly, the correctness argument (\cref{lemma:metastab-correctness}) is based on revisiting the approximate convexity property for the quantum Fisher information, and deriving an analogous statement directly for the stationarity test (evaluated on the metastable state). The resulting error will depend on the extent with which $\vsigma$ violates a certain detailed-balance condition \cite{BCV25}, which we introduce in the sequence (\cref{section:adb}). Subsequently, we argue the correctness of our iterative algorithm in \cref{section:metastab-correct}, and conclude with the sample and time complexity in \cref{section:metastab-sample-and-time}

\subsection{The approximate detailed-balance condition}
\label{section:adb}

We dedicate this section to collecting the relevant preliminaries on metastable states, later required to prove correctness. We will make extensive use of the following definition \cite{BCV25} of a robust version of KMS detailed-balance condition, and here we derive several additional implications. Throughout this subsection we recall that the Gaussian energy filter width is fixed to
\(\sigma=\beta^{-1}\).

\begin{definition}[Approximate detailed balance
{\cite[Definition~C.1]{BCV25}}]
\label{def:metastable-adb}
For a full-rank state \(\vsigma\), a Hamiltonian \(\vG\), and a
Hermitian jump \(\vA\), define
\begin{equation}
 \ADB_{\vG,\vA}[\vsigma]
 :=\int_{\BR}\!\rd\omega\int_{\BR}\!\rd t\,
 \gamma_{\mathsf M}(\omega)g(t)
 \left\|
  \widehat{\vA}_{\vG}(\omega,t)\sqrt{\vsigma}
  -\sqrt{\vsigma}\,\vrho_{\vG}^{-1/2}
   \widehat{\vA}_{\vG}(\omega,t)\vrho_{\vG}^{1/2}
 \right\|_2^2 .
 \label{eq:metastable-adb-definition}
\end{equation}
\end{definition}

 A central innovation in \cite{BCV25} was to relate this condition to the gradients of the log likelihood.

\begin{lemma}[ADB and entropy gradients
{\cite[Lemma~C.2]{BCV25}}]
\label{lem:metastable-adb-entropy-gradient}
In the context of \cref{def:metastable-adb}:
\begin{align}
 \ADB_{\vG,\vA}[\vsigma]
 &=\int_{-1/2}^{1/2}\!\rd s
   \int_{\BR}\!\rd\omega\int_{\BR}\!\rd t\,
   h_s(\omega)g_s^{\ADB}(t)
   \left\|[\widehat{\vA}_{\vG}(\omega,t),\log\vsigma-\log\vrho_{\vG}]
   \right\|_{\vsigma,s}^{2},
 \label{eq:metastable-adb-entropy-gradient}\\
 g_s^{\ADB}(t)
 &:=\frac12\int_{-(1/2-|s|)}^{1/2-|s|}g_{s'}(t)\,\rd s'
 =\frac{1}{2\pi\beta}
 \log\!\left(
 \frac{\cosh(2\pi t/\beta)+\cos(\pi s)}
      {\cosh(2\pi t/\beta)-\cos(\pi s)}
 \right),
 \label{eq:metastable-adb-kernel}
\end{align}
where \(g_s, h_s\) are the time and frequency filter functions from the Fisher information
\eqref{eq:gs-explicit}.
\end{lemma}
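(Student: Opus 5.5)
We will prove the identity pointwise in $(\omega,t)$ and integrate at the end. Write $\vB:=\widehat{\vA}_{\vG}(\omega,t)$, $\vrho:=\vrho_{\vG}$, and $\vL:=\log\vsigma-\log\vrho$. The integrand of \eqref{eq:metastable-adb-definition} is
\begin{equation}
\|\vB\vsigma^{1/2}-\vsigma^{1/2}\vrho^{-1/2}\vB\vrho^{1/2}\|_2^2 ,
\end{equation}
and the plan is to express it as an integral over the interpolation parameter of weighted commutator norms of $\vB$ with $\vL$. For intuition, consider first the commuting case, with $\vsigma$ and $\vrho$ diagonal in a common basis. The matrix element $(x,y)$ of the difference is $B_{xy}\big(\sigma_y^{1/2}-\sigma_x^{1/2}\rho_x^{-1/2}\rho_y^{1/2}\big)$. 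Setting $u:=\vL_x-\vL_y$, its squared modulus equals $|B_{xy}|^2\sigma_y\,(1-e^{u/2})^2$.

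The key scalar step is an integral representation of $(1-e^{u/2})^2$ as $u^2$ times a weighted average of terms $e^{(1/2+s)u}$ over $s\in[-1/2,1/2]$. This follows by writing $1-e^{u/2}=-\tfrac{u}{2}\int_0^1 e^{\tau u/2}\rd\tau$, squaring, and changing variables to $s$. This is what produces the $s$-dependent weight
\begin{equation}
\sigma_y\,e^{(1/2+s)u}=\sigma_x^{1/2+s}\sigma_y^{1/2-s}\cdot(\text{factors of }\vrho),
\end{equation}
which is the integrand of $\|[\vB,\vL]\|^2_{\vsigma,s}$, since $([\vB,\vL])_{xy}=-u\,B_{xy}$. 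The leftover powers of $\vrho$ are absorbed into the frequency variable. The argument here is that, by \cref{lem:bounds_imaginary_conjugation}, $\vrho^{-s}\vB\vrho^{s}$ is proportional to $e^{s\beta\omega}\widehat{\vA}(\omega-s\beta\sigma^2,t)$ up to Gaussian factors. Shifting $\omega$ then converts $\gamma_{\mathsf M}(\omega)$ into the kernel $h_s(\omega)$ of \eqref{eq:gs-explicit}. This is the same mechanism that gives the Fisher information formula of \cref{def:relative-fisher-information} in \cite{BCV25}.

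For general non-commuting $\vsigma,\vrho$ we will not diagonalize both operators. Instead we use the standard modular-operator calculus: the left and right multiplications $L_{\vsigma}$, $R_{\vsigma}$, $L_{\vrho}$, $R_{\vrho}$ commute pairwise. In particular $\Delta_{\vsigma}:=L_{\vsigma}R_{\vsigma}^{-1}$ and $\Delta_{\vrho}:=L_{\vrho}R_{\vrho}^{-1}$ commute, and $\vL$ enters only through $\log\Delta_{\vsigma}-\log\Delta_{\vrho}$ acting on $\vB$. The computation above is then a joint functional calculus identity, and the same scalar integral representation applies verbatim.

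The time kernel also needs a separate treatment. The kernel $g$ in $\ADB$ is matched with $g_s$ in the Fisher information, and $\vB$ depends on $t$ through $e^{i\vG t}$. The imaginary-time shift by $s$ is therefore realized as a contour shift $t\mapsto t+is\beta/2$, which converts $g(t)=1/(\beta\cosh(2\pi t/\beta))$ into the kernels $g_{s'}$. Averaging over the allowed shifts $|s'|\le 1/2-|s|$ produces $g_s^{\ADB}$. The closed form in \eqref{eq:metastable-adb-kernel} then follows from the elementary integral
\begin{equation}
\int\frac{\cos(\pi s')\cosh a}{\cosh 2a+\cos 2\pi s'}\,\rd s'=\frac{1}{2\pi}\log\frac{\cosh a+\sin(\pi s')}{\cosh a-\sin(\pi s')},
\end{equation}
after substituting $\cosh 2a+\cos2\pi s'=2(\cosh^2a-\sin^2\pi s')$.

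The main obstacle is the bookkeeping that makes the frequency shift and the time contour shift combine exactly into $h_s(\omega)g_s^{\ADB}(t)$, rather than only up to inequality. This includes justifying the contour shift, for which analyticity and decay of $\widehat{\vA}(\omega,t)$ in the strip $|\Im t|<\beta/4$ come from the Gaussian filter. It also includes correctly pairing the two square-root factors; this pairing is why the allowed range of $s'$ shrinks to $1/2-|s|$. I would first verify the identity in the commuting case to fix all constants, then lift it via the modular calculus.
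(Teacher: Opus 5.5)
Your commuting-case computation and the scalar identity are correct. The lift to the general case, however, fails at its first step.

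\textbf{The lift is invalid.} The superoperators $L_{\vsigma},R_{\vsigma},L_{\vrho},R_{\vrho}$ do not commute pairwise. Left multiplications commute with right multiplications, but $L_{\vsigma}L_{\vrho}=L_{\vsigma\vrho}\neq L_{\vrho\vsigma}=L_{\vrho}L_{\vsigma}$ unless $[\vsigma,\vrho]=0$. Hence $\Delta_{\vsigma}$ and $\Delta_{\vrho}$ do not commute, and $\vsigma^{1/2}\vrho^{-1/2}\neq e^{\mathbf{L}/2}$ with $\mathbf{L}:=\log\vsigma-\log\vrho$. No joint functional calculus reduces the $\ADB$ integrand to your scalar identity.

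The commuting case also cannot ``fix the constants''. When $[\vsigma,\vG]=0$, $e^{i\vG t}$ commutes with $\vsigma$ and $\mathbf{L}$, so both integrands are independent of $t$. The check then only sees $\int g^{\ADB}_s(t)\,\rd t=\tfrac12(\tfrac12-|s|)$, and cannot distinguish $g^{\ADB}_s(t)$ from $(\tfrac12-|s|)\,g(t)$. The shape of $g^{\ADB}_s$ is a purely noncommutative effect.

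Likewise, the contour shift cannot be applied as you describe it. The map $t\mapsto\|[\widehat{\vA}_{\vG}(\omega,t),\mathbf{L}]\|^2_{\vsigma,s}$ is not holomorphic in $t$, so there is nothing to shift until the integrand is written as a sesquilinear cross term.

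\textbf{The missing step.} You need an exact operator analogue of your scalar interpolation, of the same type as \cref{lemma:expose-log}. Set $\vB:=\widehat{\vA}_{\vG}(\omega,t)$ and $\vB_u:=\vrho^{-u}\vB\vrho^{u}=e^{i\vG(t-iu\beta)}\widehat{\vA}_{\vG}(\omega)e^{-i\vG(t-iu\beta)}$. Differentiating $u\mapsto\vsigma^{u}\vB_u\vsigma^{1/2-u}$ over $u\in[0,\tfrac12]$ gives
\begin{equation*}
\vB\vsigma^{1/2}-\vsigma^{1/2}\vrho^{-1/2}\vB\vrho^{1/2}=\int_0^{1/2}\vsigma^{u}[\vB_u,\mathbf{L}]\vsigma^{1/2-u}\,\rd u ,
\end{equation*}
and therefore
\begin{equation*}
\big\|\vB\vsigma^{1/2}-\vsigma^{1/2}\vrho^{-1/2}\vB\vrho^{1/2}\big\|_2^2=\int_0^{1/2}\!\!\int_0^{1/2}\big\langle[\vB_u,\mathbf{L}],[\vB_{u'},\mathbf{L}]\big\rangle_{\vsigma,\,u+u'-1/2}\,\rd u\,\rd u' .
\end{equation*}
This is a double integral of cross terms between different imaginary-time shifts, and the rest of the argument goes as follows.
\begin{enumerate}
\item Each cross term extends holomorphically in $t$. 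Shift $t\mapsto t+i(u'-u)\beta/2$; this is legitimate since $|u'-u|<\tfrac12$ stays clear of the poles of $g$ at $\pm i\beta/4$.
\item After the shift the cross term becomes the square $\|[\vrho^{-(s+1/2)/2}\vB\vrho^{(s+1/2)/2},\mathbf{L}]\|^2_{\vsigma,s}$, with $s=u+u'-\tfrac12$. It is now weighted by $g(t+is'\beta/2)$, with $s'=u'-u\in[-(\tfrac12-|s|),\tfrac12-|s|]$.
\item Symmetrizing in $s'$ gives $\mathrm{Re}\,g(t+is'\beta/2)=g_{s'}(t)$. The Jacobian $\rd u\,\rd u'=\tfrac12\,\rd s\,\rd s'$ then produces exactly $g^{\ADB}_s$. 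So the triangular weight and the $s'$-average are the two coordinates of one $(u,u')$ integral, not separate mechanisms.
\item By \cref{lem:bounds_imaginary_conjugation}, $\vrho^{-(s+1/2)/2}\widehat{\vA}_{\vG}(\omega,t)\vrho^{(s+1/2)/2}\propto\widehat{\vA}_{\vG}(\omega+(s+\tfrac12)\beta\sigma^2,t)$. The change of variables in $\omega$ turns $\gamma_{\mathsf M}$ into $h_s$ exactly, because $\gamma_{\mathsf M}(x-\beta\sigma^2/2)\,e^{\beta x/2}=e^{-\beta|x|/2}$.
\end{enumerate}

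\textbf{Minor slip.} Your antiderivative is off by a factor of $2$. It should be $\frac{1}{4\pi}\log\frac{\cosh a+\sin\pi s'}{\cosh a-\sin\pi s'}$, which is what gives the prefactor $\frac{1}{2\pi\beta}$.
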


Under this formulation, the approximate detailed balance condition or `error' can be related to the Fisher information:

\begin{lemma}[ADB and the Fisher Information
{\cite[Theorem~C.1]{BCV25}}]
\label{lem:metastable-FI-to-ADB}
In the context of \cref{def:metastable-adb}, let $\vA^a:=\vA$ satisfy $\|\vA^a\|\leq 1$. Then, there exists a universal constant $c>0$ s.t. 
\begin{equation}
  \ADB_{\vG,\vA^a}[\vsigma]
  \leq
c\cdot  \FI_{\vG,a}(\vsigma\Vert\vrho_{\vG})
  \left(
    1+
    \log_+
    \frac{c\cdot \|\log \vsigma - \log \vrho_{\vG}\|^2}
         {\FI_{\vG,a}(\vsigma\Vert\vrho_{\vG})}
  \right),
  \label{eq:metastable-FI-to-ADB}
\end{equation}
where \(\log_+x=\max\{0,\log x\}\), and the right-hand side is zero
when the Fisher information vanishes.
\end{lemma}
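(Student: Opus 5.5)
Write $\vX:=\log\vsigma-\log\vrho_{\vG}$ and $\FI:=\FI_{\vG,a}(\vsigma\Vert\vrho_{\vG})=\QF_a(\vX,\vG;\vsigma)$. By \cref{lem:metastable-adb-entropy-gradient}, $\ADB_{\vG,\vA^a}[\vsigma]$ is an integral of the same nonnegative quantities $\|[\widehat{\vA}_{\vG}(\omega,t),\vX]\|_{\vsigma,s}^2 h_s(\omega)$ as the Fisher functional \eqref{eq:fisher-quadratic}. The only difference is that the time kernel $g_s(t)$ is replaced by $g_s^{\ADB}(t)$. The plan is therefore a kernel comparison: bound $g_s^{\ADB}$ pointwise by a multiple of $g_s$ wherever this is possible, and control the remaining region by an a priori bound in terms of $\|\vX\|$.

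\textbf{Step 1 (pointwise comparison).} Put $u=2\pi t/\beta$ and $x=\cos(\pi s)\in[0,1]$. Then
\[
g_s^{\ADB}=\frac{1}{2\pi\beta}\log\frac{\cosh u+x}{\cosh u-x},
\qquad
g_s=\frac{2}{\beta}\,\frac{x\cosh u}{\cosh 2u+2x^2-1}=\frac{1}{\beta}\,\frac{x\cosh u}{\cosh^2u-(1-x^2)},
\]
where the second form of $g_s$ uses $\cosh 2u=2\cosh^2u-1$ and $\cos 2\pi s=2x^2-1$. Using $\log\frac{1+y}{1-y}\le \frac{2y}{1-y^2}$ with $y=x/\cosh u$, I expect
\[
g_s^{\ADB}\le \frac{1}{\pi\beta}\,\frac{x\cosh u}{\cosh^2u-x^2}.
\]
This matches $g_s$ up to a constant whenever $\cosh^2u-x^2\gtrsim \cosh^2u-(1-x^2)$. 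That holds unless both $u$ is small and $x$ is close to $1$, i.e.\ $s\approx 0$ and $t\approx 0$. In that corner $g_s^{\ADB}$ diverges only logarithmically, of order $\log\frac{1}{u^2+s^2}$, while $g_s$ stays of order $1/\beta$.

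\textbf{Step 2 (the bad corner).} Fix a cutoff $\tau>0$ and split the $(s,t)$ integral at the region $D_\tau:=\{u^2+(\pi s)^2\le\tau\}$.
\begin{itemize}
\item[(i)] Off $D_\tau$, Step 1 gives $g_s^{\ADB}\lesssim \log(1/\tau)\, g_s$. This region therefore contributes at most $c\,\log(1/\tau)\,\FI$.
\item[(ii)] On $D_\tau$, I would instead bound $\|[\widehat{\vA}_{\vG}(\omega,t),\vX]\|_{\vsigma,s}^2\le 4\|\widehat{\vA}_{\vG}(\omega,t)\|^2\|\vX\|^2$ via \cref{lemma:kms-to-s}. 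The $\omega$-integral of $h_s(\omega)\|\widehat{\vA}_{\vG}(\omega,t)\|^2$ is bounded by a constant, using \cref{lem:bounds_oft} and Parseval (\cref{lem:OParseval}) together with the Gaussian factors in $h_s$ and $\sigma=\beta^{-1}$. The kernel integral $\iint_{D_\tau} g_s^{\ADB}\,\rd t\,\rd s$ is of order $\tau\log(1/\tau)$, since a log singularity is integrable in two dimensions. This region therefore contributes at most $c\,\tau\log(1/\tau)\,\|\vX\|^2$.
\end{itemize}

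\textbf{Step 3 (optimize).} Set $\tau:=\min\{1,\ \FI/(c\|\vX\|^2)\}$. Combining (i) and (ii) gives
\[
\ADB_{\vG,\vA^a}[\vsigma]\lesssim \FI\Bigl(1+\log_+\frac{c\|\vX\|^2}{\FI}\Bigr),
\]
which is \eqref{eq:metastable-FI-to-ADB}. If $\FI=0$, then every commutator vanishes on the support of $h_s g_s$, which is everywhere, so $\ADB=0$ as well.

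\textbf{Main obstacle.} The delicate step is (ii). One must check that the Gaussian factors hidden in $h_s(\omega)$ and the OFT norms integrate to a constant uniformly in $s$, including near $|s|=\tfrac12$ where $h_s$ carries the exponential tilt $e^{s\beta\omega}$. If that uniformity fails, I would exploit that $g_s^{\ADB}$ vanishes at $|s|=\tfrac12$: its $s'$-integration window has length $1-2|s|$. I would then restrict $D_\tau$ to $|s|\le\tfrac14$ and treat the rest by Step 1.
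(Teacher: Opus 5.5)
The paper does not prove this lemma; it imports it verbatim from \cite[Theorem~C.1]{BCV25}, so there is no in-text argument to compare against. Your proof is correct and self-contained. You start from the paper's restated representation of $\ADB$ (\cref{lem:metastable-adb-entropy-gradient}), compare the two time kernels $g_s^{\ADB}$ and $g_s$ pointwise, excise the logarithmic singularity at $(s,t)=(0,0)$, and control that region by the a priori bound in $\|\vX\|^2$. This reproduces exactly the $\log_+(\|\vX\|^2/\FI)$ shape of the statement, with universal constants because $\sigma=\beta^{-1}$ makes every kernel integral scale-free.

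Three points need tightening.

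\textbf{First (the one that matters).} In Step 2(i), the rational bound $\log\frac{1+y}{1-y}\le\frac{2y}{1-y^2}$ by itself only gives $g_s^{\ADB}/g_s\lesssim (u^2+\pi^2 s^2)^{-1}$ near the corner. Off $D_\tau$ that yields a factor $1/\tau$ rather than $\log(1/\tau)$. After optimizing in Step 3 this gives only a $\sqrt{\FI}\,\|\vX\|$-type bound. You must use the exact logarithm on a fixed neighborhood of the corner. There $\cosh u-\cos\pi s\ge u^2/2+2s^2$ and $g_s\ge c/\beta$, so globally $g_s^{\ADB}\le C\, g_s\,\bigl(1+\log_+\frac{1}{u^2+\pi^2 s^2}\bigr)$. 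Also write $1+\log(1/\tau)$ rather than $\log(1/\tau)$, so that $\tau=1$ is covered.

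\textbf{Second.} In Step 2(ii), \cref{lem:OParseval} controls $\|\int \hat{\vA}^\dagger\hat{\vA}\,\rd\omega\|$, not $\int\|\hat{\vA}(\omega)\|^2\rd\omega$, so it is the wrong tool. What you need is the uniform bound $\|\hat{\vA}_{\vG}(\omega,t)\|\le(\sigma\sqrt{2\pi})^{-1/2}$, which is \cref{lem:bounds_oft} at zero imaginary time. Combine it with $\int h_s(\omega)\,\rd\omega=\frac{e^{s^2/2-1/8}}{\beta(1/4-s^2)}$; note that $h_s$ has exponential tails, not Gaussian ones.

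\textbf{Third.} Since $D_\tau\subseteq\{|s|\le 1/\pi\}$ whenever $\tau\le 1$, the product in the second point is a universal constant on $D_\tau$. The endpoint issue at $|s|\to\tfrac12$ that you flag as the main obstacle therefore never arises.
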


We dedicate the remainder of this subsection to describing two key ideas. First, in \cref{section:multi-to-single-adb} we describe how the approximate detailed-balance condition under single-site updates can be compared to the same condition under few-site (e.g. 2-site) updates, up to additive corrections. We remark on the passing resemblance to the Dirichlet form comparison inequalities in \cite{chen2025GibbsMarkov, BC26, BCV25, B26all}. Then, in \cref{section:kms-to-F} we describe how to relate the KMS norm of local observables, weighted under the metastable state, to the associated Frobenius norm, using the ADB condition. This is in analogue to the ``faithfulness" condition for Gibbs states described in \cref{lemma:faithfulness}.

\subsubsection{Multi-site to single-site approximate detailed balance}
\label{section:multi-to-single-adb}

The next ingredient we require are two brief lemmas on how to relate multi-site variants of the approximate detailed-balance condition. Namely, suppose $\vsigma$ is approximately detailed-balanced, w.r.t single-site jump operators with certain error. Does this entail $\vsigma$ is also approximately detailed-balanced under slightly larger jump operators?

\begin{lemma}[Multi-site to single-site ADB]
\label{lem:metastable-multisite-to-single-site-adb}
Let $\beta\sd\geq 1.$ Let $\vG$ be a local Hamiltonian on a bounded, degree $\sd$ graph, $\vsigma$ a full rank state, and $\vP = \otimes_i^{|\sR|}\vA_i$ a Pauli operator supported on a region $\sR$ of qubits. Then, there exists a constant $c\leq e^{\poly(\beta\sd)}$ s.t. for every $\Omega\geq 1$,
\begin{equation}
\ADB_{\vG,\vP}[\vsigma]
 \leq c^{|\sR|}\cdot \bigg(e^{4\beta \Omega}\max_{1\leq i\leq |\sR|} \ADB_{\vG,\vA_i}[\vsigma]
      +e^{-\Omega/8\sd}\bigg).
 \label{eq:differentiated-gradient-multisite-adb}
\end{equation}

\end{lemma}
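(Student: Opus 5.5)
We use the commutator form of $\ADB$ from \cref{lem:metastable-adb-entropy-gradient} and the fact that commutators satisfy a Leibniz rule. Write $\vX:=\log\vsigma-\log\vrho_{\vG}$ and, for an operator $\vB$, set $\Delta_{\vB}:=\vB\sqrt{\vsigma}-\sqrt{\vsigma}\,\vrho_{\vG}^{-1/2}\vB\vrho_{\vG}^{1/2}$. Then $\ADB_{\vG,\vB}[\vsigma]$ is the $\gamma_{\mathsf M}(\omega)g(t)$-weighted integral of $\|\Delta_{\widehat{\vB}_{\vG}(\omega,t)}\|_2^2$.

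The first step is to remove the filtering. We would like to compare $\ADB_{\vG,\vP}$ with the ``bare'' quantity $\|\Delta_{\vP}\|_2^2$, up to the frequency cutoff. By Parseval (\cref{lem:OParseval}) and the bound $\gamma_{\mathsf M}\le 1$, the part of the integral with $|\omega|\le\Omega$ is controlled by a bare-like quantity. The high-frequency part $|\omega|>\Omega$ should be bounded via the decay of OFTs, $\|\widehat{\vP}(\omega)\|\lesssim e^{-|\omega|/8\sd}$, after conjugation by $\vrho_{\vG}^{\pm1/2}$. This uses \cref{lem:bounds_oft,lem:norm_bound_pauli}, and also the shell decomposition (\cref{lemma:oft-loc}) in the regime $\beta\sd\ge1$. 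The conjugated operator has norm at most $c^{|\sR|}$, and this produces the additive $c^{|\sR|}e^{-\Omega/8\sd}$ term. The Frobenius norms are bounded using $\|\cdot\sqrt{\vsigma}\|_2\le\|\cdot\|$.

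The second step, which is the core, is to reduce $\vP$ to single-site factors. The map $\vB\mapsto\vrho_{\vG}^{-1/2}\vB\vrho_{\vG}^{1/2}$ is multiplicative. So for $\vP=\vA_1\vP'$ we telescope:
\[
\Delta_{\vP}=\vA_1\Delta_{\vP'}+\Delta_{\vA_1}\,\vsigma^{-1/2}\cdot \sqrt{\vsigma}\,\vrho_{\vG}^{-1/2}\vP'\vrho_{\vG}^{1/2}.
\]
Avoiding $\vsigma^{-1/2}$ takes some care. The cleaner route is to order the terms so that each one reads (bounded operator)$\times\Delta_{\vA_i}\times$(imaginary-time conjugated Pauli string), where the conjugated string has norm $\le c^{|\sR|}$ by \cref{lem:norm_bound_pauli}. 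Iterating over the $|\sR|$ factors and applying the triangle inequality then bounds the bare defect of $\vP$ by $c^{|\sR|}\max_i$ (bare defect of $\vA_i$).

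The third step goes back from bare to filtered for each single site $\vA_i$. Here we reuse the low/high frequency split from the proof of \cref{lem:uniform-truncated-bare-reconstruction}. We write $\vA_i$ as a Gaussian convolution of $\widehat{\vA}_{i,\vG}(\omega,t)$ (\cref{lem:sumoverenergies,lem:OFT_Gaussian_expansion}) and truncate to $|\omega|\le\Omega$. Cauchy--Schwarz against the weights $\gamma_{\mathsf M}(\omega)g(t)$ then costs a factor $e^{\beta\Omega}$, since $1/\gamma_{\mathsf M}\le e^{\beta\Omega+O(1)}$ and $f_\sigma^2/g$ is integrable. The conjugation $\vrho_{\vG}^{-1/2}\widehat{\vA}\vrho_{\vG}^{1/2}$ contributes another $e^{\beta\Omega/2}$ (\cref{lem:bounds_imaginary_conjugation}). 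Squaring gives the overall factor $e^{4\beta\Omega}$ in front of $\max_i\ADB_{\vG,\vA_i}$. The high-frequency remainder again gives $e^{-\Omega/8\sd}$, and all the $c$ factors are absorbed into $c^{|\sR|}$.

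I expect the main obstacle to be the Leibniz step, specifically placing $\sqrt{\vsigma}$ correctly when the product has several factors. My first attempt would be the telescoping identity above written with the defect operators $\Delta_{\vA_i}$ sandwiched between $\vrho_{\vG}$-conjugated partial products. If that fails, the fallback is to work in the $s$-integrated commutator representation (\cref{lem:metastable-adb-entropy-gradient}), where the Leibniz rule for $[\cdot,\vX]$ is exact, and to control the weighted norms with \cref{lem:kms-holder}.
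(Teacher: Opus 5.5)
\textbf{There is a genuine gap, in your Steps 2 and 3.} Both pass through the unfiltered defects $\Delta_{\vP}$ and $\Delta_{\vA_i}$, and these are not controlled when $\beta\sd\geq 1$.

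Your Leibniz identity is correct. The factor $\vsigma^{-1/2}\sqrt{\vsigma}$ is just $\vI$, so it reads $\Delta(\vB_1\vB_2)=\vB_1\Delta(\vB_2)+\Delta(\vB_1)\,\vrho_{\vG}^{-1/2}\vB_2\vrho_{\vG}^{1/2}$, and no $\vsigma^{-1/2}$ ever appears. The placement of $\sqrt{\vsigma}$ is therefore not the obstacle.

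The obstacle is the norm of $\vrho_{\vG}^{-1/2}\vA_{j+1}\cdots\vA_{m}\vrho_{\vG}^{1/2}=e^{\beta\vG/2}(\vA_{j+1}\cdots\vA_m)e^{-\beta\vG/2}$. This is an imaginary-time evolution by $\beta/2\geq(2\sd)^{-1}$. That lies outside the range $|\beta'|<(2\sd)^{-1}$ where \cref{lem:norm_bound_pauli} applies, and no bound of the form $c^{|\sR|}$ is available for it.

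Step 3 fails for the same reason. By \cref{lem:bounds_imaginary_conjugation}, the conjugation $\vrho_{\vG}^{-1/2}\widehat{\vA}(\omega)\vrho_{\vG}^{1/2}$ amplifies frequency $\omega$ by $e^{\beta\omega/2}$. Meanwhile \cref{lem:bounds_oft} with \cref{lem:norm_bound_pauli} only gives decay $e^{-|\omega|/4\sd}$. Since $\beta/2>1/(4\sd)$, the available bound on the high-frequency remainder of $\Delta_{\vA_i}$ grows rather than producing $e^{-\Omega/8\sd}$.

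The weight $\gamma_{\mathsf M}$ in $\ADB$ exists precisely to cancel this $e^{\beta\omega}$. Once Step 1 discards it in favour of a bare quantity, it cannot be recovered. A single cutoff on $\widehat{\vP}(\omega)$ also cannot feed a Leibniz rule, because the OFT of a product is not the product of the OFTs.

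Your fallback does not help either. In the $(\vsigma,s)$-commutator representation, pulling a factor out of $\|\cdot\|_{\vsigma,s}$ requires bounding $\vsigma^{1/2+s}\vB\vsigma^{-1/2-s}$. That is a modular conjugation by the unknown metastable state, which is not controlled by any lemma available in the paper.

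\textbf{The missing idea is to apply the Leibniz rule to a product of \emph{low-frequency single-site OFTs} and never un-filter.} Apply \cref{lem:sumoverenergies} to each factor:
\[
\vP=c_{\mathrm{rec}}^{|\sR|}\int_{\BR^{|\sR|}}\prod_{j}\widehat{\vA_j}(\omega_j)\,\rd\boldsymbol{\omega}.
\]
Truncate jointly to the simplex $\sum_j|\omega_j|\leq\Omega$. The discarded part is at most $e^{\poly(\beta\sd)|\sR|-\Omega/8\sd}$ in operator norm; here \cref{lem:bounds_oft,lem:norm_bound_pauli} are used only at imaginary time of order $1/(4\sd)$, where they are valid. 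Because of $\gamma_{\mathsf M}$, $\ADB_{\vG,\vB}\leq 16\|\vB\|^2$ for any $\vB$, which handles that part.

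For the truncated part, proceed as follows.
\begin{enumerate}
\item Drop $\gamma_{\mathsf M}$ via Parseval to reach $\int g(u)\|\Delta(\vB(u))\|_2^2\,\rd u$.
\item Expand $\Delta\bigl(\prod_j\widehat{\vA_j}(\omega_j,u)\bigr)$ by your Leibniz rule.
\item Each conjugated tail $\vrho_{\vG}^{-1/2}\prod_{k>j}\widehat{\vA_k}(\omega_k,u)\vrho_{\vG}^{1/2}$ has norm at most $c^{|\sR|}e^{\beta\sum_{k>j}\omega_k/2}\leq c^{|\sR|}e^{\beta\Omega/2}$ by \cref{lem:bounds_imaginary_conjugation}.
\item The single-site defect $E_j(\omega_j)^2:=\int g(u)\|\Delta(\widehat{\vA_j}(\omega_j,u))\|_2^2\,\rd u$ satisfies $\int_{|\omega_j|\leq\Omega}E_j(\omega_j)^2\,\rd\omega_j\leq e^{\beta\Omega+1/2}\ADB_{\vG,\vA_j}[\vsigma]$. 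This holds because $\gamma_{\mathsf M}(\omega_j)\geq e^{-\beta\Omega-1/2}$ there.
\item Cauchy--Schwarz over the simplex then gives the $c^{|\sR|}e^{O(\beta\Omega)}\max_j\ADB_{\vG,\vA_j}$ term.
\end{enumerate}
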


We remark to the interested reader that the proof is akin to the computation in \cite[Section D.4]{BCV25}, in the context of Markov properties for metastable states.

\begin{proof}
To curtail notation we denote, for an arbitrary operator $\vB$,
\begin{equation}
 \Delta(\vB)
 :=\vB\sqrt{\vsigma}-\sqrt{\vsigma}\,\vrho_{\vG}^{-1/2}\vB\vrho_{\vG}^{1/2},
 \qquad
 \|\vB\|_{\ADB}:=\sqrt{\ADB_{\vG,\vB}[\vsigma]}.
 \label{eq:differentiated-gradient-adb-defect}
\end{equation}
and suppress the $\vG$ subscripts. We note two apriori bounds:
\begin{enumerate}
    \item a scaled triangle inequality $\|\vA+\vB\|_{\ADB}\leq \sqrt{2}(\|\vA\|_{\ADB}+\|\vB\|_{\ADB})$, by expansion and the fact $\ADB$ depends quadratically in the jump operator. 
    \item an upper bound by the operator norm: $\|\vB\|_{\ADB}\leq 4\|\vB\|.$
\end{enumerate}
The latter bound follows since:
\begin{align}
    \|\vB\|_{\ADB}^2 &\leq 2\cdot \iint g(t)\gamma(\omega)\bigg( \|\widehat{\vB}(\omega, t)\sqrt{\vsigma}\|_2^2+\|\sqrt{\vsigma}\vrho_{\vG}^{-1/2}\widehat{\vB}(\omega, t)\vrho_{\vG}^{1/2}\|_2^2\bigg) \rd \omega \rd t
\end{align}
We note the second factor above admits the upper bound:
\begin{align}
     &\iint g(t)\gamma(\omega)e^{\beta\omega+\beta^2\sigma^2/2}\cdot  \tr\bigg[\vsigma\widehat{\vB}(\omega+\beta\sigma^2,t)
 \widehat{\vB}(\omega+\beta\sigma^2,t)^\dagger\bigg] \rd\omega \rd t \leq \|\vB\|^2
\end{align}
where we used $\gamma_{\mathsf M}(\nu)
e^{\beta\nu+\beta^2\sigma^2/2}\leq1$, the frequency shift identity under modular conjugation \cref{lem:bounds_imaginary_conjugation}, and Parseval's \cref{lem:OParseval}. The first factor is analogous. 

To proceed with the proof, we follow the approach in \cite[Section D.4]{BCV25}, wherein the individual single-site Pauli operators in $\vP$ are each separately expanded into their OFTs using \cref{lem:sumoverenergies}, and subsequently jointly truncated down to a simplex parametrized by $\Omega>0$:
\begin{equation}
\vP_{\leq\Omega}
 :=c_{\mathrm{rec}}^{|\sR|}
   \int_{\mathcal U_\Omega}
   \prod_{j=1}^{|\sR|}
   \widehat{\vA_j}(\omega_j)
   \rd\boldsymbol\omega,
 \qquad
 \mathcal U_\Omega
 :=\left\{\boldsymbol\omega\in\BR^{|\sR|}:
          \sum_{j=1}^{|\sR|}|\omega_j|\leq\Omega\right\},
 \label{eq:differentiated-gradient-total-frequency-simplex}
\end{equation}
and where we define $ c_{\mathrm{rec}}:=(2\sigma\sqrt{2\pi})^{-1/2}=(2\beta^{-1}\sqrt{2\pi})^{-1/2}$. Applying \cref{lem:bounds_oft,lem:norm_bound_pauli} to each
one-site factor gives (exactly vertabim to \cite[Lemma D.11]{BCV25})
\begin{equation}
 \|\vP-\vP_{\leq\Omega}\|
 \leq e^{c_{\beta, d}'|\sR|-\Omega/8d}\quad \Rightarrow \quad \|\vP-\vP_{\leq\Omega}\|_{\ADB} \leq e^{c_{\beta,\sd}|\sR|-\Omega/8\sd}
 \label{eq:differentiated-gradient-word-high-frequency}
\end{equation}
where $c_{\beta, \sd}\leq \poly(\beta\sd)$.\\

\noindent \textbf{Low-frequency components.} It remains to understand $\|\vP_{\leq\Omega}\|_{\ADB}$. For this purpose, we begin by re-writing the ADB error $ \|\vB\|_{\ADB}$ in terms of time evolutions of $\vB$:
\begin{align}
     \|\vB\|_{\ADB}^2&= \frac{1}{2\pi}\iint g(t) \gamma(\omega)\bigg\| \int_{\BR}e^{-\mathrm{i}\omega r}f_{\beta^{-1}}(r)
  \Delta\!\left(\vB_{\vG}(t+r)\right)\rd r\bigg\|^2_2\rd t\rd \omega \\
  &\leq \iint g(t)|f_{\beta^{-1}}(r)|^2 \cdot \bigg\|
  \Delta\!\left(\vB_{\vG}(t+r)\right)\bigg\|^2_2\rd t\rd r\\
  &\leq \int  \bigg\|\Delta\!\left(\vB_{\vG}(u)\right)\bigg\|^2_2 \cdot \int g(t)f_{\beta^{-1}}(u-t)^2 \rd t \rd u  \\ &\leq 2e^{\pi^2/2} \int g(u)  \big\|\Delta\!\left(\vB_{\vG}(u)\right)\big\|^2_2\rd u\label{eq:adb-to-time}
\end{align}
where we computed the convolution of the explicit filter functions $g, f_{\beta^{-1}}$.

Our goal next is to understand the above with $\vB:=\vP_{\leq\Omega}$ into terms which depend only on the single-site Pauli's. For this purpose, we nest the following product identity:
\begin{equation}
 \Delta(\vB_1\vB_2)
 =\vB_1\Delta(\vB_2)+\Delta(\vB_1)\vrho_{\vG}^{-1/2}\vB_2\vrho_{\vG}^{1/2}.
 \label{eq:differentiated-gradient-adb-product-rule}
\end{equation}
And further (to compress notation) we introduce for $j\in \sR$ the function:
\begin{equation}
 E_j(\omega_j)^2
 :=\int_{\BR}g(u)
 \left\|\Delta\!\left(
   \widehat{\vA_j}(\omega_j,u)
 \right)\right\|_2^2\rd u \quad \Rightarrow \quad \int_{\mathcal U_\Omega}E_j(\omega_j)\rd\boldsymbol\omega
 \leq e^{c_{\beta, d}''|\sR|+\Omega\beta}
 \sqrt{\ADB_{\vG,\vA_j}[\vsigma]},
 \label{eq:differentiated-gradient-word-cauchy-schwarz}
 \end{equation}
 where we use the bound on the simplex volume $\mathsf{vol}(\CU_{\Omega})\leq (2\Omega)^{|\sR|}/|\sR|!\leq (2\beta^{-1})^{|\sR|}e^{\Omega\beta}$. By applying the product rule \eqref{eq:differentiated-gradient-adb-product-rule} repeatedly to the time-integral expression in \eqref{eq:adb-to-time}, followed by the Cauchy-Schwarz inequality, and the norm bounds on the OFT and its conjugation \cref{lem:bounds_oft,lem:bounds_imaginary_conjugation} gives:
\begin{equation}
 \|\vP_{\leq\Omega}\|_{\ADB}
 \leq e^{d_{\beta, d}|\sR|+2\Omega\beta}
 \sum_{j=1}^{|\sR|}\sqrt{\ADB_{\vG,\vA_j}[\vsigma]}.
 \label{eq:differentiated-gradient-word-minkowski}
\end{equation}
The triangle inequality in $\|\cdot \|_{\ADB}$ from (1) above concludes the proof. 
\end{proof}

\begin{claim}
\label{claim:multi-site-sint}
Let $\beta\sd\geq 1.$
Let $\vG$ be a local Hamiltonian on a degree $d$ graph, $\vsigma$ a full rank state, and $\vP$ a Pauli operator supported on a region $\sR$ of qubits. Then, for any $c\geq 1$:
\begin{align}
\int_{-1/2}^{1/2}
   \|[\vP,\log \vsigma+\beta\vG]\|_{\vsigma,s}^2\rd s \leq  e^{c\cdot \poly(\beta\sd)\cdot |\sR|}
 \ADB_{\vG,\vP}[\vsigma]
 +e^{-c\cdot |\sR|}\cdot \|\log \vsigma+\beta\vG\|^2.
\end{align}
\end{claim}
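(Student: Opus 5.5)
We mirror the proof of \cref{lem:uniform-truncated-bare-reconstruction}, with the multi-site Pauli $\vP$ in place of the single-site jump and with the kernel $h_s(\omega)g^{\ADB}_s(t)$ of \cref{lem:metastable-adb-entropy-gradient} in place of the Fisher kernel $h_s g_s$. Write $\vX:=\log\vsigma+\beta\vG=\log\vsigma-\log\vrho_{\vG}$ up to an additive scalar, which does not affect commutators. By \cref{lem:metastable-adb-entropy-gradient},
\begin{equation}
\ADB_{\vG,\vP}[\vsigma]=\iiint h_s(\omega)g^{\ADB}_s(t)\,\|[\widehat{\vP}_{\vG}(\omega,t),\vX]\|_{\vsigma,s}^2\,\rd t\,\rd\omega\,\rd s .
\end{equation}
The goal is to reconstruct the bare $\vP$ from its OFTs. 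First expand $\vP$ at width $\sigma_1=1/(\sqrt2\beta)$ using \cref{lem:sumoverenergies}, and apply the Gaussian convolution identity \cref{lem:OFT_Gaussian_expansion} with $\sigma_2=\sigma_3=\beta^{-1}$. Then split into a low-frequency part $|\omega|\le\Omega$, which is a superposition of $\widehat{\vP}_{\vG,\sigma}(\omega,t)$ weighted by $f_\sigma(t)$, and a high-frequency remainder. Each $(\vsigma,s)$-norm obeys the triangle inequality, so for fixed $s$ we get $\|[\vP,\vX]\|_{\vsigma,s}\le(\text{low})+(\text{high})$. We then square and integrate over $s\in[-1/2,1/2]$.

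\textbf{Low frequency.} For fixed $s$, apply Cauchy--Schwarz over $(\omega,t)$ against the kernel $h_s(\omega)g_s^{\ADB}(t)$. This bounds the low-frequency contribution by
\begin{equation}
\Big(\iint_{|\omega|\le\Omega}\frac{f_\sigma(t)^2}{h_s(\omega)g^{\ADB}_s(t)}\,\rd t\,\rd\omega\Big)\cdot\iint h_s g_s^{\ADB}\,\|[\widehat{\vP}(\omega,t),\vX]\|_{\vsigma,s}^2 .
\end{equation}
Two facts control the weight. First, $h_s(\omega)\gtrsim e^{-O(\beta\Omega)}$ uniformly in $|s|\le1/2$. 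Second, $g_s^{\ADB}(t)$ decays only like $e^{-2\pi|t|/\beta}$, while $f_\sigma^2$ is Gaussian, so the $t$-integral converges.

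The delicate point is $s\to\pm1/2$. There $g^{\ADB}_s(t)\approx \frac{1}{\pi\beta}\cos(\pi s)/\cosh(2\pi t/\beta)$, which vanishes like $(1/2-|s|)$. I would not integrate all the way to the endpoints. Instead I restrict to $|s|\le 1/2-e^{-c|\sR|}$, where the weight is at most $e^{O(\beta\Omega)+c|\sR|}$. On the two end slabs of width $e^{-c|\sR|}$, I use the crude bound $\|[\vP,\vX]\|_{\vsigma,s}^2\le 4\|\vX\|^2$ from \cref{lemma:kms-to-s}. This produces the additive $e^{-c|\sR|}\|\vX\|^2$ term. I expect this endpoint degeneracy of the kernel to be the main obstacle.

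\textbf{High frequency.} Here we use \cref{lem:bounds_oft} together with \cref{lem:norm_bound_pauli}, or the multi-site product version already invoked verbatim in the proof of \cref{lem:metastable-multisite-to-single-site-adb}. These give $\|\int_{|\omega|>\Omega}\widehat{\vP}_{\vG,\sigma_1}(\omega)\rd\omega\|\le e^{\poly(\beta\sd)|\sR|-\Omega/8\sd}$. Combined with $\|[\vB,\vX]\|_{\vsigma,s}\le 2\|\vB\|\|\vX\|$, this yields an error $e^{\poly(\beta\sd)|\sR|-\Omega/8\sd}\|\vX\|^2$.

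\textbf{Choice of $\Omega$.} Take $\Omega=8\sd\,(c+\poly(\beta\sd))|\sR|$. The high-frequency term then falls below $e^{-c|\sR|}\|\vX\|^2$, and the low-frequency prefactor $e^{O(\beta\Omega)+c|\sR|}$ is of the form $e^{c\,\poly(\beta\sd)|\sR|}$. Collecting the low-frequency bound, the two endpoint slabs and the high-frequency tail, and absorbing constants (using $c\ge1$ and $\beta\sd\ge1$), gives the claim.
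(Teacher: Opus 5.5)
Your proposal is correct and follows essentially the same route as the paper. Both arguments excise endpoint slabs of width $\delta=e^{-\Theta(|\sR|)}$ using the crude bound from \cref{lemma:kms-to-s}, reconstruct $\vP$ from its Gaussian-filtered OFTs, discard the high-frequency tail via \cref{lem:bounds_oft,lem:norm_bound_pauli}, and control the low-frequency part by Cauchy--Schwarz against the ADB kernel, using $g_s^{\ADB}\geq \frac{2\delta}{\pi}g$ on the interior. The differences are cosmetic: you place $g_s^{\ADB}$ directly in the Cauchy--Schwarz weight instead of first comparing to $g$. You also need the slab width slightly below $e^{-c|\sR|}$ (e.g.\ $e^{-(c+3)|\sR|}$) to absorb the factor $8$ from the two slabs.
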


\begin{proof}

The challenge in relating the above to the ADB filter function is a divergence near $s\in \pm \frac{1}{2}$. Accordingly, fix $0<\delta\leq1/4$ and denote $I_\delta:=(-\frac{1}{2}+\delta, \frac{1}{2}-\delta)$. Near the endpoints we use \cref{lemma:kms-to-s}, and compute the apriori bound:
\begin{equation}
 \int_{[-1/2,1/2]\setminus I_\delta}
 \|[\vP,\log \vsigma+\beta\vG]\|_{\vsigma,s}^2\rd s
 \leq 8\delta\cdot  \|\log \vsigma+\beta\vG\|^2.
 \label{eq:differentiated-gradient-endpoint-bound}
\end{equation}
In turn, far from the endpoints ($s\in I_\delta$), we use the Gaussian convolution identity
\cref{lem:OFT_Gaussian_expansion}, followed by
\cref{lem:sumoverenergies}, to write
\begin{equation}
 \vP
 =\frac{1}{\sqrt{2\pi}}\iint_{\BR^2}
 e^{-\mathrm{i}\nu q}f_{\sigma}(q)
 \widehat{\vP}_{\vG}(\nu,q)\rd q\rd\nu.
 \label{eq:differentiated-gradient-gaussian-reconstruction}
\end{equation}
To proceed we introduce a frequency filter parameter $\Omega>0$.   \cref{lem:bounds_oft} and  \cref{lem:norm_bound_pauli} imply the above can be truncated to low-frequency components:
\begin{equation}
 \int_{|\nu|>\Omega}\int_{\BR}
 |f_\sigma(q)|\,\|\widehat{\vP}_{\vG}(\nu,q)\|
 \rd q\rd\nu
 \leq e^{c_{\beta, d}|\sR|-\Omega/8\sd}.
 \label{eq:differentiated-gradient-pauli-tail-prebound}
\end{equation}
 In turn, the low-frequency components can be expressed in terms of a filter-function akin to that in the approximate detailed-balanced condition. To suppress notation, here we denote $\vX:=\log \vsigma+\beta\vG$.
\begin{align}
 \|[\vP,\vX]\|_{\vsigma,s}^2
 &\leq
2\bigg| \iint_{|\nu|\leq \Omega} 
f_{\sigma}(q)
 \|[\widehat{\vP}_{\vG}(\nu,q),\vX]
 \|_{\vsigma,s}\rd q\rd\nu\bigg|^2
+e^{c_{\beta, \sd}|\sR|-\Omega/8\sd}\cdot \|\vX\|^2
 \label{eq:differentiated-gradient-bare-to-filtered} \\
& \leq 200\cdot e^{\beta\Omega} \iint_{\BR^2}
 h_s(\nu)g(q)
 \|[\widehat{\vP}_{\vG}(\nu,q),\vX]
 \|_{\vsigma,s}^2\rd q\rd\nu+e^{c_{\beta, \sd}|\sR|-\Omega/8\sd}\cdot \|\vX\|^2
\end{align}
Finally, the inequality \(\log((1+x)/(1-x))\geq2x\) implies that for $s\in I_\delta$, the filter function (\cref{lem:metastable-adb-entropy-gradient}) satisfies the bound $g_s^{\ADB}(q)
 \geq\frac{2\delta}{\pi}g(q)$ and therefore 
\begin{equation}
 \int_{I_\delta}
 \|[\vP,\log \vsigma+\beta\vG]\|_{\vsigma,s}^2\rd s
 \leq
 10^3\delta^{-1}e^{\beta\Omega}\ADB_{\vG,\vP}[\vsigma]
 +e^{c_{\beta, d}|\sR|-\Omega/8\sd}\cdot \|\log \vsigma+\beta\vG\|^2
 \label{eq:differentiated-gradient-pauli-adb-interior}
\end{equation}
Combining this with \eqref{eq:differentiated-gradient-endpoint-bound}
and choosing
$\delta=e^{-\Theta_{\beta, \sd}(|\sR|)}, \Omega = \Theta_{\beta, \sd}(|\sR|)$ gives the advertised bound.
\end{proof}

\subsubsection{Locally relating the KMS and Frobenius norms}
\label{section:kms-to-F}

Our next goal is to be able to relate the KMS and Frobenius norms of local operators, when defined over the KMS norm of a metastable state $\vsigma$. These are analogs of the ``faithfulness'' condition in \cref{lemma:faithfulness} from \cite{chen2025learning}. For this purpose, we rely on the following statement \cite[Lemma~D.5]{BCV25} that the norms of local operators $\vX$ are ``protected'' even if rotated by local Pauli's:

\begin{lemma}[Exposing the KMS norm
{\cite[Lemma~D.5]{BCV25}}]
\label{lem:metastable-exposing-kms} Let $\beta\sd\geq 1$. Let $\vG$ be a Hamiltonian of degree $\sd$, $\vsigma$ a full rank state, $\|\vX\|\leq 1$ an arbitrary operator and $\vP, \vQ$ Pauli operators supported on a region $\sR$ of qubits.

Then, there exists constants $1\geq \alpha_1\geq 1/\poly(\beta\sd)$ and $\alpha_2\leq e^{\poly(\beta\sd)}$ such that:
\begin{equation}
  \left|\Tr[\vP\vsigma\vQ\vX]\right|
  \le
  \alpha_2^{|\sR|}
  \left(
    \|\vX\|_{\vsigma}
    +\max_{a\in \CS_{\sR}^1}\sqrt{\mathsf{ADB}_{\vG, a}[\vsigma]}
  \right)^{\alpha_1}.
  \label{eq:metastable-exposing-kms}
\end{equation}
\end{lemma}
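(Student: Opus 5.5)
The plan is to reduce the trace $\Tr[\vP\vsigma\vQ\vX]$ to a KMS-type inner product by moving $\vP$ from the left of $\vsigma$ to the right, at the cost of a modular conjugation that is controlled by the approximate detailed-balance defect. Concretely, write $\Tr[\vP\vsigma\vQ\vX]=\Tr[\vsigma^{1/2}\vQ\vX\vP\vsigma^{1/2}]$ up to the commutation error of moving $\vP$ through $\vsigma^{1/2}$. We cannot commute directly, because $\vsigma$ is not a Gibbs state. Instead we use the defect map $\Delta(\vB)=\vB\sqrt{\vsigma}-\sqrt{\vsigma}\vrho_{\vG}^{-1/2}\vB\vrho_{\vG}^{1/2}$ from the proof of \cref{lem:metastable-multisite-to-single-site-adb}. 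This map says that, up to error, $\vB\sqrt{\vsigma}\approx\sqrt{\vsigma}\,\vB_{\vG}(i\beta/2)$, where $\vB_{\vG}(i\beta/2)$ denotes the imaginary-time conjugate.

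Imaginary-time conjugates of bare Paulis are only bounded at high temperature (\cref{lem:norm_bound_pauli}), so we first expand $\vP$ into its operator Fourier transform via \cref{lem:sumoverenergies} and truncate to frequencies $|\nu|\le\Omega$. \cref{lem:bounds_oft} and \cref{lem:norm_bound_pauli} bound the tail by $e^{c|\sR|-\Omega/8\sd}$. For the low-frequency part, \cref{lem:bounds_imaginary_conjugation} gives $\|\vrho_{\vG}^{-1/2}\widehat{\vP}(\nu)\vrho_{\vG}^{1/2}\|\le e^{\beta\Omega/2}\cdot\poly$. This yields
\begin{equation*}
\Tr[\vP\vsigma\vQ\vX]\approx \Tr\big[\sqrt{\vsigma}\,\tilde{\vP}_{\le\Omega}\sqrt{\vsigma}\,\vQ\vX\big]+\text{(defect term)}+\text{(tail)},
\end{equation*}
where $\tilde{\vP}_{\le\Omega}$ is the modularly conjugated truncation, with norm at most $e^{\beta\Omega+c|\sR|}$.

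The main term is then bounded by Cauchy--Schwarz in the $s=0$ weighted inner product. Writing it as $\langle (\tilde{\vP}_{\le\Omega})^\dagger\text{-twisted }\vQ^\dagger,\vX\rangle$-type pairing and applying \cref{lem:kms-holder} gives a bound of the form $e^{\beta\Omega+c|\sR|}\|\vX\|_{\vsigma}$. If needed, the same manoeuvre is applied to $\vQ$ to place it symmetrically.

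The defect term is bounded in Frobenius norm by $\|\Delta(\vP_{\le\Omega})\|_2\cdot\|\vQ\vX\sqrt{\vsigma}\|_2\le\|\Delta(\vP_{\le\Omega})\|_2$. To control $\|\Delta(\vP_{\le\Omega})\|_2$ by single-site quantities we use the product rule \eqref{eq:differentiated-gradient-adb-product-rule} and the simplex truncation from the proof of \cref{lem:metastable-multisite-to-single-site-adb}. The main obstacle is converting this time-averaged ADB control into a pointwise bound at $t=0$. ADB only controls $\Delta(\widehat{\vA}(\omega,t))$ integrated against $g(t)\gamma(\omega)$. I would first average the identity over $t$: replace $\vP$ by its Gaussian-smeared version from \eqref{eq:differentiated-gradient-gaussian-reconstruction}, so that every appearance is already integrated against $f_\sigma(q)$. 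Cauchy--Schwarz against $g$ then loses only $e^{O(\beta\Omega)}$. The result is a bound $e^{c|\sR|+O(\beta\Omega)}\max_a\sqrt{\ADB_{\vG,a}[\vsigma]}$, invoking \cref{lem:metastable-multisite-to-single-site-adb} to pass from $\vP$ to single-site jumps. That step adds a further $e^{-\Omega/8\sd}$ error under a square root.

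Collecting terms, with $E:=\|\vX\|_{\vsigma}+\max_a\sqrt{\ADB_{\vG,a}[\vsigma]}$, we get
\begin{equation*}
|\Tr[\vP\vsigma\vQ\vX]|\le e^{c|\sR|}\big(e^{C\beta\Omega}E+e^{-\Omega/(16\sd)}\big).
\end{equation*}
Optimizing over $\Omega$ by taking $\Omega\approx \log(1/E)/(C\beta+1/16\sd)$ balances the two terms. This produces $E^{\alpha_1}$ with $\alpha_1=\Theta((\beta\sd)^{-2})\ge 1/\poly(\beta\sd)$ and $\alpha_2=e^{\poly(\beta\sd)}$. When $E\ge1$, the trivial bound $|\Tr[\vP\vsigma\vQ\vX]|\le1$ suffices.
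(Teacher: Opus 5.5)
The paper gives no proof of this lemma; it imports it from \cite{BCV25}, so there is nothing in-paper to compare against. Your outline is sound. Your Gaussian-smearing device also handles what you call the main obstacle. Expanding $\vP_{\le\Omega}$ via \eqref{eq:differentiated-gradient-gaussian-reconstruction} and applying Cauchy--Schwarz against $g(q)\gamma_{\mathsf M}(\nu)$ gives $\|\Delta(\vP_{\le\Omega})\|_2\lesssim\sqrt{\beta\Omega}\,e^{\beta\Omega/2}\sqrt{\ADB_{\vG,\vP}[\vsigma]}$, because $\int f_\sigma^2/g<\infty$ and $1/\gamma_{\mathsf M}\le e^{\beta\Omega+1/2}$ on $|\nu|\le\Omega$. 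Two steps, however, fail as written.

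\textbf{The move of $\vQ$ is not optional.} After moving only $\vP$, the main term is $\langle\tilde{\vP}_{\le\Omega}^\dagger,\vQ\vX\rangle_{\vsigma}$, with $\tilde{\vP}_{\le\Omega}:=\vrho_{\vG}^{-1/2}\vP_{\le\Omega}\vrho_{\vG}^{1/2}$. Applying \cref{lem:kms-holder} with $\vsigma$ in place of $\vrho$ would need $\|\vsigma^{1/4}\vQ\vsigma^{-1/4}\|$, which is uncontrolled (possibly huge) for a non-Gibbs $\vsigma$. You must instead use the adjoint of the defect identity. The symmetric frequency truncation of a Hermitian Pauli is Hermitian, so $\sqrt{\vsigma}\,\vQ_{\le\Omega}=\tilde{\vQ}_{\le\Omega}\sqrt{\vsigma}+\Delta(\vQ_{\le\Omega})^\dagger$ with $\tilde{\vQ}_{\le\Omega}:=\vrho_{\vG}^{1/2}\vQ_{\le\Omega}\vrho_{\vG}^{-1/2}$.
\begin{itemize}
\item The main term becomes $\Tr[\tilde{\vP}_{\le\Omega}\tilde{\vQ}_{\le\Omega}\sqrt{\vsigma}\vX\sqrt{\vsigma}]$. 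Plain Cauchy--Schwarz and \cref{lemma:kms-to-s} bound it by $\|\tilde{\vP}_{\le\Omega}\tilde{\vQ}_{\le\Omega}\|\,\|\vX\|_{\vsigma}$.
\item The new cross term $\Tr[\tilde{\vP}_{\le\Omega}\Delta(\vQ_{\le\Omega})^\dagger\vX\sqrt{\vsigma}]$ carries the prefactor $\|\tilde{\vP}_{\le\Omega}\|$, which is $e^{\beta\Omega/2}$ up to polynomial factors.
\end{itemize}

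\textbf{The collected bound needs two frequency cutoffs.} Your bound $e^{c|\sR|}(e^{C\beta\Omega}E+e^{-\Omega/16\sd})$ does not follow with a single $\Omega$. The defect terms carry $e^{O(\beta\Omega)}$ in front of $\sqrt{\ADB_{\vG,\vP}[\vsigma]}$ and $\sqrt{\ADB_{\vG,\vQ}[\vsigma]}$. If you invoke \cref{lem:metastable-multisite-to-single-site-adb} at the same $\Omega$, its tail enters as $e^{O(\beta\Omega)-\Omega/16\sd}$. Since $\beta\sd\ge1$, this grows with $\Omega$, and your display silently drops the prefactor.

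The fix is to use that lemma's free parameter independently, e.g.\ $\Omega'=16\sd\beta\Omega+2\Omega$. This pushes its tail contribution down to $\poly(\beta\Omega)\,e^{-\Omega/8\sd}$. The price is a factor $e^{2\beta\Omega'}=e^{O(\sd\beta^2\Omega)}$ in front of $\max_a\sqrt{\ADB_{\vG,a}[\vsigma]}$. Balancing $e^{C\sd\beta^2\Omega}E$ against $e^{-\Omega/8\sd}$ then gives $\alpha_1=\Theta((\beta\sd)^{-2})$. That is the exponent you state, although your own single-$\Omega$ display would have produced $\Theta((\beta\sd)^{-1})$.

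A minor point: $\Tr[\vP\vsigma\vQ\vX]=\Tr[\vsigma^{1/2}\vQ\vX\vP\vsigma^{1/2}]$ is exact by cyclicity. The only error arises when you replace $\vP_{\le\Omega}\sqrt{\vsigma}$ by $\sqrt{\vsigma}\,\tilde{\vP}_{\le\Omega}$. With these repairs the argument closes.
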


We are now in a position to state the last new idea of this subsection. 

\begin{lemma}[Locally relating the $\vrho$ and Frobenius norms]
\label{lem:metastab-faithfulness} In the context of \cref{lem:metastable-exposing-kms}, assume an operator $\vY$ is supported on the patch $\sR$ of qubits. Then, there exists a constant $c\leq e^{\poly(\beta\sd)}$ such that:
\begin{align}
   c^{-\sq|\sR|}\|\vY\|_{\vtau_{\sR}}
  \le
  \|\vY\|_{\vsigma}
  +\max_{a\in \CS_{\sR}^1}\sqrt{\mathsf{ADB}_{\vG, a}[\vsigma]}\cdot \|\vY\|.
\end{align}
where $\vtau_\sR$ is the maximally mixed state on $\sR$.
\end{lemma}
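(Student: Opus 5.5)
The plan is to expand $\vY$ in the Pauli basis on $\sR$ and to use \cref{lem:metastable-exposing-kms} to read off each Pauli coefficient from KMS-type quantities weighted by $\vsigma$. Write $\vY=\sum_{\vQ\in\CS_\sR} y_{\vQ}\vQ$, so that $\|\vY\|_{\vtau_\sR}^2=\sum_{\vQ}|y_{\vQ}|^2$. There are at most $4^{\sq|\sR|}$ terms, so it suffices to bound each $|y_{\vQ}|$ by $c^{\sq|\sR|}$ times the right-hand side. The coefficient is $y_{\vQ}=2^{-\sq|\sR|}\tr_{\sR}[\vQ^\dagger\vY]$. The task is to convert this maximally-mixed trace into a $\vsigma$-weighted trace of the form $\tr[\vP\vsigma\vQ'\vX]$ with $\vX$ built from $\vY$.

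The key identity is the Pauli twirl on $\sR$: for any operator $\vZ$ on the full system, $4^{-\sq|\sR|}\sum_{\vP\in\CS_\sR}\vP\vZ\vP^\dagger=\vtau_\sR\otimes\tr_\sR[\vZ]$. Apply it with $\vZ=\vsigma$ and pair against $\vQ^\dagger\vY$, which is supported on $\sR$. This gives
\begin{equation}
4^{-\sq|\sR|}\sum_{\vP\in\CS_\sR}\tr[\vP\vsigma\vP^\dagger\vQ^\dagger\vY]=\tr[(\vtau_\sR\otimes\vsigma_{\overline{\sR}})\vQ^\dagger\vY]=2^{-\sq|\sR|}\tr_\sR[\vQ^\dagger\vY]=y_{\vQ}.
\end{equation}
Hence $|y_{\vQ}|\le\max_{\vP}|\tr[\vP\vsigma\vP^\dagger(\vQ^\dagger\vY)]|$.

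We then apply \cref{lem:metastable-exposing-kms} with $\vX:=\vQ^\dagger\vY/\|\vY\|$, which has operator norm at most $1$. By homogeneity, after multiplying back by $\|\vY\|$, we get
\begin{equation}
|y_{\vQ}|\le\alpha_2^{|\sR|}\|\vY\|^{1-\alpha_1}\Big(\|\vQ^\dagger\vY\|_\vsigma+\max_a\sqrt{\ADB_{\vG,a}[\vsigma]}\,\|\vY\|\Big)^{\alpha_1}.
\end{equation}
To remove the extra $\vQ^\dagger$ in the KMS norm, use \cref{lem:kms-holder}: $\|\vQ^\dagger\vY\|_\vsigma\le\|\vsigma^{1/4}\vQ^\dagger\vsigma^{-1/4}\|\,\|\vY\|_\vsigma$. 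This is the first place I expect trouble. The prefactor is not controlled for a general (non-Gibbs) $\vsigma$, so I would avoid it. I would instead apply the twirl so that $\vQ^\dagger$ is absorbed into the outer Pauli. Concretely, write $\tr[\vP\vsigma\vP^\dagger\vQ^\dagger\vY]=\tr[\vP\vsigma(\vQ\vP)^\dagger\vY]$, which is directly of the form $\tr[\vP\vsigma\vQ'\vX]$ with $\vX=\vY/\|\vY\|$. This is exactly why the lemma allows two different Paulis $\vP,\vQ$. With this choice, $\vX$ is $\vY$ itself and no Hölder step is needed.

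The main obstacle is the exponent $\alpha_1<1$: the bound from \cref{lem:metastable-exposing-kms} is sublinear, whereas the statement is linear in $\|\vY\|_\vsigma$ and in $\sqrt{\ADB}\,\|\vY\|$. If the cited lemma genuinely only gives a power $\alpha_1$, then the linear statement should come from its internal mechanism rather than from the lemma as a black box. That mechanism is the Gaussian OFT expansion of $\vP$ and $\vQ$, in which the low-frequency part is controlled by Cauchy–Schwarz in the $\vsigma$-norm, linearly in $\|\vY\|_\vsigma$ and $\sqrt{\ADB}\|\vY\|$. The high-frequency tail is $e^{-\Omega/8\sd}\|\vY\|$. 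So I would first try to track that the $\alpha_1$ power arises only from optimizing $\Omega$, and instead keep $\Omega$ free. Alternatively, I would try to obtain linearity by rescaling, comparing $\|\vY\|_{\vtau}$ with $\|\vY\|$ through the dimension factor $2^{\sq|\sR|}$, which is absorbed into $c^{\sq|\sR|}$. Summing the $4^{\sq|\sR|}$ coefficients then yields the claim with $c\le e^{\poly(\beta\sd)}$.
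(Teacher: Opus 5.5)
Your setup is sound and runs parallel to the paper's, with a mildly different twirl. The paper uses the second-moment identity $\sum_{\vP,\vQ}|\Tr[(\vP\vsigma_\sR\vQ)^\dagger\vY]|^2=2^{3\sq|\sR|}\Tr(\vsigma_\sR^2)\|\vY\|_{\vtau_\sR}^2$. Combined with the purity bound $\Tr(\vsigma_\sR^2)\ge 2^{-\sq|\sR|}$, this produces one pair with $|\Tr[\vP\vsigma_\sR\vQ\vY]|\ge 2^{-\sq|\sR|}\|\vY\|_{\vtau_\sR}$. Your first-moment twirl instead extracts each Pauli coefficient $y_\vQ$ exactly and avoids the purity bound, and summing over $\vQ$ loses the same $2^{\sq|\sR|}$. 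Your rewriting $\vP\vsigma\vP^\dagger\vQ^\dagger=\vP\vsigma(\vQ\vP)^\dagger$ correctly puts the expression in the form $\Tr[\vP\vsigma\vQ'\vX]$ with $\vX=\vY/\|\vY\|$, and so avoids the uncontrolled H\"older prefactor.

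Where you go astray is in treating $\alpha_1<1$ as a genuine obstacle that might force you to reopen the proof of \cref{lem:metastable-exposing-kms}. Its internals are not available here, and the detour is unnecessary. Your second ``alternative'' is exactly the paper's argument and should be the main line rather than a fallback.

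Set $M:=\|\vY\|_{\vsigma}+\max_{a}\sqrt{\ADB_{\vG,a}[\vsigma]}\,\|\vY\|$. Your bound gives
\[
\|\vY\|_{\vtau_\sR}\le 2^{\sq|\sR|}\max_{\vQ}|y_\vQ|\le 2^{\sq|\sR|}\alpha_2^{|\sR|}\|\vY\|^{1-\alpha_1}M^{\alpha_1}.
\]
Since $1-\alpha_1\ge 0$ and $\|\vY\|\le 2^{\sq|\sR|}\|\vY\|_{\vtau_\sR}$, you may replace $\|\vY\|^{1-\alpha_1}$ by $(2^{\sq|\sR|}\|\vY\|_{\vtau_\sR})^{1-\alpha_1}$. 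Then divide by $\|\vY\|_{\vtau_\sR}^{1-\alpha_1}$ and raise to the power $1/\alpha_1$. This yields $\|\vY\|_{\vtau_\sR}\le (4^{\sq}\alpha_2)^{|\sR|/\alpha_1}M\le c^{\sq|\sR|}M$ with $c:=(4\alpha_2)^{1/\alpha_1}$.

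The point you should make explicit is why $c$ stays in the claimed class. Since $\alpha_1^{-1}\le\poly(\beta\sd)$ and $\alpha_2\le e^{\poly(\beta\sd)}$, raising to the power $1/\alpha_1$ inflates the exponent only polynomially, so $c\le e^{\poly(\beta\sd)}$.
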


\begin{proof}
The observation is that by the Pauli twirl, 
\begin{equation}
  \sum_{\vP,\vQ}
  \left|
    \Tr[(\vP\vsigma_{\sR}\vQ)^\dagger\vY]
  \right|^2
  =
  2^{3\sq\cdot |\sR|}\cdot \Tr(\vsigma_{\sR}^2)\cdot \|\vY\|_{\vtau_\sR}^2.
\end{equation}
Since the purity of the RDM satisfies $\Tr(\vsigma_{\sR}^2)\geq 2^{-\sq|\sR|}$, there exists a pair of Pauli's $\vP, \vQ$ on $\sR$ such that:
\begin{equation}
   2^{-\sq|\sR|}\|\vY\|_{\vtau_\sR}\leq   |\Tr[\vP\vsigma_{\sR}\vQ\vY]|
  \leq \|\vY\|\cdot  \alpha_2^{|\sR|}
  \left(
    \frac{\|\vY\|_{\vsigma}}{\|\vY\|}
    +\max_{a\in \CS_{\sR}^1}\sqrt{\mathsf{ADB}_{\vG, a}[\vsigma]}
  \right)^{\alpha_1}.
  \end{equation}
  where we simply applied \cref{lem:metastable-exposing-kms} to
\(\vX=\vY/\|\vY\|\). Taking the $\alpha_1$ root on both sides and the chain $\|\vY\|_{\vtau_\sR}\leq \|\vY\|\leq 2^{\sq|\sR|}\cdot \|\vY\|_{\vtau_\sR}$ then gives the advertised statement. 
\end{proof}

\subsection{Correctness}
\label{section:metastab-correct}

We next proceed with the correctness argument. We refer the reader back to \cref{section:gibbs-alg} for a description of the parameters in \cref{alg:gibbs-learner} and \cref{alg:one-round}.

\begin{lemma}[Correctness of a single iteration]
\label{lemma:metastab-correctness}
Fix $\beta>\sd^{-1}$. There exists choices of
\(\ell,\kappa,c_{\mathrm{meas}}, c_{\mathrm{noise}} >0\), and $R^{(r)}:=2\ell+\poly(\beta\sd)\cdot \log \frac{1}{\eta_r}$ as a function of $\beta$ (and the Hamiltonian parameters) s.t. if $\vsigma$ is $\epsilon$ locally metastable w.r.t. $\vH$, every estimate in \cref{alg:one-round} is within the error tolerance $c_{\mathrm{meas}}\cdot \eta_r$, and the current guess Hamiltonian $\vG^{(r)}:=\sum g_\gamma^{(r)}\vP_\gamma$ satisfies:
\begin{equation}
  \max_{\gamma\in\Gamma}|g_\gamma^{(r)}-h_\gamma|
  \leq\eta_r
  \qquad
  \epsilon\leq c_{\mathrm{noise}}\cdot \eta_r.
  \label{eq:metastable-round-invariant}
\end{equation}
Then the output $\vG^{(r+1)}:=\sum g_\gamma^{(r+1)}\vP_\gamma$ of the $r$th iteration satisfies:
\begin{equation}
  \max_{\gamma\in\Gamma}|g_\gamma^{(r+1)}-h_\gamma|
  \leq\eta_r/2.
  \label{eq:metastable-one-round-conclusion}
\end{equation}
\end{lemma}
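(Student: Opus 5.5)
The plan is to mirror the proof of \cref{lemma:gibbs-one-round}, replacing the two places where Gibbs-specific structure was used: exact completeness, and the Fisher-information convexity of \cref{lemma:fisher-convexity}. Fix a site $i$ and define the completed candidates $\vK^{(\vU)}$ as in \eqref{eq:completed-candidate}.

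\textbf{Step 1: small test value for the idealized output.} Steps 1--3 of \cref{lemma:gibbs-one-round} used only the Lipschitz and locality bounds (\cref{lemma:lipschitz}, \cref{lemma:stest-local}). These hold for an arbitrary state, so they apply verbatim with $\vrho_{\vH}$ replaced by $\vsigma$. The only change is that completeness is now approximate (\cref{rmk:metastab-complete}), $|\mathsf{Test}_i[\vP_\gamma,\vH;\vsigma]|\le\epsilon\le c_{\mathrm{noise}}\eta_r$. This yields
\[
\max_{\gamma\in\CB[i,\ell]}|\mathsf{Test}_i[\vP_\gamma,\vK^{(\vV)};\vsigma]|\le\eta_r\theta,
\]
with $\theta$ now also containing $c_{\mathrm{noise}}$. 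By linearity, $\mathsf{Test}_i[\vH-\vK^{(\vV)},\vK^{(\vV)};\vsigma]\le 2\eta_r^2\theta|\CB[i,\ell]|$.

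\textbf{Step 2: the main obstacle, an approximate gradient-square identity for the test.} For a Gibbs sample, the test with observable $\vH-\vK$ equals the Fisher information exactly. For $\vsigma$ I would instead prove
\[
\mathsf{Test}_i[\vX,\vK;\vsigma]\gtrsim \beta\,\QF_{i}(\vX,\vK;\vsigma)-\text{error},\qquad \vX:=\vH-\vK^{(\vV)}.
\]
The idea is to write $\mathsf{Test}$ via the Dirichlet-form manipulation of \cref{lem:Dicirchlet}, performed with $\vsigma$ in place of $\vrho_{\vK}$. The discrepancy is the defect $\Delta(\hat{\vA}_{\vK}(\omega,t))$, which I would control by Cauchy--Schwarz against $\sqrt{\ADB_{\vK,\vA}[\vsigma]}\cdot\|[\hat\vA,\vX]\|$. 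The ADB is taken with respect to $\vK$, but $\vsigma$ is only metastable for $\vH$.

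So the plan is to bound $\ADB_{\vH,\vA}[\vsigma]$ first. Metastability gives $\FI_{\vH,a}(\vsigma\|\vrho_{\vH})\lesssim\epsilon\,\|\log\vsigma+\beta\vH\|$, and \cref{lem:metastable-FI-to-ADB} then converts this into an ADB bound. I am worried about the norm of the log-likelihood here. I would handle it by a full-rank regularization with $\poly\log$ loss, or by invoking the local form from \cite[Section D]{BCV25}. Next I transfer the ADB from $\vH$ to $\vK$ via the Lipschitz bound \eqref{eq:lipschitz-overview}, since $\vK-\vH$ is $O(\eta_r)$ and supported near $i$. The resulting error is $O(\eta_r\cdot\poly(\beta\sd))\cdot\|\vX\|$, with $\|\vX\|\lesssim\eta_r|\CB[i,\ell]|$. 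Consequently every error term is $O(\eta_r^2)$ times a small constant (from $c_{\mathrm{noise}}$) or times $e^{-\Omega(\ell)}$.

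\textbf{Step 3: invert to coefficients.} Apply the first part of \cref{lem:uniform-truncated-bare-reconstruction} with state $\vsigma$ to get
\[
\|[\vA^a,\vX]\|_{\vsigma}^2\le c_1\bigl(e^\Omega\QF_a+\eta_r^2e^{-c_2\Omega}\bigr).
\]
Then use the metastable faithfulness \cref{lem:metastab-faithfulness} on the patch $\sR=\CB[i,0]$ to pass to $\|\cdot\|_{\vtau_\sR}$, paying an additive $\sqrt{\ADB}\,\|[\vA,\vX]\|=O(\sqrt{\epsilon}\,\eta_r)$ after the ADB estimate. The multi-site ADB is reduced to single sites via \cref{lem:metastable-multisite-to-single-site-adb} if needed. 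The Pauli identity \eqref{eq:can-one-site-coefficient-identity} then gives $\max_{\gamma\ni i}|g^{(r+1)}_\gamma-h_\gamma|$.

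Constants are chosen in the order of \cref{lemma:gibbs-one-round}: first $\Omega=\poly(\beta\sd)$, then $\ell$, then $\kappa$ and $c_{\mathrm{meas}}$, and finally $c_{\mathrm{noise}}=e^{-\poly(\beta\sd)}$ small enough that the $\epsilon$-dependent terms contribute at most $\eta_r/10$. This gives the claimed $\eta_r/2$.
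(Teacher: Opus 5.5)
Your Steps 1 and 3 match the paper. Step 1 is the bound \eqref{eq:stat-test-error-meta}, and Step 3 is Step 3 in the proof of \cref{lem:metastable-convexity}. Step 2, however, has a genuine gap.

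\textbf{Why Step 2 fails.} Bounding the discrepancy by Cauchy--Schwarz against $\sqrt{\ADB_{\vK,\vA}[\vsigma]}$ cannot isolate the signal. The failure of $\vsigma$ to be detailed-balanced for $\vK=\vH-\vX$ is exactly where $\beta\QF(\vX)$ comes from. Already for $\vsigma=\vrho_{\vH}$, \cref{lem:metastable-adb-entropy-gradient} gives $\ADB_{\vK,\vA}[\vrho_{\vH}]=\beta^2\int h_s g_s^{\ADB}\|[\hat\vA_{\vK},\vX]\|^2_{\vrho_{\vH},s}$. This is a quadratic form in $\vX$ of the same type as $\beta^2\QF$. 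Your inequality therefore has the shape $\mathsf{Test}\geq\beta\QF-O(\beta\sqrt{\QF}\,\|\vX\|)$, which is vacuous even at $\epsilon=0$, where in fact $\mathsf{Test}=\beta\QF$ exactly.

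Concretely, your own estimate $O(\eta_r\poly(\beta\sd))\cdot\|\vX\|$ for the Lipschitz transfer of ADB from $\vH$ to $\vK$ is of order $\eta_r^2\poly(\beta\sd)|\CB[i,\ell]|$. It has no factor of $c_{\mathrm{noise}}$. It also has no $e^{-\Omega(\ell)}$ decay, because $\vK-\vH$ is supported inside $\CB[i,\ell]$, so the bound grows with $\ell$ (cf.\ \cref{lemma:lipschitz-outside}). It therefore swamps the $e^{-\poly(\beta\sd)}\eta_r^2$ signal, and your claim that every error is $O(\eta_r^2)$ times a small constant is false for this term.

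There is a second, independent problem. The remaining piece $\sqrt{\ADB_{\vH}}\,\|\vX\|\sim\sqrt{\epsilon}\,\eta_r$ is first order in $\vX$. Handling it this way would require $\epsilon\lesssim\eta_r^2$, not the hypothesis $\epsilon\leq c_{\mathrm{noise}}\eta_r$.

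\textbf{The missing idea.} What you need is the exact first-order/second-order split of \cref{lem:metastable-differentiated-entropy-gradient}, taken relative to the Hamiltonian for which $\vsigma$ is metastable. Expand $\mathsf{Test}[\vU,\vH+\vU;\vsigma]-\mathsf{Test}[\vU,\vH;\vsigma]$ via the gradient-flow identity \eqref{eq:test-expansion}, and split $\log\vsigma+\beta(\vH+\vU)=(\log\vsigma+\beta\vH)+\beta\vU$.
\begin{itemize}
\item The $\beta\vU$ piece is exactly $-\beta\QF(\vU,\vH+\vU;\vsigma)$.
\item The first-order piece is $\mathsf{Test}[\vU,\vH;\vsigma]$. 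It is bounded linearly by $\epsilon\|\vU\|$ through approximate completeness (\cref{rmk:metastab-complete}), not through ADB.
\item Every remaining term contains a factor $[\cdot,\vU]$, a difference of size $O(\|\vU\|)$, and a gradient of $\log\vsigma+\beta\vH$. The difference is either $\hat\vA_{\vH+\vU}-\hat\vA_{\vH}$ or the difference of modular conjugations.
\end{itemize}
Each remaining term is thus $O(\|\vU\|^2)$ times an ADB quantity with respect to $\vH$ itself. For the OFT difference this is single-site ADB. For the conjugation difference it comes from the shell decomposition together with \cref{lem:metastable-multisite-to-single-site-adb} and \cref{claim:multi-site-sint}, plus a far-shell tail involving $\|\log\vsigma+\beta\vH\|$. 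That tail is why the paper first truncates to radius $\sL$ and smooths to $\vsigma^\lambda_\sA$. Only with this structure do the errors become $\|\vU\|^2$ times a small quantity, with ADB needed merely below an absolute constant.
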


The crux of the correctness argument will be to establish a convexity property for the stationarity test on metastable states. We do so by understanding its behavior when locally perturbing the underlying Hamiltonian around $\vH$. The new ingredient (relative to the Gibbs state proof in \cref{section:H-correct}) is the following lemma:

\begin{lemma}[Convexity of the stationarity test]
\label{lem:metastable-convexity}
In the context of \cref{lemma:metastab-correctness}, fix $\ell\in \BZ^+$ and assume $\epsilon\leq e^{-\poly(\ell, \beta\cdot \sd)}$. Then, for any $\vU = \sum_{\gamma\in \CB[i, \ell]} u_\gamma\cdot \vP_\gamma$ where $|u_\gamma|\leq \eta \leq 2$:
\begin{align}
  \bigg|\mathsf{Test}_{i}[\vU,\vH+\vU;\vsigma] -  \mathsf{Test}_{i}[\vU,\vH;\vsigma]\bigg|
    \geq\,
  e^{-\poly(\beta\sd)}
  \left(
    \max_{\gamma\ni i}|u_\gamma|^2-\frac{1}{32}\cdot \eta^2
  \right).
  \label{eq:metastable-centered-coercivity}
\end{align}
\end{lemma}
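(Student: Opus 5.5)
We plan to express the difference of tests as a weighted norm of commutators with the perturbation $\vU$, up to errors controlled by approximate detailed balance, and then invert this norm as in \cref{lemma:fisher-convexity}. Set $\vK:=\vH+\vU$. The quantity $\mathsf{Test}_i[\vU,\vK;\vsigma]-\mathsf{Test}_i[\vU,\vH;\vsigma]=\tr[\vsigma(\CL^\dagger_{\vK,i}-\CL^\dagger_{\vH,i})[\vU]]$ is the analogue, on the metastable side, of the Gibbs identity $\FI=\beta\,\mathsf{Test}[\vH-\vG,\vG;\vrho_\vH]$ in \cref{rmk:stat-fisher-intro}. Since $\vU=(\beta\vK+\log\vrho_{\vK}\cdot 0)$ is awkward to use directly, we rewrite it through $\vX_{\vK}:=\log\vsigma+\beta\vK$ and $\vX_{\vH}:=\log\vsigma+\beta\vH$, so that $\beta\vU=\vX_{\vK}-\vX_{\vH}$.

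\textbf{Step 1 (test as gradient square up to ADB).} We expand $\tr[\CL_{\vK,a}[\vsigma]\,\vY]$ for a Hermitian $\vY$ using the Dirichlet-form structure from the proof of \cite[Theorem B.1]{BCV25}. The Fisher information of \cref{def:relative-fisher-information} is the case $\vY=\log\vsigma-\log\vrho_{\vK}$. For general $\vY$, the same computation should give
\begin{align}
 -\tr[\CL_{\vK,a}[\vsigma]\vY]=\iiint\langle[\hat\vA_{\vK}(\omega,t),\vY],[\hat\vA_{\vK}(\omega,t),\vX_{\vK}]\rangle_{\vsigma,s}\,h_sg_s+\text{(ADB defect)},
\end{align}
where we bound the defect by Cauchy--Schwarz against $\sqrt{\ADB_{\vK,a}[\vsigma]}\,\|\vY\|$. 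We then take $\vY=\vU$ and subtract the analogous expression for $\vH$. After the Lipschitz bound \cref{lemma:lipschitz} moves $\hat\vA_{\vK}$ to $\hat\vA_{\vH}$ at cost $O(\eta^2\poly(\ell))$ times quasi-local factors, the difference should be $\beta^{-1}\QF_a(\beta\vU,\vH;\vsigma)$ plus errors. Some of these error terms are only $O(\eta\cdot\text{stuff})$, so \textbf{this step is the one we expect to be the main obstacle}. To get $\eta^2$ scaling we would keep the kernel at $\vK$ throughout and interpolate $\vH_\tau=\vH+\tau\vU$. Differentiating in $\tau$ gives $\frac{\rd}{\rd\tau}\mathsf{Test}_i[\vU,\vH_\tau;\vsigma]$, which is quadratic in $\vU$ with leading part $\beta\,\QF_a(\vU,\vH_\tau;\vsigma)$. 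The remaining terms involve $\vX_{\vH_\tau}$ and are controlled by $\ADB_{\vH_\tau}$.

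\textbf{Step 2 (ADB along the path is small).} From $\epsilon$-metastability we use \cref{lem:metastable-FI-to-ADB} with the Fisher information at $\vH$. This Fisher information is at most $\epsilon\|\log\vsigma-\log\vrho_\vH\|$, and we would use the local truncation of the Markov property of \cite{BCV25} to replace this with a local quantity. For $\vH_\tau$ we use the triangle inequality in $\|\cdot\|_{\ADB}$ together with the fact that the ADB changes by $O(\eta\,\poly(\ell))$ under a local perturbation of size $\eta$. The key structural fact is that $\log\vsigma+\beta\vH$ is nearly commuting with local jumps, by \cref{claim:multi-site-sint}, which controls the cross terms. The assumption $\epsilon\le e^{-\poly(\ell,\beta\sd)}$ then makes every $\epsilon$-dependent error smaller than $e^{-\poly(\beta\sd)}\eta^2/64$. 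We expect the $\log$ factors in \cref{lem:metastable-FI-to-ADB} to be harmless, since $\epsilon$ is super-polynomially small in $\ell$.

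\textbf{Step 3 (coercivity).} With the leading term identified as $\int_0^1\QF_a(\vU,\vH_\tau;\vsigma)\rd\tau$, we apply the first part of \cref{lem:uniform-truncated-bare-reconstruction}, which holds for arbitrary $\vsigma$. This gives $\|[\vA^a,\vU]\|_{\vsigma}^2\le c_1(e^{\Omega}\QF_a+\ell^{\sD}\eta^2e^{-c_2\Omega})$. Next we apply \cref{lem:metastab-faithfulness} on the neighborhood $\sR$ of $i$ to pass to $\|[\vA^a,\vU]\|_{\vtau_\sR}$, at additive cost $\sqrt{\ADB}\,\eta$, which is negligible. Finally, the Pauli identity \eqref{eq:can-one-site-coefficient-identity} gives $\sum_a\|[\vA^a,\vU]\|_{\vtau_\sR}^2\gtrsim\max_{\gamma\ni i}|u_\gamma|^2$. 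We choose $\Omega=\poly(\beta\sd)\log\ell$ so that the tail term is at most $\eta^2/64$; since $\ell$ will be fixed to $\poly(\beta\sd)$, the prefactors are absorbed into $e^{-\poly(\beta\sd)}$. Combined with the error budget from Steps 1--2, this yields \eqref{eq:metastable-centered-coercivity}.
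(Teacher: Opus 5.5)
Your proposal has a genuine gap in Steps 1--2. The interpolation $\vH_\tau=\vH+\tau\vU$ does not give $\eta^2$-accurate control. When the $\tau$-derivative hits the OFTs, the resulting terms are paired with $\log\vsigma+\beta\vH_\tau=(\log\vsigma+\beta\vH)+\beta\tau\vU$, but $\vsigma$ is approximately detailed-balanced only with respect to $\vH$. By your own Step 2 estimate, the ADB along the path changes by order $\eta\,\poly(\ell)$, so it is not small for $\tau>0$. Hence these ``remaining terms'' are $O(\eta^3)$ up to $\poly(\beta,\ell)$ factors. They cannot be absorbed into $e^{-\poly(\beta\sd)}\eta^2/32$ when $\eta=\Theta(1)$, which is the regime of the early rounds.

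The paper does not interpolate. It writes both tests via \eqref{eq:test-expansion} and splits $\log\vsigma+\beta(\vK+\vU)=(\log\vsigma+\beta\vK)+\beta\vU$. The $\beta\vU$ piece is exactly $-\beta\QF_a(\vU,\vK+\vU;\vsigma)$, and both remaining cross terms pair only with $\log\vsigma+\beta\vK$ at the ground truth (\cref{lem:metastable-differentiated-entropy-gradient}). Even then, only the cross term with $\widehat{\vA}_{\vK+\vU}-\widehat{\vA}_{\vK}$ in the first slot reduces to a single-site ADB, via \cref{lemma:expose-log}. The other cross term contains $\vrho_{\vK+\vU}^{-s}\widehat{\vA}_{\vK+\vU}\vrho_{\vK+\vU}^{s}-\vrho_{\vK}^{-s}\widehat{\vA}_{\vK}\vrho_{\vK}^{s}$ and needs the shell decomposition of \cref{lemma:oft-lipschitz}. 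Near shells are expanded in multi-site Paulis on $\sB[i,b]$ and controlled by \cref{claim:multi-site-sint} and \cref{lem:metastable-multisite-to-single-site-adb}. Far shells are controlled by $\|\log\vsigma+\beta\vK\|\,e^{-c(b-\ell)}$.

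That last bound exposes the second missing idea. For an arbitrary full-rank $\vsigma$, nothing bounds $\|\log\vsigma+\beta\vH\|$ or $\|\log\vsigma-\log\vrho_{\vH}\|$ independently of $n$, or at all. So three of your estimates are vacuous: the far-shell bound, the tail term in \cref{claim:multi-site-sint}, and your Step 2 conversion from $\FI\le\epsilon\|\log\vsigma-\log\vrho_{\vH}\|$ to a small ADB via \cref{lem:metastable-FI-to-ADB}. Invoking ``the local truncation of the Markov property'' does not supply such a bound.

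The paper proceeds in two moves. It first truncates the test to $\vK=\vH_{\CB[i,\sL]}$ using \cref{lemma:lipschitz-within-shell}, at error $\poly(\beta\sd)\|\vU\|^2e^{-c(\sL-\ell)}$. It then replaces $\vsigma$ by the smoothed marginal $(1-\lambda)\vsigma_{\sA}+\lambda\vrho_{\vH_{\CB[i,\sL]}}$ on $\sA=\mathsf{supp}(\CB[i,\sL])$, at cost $\lambda\|\vU\|^2\poly(\beta\sd)$. This makes the log-likelihood ratio $O(|\sA|+\log\lambda^{-1})$, so the single-site ADB is at most $e^{\poly(\beta\sd)}|\CB[i,\sL]|^2e^{-c(\sL-b)}$. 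The scale separation $\sL\gg b\gg\ell$, with $\sL\sim\poly(\beta\sd)\,b^{\sD}$, then lets $e^{-c\sL}$ beat the $e^{\poly(\beta\sd)|\sB[i,b]|}$ prefactors.

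Your Step 3 (reconstruction lemma, metastable faithfulness, Pauli identity) matches the paper's, which applies it to this smoothed state.
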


We defer the proof of this lemma to the bottom of this subsection (\cref{section:pdf-convexity-stest}). We are now in a position to conclude the correctness argument. 

\begin{proof}

[Proof of \Cref{lemma:metastab-correctness}] Let us fix our attention to a single $i\in [n]$. Following the proof of \cref{lemma:gibbs-one-round}, consider a grid point \(\vU=\sum_{\gamma\in\CB[i,\ell]}u_\gamma\vP_\gamma\in\Grid_i\) and define
\begin{align}
  \vK^{(\vU)}
  &:= \vH+
  \sum_{\gamma\in\CB[i,\ell]}
  \bigl(g_\gamma^{(r)}+\eta_ru_\gamma-h_\gamma\bigr)\vP_\gamma,
  \label{eq:metastable-completed-error}
\end{align}
to be the Hamiltonian which agrees with $\vH$ outside $\CB[i,\ell]$ and with $\vG^{(r)}+\eta_r\vU$ on $\CB[i,\ell]$. \\

\noindent \textbf{Characterizing the minimizing solution $\vV$.} By comparing the grid point $\vV$ output by the algorithm, to the ground truth as in \cref{lemma:gibbs-one-round}, it follows from the Lieb-Robinson bounds in \cref{lemma:lipschitz} that  $\vV$ satisfies:
\begin{align}
 &\max_{\gamma\in\CB[i,\ell]}
  \left|
    \mathsf{Test}_{i}[\vP_\gamma,\vK^{(\vV)};\sig]
  \right|\\ \leq& \max_{\gamma\in\CB[i,\ell]}
  \left|
    \mathsf{Test}_{i}[\vP_\gamma,\vG^{(r)} + \eta_r\vV;\sig]
  \right| +\eta_r\cdot  e^{\poly(\beta\sd)-c_1\cdot \ell} \\
  \leq & \eta_r e^{\poly(\beta\sd)} |\CB[i, \ell]| (c_{\mathrm{meas}}+e^{-c_1\cdot \ell}+\kappa) + \max_{\gamma\in\CB[i,\ell]}\left|
    \mathsf{Test}_{i}[\vP_\gamma,\vH;\vsigma] 
  \right| + \poly(\beta\sd)e^{-c_1 R^{(r)}/2}  \\
  \leq & \eta_r e^{\poly(\beta\sd)} |\CB[i, \ell]| (c_{\mathrm{meas}}+e^{-c_1\cdot \ell}+\kappa)+\epsilon+ \poly(\beta\sd)e^{-c_1 R^{(r)}/2} \label{eq:stat-test-error-meta}
\end{align}
At this point we can make the explicit choice of constants $c_{\mathrm{meas}}=\kappa$ as a function of $\ell$:
\begin{equation}
    c_{\mathrm{meas}}=\kappa=e^{\poly(\beta\sd)-c_1\cdot \ell}.
\end{equation}
as well as $R^{(r)}=2\ell+\poly(\beta\sd) \log \frac{1}{\eta}$. Together with the assumption on the metastability parameter, $\epsilon \leq \eta_r\cdot e^{-\poly(\beta\sd)} $ this imposes the error in $\eqref{eq:stat-test-error-meta}\leq \eta_r e^{\poly(\beta\sd) - c_1\ell/2}$.\\

\noindent \textbf{Convexity of the stationarity test.} With $\mathbf{\Delta}:= \sum_{\gamma\in\CB[i,\ell]}
  \bigl(g_\gamma^{(r)}+\eta_r v_\gamma-h_\gamma\bigr)\vP_\gamma$ such that $\vK^{(\vV)}=\vH+\mathbf{\Delta}$ we then have
\begin{align}
  &\left|
    \mathsf{Test}_{i}[\mathbf{\Delta},\vH+\mathbf{\Delta};\sig]- \mathsf{Test}_{i}[\mathbf{\Delta},\vH;\sig] 
  \right| \\ \leq \,& \eta^2_r \cdot e^{\poly(\beta\sd)-c_1\cdot \ell/2}+ 2\eta_r |\CB[i, \ell]|\cdot \epsilon \\  \leq \, & \frac{e^{-\poly(\beta\sd)}}{32}\cdot \eta_r^2,
\end{align}
where we leveraged an appropriate choice of $\ell = \poly(\beta\sd)$, and again the assumption on $\epsilon $. This enables us to apply the convexity statement in \cref{lem:metastable-convexity}:
\begin{align}
    \max_{\gamma\ni i}
  |g_\gamma^{(r)}+\eta_r v_\gamma-h_\gamma|^2 \leq \frac{1}{4}\cdot \eta_r^2,
\end{align}
as advertised.

\end{proof}

\subsubsection{Convexity of the stationarity test (proof of \cref{lem:metastable-convexity}).}
\label{section:pdf-convexity-stest}

We dedicate this subsection to the proof of \cref{lem:metastable-convexity}. The proof essentially relies on understanding the continuity of the test around the point $\vG:=\vH+\vU$ with $\vU:=\sum_{\gamma\in \CB[i, \ell]} u_\gamma \vP_\gamma$. Henceforth, we will restrict our attention  to a single Pauli jump $\vA^a$, and temporarily suppress the labels $\widehat{\vA}_{\vG}:=\widehat{\vA}_{\vG}^{a}(\omega,t)$ and $\rd z:= g(t)h(\omega)\rd\omega \rd t\rd s$ for notational conciseness. 

The starting point to our argument is to consider a canonical formula for the stationarity test, in the form of a gradient flow equation \cite[Equation B7]{BCV25}:
\begin{align}
\mathsf{Test}_{a}[\vU,\vG;\sig] = \Tr(\sig\CL_{\vG,a}^{\dagger}[\vU])=
-\int
 \left\langle
 [\widehat{\vA}_{\vG}^{a},\vU],
 [\vrho_{\vG}^{-s}\widehat{\vA}_{\vG}^{a}
                  \vrho_{\vG}^{s},
  \log\vsigma+\beta\vG]
 \right\rangle_{\vsigma,s}\rd z .
 \label{eq:test-expansion}
\end{align}
which is a simple expansion of the Dirichlet form (\cref{lem:Dicirchlet}) with the following exact identity. 
\begin{lemma}
	[{\cite[Lemma B.4]{BCV25}}]\label{lemma:expose-log} For any full rank states $\vsigma, \vrho$ and any operator $\vA$:
	\begin{equation}
	\sqrt{\vrho} [\vA, \vrho^{-1/2}\vsigma \vrho^{-1/2}]\sqrt{\vrho} = \int_{-1/2}^{1/2} \vsigma^{\frac{1}{2}+s}[\vrho^{-s}\vA\vrho^{s},\log  \vsigma - \log  \vrho]\vsigma^{\frac{1}{2}-s} \rd s
	\end{equation}
\end{lemma}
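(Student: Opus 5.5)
The plan is to prove the identity by the fundamental theorem of calculus along the interpolation parameter $s$, with no estimates needed. Since $\vsigma$ and $\vrho$ are full rank, the maps $s\mapsto \vsigma^{\frac12+s}$ and $s\mapsto \vrho^{\pm s}$ are well defined for all real $s$ and smooth (indeed entire). Their derivatives are
\begin{equation}
\frac{\rd}{\rd s}\vsigma^{\frac12+s}=\vsigma^{\frac12+s}\log\vsigma=\log\vsigma\,\vsigma^{\frac12+s},\qquad \frac{\rd}{\rd s}\vrho^{s}=\vrho^{s}\log\vrho=\log\vrho\,\vrho^{s}.
\end{equation}
Each power commutes with its own logarithm.

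First I would introduce the interpolating operator
\begin{equation}
\vF(s):=\vsigma^{\frac12+s}\,\vrho^{-s}\vA\vrho^{s}\,\vsigma^{\frac12-s},\qquad s\in[-\tfrac12,\tfrac12],
\end{equation}
and evaluate its endpoints:
\begin{equation}
\vF(-\tfrac12)=\vrho^{1/2}\vA\vrho^{-1/2}\vsigma,\qquad \vF(\tfrac12)=\vsigma\vrho^{-1/2}\vA\vrho^{1/2}.
\end{equation}
Expanding the left-hand side of the lemma,
\begin{equation}
\sqrt{\vrho}\,[\vA,\vrho^{-1/2}\vsigma\vrho^{-1/2}]\,\sqrt{\vrho}=\vrho^{1/2}\vA\vrho^{-1/2}\vsigma-\vsigma\vrho^{-1/2}\vA\vrho^{1/2}=\vF(-\tfrac12)-\vF(\tfrac12).
\end{equation}

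Next I would differentiate $\vF$ by the product rule, writing $\vB_s:=\vrho^{-s}\vA\vrho^{s}$. The four factors contribute, respectively, $\log\vsigma$ on the left of $\vB_s$, $-\log\vrho$ on the left of $\vB_s$, $+\log\vrho$ on the right of $\vB_s$, and $-\log\vsigma$ on the right of $\vB_s$. In each case the logarithm is moved next to $\vB_s$ using the commutation of a power with its own logarithm. This gives
\begin{equation}
\frac{\rd}{\rd s}\vF(s)=\vsigma^{\frac12+s}\big[\log\vsigma-\log\vrho,\ \vB_s\big]\vsigma^{\frac12-s}=-\vsigma^{\frac12+s}\big[\vB_s,\ \log\vsigma-\log\vrho\big]\vsigma^{\frac12-s}.
\end{equation}
Integrating over $s\in[-\frac12,\frac12]$ then yields
\begin{equation}
\vF(-\tfrac12)-\vF(\tfrac12)=-\int_{-1/2}^{1/2}\frac{\rd \vF}{\rd s}\,\rd s=\int_{-1/2}^{1/2}\vsigma^{\frac12+s}[\vrho^{-s}\vA\vrho^{s},\log\vsigma-\log\vrho]\vsigma^{\frac12-s}\,\rd s,
\end{equation}
which is exactly the claimed identity.

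There is no real obstacle. The only points requiring care are the sign bookkeeping in the derivative and the justification of the calculus. The latter is immediate in finite dimension: full rank makes every power analytic in $s$, so differentiation under the product and the fundamental theorem of calculus apply entrywise.
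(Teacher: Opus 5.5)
Your proof is correct: the endpoints $\mathbf{F}(-\tfrac12)-\mathbf{F}(\tfrac12)$ reproduce the left-hand side exactly, and the product-rule computation $\frac{\rd}{\rd s}\mathbf{F}(s)=-\vsigma^{\frac12+s}[\vrho^{-s}\vA\vrho^{s},\log\vsigma-\log\vrho]\vsigma^{\frac12-s}$ has the right signs, so the fundamental theorem of calculus gives the identity. The paper does not prove this lemma itself; it imports it from \cite[Lemma B.4]{BCV25}, and your interpolation argument is a complete self-contained verification.
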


The following lemma enables us to relate the outcome of the stationarity test on ``adjacent'' (local perturbations of the) Hamiltonians to the Fisher information functional $\QF$ of the Hamiltonian perturbation; as defined in \cref{def:relative-fisher-information}. This will later enable us to import the convexity results for the Fisher functional as derived in \cref{lemma:fisher-convexity}. 

\begin{lemma}
    [Continuity of the stationarity test]
\label{lem:metastable-differentiated-entropy-gradient}
Let $\vsigma$ be a full rank state, $\vA^a$ a single-qubit jump operator on site $i\in [n]$, and $\vK:=\sum_\Gamma v_\gamma \vP_\gamma$ a $(\sd, \sD, \scgr_{\sgr})$ Hamiltonian. For any $\beta\geq \sd^{-1}$, integers $b\geq 2\ell$ and Hamiltonian perturbation $\vU:=\sum_{\gamma\in \CB[i, \ell]}u_\gamma \vP_\gamma$:
\begin{align}
    &|\mathsf{Test}_{a}[\vU,\vK+\vU;\sig]
 -\mathsf{Test}_{a}[\vU,\vK;\sig]
 +\beta\QF_a(\vU,\vK+\vU;\sig)| \\
 \leq\,& e^{\poly(\beta\sd)}\cdot \|\vU\|^2\bigg( \|\log \vsigma+\beta\vK\|\cdot e^{-\mu\cdot (b-\ell)}  +e^{\poly(\beta\sd)\cdot |\sB[i,b]|}\cdot  \max_{\vP\in\CS_{\sB[i,b]}}\sqrt{\ADB_{\vK,\vP}[\vsigma]}\bigg)
\end{align}
\end{lemma}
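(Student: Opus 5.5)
We will work from the gradient-flow formula \eqref{eq:test-expansion}, applied once with $\vG=\vK+\vU$ and once with $\vG=\vK$, and subtract. Write $\vX_{\vG}:=\log\vsigma+\beta\vG$, so that $\vX_{\vK+\vU}=\vX_{\vK}+\beta\vU$. The difference of the two tests then splits into three pieces:
\begin{align}
 &-\int\big\langle [\widehat{\vA}_{\vK+\vU},\vU],[\vrho_{\vK+\vU}^{-s}\widehat{\vA}_{\vK+\vU}\vrho_{\vK+\vU}^{s},\beta\vU]\big\rangle_{\vsigma,s}\rd z \label{eq:plan-main}\\
 &-\int\Big(\big\langle [\widehat{\vA}_{\vK+\vU},\vU],[\vrho_{\vK+\vU}^{-s}\widehat{\vA}_{\vK+\vU}\vrho_{\vK+\vU}^{s},\vX_{\vK}]\big\rangle_{\vsigma,s}-\big\langle [\widehat{\vA}_{\vK},\vU],[\vrho_{\vK}^{-s}\widehat{\vA}_{\vK}\vrho_{\vK}^{s},\vX_{\vK}]\big\rangle_{\vsigma,s}\Big)\rd z. \label{eq:plan-rest}
\end{align}

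For \eqref{eq:plan-main}, the goal is to show that it equals $-\beta\QF_a(\vU,\vK+\vU;\vsigma)$ up to an error. The mismatch comes from the modular twist $\vrho^{-s}\widehat{\vA}\vrho^{s}$ in place of $\widehat{\vA}$. We handle this as follows. By \cref{lem:bounds_imaginary_conjugation}, the twist becomes a frequency shift times $e^{\beta\omega s}$-type factors. The $(\vsigma,s)$-weighted inner product with the twisted operator can then be rewritten, exactly as in the derivation of the Fisher information of \cite{BCV25}, as the squared norm appearing in $\QF_a$. The obstruction is that this rewriting is exact only when $\vsigma=\vrho_{\vK+\vU}$. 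For general $\vsigma$, the discrepancy is controlled by the defect $\Delta(\cdot)$ of \cref{def:metastable-adb}, namely by $\sqrt{\ADB_{\vK+\vU,\vA^a}[\vsigma]}\cdot\|\vU\|^2$ after Cauchy--Schwarz and Parseval. I would then transfer this from $\vK+\vU$ to $\vK$ via the Lipschitz estimates \eqref{eq:lipschitz-overview}, absorbing the cost into the ADB error.

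The piece \eqref{eq:plan-rest} is where the genuine work lies, since $\vX_{\vK}$ is not small. The key observation is that $\vX_{\vK}=\log\vsigma+\beta\vK$ only enters through commutators with quasi-local filtered operators. The plan is as follows:
\begin{enumerate}
\item Expand $\widehat{\vA}_{\vK+\vU}$ and $\widehat{\vA}_{\vK}$ in Lieb--Robinson shells using \cref{lemma:oft-loc} and \cref{lem:lieb-robinson-commutator}. The shells beyond radius $b$ contribute at most $e^{\poly(\beta\sd)}\|\vU\|^2\|\vX_{\vK}\|e^{-\mu(b-\ell)}$. Here one factor of $\|\vU\|$ comes from the commutator with $\vU$, and the other from the Lipschitz difference between the $\vK+\vU$ and $\vK$ filtered operators, since $\vU$ lives within distance $\ell$.
\item Inside $\sB[i,b]$, bound the remaining terms using commutators $[\vB,\vX_{\vK}]$ with $\vB$ supported on $\sB[i,b]$. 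Expand each such $\vB$ in Paulis $\vP\in\CS_{\sB[i,b]}$, paying $2^{O(|\sB[i,b]|)}$.
\item Use \cref{claim:multi-site-sint} to bound $\int\|[\vP,\vX_{\vK}]\|_{\vsigma,s}^2\,\rd s$ by $e^{\poly(\beta\sd)|\sB[i,b]|}\ADB_{\vK,\vP}[\vsigma]$, plus a term $e^{-c|\sB[i,b]|}\|\vX_{\vK}\|^2$. The latter is absorbed into the first error term since $|\sB[i,b]|\gtrsim b-\ell$.
\end{enumerate}
The required factor $\|\vU\|^2$ arises because the difference in \eqref{eq:plan-rest} is itself linear in $\vU$: once through $[\cdot,\vU]$ and once through the Lipschitz continuity \cref{lemma:lipschitz} of $\widehat{\vA}_{\vG}$ and of $\vrho_{\vG}^{-s}\widehat{\vA}_{\vG}\vrho_{\vG}^{s}$ in $\vG$. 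A Cauchy--Schwarz step in the $(\vsigma,s)$ inner product converts this into a product of a norm bounded by $\|\vU\|$ and the square root of the ADB term.

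I expect the main obstacle to be the Lipschitz control of the modularly twisted operator $\vrho_{\vG}^{-s}\widehat{\vA}_{\vG}\vrho_{\vG}^{s}$ under $\vG\mapsto\vG+\vU$, uniformly for $s\in[-\tfrac12,\tfrac12]$. The imaginary-time conjugation is only tamed for filtered operators, via \cref{lem:bounds_imaginary_conjugation}, so I would first write the twist as a frequency shift and then apply the Gaussian-regularized Lipschitz bound. The endpoint blow-up near $s=\pm\tfrac12$ would be handled with the $\delta$-cutoff used in the proof of \cref{claim:multi-site-sint}.
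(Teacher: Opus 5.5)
The gap is in your first term, the one with $\beta\vU$ in the second commutator. You claim that rewriting it as $-\beta\QF_a(\vU,\vK+\vU;\vsigma)$ is exact only when $\vsigma=\vrho_{\vK+\vU}$. You then charge an error $\|\vU\|^2\sqrt{\ADB_{\vK+\vU,\vA^a}[\vsigma]}$ and move it to $\vK$ by Lipschitz continuity, ``absorbing the cost.'' That transfer fails. The quantities $\sqrt{\ADB_{\vK+\vU,\vA^a}[\vsigma]}$ and $\sqrt{\ADB_{\vK,\vA^a}[\vsigma]}$ differ by an amount of order $\beta\|\vU\|$, coming from $\log\vrho_{\vK}\to\log\vrho_{\vK+\vU}$, and no ADB with respect to $\vK$ controls it.

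Concretely, take $\vsigma=\vrho_{\vK}$. Every $\ADB_{\vK,\vP}[\vsigma]$ vanishes, and the remaining term on the right can be made arbitrarily small by taking $b$ large, so the lemma asserts an exact identity. Yet by \cref{lem:metastable-adb-entropy-gradient}, $\ADB_{\vK+\vU,\vA^a}[\vrho_{\vK}]=\beta^2\int h_s g_s^{\ADB}\|[\widehat{\vA}_{\vK+\vU},\vU]\|_{\vrho_{\vK},s}^2$ is generically of order $\beta^2\|\vU\|^2$. Your argument therefore leaves an extra $O(\beta\|\vU\|^3)$ term that is not of the stated form.

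The missing observation is that there is no error here at all. For every full-rank $\vsigma$, Hermitian $\vX$ and Hamiltonian $\vG$,
\[
\int\big\langle[\widehat{\vA}_{\vG},\vX],[\vrho_{\vG}^{-s}\widehat{\vA}_{\vG}\vrho_{\vG}^{s},\vX]\big\rangle_{\vsigma,s}\,\rd z=\QF_a(\vX,\vG;\vsigma).
\]
The argument runs as follows.
\begin{enumerate}
\item $\vrho_{\vG}^{-s}\widehat{\vA}_{\vG}(\omega,t)\vrho_{\vG}^{s}$ is $e^{\mathrm{i}\vG z}\widehat{\vA}_{\vG}(\omega)e^{-\mathrm{i}\vG z}$ at the complex time $z=t-\mathrm{i}s\beta$.
\item Since $g$ is analytic for $|\mathrm{Im}\,t|<\beta/4$ and $|s|<1/2$, you can shift the $t$-contour by $\mathrm{i}s\beta/2$. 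This splits the twist evenly, and the integrand becomes $\|[\vrho_{\vG}^{-s/2}\widehat{\vA}_{\vG}(\omega,t)\vrho_{\vG}^{s/2},\vX]\|_{\vsigma,s}^2\geq 0$.
\item \Cref{lem:bounds_imaginary_conjugation} turns the half-twist into a frequency shift, which produces $h_s$.
\item $\mathrm{Re}\,g(t+\mathrm{i}s\beta/2)=g_s(t)$, and the imaginary part cancels under $(s,\omega)\mapsto(-s,-\omega)$, because $\widehat{\vA}(\omega)^\dagger=\widehat{\vA}(-\omega)$ for the Hermitian jump.
\end{enumerate}
Nothing here uses $\vsigma=\vrho_{\vG}$ or $\vX=\log\vsigma-\log\vrho_{\vG}$. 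This is the computation behind the Fisher-information formula of \cite{BCV25}, and with $\vG=\vK+\vU$, $\vX=\vU$ it is exactly the paper's first line.

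With that fixed, the rest of your plan is the paper's. You split the remainder into the $[\widehat{\vA}_{\vK+\vU}-\widehat{\vA}_{\vK},\vU]$ piece and the twisted-difference piece. The twisted difference is handled by the frequency shift plus \cref{lemma:oft-lipschitz}, with H\"older for far shells and a Pauli expansion with \cref{claim:multi-site-sint} for near shells. For the first piece the paper is more economical: it applies \cref{lemma:expose-log} in reverse to get $\sqrt{\ADB_{\vK,\vA^a}[\vsigma]}$ directly, with \cref{lem:filtered-duhamel-estimates} supplying the factor $\beta\|\vU\|^2$. Your Pauli-expansion route also works there, at the permitted cost of $e^{\poly(\beta\sd)|\sB[i,b]|}$ and an extra tail.
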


\begin{remark}
    We phrase the above in terms of the Hamiltonian $\vK$ (as opposed to $\vH$ or $\vG$) as it will later represent a truncation of $\vH$.
\end{remark}

\begin{proof}

We begin with an expansion of the difference of stationarity tests \eqref{eq:test-expansion} at $\vG=\vK$ from the same
identity at $\vG=\vK+\vU$:
\begin{align}
\mathsf{Test}_{a}[\vU,\vK+\vU;\sig]
 -&\mathsf{Test}_{a}[\vU,\vK;\sig]
 =-\beta\QF_a(\vU,\vK+\vU;\sig)\\
&-\int
 \left\langle[\widehat{\vA}_{\vK+\vU}-\widehat{\vA}_{\vK},\vU],
 [\vrho_{\vK}^{-s}\widehat{\vA}_{\vK}\vrho_{\vK}^{s},\log\vsigma+\beta\vK]\right\rangle_{\vsigma,s}\rd z
\\
&
 -\int
 \left\langle[\widehat{\vA}_{\vK+\vU},\vU],
 [\vrho_{\vK+\vU}^{-s}\widehat{\vA}_{\vK+\vU}\vrho_{\vK+\vU}^{s}-\vrho_{\vK}^{-s}\widehat{\vA}_{\vK}\vrho_{\vK}^{s},\log\vsigma+\beta\vK]\right\rangle_{\vsigma,s}\rd z .
 \label{eq:differentiated-gradient-exact-difference}
\end{align}
We proceed by addressing each of these error terms separately. The second line in \eqref{eq:differentiated-gradient-exact-difference}  can be directly related to the approximate detailed-balance condition. Indeed, for any operator $\vB$, applying \cref{lemma:expose-log} in the reverse direction gives
\begin{align}
\int_{-1/2}^{1/2}
 \langle\vB,[\vrho_{\vK}^{-s}\widehat{\vA}_{\vK}
                    \vrho_{\vK}^{s},\log\vsigma+\beta\vK] \rangle_{\vsigma,s}\rd s &=
 \Tr\!\left[\vB^\dagger\sqrt{\vrho_{\vK}}
 \left[
  \widehat{\vA}_{\vK},
  \vrho_{\vK}^{-1/2}\vsigma\vrho_{\vK}^{-1/2}
 \right]
 \sqrt{\vrho_{\vK}}
 \right]. 
\end{align}
We note the second term above, after redistributing the factors of $\vrho_{\vK}$, correctly induces the ADB expression \cref{def:metastable-adb}. Thus the Cauchy-Schwarz inequality applied to the above with $\vB=[\widehat{\vA}_{\vK+\vU}-\widehat{\vA}_{\vK},\vU]$, gives
\begin{align}
&\bigg|\int
 \left\langle[\widehat{\vA}_{\vK+\vU}-\widehat{\vA}_{\vK},\vU],
 [\vrho_{\vK}^{-s}\widehat{\vA}_{\vK}\vrho_{\vK}^{s},\log\vsigma+\beta\vK]\right\rangle_{\vsigma,s}\rd z\bigg| \\\leq\quad & 2\cdot \sqrt{\ADB_{\vK,a}[\vsigma]}\cdot   \bigg|\int g(t)\bigg(\|\vB\sqrt{\vsigma}\|^2_2+\|\sqrt{\vsigma}\vB\|^2_2\bigg)\rd t\rd \omega\bigg|^{1/2}\\ \leq \quad &\|\vU\|^2\cdot \sqrt{\ADB_{\vK,a}[\vsigma]}\cdot 8\beta,
\end{align}
where we used the trivial bound on the frequency filters $h^2(\omega)/\gamma(\pm \omega)\leq e^{1/4}$, and in the second to third line above we leveraged \cref{lem:filtered-duhamel-estimates} (below this proof). 

It remains to bound the last integral in
\eqref{eq:differentiated-gradient-exact-difference}. To proceed, we use \cref{lem:bounds_imaginary_conjugation} to express the modular conjugation of the OFT, as an OFT with shifted frequency, and then subsequently apply the shell decomposition for perturbations of OFTs (\cref{lemma:oft-lipschitz}):
\begin{align}
 \vrho_{\vK+\vU}^{-s}\widehat{\vA}_{\vK+\vU}\vrho_{\vK+\vU}^{s}-\vrho_{\vK}^{-s}\widehat{\vA}_{\vK}\vrho_{\vK}^{s}
   &=\sum_{\chi\geq0}\vX_\chi(s,\omega,t),\\
\text{where}\quad \|\vX_\chi(s,\omega,t)\|
 &\leq  \|\vU\|\cdot e^{\poly(\beta\sd)+\theta(v|t|-\mu\cdot (\chi-\ell))}\cdot e^{s\beta \omega}
 \label{eq:differentiated-gradient-shells}
\end{align}
for any $\theta \in (0, 1)$. We make the explicit choice of $\theta = \min(1, \pi (10 v\beta)^{-1})$, which we note ensures the convergence of:
\begin{equation}
    \int_{\BR} g(t)e^{2\theta v|t|} \rd t\leq 4.
\end{equation}
Fix a length-scale cutoff $b>\ell$ (to be specified shortly). What follows is a case division over the two length-scale regimes. \\

\noindent \textbf{Case I: Distant $\chi\geq b$ shells.} First we consider the coefficients $\vX_\chi$ in the shell decomposition which are at radius $\chi\geq b$ away. As we shall see, these terms can be controlled trivially by an application of Schatten-Holder:
\begin{align}
     \bigg|\int \langle [\widehat{\vA}_{\vK+\vU}^{a}, \vU],  [\vX_\chi, \log \vsigma +\beta\vK]\rangle_{\vsigma, s}  \rd z\bigg| &\leq \|\vU\|\cdot  \|\log \vsigma+\beta\vK\|\cdot  \int \|\widehat{\vA}_{\vK+\vU}^{a}\|\cdot \|\vX_{\chi}\| \rd z \\ &\leq \|\vU\|^2\cdot  \|\log \vsigma+\beta\vK\|\cdot e^{\poly(\beta\sd) -\theta\mu\cdot (\chi-\ell)},
\end{align}
where we used the formula in \eqref{eq:differentiated-gradient-shells}, the exponentially decaying bound for $\|\widehat{\vA}_{\vK+\vU}^{a}\|$ in \cref{lem:bounds_oft}, and observed that in the integral over $\rd z:= g(t)h(\omega)\rd\omega \rd t\rd s$ the $\omega$ dependence $h(\omega)e^{\beta s\omega}\leq 1$. Summing the above over the shells $\chi\geq b$ converges with error decaying with $\|\vU\|^2$ and exponential in $b$.\\

\noindent\textbf{Case II: Close $\chi\leq b$ shells.} For the closeby shells, we denote $\vX_{\leq b}:=\sum_{\chi\leq b}\vX_\chi$ and expand pointwise in $(s,\omega,t)$ in the Pauli basis over $\sB[i,b]$:
\begin{equation}
 \vX_{\leq b}(s,\omega,t)
 =\sum_{\vP\in\CS_{\sB[i,b]}}
 c_{\vP}(s,\omega,t)\vP \Rightarrow \sum_{\vP}|c_{\vP}(s,\omega,t)|
 \leq 2^{4\sq |\sB[i,b]|}\|\vU\|\cdot e^{\poly(\beta\sd)+\theta v|t|}\cdot e^{s\beta \omega}
 \label{eq:differentiated-gradient-pauli-expansion}
\end{equation}
The  Cauchy--Schwarz inequality (twice), and integrating over the $\omega, t$ filter functions allows us to write the resulting error in terms of the commutator $[\vP,\log\vsigma+\beta\vK]$ of a worst-case choice of multi-qubit Pauli $\vP\in\CS_{\sB[i,b]}$:
\begin{align}
&\left|\int
 \left\langle[\widehat{\vA}_{\vK+\vU},\vU],
 [\vX_{\leq b},\log\vsigma+\beta\vK]
 \right\rangle_{\vsigma,s}\rd z\right| \\ \leq&
 e^{\poly(\beta\sd)\cdot |\sB[i,b]|}\|\vU\|^2
 \max_{\vP\in\CS_{\sB[i,b]}}
 \left(
   \int_{-1/2}^{1/2}
   \|[\vP,\log\vsigma+\beta\vK]\|_{\vsigma,s}^2\rd s
 \right)^{1/2}
 \label{eq:differentiated-gradient-close-reduction}\\
 \leq& \|\vU\|^2\bigg(e^{\poly(\beta\sd)\cdot |\sB[i,b]|}
 \max_{\vP\in\CS_{\sB[i,b]}}\sqrt{\ADB_{\vK,\vP}[\vsigma]}
 +e^{-\mu' |\sB[i,b]|} \|\log \vsigma+\beta\vK\|.\bigg)
\end{align}
\noindent where in the last line we applied the inequality from \cref{claim:multi-site-sint} under a judicious choice of constants (and we assume WLOG $|\CB[i, b]|\geq b$, otherwise the exponents above can be replaced by $\max(b,|\CB[i, b]| )$), and leverage the fact the integrals over $\omega, t$ converge to a constant which can be absorbed into the prefactor.\\

\noindent \textbf{Put together.} Combining the sources of error gives the bound:
\begin{align}
    &|\mathsf{Test}_{a}[\vU,\vK+\vU;\sig]
 -\mathsf{Test}_{a}[\vU,\vK;\sig]
 +\beta\QF_a(\vU,\vK+\vU;\sig)| \\
 \leq& e^{\poly(\beta\sd)}\cdot \|\vU\|^2\bigg( \|\log \vsigma+\beta\vK\|\cdot e^{-\theta \mu\cdot b} 
  +e^{\poly(\beta\sd)\cdot |\sB[i,b]|}\cdot  \max_{\vP\in\CS_{\sB[i,b]}}\sqrt{\ADB_{\vK,\vP}[\vsigma]}\bigg)
\end{align}
Summing over the single-qubit Pauli jump operators $a\in \CS_i^1$ gives the desired statement. 
\end{proof}

We are now in a position to conclude the proof of \cref{lem:metastable-convexity}, on the convexity of the stationarity test.

\begin{proof}[Proof of \Cref{lem:metastable-convexity}] We begin by introducing the length-scales $\mathsf{L}\gg b\gg \ell$ to be specified shortly. To shorten notation, we denote as $\mathsf{A}:=\mathsf{supp}(\CB[i, \mathsf{L}])\subseteq [n]$ the ball of radius $\mathsf{L}$ around $i$. We denote $\vsigma^{\lambda}_{\sA}:=(1-\lambda)\vsigma_{\sA}+\lambda\cdot \vrho_{\vH_{\CB[i, \mathsf{L}]}}$ as the reduced density matrix of $\vsigma$ on the patch $\sA$, smoothed by an interpolation with the Gibbs state of the induced Hamiltonian on the terms $\CB[i, \mathsf{L}]$, parametrized by $\lambda\in (0, 1)$. \\

\noindent \textbf{Step 1: Truncating and smoothing the stationarity test.} The locality and Lipschitz statements in \cref{lemma:lipschitz-within-shell} entail that we can truncate the stationarity test to the radius $\mathsf{L}$
\begin{align}
    &\mathsf{Test}_{i}[\vU,\vH+\vU;\vsigma] -  \mathsf{Test}_{i}[\vU,\vH;\vsigma] \\\approx &\mathsf{Test}_{i}[\vU,\vH_{\CB[i, \mathsf{L}]}+\vU;\vsigma] -  \mathsf{Test}_{i}[\vU,\vH_{\CB[i, \mathsf{L}]};\vsigma]\label{eq:truncated-st-1}
\end{align}
to error $\poly(\beta\cdot \sd)\cdot \|\vU\|^2\cdot e^{-c_1(\mathsf{L}-\ell)}$. 

We note that the terms on the RHS of \eqref{eq:truncated-st-1} depend only on the subregion $\sA$, and we can thus proceed by applying the continuity statement in \cref{lemma:lipschitz} to replace:
\begin{align}
    &\mathsf{Test}_{i}[\vU,\vH_{\CB[i, \mathsf{L}]}+\vU;\vsigma] -  \mathsf{Test}_{i}[\vU,\vH_{\CB[i, \mathsf{L}]};\vsigma]  \\ \approx &  \mathsf{Test}_{i}[\vU,\vH_{\CB[i, \mathsf{L}]}+\vU; \vsigma^{\lambda}_{\sA}] -  \mathsf{Test}_{i}[\vU,\vH_{\CB[i, \mathsf{L}]};\vsigma^{\lambda}_{\sA}], \label{eq:truncated-st-2}
\end{align}
now to error $\lambda\cdot \|\vU\|^2\cdot \poly(\beta\cdot \sd).$ To simplify further computations, here we make the explicit choice of $\lambda = e^{-c_1(\mathsf{L}-\ell)}$. \\

\noindent \textbf{Step 2: From the (smoothed) stationarity test to the Fisher functional.} Our intention is now to apply the continuity statement for the stationarity test derived in \cref{lem:metastable-differentiated-entropy-gradient} with the choice of Hamiltonian $\vK:=\vH_{\CB[i, \mathsf{L}]}$ and state $\vsigma_{\sA}^\lambda$, to relate \eqref{eq:truncated-st-2} to the Fisher information functional $\QF_i$ (\cref{def:relative-fisher-information}). For this purpose, we first compute the relevant $\mathsf{ADB}$ errors. \\

\noindent \textit{Computing the $\mathsf{ADB}$ errors.} Consider any site $j\in \sB[i, b]=\mathsf{supp}(\CB[i, b])$ in the inner radius $b$, and let $\vA_j\in \CS_{j}^1$ be a Pauli operator on $j$. The $\mathsf{ADB}$  (\cref{def:metastable-adb}) error of $\vsigma_{\sA}^\lambda$ under $\vH_{\CB[i, \mathsf{L}]}$ and the jump $\vA_j$:
\begin{align}
    \|\CL_{\vH_{\CB[i, \mathsf{L}]}, j}[\vsigma_{\sA}^\lambda]\|_1 &\leq \epsilon+\poly(\beta)\cdot e^{-c_1\cdot (\sL-b)}\\ \Rightarrow \quad \mathsf{ADB}_{\vH_{\CB[i, \mathsf{L}]},\vA_j}[\vsigma_{\sA}^\lambda] &\leq  e^{\poly(\beta\sd)}\cdot \bigg(|\sA|+\log \frac{1}{\lambda}\bigg)^2 \cdot \bigg(\epsilon + e^{-c_1\cdot (\sL-b)} + \lambda\bigg) \\ 
    &\leq e^{\poly(\beta\sd)}\cdot |\CB[i, \mathsf{L}]|^2\cdot e^{-c_1\cdot (\sL-b)}\label{eq:adb-lr-stc}
\end{align}
where we further assume $\epsilon\leq \epsilon^* = e^{-c_1(\sL-b)}$.
This now enables us to compute the multi-site \textsf{ADB} error within the region of radius $b$. From \cref{lem:metastable-multisite-to-single-site-adb}, for any $\Omega_2\geq 1$ and Pauli $\vP=\otimes_j^{\sB[i, b]}\vA_j\in \CS_{\sB[i, b]}$:
\begin{align}
\mathsf{ADB}_{\vH_{\CB[i, \mathsf{L}]},\vP}[\vsigma_{\sA}^\lambda]
& \leq e^{\poly(\beta\sd)\cdot |\sB[i, b]|}\cdot \bigg(e^{4\beta \Omega_2}\max_{j\in\sB[i, b]} \ADB_{\vH_{\CB[i, \mathsf{L}]},\vA_j}[\vsigma_{\sA}^\lambda]
      +e^{-\Omega_2/8\sd}\bigg)\\
      &\leq |\CB[i, \mathsf{L}]|^2\cdot  \exp\big[\poly(\beta\sd)\cdot |\sB[i, b]|-c_2\cdot \mathsf{L}\big]
\end{align}
By making the explicit choice of $\Omega_2 := \mathsf{L}/\poly(\beta\sd)$ which ensures $c_2\geq \poly(\beta\sd)^{-1}$, and imposing the assumption $\mathsf{L}\geq 2b$. \\

\noindent \textit{Returning to \eqref{eq:truncated-st-2}.} Now placed in the continuity statement \cref{lem:metastable-differentiated-entropy-gradient}, we have:
\begin{align}
    \mathsf{Test}_{i}[\vU,\vH_{\CB[i, \mathsf{L}]}+\vU; \vsigma^{\lambda}_{\sA}] -  \mathsf{Test}_{i}[\vU,\vH_{\CB[i, \mathsf{L}]};\vsigma^{\lambda}_{\sA}] \approx -\beta\QF_i(\vU,\vH_{\CB[i, \mathsf{L}]}+\vU;\vsigma^{\lambda}_{\sA})\label{eq:smoothing-qf}
\end{align}
to error $ \|\vU\|^2 \cdot  |\CB[i, \mathsf{L}]|^2\cdot e^{\poly(\beta\cdot \sd)-c_3\cdot b}\leq \|\vU\|^2 \cdot   e^{\poly(\beta\cdot \sd)-c_4\cdot b}$ so long as we chose a suitable $\sL = \poly(\beta\cdot \sd) \cdot b^\sD \gg |\sB[i, b]|$ and $b\geq 2\ell$, leveraging the fact the underlying graph is a $\sD$-dimensional lattice. \\

\noindent \textbf{Step 3: From the Fisher Information to the coefficients.} It follows from \cref{lem:uniform-truncated-bare-reconstruction} that the Fisher information functional $\QF_i$ satisfies for $c_1\propto (\beta \sd)^{-1}$ and   $\forall \Omega_1>0$:
\begin{align}
 \sum_{a\in \CS_i}\|[\vA^a, \vU]\|_{\vsigma^{\lambda}_{\sA}}^2\leq e^{\poly(\beta\sd)}\bigg(e^{\Omega_1} \QF_i(\vU,\vH_{\CB[i, \mathsf{L}]}+\vU;\vsigma^{\lambda}_{\sA}) + e^{-c_1\Omega_1}\cdot  \max_{\gamma} |u_\gamma|^2\bigg)
  \label{eq:metastable-fixed-core-reconstruction}
\end{align}

Since $[\vA^a, \vU]$ lies on the (local) neighborhood $\sB[i, 1]$ of site $i$, we proceed by invoking \cref{lem:metastab-faithfulness} to relate the KMS and Frobenius norms:
\begin{equation}
  \|[\vA^a, \vU]\|_{\vsigma^{\lambda}_{\sA}}^2
  \geq e^{-\poly(\beta\sd)}\cdot 
  \|[\vA^a,\vU]\|_{\vtau_{\sB[i, 1]}}^2
  -\|[\vA^a, \vU]\|^2 \cdot \max_{c\in \CS_{\sB[i, 1]}^1 }\sqrt{\mathsf{ADB}_{\vH_{\CB[i, \mathsf{L}]}, c}[\vsigma^\lambda_{\sA}]}
  \label{eq:metastable-fixed-core-kms-use}
\end{equation}
To combine the two identities above, we note 
\begin{equation}
   \sum_{a\in \CS_i}\|[\vA^a,\vU]\|_{\vtau_{\sB[i, 1]}}^2 \geq \max_{\gamma\ni i} |u_\gamma|^2, \quad  \sum_{a\in \CS_i}\|[\vA^a,\vU]\|_{\vtau_{\sB[i, 1]}}^2 \leq 2^{2\sq+1}\sd^2\cdot  \max_{\gamma\ni i} |u_\gamma|^2
  \label{eq:metastable-one-site-identity}
\end{equation}
Which then put together with the bound on the $\mathsf{ADB}$ error gives:
\begin{align}
    &e^{\Omega_1} \QF_i(\vU,\vH_{\CB[i, \sL]}+\vU;\vsigma^{\lambda}_{\sA}) + e^{-c_1\Omega_1}\cdot  \max_{\gamma} |u_\gamma|^2 \\\geq &e^{-\poly(\beta\sd)}\cdot \max_{\gamma \ni i}|u_\gamma|^2 \bigg(1-e^{\poly(\beta\sd)-c_1\cdot \sL/2}\bigg) \\
    \geq &e^{-\poly(\beta\sd)}\cdot \max_{\gamma \ni i}|u_\gamma|^2.
\end{align}
Where in the last line we make explicit choices of sufficiently large $b = \poly(\beta\cdot \sd)+\ell$ and thereby $\sL$. We conclude this step with the choice of an $\Omega_1=\poly(\beta\sd)$ which ensures:
\begin{align}
    \QF_i(\vU,\vH_{\CB[i, \sL]}+\vU;\vsigma^{\lambda}_{\sA})   \geq  e^{-\poly(\beta\sd)}\cdot \bigg(\max_{\gamma \ni i}|u_\gamma|^2 - \frac{1}{64} \max_{\gamma} |u_\gamma|^2\bigg).\label{eq:qf-to-coef}
\end{align}

\noindent \textbf{Put together.} Combining step 3 \eqref{eq:qf-to-coef}, step 2 \eqref{eq:smoothing-qf}, and the errors in step 1 \eqref{eq:truncated-st-1}, \eqref{eq:truncated-st-2}, gives the advertised bound.
\end{proof}

Finally, we prove the deferred statement on integrals over norms of the OFT and their differences: 

\begin{lemma} 
\label{lem:filtered-duhamel-estimates} For any pair of Hamiltonians $\vK, \vU$, and state $\vsigma$:
\begin{align}
    \iint g(t)\cdot \bigg\|[\widehat{\vA}_{\vK+\vU}(\omega, t)-\widehat{\vA}_{\vK}(\omega, t),\vU]\sqrt{\vsigma}\bigg \|^2_2\rd t\rd \omega \leq 16 \beta^2 \cdot \|\vU\|^4
\end{align}
where $g(t)$ is the filter function from the Dirichlet form \cref{lem:Dicirchlet}.
\end{lemma}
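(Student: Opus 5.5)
The plan is to reduce everything to a pointwise-in-time Duhamel estimate, and then use Plancherel in $\omega$ to remove the frequency integral. Throughout, write $\vA_{\vK}(u):=e^{i\vK u}\vA e^{-i\vK u}$ and $\mathbf{\Delta}(u):=\vA_{\vK+\vU}(u)-\vA_{\vK}(u)$, with $\|\vA\|\leq 1$ as for the single-site Pauli jumps in context.

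\textbf{Step 1 (time-domain form).} Substituting \cref{def:oft} into $\widehat{\vA}_{\vK}(\omega,t)=e^{i\vK t}\widehat{\vA}_{\vK}(\omega)e^{-i\vK t}$ gives
\begin{equation}
\widehat{\vA}_{\vK}(\omega,t)=\frac{1}{\sqrt{2\pi}}\int_{\BR}e^{-i\omega r}f_\sigma(r)\,\vA_{\vK}(t+r)\,\rd r .
\end{equation}
The same holds with $\vK+\vU$ in place of $\vK$, using the same filter $f_\sigma$. Hence the operator inside the norm is
\begin{equation}
\frac{1}{\sqrt{2\pi}}\int_{\BR} e^{-i\omega r}f_\sigma(r)\,\vY_t(r)\,\rd r,
\qquad \vY_t(r):=[\mathbf{\Delta}(t+r),\vU]\sqrt{\vsigma}.
\end{equation}

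\textbf{Step 2 (Plancherel in $\omega$).} For fixed $t$, this is an operator-valued Fourier transform in the Hilbert space of Hilbert--Schmidt operators. Plancherel therefore gives
\begin{equation}
\int_{\BR}\rd\omega\,\Big\|\frac{1}{\sqrt{2\pi}}\int e^{-i\omega r}f_\sigma(r)\vY_t(r)\rd r\Big\|_2^2=\int_{\BR}|f_\sigma(r)|^2\,\|\vY_t(r)\|_2^2\,\rd r .
\end{equation}
This is the step where the normalization must be checked carefully; with the $(2\pi)^{-1/2}$ convention of \cref{eq:OFT} it is exactly unitary. Next, by H\"older, $\|\vY_t(r)\|_2\le\|[\mathbf{\Delta}(t+r),\vU]\|\,\|\sqrt{\vsigma}\|_2\le 2\|\vU\|\,\|\mathbf{\Delta}(t+r)\|$, since $\|\sqrt{\vsigma}\|_2=1$.

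\textbf{Step 3 (Duhamel).} Differentiating $e^{-i(\vK+\vU)s}\,\vA_{\vK}(s)\,e^{i(\vK+\vU)s}$ in $s$ gives
\begin{equation}
\mathbf{\Delta}(u)=i\int_0^u e^{i(\vK+\vU)(u-r)}[\vU,\vA_{\vK}(r)]e^{-i(\vK+\vU)(u-r)}\rd r .
\end{equation}
Hence $\|\mathbf{\Delta}(u)\|\le 2|u|\,\|\vU\|$. Combining with Step 2, the left-hand side of the lemma is at most
\begin{equation}
16\|\vU\|^4\iint g(t)\,|f_\sigma(r)|^2\,(t+r)^2\,\rd t\,\rd r .
\end{equation}

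\textbf{Step 4 (explicit integrals).} Both $g$ and $|f_\sigma|^2$ are even, so the cross term $2tr$ integrates to zero. What remains is
\begin{equation}
\int g\cdot\int |f_\sigma|^2 r^2+\int|f_\sigma|^2\cdot\int g\,t^2 .
\end{equation}
The four integrals are:
\begin{itemize}
\item $\int_{\BR} |f_\sigma|^2=1$;
\item $\int_{\BR} |f_\sigma(r)|^2r^2=\tfrac{1}{4\sigma^2}=\beta^2/4$, since $|f_\sigma|^2$ is a centered Gaussian of variance $1/(4\sigma^2)$ and $\sigma=\beta^{-1}$;
\item $\int_{\BR} g=\tfrac12$;
\item $\int_{\BR} g(t)t^2\rd t=\tfrac{\beta^2}{8\pi^3}\int_{\BR} \tfrac{x^2}{\cosh x}\rd x=\beta^2/32$.
\end{itemize}
This yields a total of at most $16\|\vU\|^4\cdot\beta^2(\tfrac18+\tfrac1{32})\le 16\beta^2\|\vU\|^4$, which is the claimed bound with room to spare.

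I do not expect a genuine obstacle; the only delicate points are the Plancherel normalization in Step 2 and confirming that the frequency integral carries no weight (it does not, since the statement integrates only against $g(t)$).
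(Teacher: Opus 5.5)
Your proof is correct and follows the paper's argument essentially step for step. Both use Plancherel in $\omega$ to pass to the time domain with weight $|f_{\beta^{-1}}(q)|^2$, then the bound $\|[\cdot,\vU]\sqrt{\vsigma}\|_2\le 2\|\vU\|\,\|\cdot\|$, the Duhamel estimate $\|\vA_{\vK+\vU}(z)-\vA_{\vK}(z)\|\le 2|z|\,\|\vU\|$, and finally the moment integrals of $g$ and $|f_\sigma|^2$; your explicit evaluation of the last integral (giving $5\beta^2/32\le\beta^2$) just spells out the paper's ``simple Gaussian moment bounds'' step.
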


\begin{proof}
By Parseval's, we can peel off the frequency dependence in the integral:
\begin{align}
   &\iint g(t)\cdot \bigg\|[\widehat{\vA}_{\vK+\vU}(\omega, t)-\widehat{\vA}_{\vK}(\omega, t),\vU]\sqrt{\vsigma}\bigg \|^2_2\rd t\rd \omega \\ = & \iint g(t)\cdot  |f_{\beta^{-1}}(q)|^2\cdot \bigg\|[\vA_{\vK+\vU}( t+q)-\vA_{\vK}(t+q),\vU]\sqrt{\vsigma}\bigg \|^2_2\rd t\rd q \\
    \leq &  4\|\vU\|^2\cdot  \iint g(t)\cdot  |f_{\beta^{-1}}(q)|^2\cdot \bigg\|\vA_{\vK+\vU}( t+q)-\vA_{\vK}(t+q)\bigg \|^2\rd t\rd q\label{eq:meta-after-pars}
\end{align}
To proceed we apply Duhamel's formula to bound
\begin{equation}
    \big\|\vA_{\vK+\vU}(z)-\vA_{\vK}(z)\big \| \leq 2|z|\cdot \|\vU\|.
\end{equation}
And consequently, returning to \eqref{eq:meta-after-pars}:
\begin{align}
    \eqref{eq:meta-after-pars}&\leq 16\|\vU\|^4\cdot  \iint g(t)\cdot  |f_{\beta^{-1}}(q)|^2 (t+q)^2\rd t\rd q \\
    & = 16\|\vU\|^4\cdot \int g(t)\cdot (t^2+\beta^2/4) \rd t\\
    & \leq 16 \beta^2 \cdot \|\vU\|^4
\end{align}
from simple Gaussian moment bounds. 
\end{proof}

\subsection{Sample and time complexity}
\label{section:metastab-sample-and-time}

To conclude the proof of \cref{thm:end-to-end-metastable}, it only remains to argue about the time and sample complexity of the algorithm, which in turn is missing only a means to estimate the stationarity test on metastable states (an analog of \cref{lemma:stationarity-test-single-measurement}).

\begin{proof}

    [Proof of \cref{thm:end-to-end-metastable}] Here we simply highlight the differences to the statement of \cref{section:H-complexity}. We begin with an implementation of the stationarity test from short-time evolution in analog to \cref{lemma:stationarity-test-single-measurement}.
    
    During the $r$th round of the execution of the algorithm, suppose $\CL$ denotes the Lindbladian associated to the truncation of a guess Hamiltonian $\vG_{\CB[i, \mathsf{R}]}$ to radius $\sR$, where $\vG$ is coefficient-wise accuracy to $\vH$ to error $\eta$. It again follows from the Lieb-Robinson bound (e.g. \cref{lemma:lipschitz-within}), the continuity lemma  \cref{lemma:lipschitz}, and the definition of local metastability \cref{def:local-metastability}, that the derivative of $\vsigma$ under $\CL$ is given by:
\begin{align}
    \|\CL[\vsigma]\|_1
    &\leq
    \|\CL-\CL_{\vH,i}\|_{1\to1}  + \|\CL_{\vH,i}[\vsigma]\|_1\\
    &\leq \|\CL-\CL_{\vG,i}\|_{1\to1} +   \|\CL_{\vG,i}-\CL_{\vH,i}\|_{1\to1} + \epsilon \\
&\leq \poly(\beta\sd) \cdot \bigg(e^{-\theta\cdot \sR}+\eta\cdot \sum_{z\geq 0} |\CB[i, z]|\cdot e^{-\theta\cdot z}\bigg) + \epsilon \\ &\leq  \poly(\beta\sd)\cdot \eta \cdot \bigg(1+\sum_{z\geq 0} z^{\sD}\cdot e^{-\theta\cdot z}\bigg) +\epsilon \\
    &\leq \poly(\beta\sd) \cdot \eta,
\end{align}
so long as we appropriately choose the truncation radius $\sR = \poly(\beta\sd)(1+\log_+\frac{1}{\eta})$, and we recall the implicit assumption $\eta\geq e^{\poly(\beta\sd)}\epsilon$. The application of Duhamel's in equation \eqref{eq:duhamel-estimation-efficient} then implies we can estimate the desired stationarity test on $\CL_{\vG, i}$ by $e^{-\poly(\beta\cdot\sd)}$ time evolution of $\CL$. 

Correctness of the iterative algorithm is guaranteed by \cref{lemma:metastab-correctness}. Since asymptotically the algorithmic parameters $\ell, \kappa, c_{\mathsf{meas}}, R^{(r)}$ are unchanged from the case of Gibbs states \cref{lemma:gibbs-one-round}, the resulting sample and time complexity are inherited. Finally, following \cref{remark:rescale}, we rescale $\beta$ to the high temperature case if $\beta<\sd^{-1}$, which results in the claimed sample complexity and the decrease of the accuracy threshold $\eta_{\mathsf{thr}}\rightarrow \beta^{-1}\eta_{\mathsf{thr}}$.
\end{proof}

\printbibliography

\appendix

\section{Locality of \texorpdfstring{$\CL$}{L} \texorpdfstring{\cite{chen2023efficient}}{[CKG23]}}
\label{section:locality}

We dedicate this section to a collection of locality properties for the operator Fourier transform and the Lindbladian \cite{chen2023efficient}.

\subsection{Locality in time, space and frequency}

We begin by importing a simple statement on the real-time evolution of local observables under perturbations to the Hamiltonian.

\begin{lemma}[Time evolution under Hamiltonian perturbations
{\cite[Lemma~B.2]{chen2025learning}}]
\label{lem:can-real-time-perturbation}
Let $\vH:=\sum h_\gamma \vP_\gamma, \vG:=\sum g_\gamma \vP_\gamma$ be Hamiltonians s.t. $|h_\gamma|, |g_\gamma|\leq 1$, defined on the same degree $\leq \sd$
interaction graph. If \(\vA\) is a single-site Pauli operator, then 
\begin{equation}
  \left\|
    e^{\mathrm{i}t\vH}\vA e^{-\mathrm{i}t\vH}
    -
    e^{\mathrm{i}t\vG}\vA
      e^{-\mathrm{i}t\vG}
  \right\|
  \le
  \sum_\gamma
  |h_\gamma-g_\gamma|
  \min\!\left\{
    4\sd|t|,
    \frac{(4\sd|t|)^{\mathsf{dist}(\gamma,\vA)+1}}
         {(\mathsf{dist}(\gamma,\vA)+1)!}
  \right\}.
  \label{eq:can-real-time-perturbation}
\end{equation}
Here \(\mathsf{dist}(\gamma,\vA)\) is the interaction-graph distance from
\(\gamma\) to the support of \(\vA\).
\end{lemma}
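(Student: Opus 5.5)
The plan is to use Duhamel's formula for the difference of the two Heisenberg evolutions and then iterate it, in the spirit of a Lieb--Robinson commutator expansion. Write $\vA_{\vH}(t):=e^{\mathrm{i}t\vH}\vA e^{-\mathrm{i}t\vH}$ and similarly $\vA_{\vG}(t)$. Duhamel gives
\begin{equation}
  \vA_{\vH}(t)-\vA_{\vG}(t)=\mathrm{i}\int_0^t e^{\mathrm{i}(t-s)\vH}\,[\vH-\vG,\vA_{\vG}(s)]\,e^{-\mathrm{i}(t-s)\vH}\,\rd s .
\end{equation}
Since conjugation by a unitary preserves the operator norm, this yields
$\|\vA_{\vH}(t)-\vA_{\vG}(t)\|\le\sum_\gamma|h_\gamma-g_\gamma|\int_0^{|t|}\|[\vP_\gamma,\vA_{\vG}(s)]\|\rd s$. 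The problem is thus reduced to a per-term commutator bound
\begin{equation}
  \|[\vP_\gamma,\vA_{\vG}(s)]\|\le \min\bigg\{2,\ 2\frac{(4\sd|s|)^{\mathsf{dist}(\gamma,\vA)}}{\mathsf{dist}(\gamma,\vA)!}\bigg\}.
\end{equation}
Integrating this over $s\in[0,|t|]$ produces both branches of the claimed minimum. The trivial branch gives $2|t|\le 4\sd|t|$. The decaying branch gives $2(4\sd)^{m}|t|^{m+1}/(m+1)!$ with $m=\mathsf{dist}(\gamma,\vA)$, and this is at most $(4\sd|t|)^{m+1}/(m+1)!$ because $\sd\ge1$.

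For the commutator bound, I will use the standard iterated-commutator argument. Fix $\gamma$ and set $C_\gamma(s):=\|[\vP_\gamma,\vA_{\vG}(s)]\|$. A second Duhamel step, in which the evolution is written as a Heisenberg evolution of $\vP_\gamma$ relative to $\vA$, gives the Gr\"onwall-type inequality
$C_\gamma(s)\le C_\gamma(0)+2\sum_{\gamma'\sim\gamma}|g_{\gamma'}|\int_0^{|s|}C_{\gamma'}(u)\rd u$. Here $\gamma'\sim\gamma$ means the two supports overlap, and $C_\gamma(0)=0$ unless $\gamma$ touches the site of $\vA$, in which case $C_\gamma(0)\le2$. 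Iterating $m$ times turns the right-hand side into a sum over walks of length $m$ on the interaction graph ending at a clause touching $\vA$. With $|g_{\gamma'}|\le1$ and degree at most $\sd$, there are at most $\sd^m$ such walks, each carrying factor $2^m|s|^m/m!$. Walks shorter than $\mathsf{dist}(\gamma,\vA)$ contribute nothing, so the first nonzero order is $m=\mathsf{dist}(\gamma,\vA)$. Summing the tail of the series gives $2\sum_{m\ge\mathsf{dist}}(2\sd|s|)^m/m!$.

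The main obstacle is turning this tail into a single leading term. The tail is at most $2\frac{(2\sd|s|)^{\mathsf{dist}}}{\mathsf{dist}!}e^{2\sd|s|}$, and the factor $e^{2\sd|s|}$ is not present in the claim. I would handle it with the slack in the constant $4$ versus $2$. In the regime $2\sd|s|\le(\mathsf{dist}+1)/2$, the ratio of consecutive terms is at most $1/2$, so the tail is at most twice its first term, and this is absorbed by $2^{\mathsf{dist}}$. In the complementary regime, $(4\sd|s|)^{\mathsf{dist}}/\mathsf{dist}!\ge1$ by Stirling-type estimates, so the trivial bound $C_\gamma\le2$ already implies the claimed inequality. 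I expect the bookkeeping of these two regimes to be the only delicate point. If the constants do not close exactly, I would fall back on the sharper path-counting used in \cite[Lemma~B.2]{chen2025learning}, where perturbation terms are attached only along the geodesic.
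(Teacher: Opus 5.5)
The paper gives no proof of this lemma; it imports it from \cite[Lemma~B.2]{chen2025learning}. Your Duhamel reduction, followed by the Lieb--Robinson-type Gr\"onwall recursion and walk counting, is a correct and self-contained proof.

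The constants close exactly, so no fallback is needed. Write $D=\mathsf{dist}(\gamma,\vA)\geq 1$ and $x=2\sd|s|$.
\begin{itemize}
  \item If $x\leq (D+1)/2$, the tail is at most $2x^D/D!\leq (2x)^D/D!$.
  \item If $x>(D+1)/2$, then $(2x)^D/D!>(D+1)^D/D!\geq 1$. The trivial bound $2$ therefore suffices, and no Stirling estimate is required.
\end{itemize}
The walk count $\sd^m$ matches the paper's convention that the degree includes the self-edge.
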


Next, we leverage Lieb-Robinson bounds and the convergence of the operator norm of the OFT, to establish locality properties for the OFT which decay in both space and in frequency. We note this is a mild generalization of \cite[Lemma VI.2]{BC26} in the 1D case to higher-dimensional systems.

\begin{lemma}
[Locality in Space and Frequency]\label{lemma:oft-loc} In the context of \cref{def:oft}, let $\vA$ be a single site Pauli operator on a site $i\in [n]$ and $\vG$ a degree $\sd$ Hamiltonian. For $\sigma>0$ denote the $\ell\geq 0$ truncation to the OFT $\hat{\vA}_{\vG, \sigma}^{(\ell)}(\omega):=\hat{\vA}_{\vG_{\CB[i, \ell]}, \sigma}(\omega)$ with $\hat{\vA}_{\vG, \sigma}^{(-1)}(\omega):=0.$ Then,
\begin{align}
     \hat{\vA}_{\vG, \sigma}(\omega)
    &=
    \sum_{\ell\geq 0}
        \hat{\vA}_{\vG, \sigma}^{(\ell)}(\omega)- \hat{\vA}_{\vG, \sigma}^{(\ell-1)}(\omega) \quad \text{where} \\ \| \hat{\vA}_{\vG, \sigma}^{(\ell)}(\omega)- \hat{\vA}_{\vG, \sigma}^{(\ell-1)}(\omega)\|&\leq \frac{c_{\mathrm{loc-oft}}}{\sqrt{\sigma}}\cdot e^{-|\omega|/8\sd-\mu \ell/2}
\end{align}
where $c_{\mathrm{loc-oft}}:=\mathsf{c_{LR}}^{1/2}2^{5/4}\exp[\frac{\sigma^2}{32\sd^2}+\frac{v^2}{8\sigma^2}]$ is a constant which depends only on $\sigma, \sd$ and the Lieb-Robinson constants $\mathsf{c_{LR}}, \mu, v$. 
\end{lemma}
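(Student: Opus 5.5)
The plan is to obtain, for each shell difference, two bounds and combine them by a geometric mean. The first bound is a spatial (Lieb--Robinson) bound, with no frequency decay. The second is a frequency bound, with no spatial decay. Write $\vD_\ell(\omega):=\hat{\vA}^{(\ell)}_{\vG,\sigma}(\omega)-\hat{\vA}^{(\ell-1)}_{\vG,\sigma}(\omega)$. The telescoping identity is immediate, since for $\ell$ larger than the graph diameter $\vG_{\CB[i,\ell]}=\vG$, so the sum is finite. Both estimates below are uniform in $\ell$, and then
\[
\|\vD_\ell\|\le \min(X_\ell,Y_\omega)\le \sqrt{X_\ell Y_\omega}.
\]
This is what produces the halved rates $\mu\ell/2$ and (roughly) $|\omega|/8\sd$, and the square root of $\mathsf{c_{LR}}$ in $c_{\mathrm{loc-oft}}$.

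\textbf{Spatial bound.} By \cref{def:oft},
\[
\vD_\ell(\omega)=\frac{1}{\sqrt{2\pi}}\int \big(\vA^{(\ell)}(t)-\vA^{(\ell-1)}(t)\big)e^{-i\omega t}f_\sigma(t)\,\rd t ,
\]
which is exactly the shell increment of \cref{lem:lieb-robinson-commutator} integrated against the Gaussian. Applying \eqref{eq:LR-annulus-decomposition} with $\theta=1$ and $|\sA|=1$ gives
\[
X_\ell=\frac{\mathsf{c_{LR}}}{\sqrt{2\pi}}\int e^{v|t|-\mu\ell}|f_\sigma(t)|\,\rd t\lesssim \frac{\mathsf{c_{LR}}}{\sqrt{\sigma}}\,e^{v^2/4\sigma^2}e^{-\mu\ell}.
\]
The last step is the Gaussian integral $\int e^{v|t|-\sigma^2t^2}\,\rd t\le 2\sqrt{\pi}\sigma^{-1}e^{v^2/4\sigma^2}$, together with the normalization $\sqrt{\sigma\sqrt{2/\pi}}$ of $f_\sigma$. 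After taking the square root, this gives the factor $e^{v^2/8\sigma^2}$ in the constant.

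\textbf{Frequency bound.} Bound $\|\vD_\ell\|\le\|\hat{\vA}^{(\ell)}(\omega)\|+\|\hat{\vA}^{(\ell-1)}(\omega)\|$. Each term is controlled by \cref{lem:bounds_oft}, applied to the truncated Hamiltonian at a free auxiliary inverse temperature $\beta'$ with $|\beta'|<(2\sd)^{-1}$; this is distinct from the physical $\beta$. Combined with \cref{lem:norm_bound_pauli} for $w=1$ (the truncated Hamiltonian still has degree $\le\sd$), this gives
\[
\|\hat{\vA}^{(\ell)}(\omega)\|\le \frac{e^{-\beta'\omega+\sigma^2\beta'^2}}{\sqrt{\sigma\sqrt{2\pi}}}\cdot\frac{1}{1-2\sd|\beta'|}.
\]
Choose $\beta'=\mathrm{sgn}(\omega)/(4\sd)$. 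Then $(1-2\sd|\beta'|)^{-1}=2$, and
\[
Y_\omega\le \frac{4}{\sqrt{\sigma\sqrt{2\pi}}}\,e^{\sigma^2/16\sd^2}e^{-|\omega|/4\sd}.
\]
After the square root this yields $e^{-|\omega|/8\sd}$ and the factor $e^{\sigma^2/32\sd^2}$, matching $c_{\mathrm{loc-oft}}$. The $\ell=0$ term uses $\hat{\vA}^{(-1)}=0$ and needs only the frequency bound together with $\|\hat{\vA}^{(0)}\|\lesssim\sigma^{-1/2}$.

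\textbf{Main obstacle.} The only delicate point is constant bookkeeping: making the numerical prefactors $2^{5/4}$ and the Gaussian normalizations combine exactly into the stated $c_{\mathrm{loc-oft}}$. This requires tracking $\sqrt{2\pi}$ and $\sqrt{\sigma\sqrt{2/\pi}}$ carefully. Conceptually the argument is just the interpolation $\min(X,Y)\le\sqrt{XY}$; I would verify the constant last and, if needed, adjust the choice of $\beta'$ or $\theta$.
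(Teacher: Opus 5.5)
Your proposal is correct and follows the paper's route: an a priori frequency bound from \cref{lem:bounds_oft,lem:norm_bound_pauli} at imaginary time $\pm 1/(4\sd)$, and the Lieb--Robinson shell bound integrated against $f_\sigma$, combined via $\min(a,b)\le\sqrt{ab}$. The constant bookkeeping you flagged also works out: the spatial bound is exactly $2^{3/4}\pi^{-1/4}\mathsf{c_{LR}}\,\sigma^{-1/2}e^{v^2/4\sigma^2}e^{-\mu\ell}$, and even with the factor $4$ from your triangle inequality the geometric mean gives $2^{5/4}\pi^{-1/4}\le 2^{5/4}$, so the stated $c_{\mathrm{loc-oft}}$ holds.
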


\begin{proof}
    It follows by \cref{lem:bounds_oft,lem:norm_bound_pauli} that the $\ell$th term above satisfies the apriori norm bound
    \begin{equation}
    	\|\hat{\vA}_{\vG, \sigma}^{(\ell)}(\omega)\|\leq \frac{2}{\sqrt{\sigma \sqrt{2\pi}}} \exp\bigg(\frac{\sigma^2}{16\sd^2}-\frac{|\omega|}{4\sd}\bigg).
    \end{equation}
    Furthermore, by a LR bound we can establish locality in space. For conciseness, let $\vG^{(\ell)}:=\vG_{\CB[i, \ell]}$. From the telescoping formulation of the LR bound in \cref{lem:lieb-robinson-commutator}, it follows that
    \begin{align}
    \|\hat{\vA}_{\vG, \sigma}^{(\ell)}(\omega)- \hat{\vA}_{\vG, \sigma}^{(\ell-1)}(\omega)\|&\leq \frac{1}{\sqrt{2\pi}} \int |f_{\sigma}(t)|\cdot \| e^{i\vG^{(\ell)} t}\vA e^{-i\vG^{(\ell)} t}-e^{i\vG^{(\ell-1)} t}\vA e^{-i\vG^{(\ell-1)} t}\| \rd t  \\ &\leq \frac{c_{LR} e^{-\mu \ell}}{\sqrt{2\pi}}\cdot   \int f_{\sigma}(t) e^{v|t|} \rd t \\
    & \leq c_{LR} e^{-\mu \ell}\cdot \frac{2^{3/4}}{\pi^{1/4}\sqrt{\sigma }}\cdot e^{v^2/4\sigma^2}.
    \end{align}
    If $0\leq x\leq \min(a, b)$ then $x\leq \sqrt{ab}$ which then gives the desired bound. Finally, the LR constants depend only on the degree of the Hamiltonian.
\end{proof}

\subsection{Lipschitz continuity properties}

Building on these quasi-locality properties, next we combine the two lemmas above in order to establish Lipschitz continuity of the shell decomposition of the OFT, under perturbations to the Hamiltonian. 

\begin{lemma}
    [Lipschitz continuity of the OFT]\label{lemma:oft-lipschitz} Let $\vH:=\sum h_\gamma \vP_\gamma, \vG:=\sum g_\gamma \vP_\gamma$ be $(\sd,\sD, \scgr)$ Hamiltonians defined on the same interaction graph. Suppose $\vU:=\vG-\vH$ has coefficients supported in $ \CB[i, \ell]$ for some $i\in [n]$ and $\ell\geq 1$. Then, for $\sigma>0$ the OFT of a single-site operator $\vA^a, \|\vA^a\|\leq 1$ supported on site $i$ admits the telescoping decomposition:
    \begin{align}
      \forall t, \omega\in \BR,\quad \theta\in (0, 1]:\quad  \widehat{\vA}_{\vG, \sigma}^{a}(\nu,t) -
 \widehat{\vA}_{\vH, \sigma}^{a}(\nu,t)
& =
 \sum_{b\geq\ell}\vZ_b, \\ \|\vZ_b\|&\leq \|\vU\|\cdot \mathsf{c_{oft-Lip}}\cdot  e^{\theta(2v|t|-\mu \cdot (b-\ell))}
    \end{align}
    where $ \supp(\vZ_b)\subseteq\sB[i,b]$, and $ \mathsf{c_{LR}}, \mu, v$ are the Lieb-Robinson constants \cref{lem:lieb-robinson-commutator}. Furthermore, $\mathsf{c_{oft-Lip}}:= \mathsf{c}\cdot  \sigma^{-1/2}(\sigma^{-1} + (\theta v)^{-1})e^{v^2/\sigma^2}$, for a constant $\mathsf{c}$ which depends only on $\sd, \mathsf{c_{LR}}$.
\end{lemma}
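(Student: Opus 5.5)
We would prove the statement by combining the spatial shell decomposition of the Lieb-Robinson bound (\cref{lem:lieb-robinson-commutator}) with the real-time perturbation bound \cref{lem:can-real-time-perturbation}, and then integrating against the Gaussian filter $f_\sigma$ in the definition of the OFT. First we reduce to time evolutions: since $\widehat{\vA}_{\vG,\sigma}^a(\nu,t)=\frac{1}{\sqrt{2\pi}}\int e^{-i\nu r}f_\sigma(r)\,\vA_{\vG}(t+r)\,\rd r$, it suffices to build a shell decomposition of $\vA_{\vG}(u)-\vA_{\vH}(u)$ for each real $u$. We set $\vZ_b:=\frac{1}{\sqrt{2\pi}}\int e^{-i\nu r}f_\sigma(r)\,\vW_b(t+r)\,\rd r$ and bound it using $\int |f_\sigma(r)|e^{2\theta v|r|}\rd r\lesssim \sigma^{-1/2}e^{v^2/\sigma^2}$ together with $e^{2\theta v|t+r|}\le e^{2\theta v|t|}e^{2\theta v|r|}$.

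For the shells, let $\vG^{(b)}:=\vG_{\CB[i,b]}$ and $\vH^{(b)}:=\vH_{\CB[i,b]}$. Because $\vU$ is supported in $\CB[i,\ell]$, these truncations differ by exactly $\vU$ for every $b\ge\ell$. We define
\[
\vW_\ell(u):=\vA_{\vG^{(\ell)}}(u)-\vA_{\vH^{(\ell)}}(u),
\]
and, for $b>\ell$,
\[
\vW_b(u):=\big[\vA_{\vG^{(b)}}(u)-\vA_{\vG^{(b-1)}}(u)\big]-\big[\vA_{\vH^{(b)}}(u)-\vA_{\vH^{(b-1)}}(u)\big].
\]
The sum over $b\geq \ell$ telescopes to $\vA_{\vG}(u)-\vA_{\vH}(u)$, and each $\vW_b$ is supported in $\sB[i,b]$. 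Two bounds are available for $\vW_b$.
\begin{enumerate}
\item The Lieb-Robinson bound applied to each bracket separately gives $\|\vW_b\|\le 2\mathsf{c_{LR}}e^{v|u|-\mu b}$. This has the right decay in $b$ but no factor of $\|\vU\|$.
\item A Duhamel argument gives a factor of $\|\vU\|$. One writes $\vA_{\vG^{(b)}}(u)-\vA_{\vH^{(b)}}(u)=i\int_0^u e^{is\vG^{(b)}}[\vU,\vA_{\vH^{(b)}}(u-s)]e^{-is\vG^{(b)}}\rd s$, and subtracts the same expression at $b-1$. The difference then involves changing the truncation either of the inner evolution $\vA_{\vH}$ or of the outer conjugation by $\vG$. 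Each change costs $e^{v|u|-\mu(b-\ell)}$ by Lieb-Robinson, because the commutator $[\vU,\cdot]$ is localized within radius $\ell$ of $i$, so the outer conjugation acts on an operator supported on $\sB[i,\ell]$. This yields $\|\vW_b\|\lesssim \|\vU\|\,|u|\,e^{2v|u|-\mu(b-\ell)}$, with the $|\sA|$ prefactor of \cref{lem:lieb-robinson-commutator} absorbed using polynomial growth.
\end{enumerate}

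Next we interpolate using the free parameter $\theta$, in the same way \cref{lemma:oft-loc} uses $x\le\sqrt{ab}$. For $\theta\in(0,1]$ the Lieb-Robinson estimate may be used with $\theta$ in the exponent, as allowed by \eqref{eq:LR-annulus-decomposition}. The extra factor $|u|$ is absorbed by $|u|e^{\theta v|u|}\le (\theta v)^{-1}e^{2\theta v|u|}$. This is where the $(\theta v)^{-1}$ in $\mathsf{c_{oft-Lip}}$ should come from, and the $\sigma^{-1}$ term comes from the Gaussian moment $\int |r||f_\sigma(r)|\,\rd r$. Integrating against $f_\sigma$ then gives $\|\vZ_b\|\le \|\vU\|\,\mathsf{c_{oft-Lip}}\,e^{\theta(2v|t|-\mu(b-\ell))}$.

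The main obstacle is the Duhamel step: obtaining a factor of $\|\vU\|$ and decay in $b-\ell$ simultaneously for the shell difference, without the support of $\vU$ spoiling the Lieb-Robinson light cone. Our plan is to nest the two telescopings, outer in $\vG$ and inner in $\vH$, so that every error term is a commutator $[\vU,\vA_{\vH^{(\cdot)}}(\cdot)]$ conjugated by a truncated evolution. Each truncation change is then charged to a single Lieb-Robinson annulus estimate measured from $\sB[i,\ell]$. If the bound for the $\vG$-conjugation proves awkward, we would instead use the cruder estimate \cref{lem:can-real-time-perturbation}. It still carries the factor $\|\vU\|$ but loses a factor $|\CB[i,\ell]|$, which would have to be absorbed into the constant.
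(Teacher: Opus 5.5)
Your shells are the paper's. For $b\ge\ell$ one has $\vG^{(b)}=\vH^{(b)}+\vU$, so your $\vZ_b$ coincides with the paper's $\int_0^1(\partial_x\widehat{\vA}_{\vK_x^{(b)}}-\partial_x\widehat{\vA}_{\vK_x^{(b-1)}})\,\rd x$ with $\vK_x=\vH+x\vU$. The Gaussian integration and the absorption of $|u|$ into $(\theta v)^{-1}$ are also as in the paper.

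The gap is in your bound 2, which carries the whole lemma. Your Duhamel identity
\[
\vA_{\vG^{(b)}}(u)-\vA_{\vH^{(b)}}(u)=i\int_0^u e^{is\vG^{(b)}}[\vU,\vA_{\vH^{(b)}}(u-s)]e^{-is\vG^{(b)}}\,\rd s
\]
is correct, and changing the inner truncation $\vH^{(b)}\to\vH^{(b-1)}$ costs $2\|\vU\|\mathsf{c_{LR}}e^{\theta(v|u-s|-\mu b)}$, which is fine. The problem is the outer change $\vG^{(b)}\to\vG^{(b-1)}$.

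\begin{itemize}
\item The conjugated operator is $[\vU,\vA_{\vH^{(b-1)}}(u-s)]$. A commutator $[\vU,\vX]$ is supported on $\supp\vU\cup\supp\vX$; commuting with $\vU$ does not localize $\vX$. So this operator is spread over all of $\sB[i,b-1]$, not $\sB[i,\ell]$.
\item The terms of $\vG^{(b)}-\vG^{(b-1)}$ lie at distance $b$ from $i$, so they overlap the boundary of $\sB[i,b-1]$. There is no light-cone separation, and your argument gives no decay in $b-\ell$ for this piece.
\item Your plan to charge every truncation change to an annulus ``measured from $\sB[i,\ell]$'' repeats the same error.
\item The fallback does not close the gap. \cref{lem:can-real-time-perturbation} is for a single-site operator. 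Applied to $\mathbf{W}_b$ directly, it gives a factor $\|\vU\|$ with no decay in $b$. Combining that with your bound 1 via $x\le\sqrt{ab}$ gives only a fractional power of $\|\vU\|$, not the linear dependence you need.
\end{itemize}

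The missing idea is to make both operators in the commutator Heisenberg evolutions of \emph{local} operators under the \emph{same} truncated Hamiltonian. The paper does this by differentiating along the interpolation:
\[
\partial_x\vA_{\vK_x}(q)=i\Big[\int_0^q\vU_{\vK_x}(r)\,\rd r,\vA_{\vK_x}(q)\Big].
\]
The $b$ versus $b-1$ difference then splits by the product rule into
\[
[\vU_{\vK^{(b)}_x}(r)-\vU_{\vK^{(b-1)}_x}(r),\vA_{\vK^{(b)}_x}(q)]+[\vU_{\vK^{(b-1)}_x}(r),\vA_{\vK^{(b)}_x}(q)-\vA_{\vK^{(b-1)}_x}(q)].
\]
Each bracket is a genuine Lieb--Robinson annulus estimate, from $\sB[i,\ell]$ and from $\{i\}$ respectively. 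Together they give $c\,|q|\,\|\vU\|\,e^{\theta(v|q|-\mu(b-\ell))}$.

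You can also repair your own representation in two ways:
\begin{itemize}
\item Rewrite the integrand as $[\vU_{\vG^{(b)}}(s),\,e^{is\vG^{(b)}}\vA_{\vH^{(b)}}(u-s)e^{-is\vG^{(b)}}]$. Then use a Lieb--Robinson bound for the composed, piecewise time-dependent evolution of $\vA$, whose light cone is governed by $|s|+|u-s|=|u|$.
\item Genuinely nest the telescopings: split $\vA_{\vH^{(b-1)}}(u-s)$ into its own shells of radius $k$, and measure each outer truncation change from $\sB[i,\max(k,\ell)]$ rather than $\sB[i,\ell]$. This works, but costs volume factors $|\sB[i,k]|$ (controlled by polynomial growth) and part of the decay rate.
\end{itemize}
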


\begin{proof}
    Consider the interpolation $\vK_x:=\vH + x(\vG-\vH)$ between $\vG$ and $\vH$. A telescoping decomposition gives:
    \begin{align}
      \vZ_b(\nu,t) = \int_0^1
 \left(
   \partial_x
 \widehat{\vA}_{\vK_x^{(b)}}^{a}(\nu,t)
   -
    \partial_x
 \widehat{\vA}_{\vK_x^{(b-1)}}^{a}(\nu,t)
 \right)\rd x\label{eq:z-def-oft}
    \end{align}
    where $\vK_x^{(b)}:=(\vK_x)_{\CB[i, b]}$ and thus $ \supp(\vZ_b)\subseteq\sB[i,b]$ by construction. The base case $b=\ell$ simply omits the second summand. To understand the derivative, we use Duhamel's formula on the time-evolution:
   \begin{equation}
     \partial_x\vA_{\vK_x}(q)
    =
    \mathrm{i}
    \left[
        \int_0^q(\vG-\vH)_{\vK_x}(r)\,\rd r,
       \vA_{\vK_x}(q)
    \right],
\end{equation}
whose quasi-locality follows from a Lieb-Robinson bound (\cref{lem:lieb-robinson-commutator}) applied to $\vA_{\vK_x}(q)$ and $\vU_{\vK_x}(r)$. The assumption $b\geq \ell$ and that $\vU$ is local further enables one to remove factors of volume\footnote{We remark this removal is not strictly necessary for our applications but makes the statements somewhat cleaner.} in $\ell$, by expanding $\vU=\sum_{\gamma}u_\gamma \vP_\gamma$ and leveraging the fact $\sum_{\gamma}|u_\gamma|\leq \poly(\sd)\cdot \|\vU\|$ see e.g. \cite[Theorem 7 and Section 7]{HPT23}. Then, 
\begin{align}
    \bigg\| \partial_x
 \vA_{\vK_x^{(b)}}(q)
   -
    \partial_x
 \vA_{\vK_x^{(b-1)}}(q)
\bigg\| \leq |q|\cdot e^{\theta(v|q|-\mu (b-\ell))} \cdot \|\vU\| \cdot c_1\label{eq:derivs-error}
\end{align}
for a suitable constant $c_1$ as a function of $\mathsf{c_{LR}}, \sd$. Finally, we integrate over $q$ in \eqref{eq:z-def-oft} to arrive at the desired bound: 
\begin{align}
 \|\vZ_b(\nu,t)\|
 &\leq \frac{c_1\|\vU\|e^{-\theta\mu(b-\ell)}}{\sqrt{2\pi}}
       \int_{\BR}|f_\sigma(s)|\,|t+s|e^{\theta v|t+s|}\,\rd s
       \label{eq:a3-oft-integral}\\
 &\leq \frac{c_1\|\vU\|}{\sqrt{\sigma}}
       \left[\left(\frac2\pi\right)^{1/4}|t|
             +\left(\frac2\pi\right)^{3/4}\frac1\sigma\right]
       e^{(\theta v)^2/(2\sigma^2)+\theta v|t|-\theta\mu(b-\ell)}
       \label{eq:a3-gaussian-moments}\\
 &\leq \|\vU\|\,\mathsf{c_{oft-Lip}}\,
       e^{\theta(2v|t|-\mu(b-\ell))}.
       \label{eq:a3-oft-integrated}
\end{align}
where in the last line we used $|x|e^{-a|x|}\leq(ea)^{-1}$.
\end{proof}

We can now finally arrive at the desired Lipschitz continuity properties for the Lindbladian generator $\CL$ from \cite{chen2023efficient}. 
\begin{lemma}
	[Lipschitz Continuity of $\CL$]\label{lemma:lipschitz} Let $\vH:=\sum_{\gamma\in \Gamma} h_\gamma \vP_\gamma, \vG:=\sum_{\gamma\in \Gamma} g_\gamma \vP_\gamma$ be Hamiltonians defined on the same degree $\leq \sd$ interaction graph. Fix a site $i\in [n]$, a single-site Pauli operator $\vA^a$ on $i$, and $\sigma=\beta^{-1}$. Then, 
    \begin{align}
        \|\CL^\dagger_{\vH, a}-\CL^\dagger_{\vG, a}\|_{\infty-\infty} \leq 32\beta\sd\cdot \sum_{\gamma\in \Gamma} |g_\gamma-h_\gamma|\cdot e^{-\theta \cdot \mathsf{dist}(i, \gamma)}
    \end{align}
    where $\theta:=\min\{1,(8e\beta\sd)^{-1}\}$.
\end{lemma}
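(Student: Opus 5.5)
The plan is to reduce the Lindbladian difference to differences of real-time evolutions of the jump $\vA^a$, and then to use \cref{lem:can-real-time-perturbation} term by term.

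\textbf{Step 1: write the adjoint.} In the Heisenberg picture,
\[
\CL^\dagger_{\vG,a}[\vO]=i[\vC^a_{\vG},\vO]+\int\gamma(\omega)\Big(\hat\vA_{\vG}(\omega)^\dagger\vO\hat\vA_{\vG}(\omega)-\tfrac12\{\hat\vA_{\vG}(\omega)^\dagger\hat\vA_{\vG}(\omega),\vO\}\Big)\rd\omega .
\]
Both the transition and decay parts are bilinear in $\hat\vA_{\vG}$. Write each difference as $\hat\vA_{\vG}^\dagger\vO\hat\vA_{\vG}-\hat\vA_{\vH}^\dagger\vO\hat\vA_{\vH}=(\hat\vA_{\vG}-\hat\vA_{\vH})^\dagger\vO\hat\vA_{\vG}+\hat\vA_{\vH}^\dagger\vO(\hat\vA_{\vG}-\hat\vA_{\vH})$. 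Taking the $\omega$-integral before any norm is essential here. Pointwise norms of OFTs are not summable against $\gamma_{\mathsf M}$ at negative frequencies, so I will instead rewrite $\int\gamma(\omega)\hat\vA(\omega)^\dagger\vO\hat\vB(\omega)\rd\omega$ in the time domain. Using \cref{eq:OFT}, this becomes $\iint \tilde\gamma(t-t')\,\vA(t)^\dagger\vO\vB(t')\,\rd t\,\rd t'$, where $\tilde\gamma$ is a kernel built from $f_\sigma$ and the Fourier transform of $\gamma_{\mathsf M}$. This kernel has a distributional part (a delta from the constant branch of $\gamma_{\mathsf M}$), which is handled by Parseval (\cref{lem:OParseval}). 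For $\sigma=\beta^{-1}$ the total variation of the effective time-domain weight is $O(1)$, with Gaussian-type tails of width $\sim\beta$.

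\textbf{Step 2: bound the transition and decay parts.} Each difference term is then bounded by
\[
2\|\vO\|\int w(t)\,\|\vA_{\vG}(t)-\vA_{\vH}(t)\|\,\rd t ,
\]
with a positive weight $w$ of unit mass decaying like $e^{-t^2/\beta^2}$.

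\textbf{Step 3: the coherent part.} The coherent term $\vC^a$ is a double integral with $c(t)=1/(\beta\sinh(2\pi t/\beta))$. This kernel is singular at $t=0$ but odd, so only the principal value contributes. I would use the representation of \cite{chen2023efficient} of $\vC^a$ as a smooth time-domain integral over products $\vA(t_1)^\dagger\vA(t_2)$, with a kernel that is absolutely integrable and decays exponentially at rate $\sim 1/\beta$. Then $\|\vC^a_{\vG}-\vC^a_{\vH}\|$ is bounded by the same type of weighted integral as in Step 2, with a weight of exponential (rather than Gaussian) tail $e^{-2\pi|t|/\beta}$, times an $O(1)$ constant.

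This step is the main obstacle. The singular kernel $c(t)$ and the frequency integral must be combined before taking norms; I would first try to quote the explicit smooth formula for $\vC^a$ from \cite{chen2023efficient}.

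\textbf{Step 4: insert the real-time perturbation bound.} By \cref{lem:can-real-time-perturbation},
\[
\|\vA_{\vG}(t)-\vA_{\vH}(t)\|\le\sum_\gamma|g_\gamma-h_\gamma|\min\Big\{4\sd|t|,\tfrac{(4\sd|t|)^{k+1}}{(k+1)!}\Big\},\qquad k=\dist(i,\gamma).
\]
Integrate against the exponential weight $e^{-c|t|/\beta}$. For each $\gamma$, choose $x=\theta k$ and compare: for $|t|\le k/(8e\sd)$ the factorial term gives at most $2^{-(k+1)}$, and for larger $|t|$ the weight gives $e^{-ck/(8e\beta\sd)}$. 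Both are at most $e^{-\theta k}$ with $\theta=\min\{1,(8e\beta\sd)^{-1}\}$. The prefactor is the first moment $\int w(t)\,4\sd|t|\,\rd t\lesssim\beta\sd$. Collecting constants from the three pieces (coherent, transition, decay) gives the claimed $32\beta\sd$.
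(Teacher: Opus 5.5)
Your skeleton is the same as the paper's: reduce everything to $\|\vA_{\vG}(t)-\vA_{\vH}(t)\|\le |t|e^{|t|/(2\beta)}\mathsf{\Delta}$, where $\mathsf{\Delta}:=4\sd\sum_\gamma|h_\gamma-g_\gamma|e^{-\theta\,\mathsf{dist}(i,\gamma)}$, and then apply \cref{lem:can-real-time-perturbation}. However, the two steps that do the real work do not hold as stated.

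\textbf{Steps 1--2.} The time-domain kernel of $\gamma_{\mathsf M}$ is not ``a delta plus a kernel of bounded total variation''.
\begin{itemize}
\item Because $\gamma_{\mathsf M}\to 1$ as $\omega\to-\infty$ and $\gamma_{\mathsf M}\to 0$ as $\omega\to+\infty$, its inverse Fourier transform is, up to normalization and the phase $e^{i\omega_0 x}$ with $\omega_0=-\beta\sigma^2/2$, equal to $\pi\delta(x)-\mathrm{p.v.}\,\frac{i\beta}{x(\beta-ix)}$. The half-line constant branch therefore produces a principal-value $1/x$ singularity in addition to the delta.
\item Consequently $f_\sigma(t)f_\sigma(t')\check\gamma(t-t')$ has logarithmically divergent total variation along $t=t'$. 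Your bound $2\|\vO\|\int w\,\|\vA_\vG(t)-\vA_\vH(t)\|\,\rd t$ with a unit-mass $w$ is not justified.
\end{itemize}
The missing idea is that no time-domain form of $\gamma$ is needed at all.
\begin{itemize}
\item Operator Cauchy--Schwarz gives $\|\int\gamma\,\vX^\dagger\vO\vY\,\rd\omega\|\le\|\vO\|\,\|\int\gamma\,\vX^\dagger\vX\,\rd\omega\|^{1/2}\|\int\gamma\,\vY^\dagger\vY\,\rd\omega\|^{1/2}$.
\item Then use only $0\le\gamma_{\mathsf M}\le 1$ and operator Parseval. Note that $\hat\vA_\vG-\hat\vA_\vH$ is the Fourier transform of $f_\sigma(t)(\vA_\vG(t)-\vA_\vH(t))$.
\item This gives the dissipative bound $4\big(\int|f_{\beta^{-1}}(t)|^2\|\vA_\vG(t)-\vA_\vH(t)\|^2\rd t\big)^{1/2}\le 4\beta\mathsf{\Delta}$. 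The weight is $L^2$ rather than $L^1$, which is all that is needed.
\end{itemize}

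\textbf{Step 3.} This step is left open, and the representation you hope to quote does not have the property you need.
\begin{itemize}
\item Formally, $\vC^a=\iint K(t_1,t_2)\vA(t_1)^\dagger\vA(t_2)\,\rd t_1\rd t_2$ with $K(t_1,t_2)\propto\check\gamma(t_1-t_2)\int c(t)f_\sigma(t_1-t)f_\sigma(t_2-t)\,\rd t$.
\item The last integral does not vanish on the diagonal, so $K$ inherits the same non-integrable $1/(t_1-t_2)$ singularity. There is no absolutely integrable kernel to quote.
\end{itemize}
The paper avoids this by keeping the frequency integral inside $\vQ_\vK(t):=\int\gamma_{\mathsf M}(\omega)\hat\vA_\vK(\omega,t)^\dagger\hat\vA_\vK(\omega,t)\,\rd\omega$ and using that $c$ is odd:
\[
\vC_\vG-\vC_\vH=\int_0^\infty c(t)\big[(\vQ_\vG(t)-\vQ_\vH(t))-(\vQ_\vG(-t)-\vQ_\vH(-t))\big]\rd t.
\]
\begin{itemize}
\item Up to a $\vK$-independent phase, $\hat\vA_\vK(\omega,\pm t)$ is the OFT with shifted filter $f_\pm(u)=f_{\beta^{-1}}(u\mp t)$.
\item So the bracket expands bilinearly in $\delta f=f_+-f_-$, with $\|\delta f\|_2^2\le 4\min(1,t^2/\beta^2)$.
\item The same Cauchy--Schwarz/Parseval argument bounds the bracket by $12\mathsf{\Delta}\,t(1+t/\beta)^2e^{t/(2\beta)}$.
\item This vanishes linearly at $t=0$, cancelling the $1/t$ in $c(t)$, and is beaten by $e^{-2\pi t/\beta}$ at large $t$. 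The result is $\|\vC_\vG-\vC_\vH\|\le 2\beta\mathsf{\Delta}$.
\end{itemize}

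\textbf{Step 4.} The paper's pointwise inequality $\min\{1,z^m/(m+1)!\}\le e^{\theta(ez-m)}$, with $e^{4e\theta\sd|t|}\le e^{|t|/(2\beta)}$ absorbed into the weight, is cleaner than your case split. Your comparison $2^{-(k+1)}\le e^{-\theta k}$ fails for $\theta>\log 2$ (small $\beta\sd$) and large $k$.
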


 The proof is based on the computation in \cref{lem:can-real-time-perturbation} from \cite{chen2025learning} and bounds on the filter function integrals. We highlight certain special cases in the sequence. 

\begin{proof} 

[of \cref{lemma:lipschitz}]
Let us fix $\sigma=\beta^{-1}$ and
$\theta:=\min\{1,(8e\beta\sd)^{-1}\}$. It follows from the Lipschitz continuity of the time evolution
Lemma~\ref{lem:can-real-time-perturbation}, using
$\min\{1,z^m/(m+1)!\}\leq e^{\theta(ez-m)}$, that
\begin{equation}\label{eq:a4-time-short}
\begin{aligned}
 \|\vA_{\vG}(t)-\vA_{\vH}(t)\|
 &\leq 4\sd|t|e^{|t|/(2\beta)}
       \sum_{\gamma\in\Gamma}|h_\gamma-g_\gamma|
                       e^{-\theta\mathsf{dist}(i,\gamma)}=: |t|e^{|t|/(2\beta)} \cdot \mathsf{\Delta}
\end{aligned}
\end{equation}
where for conciseness we introduced the error parameter $\mathsf{\Delta}$. We next treat the dissipative and coherent components of the Lindbladian \cref{def:ckg-lind} separately.  \\

\noindent \textbf{Dissipative component.} The dissipative contribution ends up as the easier of the two cases. By the operator Cauchy-Schwarz inequality, followed by the operator Parseval's identity (\cref{lem:OParseval}): 
\begin{equation}\label{eq:a4-dissipative-short}
\begin{aligned}
 \|\mathcal D_{\vG, a}^\dagger-\mathcal D_{\vH, a}^\dagger\|_{\infty\to\infty}
 &\leq4\left\|\int_{\mathbb R}
       (\hat{\vA}_{\vG}(\omega)-\hat{\vA}_{\vH}(\omega))^\dagger
        (\hat{\vA}_{\vG}(\omega)-\hat{\vA}_{\vH}(\omega))\,\rd\omega\right\|^{1/2}\\
 &\leq 4\left(\int_{\mathbb R}|f_{\beta^{-1}}(t)|^2\cdot 
                        \|\vA_{\vG}(t)-\vA_{\vH}(t)\|^2\,\rd t\right)^{1/2}\\
 &\leq 4\mathsf{\Delta}\cdot 
       \left(\int_{\mathbb R}|f_{\beta^{-1}}(t)|^2t^2
                                 e^{|t|/\beta}\,\rd t\right)^{1/2}
 \leq 4\mathsf{\Delta}\cdot\beta.
\end{aligned}
\end{equation}
where, in the above, we leveraged $0\leq\gamma_M\leq1$
and $\|f_\sigma\|_2=1$.\\

\noindent \textbf{Coherent component.} The coherent component is more delicate, due to the divergence in the filter function $c(t)\propto 1/\sinh at$. Accordingly, we symmetrize $\pm t$ the perturbations to the OFT in their contributions to the perturbations to $\vC_{\vG}-\vC_{\vH}$. Let us first introduce some notation. For fixed $t>0$ and any function $g:\BR\rightarrow\BR$, let
\begin{equation}
    f_{\pm}(u):=f_{\beta^{-1}}(u\mp t), \quad \delta f:=f_+-f_-, \quad \hat{\vA}_{\vK, g}(\omega):= \frac{1}{\sqrt{2\pi}}\int g(u)\cdot  e^{-i\omega u} \cdot \vA_{\vK}(u) \rd u.
\end{equation}
By Gaussian integration we note the filter function difference satisfies $\|\delta f\|_2^2\leq 4\min(1, t^2/\beta^2)$. We next turn our attention to the quantity $\vQ_{\vK}(t):=\int_{\BR}\gamma_M(\omega)
\widehat{\vA}_{\vK}(\omega,t)^\dagger
\widehat{\vA}_{\vK}(\omega,t)\,\rd\omega$, present in the coherent term $\vC_{\vK}$. By expansion (and temporarily suppressing the $\omega, t$ dependence),
\begin{align}
    &\bigg\|(\vQ_{\vG}(t)- \vQ_{\vH}(t)) -   (\vQ_{\vG}(-t)- \vQ_{\vH}(-t))\bigg\| \label{eq:222-qdelta}\\
    =&\bigg\|\int \rd \omega \cdot  \gamma(\omega)\bigg((\hat{\vA}_{\vG, \delta f}-\hat{\vA}_{\vH, \delta f})^\dagger\cdot \hat{\vA}_{\vG, f_+} + \hat{\vA}_{\vH, \delta f}^\dagger(\hat{\vA}_{\vG, f_+}- \hat{\vA}_{\vH, f_+}) \\
    &\qquad\qquad+(\hat{\vA}_{\vG, f_-}-\hat{\vA}_{\vH,  f_-})^\dagger\cdot \hat{\vA}_{\vG, \delta f} + \hat{\vA}_{\vH, f_-}^\dagger(\hat{\vA}_{\vG, \delta f}- \hat{\vA}_{\vH,  \delta f})\bigg)\bigg\| \\
    \leq & 2 \mathsf{\Delta} \bigg|\int \rd u \cdot  \delta f(u)^2\cdot |u|^2 e^{|u|/\beta}\bigg|^{1/2} + 2 \mathsf{\Delta} \|\delta f\|_2\cdot \bigg|\int \rd u \cdot  f_-(u)^2\cdot |u|^2 e^{|u|/\beta}\bigg|^{1/2} \\
    \leq & 12\mathsf{\Delta}\,t(1+t/\beta)^2 e^{t/(2\beta)}
\end{align}
where in the second to third line we similarly apply operator Cauchy-Schwarz and Parseval's \cref{lem:OParseval} and subsequently evaluate the Gaussian moments.

Finally, we arrive at the bound on the coherent term:
\begin{align}
     \|\vC_{\vG}-\vC_{\vH}\|
 &\leq\int_0^\infty c(t)\cdot 
           \bigg\|(\vQ_{\vG}(t)- \vQ_{\vH}(t)) -   (\vQ_{\vG}(-t)- \vQ_{\vH}(-t))\bigg\|\rd t\\ &\leq 12\mathsf{\Delta}\int_0^\infty
          \frac{t(1+t/\beta)^2e^{t/(2\beta)}}{\beta\sinh(2\pi t/\beta)}\,\rd t\\
 &=12\beta\mathsf{\Delta}\int_0^\infty
          \frac{s(1+s)^2e^{s/2}}{\sinh(2\pi s)}\,\rd s
 \leq 2\beta\mathsf{\Delta}.
\end{align}
where the final inequality is evaluated numerically. 
\end{proof}

We next record two special cases of the Lipschitz continuity lemma above. The first follows trivially by ignoring the decay outside of a radius $\ell$:

\begin{corollary}
    [$\vH, \vG$ agree outside a neighborhood] \label{lemma:lipschitz-outside}
    In the context of \cref{lemma:lipschitz}, let $\ell\geq 1$, $\eta\in [0, 2]$ and suppose $\max_{\gamma\in \Gamma} |h_\gamma-g_\gamma| \leq \eta$ and that $\forall\gamma\notin \CB[i, \ell], h_\gamma=g_\gamma$. Then, 
    \begin{equation}
        \|\CL^\dagger_{\vH, a}-\CL^\dagger_{\vG, a}\|_{\infty-\infty}\leq 32\beta\sd \cdot \eta\cdot |\CB[i, \ell]|.
    \end{equation}
\end{corollary}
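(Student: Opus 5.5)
This corollary is a direct specialization of \cref{lemma:lipschitz}, so the plan is simply to evaluate the right-hand side of that lemma under the stated hypotheses. First, by \cref{lemma:lipschitz},
\begin{equation}
    \|\CL^\dagger_{\vH, a}-\CL^\dagger_{\vG, a}\|_{\infty-\infty} \leq 32\beta\sd\cdot \sum_{\gamma\in \Gamma} |g_\gamma-h_\gamma|\cdot e^{-\theta \cdot \mathsf{dist}(i, \gamma)} .
\end{equation}
Next, the assumption that $h_\gamma=g_\gamma$ for all $\gamma\notin\CB[i,\ell]$ kills every summand outside the clause ball, so the sum runs only over $\gamma\in\CB[i,\ell]$.

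On the remaining terms, use $|g_\gamma-h_\gamma|\leq\eta$ together with the trivial bound $e^{-\theta\,\mathsf{dist}(i,\gamma)}\leq 1$. This is valid since distances are nonnegative and $\theta>0$. It discards the spatial decay, which is precisely the sense in which the statement ``ignores the decay.'' Counting the surviving terms gives at most $|\CB[i,\ell]|$ of them, each bounded by $\eta$. The total is therefore at most $32\beta\sd\cdot\eta\cdot|\CB[i,\ell]|$, as claimed.

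There is no real obstacle. The only point to check is that the clause ball $\CB[i,\ell]$ used in the hypothesis matches the distance $\mathsf{dist}(i,\gamma)$ in \cref{lemma:lipschitz}, since both are measured on the common interaction graph of $\vH$ and $\vG$. It is also worth noting that the restriction $\eta\in[0,2]$ is automatic from $|h_\gamma|,|g_\gamma|\leq 1$ and is not otherwise needed.
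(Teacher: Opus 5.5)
Your proposal is correct and is the same argument the paper gives: it applies \cref{lemma:lipschitz}, restricts the sum to $\CB[i,\ell]$ using $h_\gamma=g_\gamma$ outside the ball, and bounds each surviving term by $\eta\cdot e^{-\theta\,\mathsf{dist}(i,\gamma)}\leq\eta$. The paper states this in one line as ``ignoring the decay outside of a radius $\ell$,'' and your side remarks on the common interaction graph and on the harmless restriction $\eta\in[0,2]$ are accurate.
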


Next, we consider the effect of distant perturbations on the Hamiltonian to the Lindbladian; when we further suppose the underlying Hamiltonian lies on a lattice (or, at least admits polynomial growth).

\begin{corollary}
    [$\vH, \vG$ agree within a neighborhood]\label{lemma:lipschitz-within} Fix $\beta>\sd^{-1}$.
    In the context of \cref{lemma:lipschitz}, let $\ell\geq 1$, $\eta\in [0, 2]$ and suppose $\max_{\gamma\in \Gamma} |h_\gamma-g_\gamma| \leq \eta$ and that $\forall\gamma\in \CB[i, \ell], h_\gamma=g_\gamma$. Furthermore, assume $\vH, \vG$ are $(\sd, \sD, \scgr_{\sgr})$ Hamiltonians as in \cref{def:polynomial-growth-hamiltonian}. Then, 
    \begin{equation}
        \|\CL^\dagger_{\vH, a}-\CL^\dagger_{\vG, a}\|_{\infty-\infty}\leq \scgr_\sgr\cdot \mathsf{c_{D}}\cdot (\beta\sd)^{\sD+1}\cdot  \eta\cdot  e^{-\ell\cdot \theta/2},
    \end{equation}
    where $\scgr_{\sD}$ is a function only of $\sD$, and $\theta$ is the constant from \cref{lemma:lipschitz}.
\end{corollary}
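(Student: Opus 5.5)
The plan is to start from the general bound of \cref{lemma:lipschitz} and simply exploit that the coefficient differences vanish on $\CB[i,\ell]$. Concretely, \cref{lemma:lipschitz} gives
\begin{equation}
  \|\CL^\dagger_{\vH, a}-\CL^\dagger_{\vG, a}\|_{\infty-\infty} \leq 32\beta\sd\cdot \sum_{\gamma\notin \CB[i,\ell]} |g_\gamma-h_\gamma|\cdot e^{-\theta\, \mathsf{dist}(i, \gamma)} \leq 32\beta\sd\cdot \eta \sum_{\gamma:\,\mathsf{dist}(i,\gamma)>\ell} e^{-\theta\,\mathsf{dist}(i,\gamma)},
\end{equation}
so the whole task reduces to bounding a tail sum of an exponential over the interaction graph.

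First I organize the sum by shells: the number of clauses at distance exactly $z$ from $i$ is at most $|\CB[i,z]|\leq \scgr_\sgr z^\sD$ by the polynomial growth assumption \eqref{eq:polynomial-growth}. The tail is then at most $\scgr_\sgr\sum_{z>\ell} z^\sD e^{-\theta z}$. Next I split $e^{-\theta z}=e^{-\theta z/2}\cdot e^{-\theta z/2}$ and use $z>\ell$ to extract $e^{-\theta\ell/2}$. The remaining factor is $\sum_{z\geq 1} z^\sD e^{-\theta z/2}$, which I bound by comparison with $\int_0^\infty x^\sD e^{-\theta x/2}\,\rd x = \sD!\,(2/\theta)^{\sD+1}$, plus a harmless boundary term from the maximum of $x^\sD e^{-\theta x/2}$, which is also $O_\sD(\theta^{-\sD})$.

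Since $\beta\sd>1$, we have $\theta=(8e\beta\sd)^{-1}$, so $(2/\theta)^{\sD+1}=(16e\beta\sd)^{\sD+1}$. Combining this with the prefactor $32\beta\sd$ gives a bound of the form $\scgr_\sgr\,\mathsf{c_D}\,(\beta\sd)^{\sD+2}\,\eta\, e^{-\theta\ell/2}$.

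The only delicate point is the exponent. A naive count yields $(\beta\sd)^{\sD+2}$, not the stated $(\beta\sd)^{\sD+1}$. If the exact power matters, I would try to absorb one factor of $\beta\sd$ by spending a fraction of the exponential decay: for instance, use $e^{-\theta z}\le e^{-\theta\ell/2}e^{-\theta z/2}$ only for $z\ge\ell$ and bound the sum starting at $\ell$ more carefully. Failing that, I would note that the statement's power is immaterial for all later applications, where only a $\poly(\beta\sd)$ prefactor is used. Otherwise the argument is a direct calculation with no real obstacle.
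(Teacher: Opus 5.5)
Your reduction to the tail sum $\sum_{\gamma:\,\mathsf{dist}(i,\gamma)>\ell}e^{-\theta\,\mathsf{dist}(i,\gamma)}$ matches the paper. However, your evaluation of that sum loses a factor $\theta^{-1}\asymp\beta\sd$. As you observe, you only reach $(\beta\sd)^{\sD+2}$, not the stated $(\beta\sd)^{\sD+1}$, so you do not prove the inequality as written.

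The loss comes from bounding the number of clauses at distance exactly $z$ by the whole ball, $|\CB[i,z]|\le\scgr_\sgr z^{\sD}$. Already for $\ell=1$ the series $\scgr_\sgr\sum_{z\ge 2}z^{\sD}e^{-\theta z}$ is of order $\sD!\,\theta^{-(\sD+1)}$. Your suggested remedies (splitting the exponential differently, or starting the sum at $\ell$) therefore cannot recover the exponent. That the downstream applications only need a $\poly(\beta\sd)$ prefactor is true, but it does not close this gap.

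The missing idea is summation by parts in layer-cake form. Write $e^{-\theta d}=\theta\int_d^\infty e^{-\theta r}\,\rd r$ and exchange sum and integral:
\[
\sum_{\gamma:\,\mathsf{dist}(i,\gamma)>\ell}e^{-\theta\,\mathsf{dist}(i,\gamma)}
=\theta\int_\ell^\infty e^{-\theta r}\,\bigl|\{\gamma:\ell<\mathsf{dist}(i,\gamma)\le r\}\bigr|\,\rd r
\le\theta\, e^{-\theta\ell/2}\,\scgr_\sgr\int_0^\infty r^{\sD}e^{-\theta r/2}\,\rd r .
\]
This applies the growth bound only to cumulative counts, not to each shell. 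The prefactor $\theta$ cancels one power in $\int_0^\infty r^{\sD}e^{-\theta r/2}\,\rd r=\sD!\,(2/\theta)^{\sD+1}$, leaving $2\scgr_\sgr\,\sD!\,(2/\theta)^{\sD}e^{-\theta\ell/2}$. Multiplying by $32\beta\sd\,\eta$ and using $\theta=(8e\beta\sd)^{-1}$ (valid since $\beta\sd>1$) gives exactly $\scgr_\sgr\,\mathsf{c_D}\,(\beta\sd)^{\sD+1}\eta\,e^{-\theta\ell/2}$. This is how the paper evaluates the sum, up to an immaterial factor of $2$ in its prefactor.
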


\begin{proof}
    Follows by evaluating the sum in \cref{lemma:lipschitz}:
    \begin{align}
 \sum_{\gamma: \mathsf{dist}(i,\gamma)>\ell}e^{-\theta\cdot  \mathsf{dist}(i,\gamma)}
 &\leq e^{-\theta\ell/2}\frac{\theta}{2}\cdot 
       \int_\ell^\infty e^{-\theta r/2}\cdot 
       |\CB[i,r]|\cdot \rd r\\
 &\leq \scgr_\sgr e^{-\theta\ell/2}\frac{\theta}{2} \int_0^\infty e^{-\theta r/2}\cdot 
      r^\sD\cdot \rd r \leq \scgr_\sgr\cdot  \sD!\cdot (2/\theta)^\sD e^{-\theta\ell/2}.
\end{align}
\end{proof}

Finally, we consider the effect of truncating a local perturbation to the Hamiltonian to a given radius, which in a sense combines the two above bounds. 
\begin{lemma}
    [A shell decomposition for \cref{lemma:lipschitz-within}]\label{lemma:lipschitz-within-shell}
    In the context of \cref{lemma:lipschitz}, let $\beta\sd\geq 1,$ $\ell\geq 1$ and suppose $\vH, \vG=\vH+\vU$ are degree $\sd$ Hamiltonians on the same interaction graph such that $\forall\gamma\notin \CB[i, \ell], h_\gamma=g_\gamma$. For $b\geq \ell$  let $\vH_b:=\vH_{\CB[i, b]}, \vG_b:=\vG_{\CB[i, b]}$ denote the truncation of the Hamiltonians to radius $b$.
    Then, 
    \begin{align}
        \bigg\| (\CL^\dagger_{\vG_{b+1}, a}-\CL^\dagger_{\vH_{b+1}, a}) - (\CL^\dagger_{\vG_{b}, a}-\CL^\dagger_{\vH_{b}, a})\bigg\|_{\infty-\infty} \leq \poly(\beta\cdot \sd)\cdot  \|\vU\|\cdot e^{-c_1(b-\ell)}
    \end{align}
    where $c_1\propto (\beta\sd)^{-1}$.
\end{lemma}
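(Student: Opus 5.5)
The plan is to reuse the proof of \cref{lemma:lipschitz} almost verbatim, but with the four Hamiltonians $\vG_{b+1},\vH_{b+1},\vG_b,\vH_b$ and a second-order difference in place of the first-order one. The starting point is a shell-localized bound on real-time evolutions. Write $\vA_{\vK}(t):=e^{\mathrm{i}t\vK}\vA^a e^{-\mathrm{i}t\vK}$. The key estimate to establish is
\begin{equation}
 \big\|(\vA_{\vG_{b+1}}(t)-\vA_{\vH_{b+1}}(t))-(\vA_{\vG_{b}}(t)-\vA_{\vH_{b}}(t))\big\|\leq c\,|t|\,\|\vU\|\,e^{\theta(v|t|-\mu(b-\ell))}.
\end{equation}
To obtain it, interpolate $\vK_x:=\vH+x\vU$ and note that $\vG_b-\vH_b=\vU$ for every $b\geq\ell$, since $\vU$ is supported in $\CB[i,\ell]$. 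The difference above is then exactly $\int_0^1(\partial_x\vA_{\vK_x^{(b+1)}}-\partial_x\vA_{\vK_x^{(b)}})\,\rd x$, which is precisely the shell term $\vZ_{b+1}$ controlled in \eqref{eq:derivs-error} in the proof of \cref{lemma:oft-lipschitz}. So this step is essentially already done: Duhamel gives $\partial_x\vA_{\vK_x}(q)=\mathrm{i}[\int_0^q\vU_{\vK_x}(r)\rd r,\vA_{\vK_x}(q)]$, and the Lieb--Robinson telescoping (\cref{lem:lieb-robinson-commutator}) shows that adding the shell $\CB[i,b+1]\setminus\CB[i,b]$ changes both evolved operators by at most $e^{\theta(v|q|-\mu(b-\ell))}$.

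Next, insert this into the two components of $\CL^\dagger$ exactly as in the proof of \cref{lemma:lipschitz}. For the dissipative part, the transition and decay terms are bilinear in the OFTs. The second difference of a bilinear form $\hat{\vA}_{\vK}^\dagger\hat{\vA}_{\vK}$ over the square $\{\vG,\vH\}\times\{b,b+1\}$ can be written as a sum of products in which each summand contains either a second-difference OFT factor, or two first-difference factors (one in the $\vG/\vH$ direction and one in the $b/b+1$ direction). I would bound each type as follows. Second-difference factors are bounded via Parseval (\cref{lem:OParseval}) together with the display above, giving $\|\vU\|e^{-\theta\mu(b-\ell)}$ after the Gaussian moment integral; this is where $\theta\propto(\beta\sd)^{-1}$ is chosen so that $\int|f_{\beta^{-1}}|^2t^2e^{2\theta v|t|}$ stays $\poly(\beta\sd)$. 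For cross terms, one factor is bounded by $\|\vU\|\cdot\poly$ using \cref{lem:can-real-time-perturbation}, and the other by $e^{-\mu(b-\ell)}$ using the truncation estimate of \cref{lemma:oft-loc}, since that factor is a single shell at distance $\geq b$ from $i$. Operator Cauchy--Schwarz then closes the dissipative part, just as in \eqref{eq:a4-dissipative-short}.

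The main obstacle is the coherent term, because the kernel $c(t)=1/(\beta\sinh(2\pi t/\beta))$ is singular at $t=0$. I would repeat the $\pm t$ symmetrization from the proof of \cref{lemma:lipschitz}, now applied to the second difference of $\vQ_{\vK}(t)$: expand it with the $f_\pm,\delta f$ decomposition, so that every term carries either a factor $\|\delta f\|_2\lesssim\min(1,t/\beta)$ or a small-$t$ factor $|t|$ coming from the shell bound above. This gives a bound of the form $\poly(\beta\sd)\|\vU\|e^{-\theta\mu(b-\ell)}\,t(1+t/\beta)^2e^{\theta vt}$, which is integrable against $c(t)$ provided $\theta v\leq\pi/\beta$. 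That condition holds for $\theta=\min(1,\pi(10v\beta)^{-1})$, as in the proof of \cref{lem:metastable-differentiated-entropy-gradient}. Collecting terms and setting $c_1:=\theta\mu\propto(\beta\sd)^{-1}$ then yields the claimed bound.
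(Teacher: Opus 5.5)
Your proposal is correct and follows the paper's own route. The paper proves exactly your second-difference estimate for $\vA_{\vK}(t)$, via the interpolation $\vK_{x,b}$ and \eqref{eq:derivs-error}, and then only asserts that the rest is ``careful integration of filter functions akin to \cref{lemma:lipschitz}''; your bilinear second-difference expansion and $\pm t$ symmetrization spell that step out, including the cross terms $(\hat{\vA}_{\vG_b}-\hat{\vA}_{\vH_b})^\dagger(\hat{\vA}_{\vH_{b+1}}-\hat{\vA}_{\vH_b})$ that the paper leaves implicit.

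Two small adjustments are worth making. First, bound the $b$-direction first differences by the Lieb--Robinson shell estimate at the reduced $\theta$ plus Parseval, as you already do for the second differences, rather than by \cref{lemma:oft-loc}: that lemma's constant carries $e^{v^2\beta^2/8}$ and does not cover the shifted filters $f_\pm,\delta f$, and the reduced-$\theta$ route keeps the prefactor at $\poly(\beta\sd)$. Second, in the coherent part the vanishing as $t\to0$ always comes from the $\delta f$ filter carried by one factor of every term, not from the $|t|$ in your shell bound, which is the inner Heisenberg time.
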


\begin{proof}
    We let $x\in [0, 1]:\vK_{x, b}:=\vH_b + x(\vG_b-\vH_b)$ be the interpolation from \cref{lemma:oft-lipschitz}, and expose a telescoping decomposition for the time evolution via Duhamel's formula:
    \begin{align}
        \bigg\|(\vA_{\vG_{b+1}}(t)-\vA_{\vH_{b+1}}(t)) -  (\vA_{\vG_{b}}(t)-\vA_{\vH_{b}}(t))\bigg\| &\leq \int_0^1 \|\partial_x \vA_{\vK_{x, b+1}}(t)-\partial_x \vA_{\vK_{x, b}}(t)\| \rd x \\ &\leq c_2|t|e^{\theta(v|t|-\mu(b-\ell))} \|\vU\|
    \end{align}
    where the last inequality was computed in \eqref{eq:derivs-error}, for $c_2$ a polynomial function of $\mathsf{c_{LR}}, \sd$ and we can choose $\theta:=\min(1, (8e\beta v)^{-1})$. Crucially, the above decays in both distance and $\|\vU\|$.   
    The rest of the proof is just careful integration of filter functions akin to that of \cref{lemma:lipschitz}. 
\end{proof}

\newpage

\section{Refinements to the metastability area law}
\label{section:area-law}

In this section we prove \cref{thm:intro-metastability-area-law}, on an area law for metastable states which holds in the thermodynamic limit. Our approach will be to combine ingredients from \cite{BCV25} with the quasi-locality of the Fisher information described in \cref{section:fisher-identifiability}. 

\subsection{Local decoders imply an area law}

As input, we require two building blocks from \cite{BCV25}, the first of which is the following local Markov property for metastable states.

\begin{lemma}[A local Markov property for metastable states {\cite[Theorem D.1]{BCV25}}]
\label{lem:bcv-area-inputs}
Let $\sA\subseteq [n]$ and let $\vsigma$ be a full rank state which is $\epsilon-$locally metastable w.r.t. $\vH$ as in \cref{def:intro-local-metastability}. Then, there exists a recovery channel $\CR_\sA$ which fixes $\vrho_{\vH}$ s.t. for any noise channel $\CN_\sA$ acting on $\sA$,
\begin{equation}
 \|\vsigma-\CR_{\sA}\circ \mathcal N_{\sA}[\vsigma]\|_1
 \leq e^{\mu|\sA|} \cdot \max_{a\in \CS_{\sA}^1} \mathsf{ADB}_{\vH, a}[\vsigma]^{\lambda},
 \label{eq:bcv-D1}
\end{equation}
Where $\mu, \lambda^{-1}\leq \poly(\beta, \sd)$.

\end{lemma}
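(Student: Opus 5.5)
The plan is to build $\CR_\sA$ from the Gibbs state itself, so that it fixes $\vrho_{\vH}$ by construction, and then to show that a metastable $\vsigma$ is "locally indistinguishable from $\vrho_{\vH}$ on $\sA$ given its complement". The closeness will be quantified entirely through the single-site $\mathsf{ADB}$ errors.

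\textbf{Construction of the recovery map.} I would first reduce an arbitrary noise channel $\CN_\sA$ to the completely depolarizing channel $\CD_\sA[\cdot]=\vtau_\sA\otimes\tr_\sA[\cdot]$. Any $\CN_\sA$ followed by $\CD_\sA$ equals $\CD_\sA$. So if $\CR_\sA$ is defined to first apply $\CD_\sA$ and then a fixed map $\CR'_\sA$, the claim only needs to be proved for $\CN_\sA=\CD_\sA$. For $\CR'_\sA$ I would take the Petz map of $\vrho_{\vH}$ relative to $\tr_\sA$, written in the KMS-weighted form
\begin{equation}
\CR'_\sA[\vX]=\vrho_{\vH}^{1/2}\big(\vrho_{\vH,\overline{\sA}}^{-1/2}\tr_\sA[\vX]\vrho_{\vH,\overline{\sA}}^{-1/2}\otimes \vI_\sA\big)\vrho_{\vH}^{1/2}.
\end{equation}
This is CPTP and satisfies $\CR'_\sA\circ\tr_\sA[\vrho_{\vH}]=\vrho_{\vH}$. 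If quasi-locality of $\CR_\sA$ is also wanted, one can instead use the Lindbladian-based variant of \cite{BCV25}: time-averaged evolution under $\sum_{j\in\sB[\sA,r]}\CL_{\vH,j}$, which also fixes $\vrho_{\vH}$ by detailed balance. The estimates below are the same in either case.

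\textbf{Reduction to a commutator statement.} Writing $\vsigma=\vrho_{\vH}^{1/2}\vY\vrho_{\vH}^{1/2}$, the error $\|\vsigma-\CR'_\sA\circ\tr_\sA[\vsigma]\|_1$ is controlled by how far $\vY$ is from being of the form $\vI_\sA\otimes\vY_{\overline\sA}$. By a Pauli twirl over $\CS_\sA$, that distance is bounded by the commutators $\vrho_{\vH}^{1/2}[\vP,\vY]\vrho_{\vH}^{1/2}$ for Paulis $\vP$ supported on $\sA$. \cref{lemma:expose-log} rewrites each such commutator exactly as the $s$-integral of $\vsigma^{1/2+s}[\vrho_{\vH}^{-s}\vP\vrho_{\vH}^{s},\log\vsigma+\beta\vH]\vsigma^{1/2-s}$. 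After Cauchy--Schwarz, this is bounded by $\big(\int_{-1/2}^{1/2}\|[\vP,\log\vsigma+\beta\vH]\|^2_{\vsigma,s}\,\rd s\big)^{1/2}$, up to the imaginary-time conjugation factor $e^{O(\beta\sd|\sA|)}$ from \cref{lem:norm_bound_pauli}. The twirl over $\CS_\sA$ contributes a further factor $4^{\sq|\sA|}$. Both factors are absorbed into $e^{\mu|\sA|}$.

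\textbf{From commutators to single-site ADB.} Next I would apply \cref{claim:multi-site-sint}, which bounds the integrated commutator by $e^{c\poly(\beta\sd)|\sA|}\mathsf{ADB}_{\vH,\vP}[\vsigma]+e^{-c|\sA|}\|\log\vsigma+\beta\vH\|^2$. Then \cref{lem:metastable-multisite-to-single-site-adb} bounds $\mathsf{ADB}_{\vH,\vP}$ by $e^{4\beta\Omega}\max_a\mathsf{ADB}_{\vH,a}+e^{-\Omega/8\sd}$. Optimizing $\Omega\sim\log(1/\max_a\mathsf{ADB}_{\vH,a})$ turns the additive tail into a power, which produces the exponent $\lambda^{-1}=\poly(\beta\sd)$.

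\textbf{Main obstacle.} The hardest step is the residual term $\|\log\vsigma+\beta\vH\|$, which is not a priori bounded for a general full-rank $\vsigma$. The claim's parameter $c$ can be taken $\propto\log(1/\mathsf{ADB})$, so $\log\|\log\vsigma+\beta\vH\|$ costs only a logarithmic factor, but that is only harmless if the norm is at most polynomial. To guarantee this, I would smooth the state, $\vsigma^\lambda=(1-\lambda)\vsigma+\lambda\vrho_{\vH}$, exactly as in Step 1 of the proof of \cref{lem:metastable-convexity}. This gives $\|\log\vsigma^\lambda+\beta\vH\|\lesssim \beta\|\vH\|+\log\lambda^{-1}$, while changing the trace distance and the $\mathsf{ADB}$ errors only by $O(\lambda)$. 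The remaining difficulty is that $\|\vH\|$ scales with $n$, which would reintroduce system-size dependence. To avoid it, I would first localize: truncate to $\vH_{\CB[\sA,L]}$ and the reduced state on $\sB[\sA,L]$, as in \cref{lem:metastable-convexity}, with $L=\poly(\beta\sd)\,|\sA|$. Then $\|\vH_{\CB[\sA,L]}\|$ depends only on $|\sA|$ and $\beta\sd$, and the errors from the Lieb--Robinson truncation, the smoothing, and the ADB powers can all be balanced into the form $e^{\mu|\sA|}\max_a\mathsf{ADB}_{\vH,a}[\vsigma]^\lambda$.
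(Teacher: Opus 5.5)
There is a genuine gap in your ``reduction to a commutator statement'', and it cannot be repaired by adjusting constants. Take two qubits, $\sA=\{1\}$, $\vH=\vX_1\vX_2+\vZ_1\vZ_2$. Its Gibbs state is Bell-diagonal, so $\vrho_2=\vI/2$.

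\emph{First}, the Petz map is not the identity on states whose $\vY=\vrho^{-1/2}\vsigma\vrho^{-1/2}$ is trivial on $\sA$. In general $\CR'_\sA[\vrho^{1/2}(\vI_\sA\otimes\vQ)\vrho^{1/2}]=\vrho^{1/2}(\vI_\sA\otimes\mathcal{E}(\vQ))\vrho^{1/2}$, where $\mathcal{E}(\vQ)=\vrho_{\overline\sA}^{-1/2}\tr_\sA[\vrho^{1/2}(\vI_\sA\otimes\vQ)\vrho^{1/2}]\vrho_{\overline\sA}^{-1/2}$, and $\mathcal{E}(\vQ)\neq\vQ$ in general for non-commuting $\vrho$. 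In the example, $\tr_1[\vrho^{1/2}\vZ_2\vrho^{1/2}]=\vZ_2/(2\cosh\beta)$. So for the full-rank state $\vsigma=\vrho^{1/2}(\vI+c\,\vZ_2)\vrho^{1/2}$ you get $\CR'_\sA[\vsigma]=\vrho^{1/2}(\vI+c\,\vZ_2/\cosh\beta)\vrho^{1/2}\neq\vsigma$, even though $[\vP,\vY]=0$ for every Pauli $\vP$ on qubit 1.

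\emph{Second}, \cref{lemma:expose-log} produces commutators with $\vrho^{-s}\vP\vrho^{s}$, not with $\vP$. Since $\log\vsigma+\beta\vH$ does not commute with $\vrho$, $[\vrho^{-s}\vP\vrho^{s},\log\vsigma+\beta\vH]$ is not a multiple of $[\vP,\log\vsigma+\beta\vH]$. Expanding $\vrho^{-s}\vP\vrho^{s}$ in Paulis brings in sites outside $\sA$, which $\max_{a\in\CS^1_\sA}$ does not cover. Moreover, \cref{lem:norm_bound_pauli} controls $\|\vrho^{-s}\vP\vrho^{s}\|$ only when $\beta\sd<1$, so it is unavailable in the low-temperature regime $\beta\sd\geq 1$.

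More fundamentally, the quantity your chain ends with, the smallness of $\int_{-1/2}^{1/2}\|[\vP,\log\vsigma+\beta\vH]\|^2_{\vsigma,s}\,\rd s$ for Paulis on $\sA$, cannot imply recovery by any fixed map. In the same model let $\vsigma_\alpha\propto e^{-\beta\vH+\alpha\vZ_2}$. Then $\log\vsigma_\alpha+\beta\vH=\alpha\vZ_2+\mathrm{const}$ commutes with qubit 1, and your chain would give zero recovery error for every $\alpha$.

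By the $\vZ_1\vZ_2$ and $\vX_1\vX_2$ symmetries, $\tr_1\vsigma_\alpha=\frac12(\vI+m_\alpha\vZ_2)$ with $m_0=0$ and $m_\alpha\to\pm1$. Suppose a linear $\CR$ satisfied $\CR[\tr_1\vsigma_\alpha]=\vsigma_\alpha$ for all $\alpha$. Then $\vsigma_\alpha$ would be affine in $m_\alpha$, forcing $\vrho=\vsigma_0=\frac12(\vsigma_{+\infty}+\vsigma_{-\infty})$. Here $\vsigma_{+\infty}\propto e^{-\beta\vZ_1}\otimes|0\rangle\langle 0|$ and $\vsigma_{-\infty}\propto e^{\beta\vZ_1}\otimes|1\rangle\langle 1|$ are diagonal, which contradicts $\tr[\vrho\vX_1\vX_2]=-\tanh\beta\neq0$.

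These states escape the lemma only because their $\ADB$ on qubit 1 is nonzero. Indeed $i[\vH,\vY_1]=2(\vX_1\vZ_2-\vZ_1\vX_2)$ already fails to commute with $\vZ_2$, so the OFTs of qubit-1 jumps, which spread onto qubit 2, detect the field.

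The missing idea is to keep the commutators with the OFTs $\widehat{\vA}^a_{\vH}(\omega,t)$, i.e.\ to work with $\ADB$ itself rather than its bare-Pauli consequence. The recovery map should be adapted to them: the $\vrho_\vH$-KMS projection onto $\ker\CL_{\vH,\sA}$, e.g.\ $\lim_{t\to\infty}e^{t\CL_{\vH,\sA}}$, as for Gibbs states in \cite{chen2025GibbsMarkov}. By \cref{lem:Dicirchlet} and \cref{lem:sumoverenergies}, its Heisenberg range commutes with every $\vA^a$, $a\in\CS^1_\sA$. Hence $\CR\circ\CN_\sA=\CR$ and $\CR[\vrho_\vH]=\vrho_\vH$ hold exactly.

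The substantive step then becomes a quantitative bound on $\|\vsigma-\CR[\vsigma]\|_1$ in terms of $\max_a\ADB_{\vH,a}[\vsigma]$. That is the content of \cite[Theorem D.1]{BCV25}, which this paper imports without proof; Section D of that work develops the relevant tools (e.g.\ \cref{lem:metastable-exposing-kms}).

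Your final smoothing/localization step is an artifact of the $\|\log\vsigma+\beta\vH\|$ term in \cref{claim:multi-site-sint}, and it has its own problem. $\ADB$ is defined through $\sqrt{\vsigma}$ of the global state, so the $\ADB$ of a truncated, smoothed reduced state is not controlled by $\ADB_{\vH,a}[\vsigma]$.
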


We further require a minor modification to the implication that local decoding implies an area law (see e.g. \cite[Lemma IV.1]{BCV25}), in the form: 

\begin{lemma}
[An Area law from local recovery, with side information]
\label{lem:bcv-area-law-passive-reference}
Let \(\sA\subseteq\sB\), and let \(\vG\) be
supported on \(\sB\). Suppose that a channel \(\CR\) acting only on \(\sB\) fixes $\vrho_{\vG}\otimes\vsigma_{\bar{\sB}}$ and approximately recovers 
\begin{equation}
    \|(\CR\otimes \mathsf{id}_{\bar{\sB}})[\vsigma_{\sA}\otimes\vsigma_{\sA^c}]-\vsigma\|_1\leq\eta\leq\frac12.
\end{equation}
Then, $\vsigma$ satisfies an approximate area law:
\begin{equation}
    I(\sA:\sA^c)_{\vsigma}
    \leq
    2\beta\|\partial_{\sA:\sB}\vG\|
    +4\eta\max\left\{
        \log\frac1\eta,\,
        \beta\|\vG\|,\,
        |\sB| \sq
    \right\}.
    \label{eq:bcv-IV1-passive-reference}
\end{equation}
\end{lemma}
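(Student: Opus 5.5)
The plan is to compare the two relative entropies $\mathsf{D}(\vtau\Vert\vgamma)$ and $\mathsf{D}(\vsigma\Vert\vgamma)$. Here $\vtau:=\vsigma_{\sA}\otimes\vsigma_{\sA^c}$ is the decoupled product state, and $\vgamma:=\vrho_{\vG}\otimes\vsigma_{\bar{\sB}}$ is the reference state fixed by $\CR$. The data processing inequality is applied along the recovery channel. Write $\vomega:=(\CR\otimes\mathsf{id}_{\bar{\sB}})[\vtau]$. Since $\CR\otimes\mathsf{id}$ fixes $\vgamma$, data processing gives $\mathsf{D}(\vomega\Vert\vgamma)\leq\mathsf{D}(\vtau\Vert\vgamma)$.

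First, I expand both relative entropies using $\log\vgamma=-\beta\vG-\log\tr e^{-\beta\vG}+\log\vsigma_{\bar{\sB}}$, with the obvious tensor identities. Since $\vtau_{\bar{\sB}}=\vsigma_{\bar{\sB}}$, the $\log\vsigma_{\bar{\sB}}$ and partition function terms cancel in the difference. This yields the exact identity
\begin{equation}
\mathsf{D}(\vtau\Vert\vgamma)-\mathsf{D}(\vsigma\Vert\vgamma)=-\mathsf{I}(\sA:\sA^c)_{\vsigma}+\beta\,\tr[(\vtau-\vsigma)\vG].
\end{equation}
Because $\vtau$ and $\vsigma$ share the marginals on $\sA$ and on $\sA^c$, every term of $\vG$ supported entirely in $\sA$ or entirely in $\sA^c$ contributes nothing. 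Only the boundary terms $\partial_{\sA:\sB}\vG$ survive, so $|\beta\tr[(\vtau-\vsigma)\vG]|\leq 2\beta\|\partial_{\sA:\sB}\vG\|$. Combined with data processing, this gives
\begin{equation}
\mathsf{I}(\sA:\sA^c)_{\vsigma}\leq 2\beta\|\partial_{\sA:\sB}\vG\|+\big(\mathsf{D}(\vsigma\Vert\vgamma)-\mathsf{D}(\vomega\Vert\vgamma)\big).
\end{equation}

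It remains to bound the continuity error $\mathsf{D}(\vsigma\Vert\vgamma)-\mathsf{D}(\vomega\Vert\vgamma)$ using $\|\vomega-\vsigma\|_1\leq\eta$. Writing each relative entropy as $-\mathsf{S}(\cdot)-\tr[\cdot\log\vgamma]$, the cross term splits into two pieces.

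The first piece is $\beta\tr[(\vomega-\vsigma)\vG]$, which is bounded by $\eta\beta\|\vG\|$.

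The second piece is $\tr[(\vomega-\vsigma)\log\vsigma_{\bar{\sB}}]$. It vanishes, because $\CR$ acts only on $\sB$ and therefore $\vomega_{\bar{\sB}}=\vtau_{\bar{\sB}}=\vsigma_{\bar{\sB}}$.

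The entropy difference $\mathsf{S}(\vomega)-\mathsf{S}(\vsigma)$ is the main obstacle, since the global dimension is unbounded and the Fannes inequality on the full system would reintroduce $n$. The key is again that $\vomega$ and $\vsigma$ have identical $\bar{\sB}$ marginals. Hence
\begin{equation}
\mathsf{S}(\vomega)-\mathsf{S}(\vsigma)=\mathsf{S}(\sB|\bar{\sB})_{\vomega}-\mathsf{S}(\sB|\bar{\sB})_{\vsigma},
\end{equation}
and the Alicki--Fannes--Winter continuity bound for conditional entropy controls this by $O(\eta\,|\sB|\sq+h(\eta))$. For $\eta\leq\frac12$ this is at most $O(\eta\max\{\log\frac1\eta,|\sB|\sq\})$.

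Collecting the three contributions and adjusting constants (the constant $4$ absorbs the AFW constants and the factor from $h(\eta)\leq 2\eta\log\frac1\eta$) gives \eqref{eq:bcv-IV1-passive-reference}.
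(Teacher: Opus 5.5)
Your proof is correct and follows essentially the same route as the paper's. Both use the exact identity relating $\mathsf{D}(\vsigma_{\sA}\otimes\vsigma_{\sA^c}\Vert\vrho_{\vG}\otimes\vsigma_{\bar{\sB}})-\mathsf{D}(\vsigma\Vert\vrho_{\vG}\otimes\vsigma_{\bar{\sB}})$ to $-\mathsf{I}(\sA:\sA^c)_{\vsigma}$ plus a boundary energy term bounded by $2\beta\|\partial_{\sA:\sB}\vG\|$, then data processing under $\CR$ (which fixes the reference), and finally Alicki--Fannes continuity of $\mathsf{S}(\sB\mid\bar{\sB})$ together with the $\beta\|\vG\|$ energy term — exploiting, as you note, that all states involved share the $\bar{\sB}$ marginal so no system-size dependence enters.
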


\begin{proof}

[of \cref{lem:bcv-area-law-passive-reference}]
Since $\sA\subseteq \sB$, the states $\vsigma$ and $\vsigma_{\sA}\otimes\vsigma_{\bar{\sA}}$ admit the same marginal on $\bar{\sB}$. Thus we can directly expand the relative entropies. For any $\valpha$ such that $\valpha_{\bar{\sB}}=\vsigma_{\bar{\sB}}$:
\begin{equation}
    D(\valpha\|\vrho_{\vG}\otimes\vsigma_{\sB^c}) = -S(\sB\mid\sB^c)_{\valpha}
    +\beta\Tr[\vG\valpha_{\sB}]
    +\log\tr[e^{-\beta\vG}].\label{eq:rel-exp}
\end{equation}
and moreover by subtracting the two,
\begin{equation}
\begin{split}
    D(\vsigma\|\vrho_{\vG}\otimes\vsigma_{\bar{\sB}})-D(\vsigma_{\sA}\otimes\vsigma_{\bar{\sA}}\|\vrho_{\vG}\otimes\vsigma_{\bar{\sB}})
    &=
    I(\sA:\bar{\sA})_{\vsigma}
      +\beta\Tr\!\left[\vG(\vsigma-\vsigma_{\sA}\otimes\vsigma_{\bar{\sA}}\right]\\
      &\geq  I(\sA:\bar{\sA})_{\vsigma}-2\beta\cdot \|\partial_{\sA:\sB}\vG\|.
\end{split}
\label{eq:passive-reference-free-energy}
\end{equation}
where \(\partial_{\sA:\sB}\vG\) is the sum of interactions meeting both
\(\sA\) and \(\sB\setminus\sA\). To proceed, we note that the state $\CR[\vsigma_{\sA}\otimes\vsigma_{\bar{\sA}}]$ similarly admits the same RDM on $\vsigma_{\bar{\sB}}$. Therefore, we can apply continuity of the conditional entropy (\cref{fact:alicki}) in \eqref{eq:rel-exp}:
\begin{align}
    &D(\vsigma\|\vrho_{\vG}\otimes\vsigma_{\bar{\sB}})-D(\CR[\vsigma_{\sA}\otimes\vsigma_{\bar{\sA}}]\|\vrho_{\vG}\otimes\vsigma_{\bar{\sB}}) \\ \leq  &|S(\sB\mid\bar{\sB})_{\vsigma}-S(\sB\mid\bar{\sB})_{\CR[\vsigma_{\sA}\otimes\vsigma_{\bar{\sA}}]}| 
    +\beta \|\vG\|\cdot \|\vsigma_{\sB}-\CR[\vsigma_{\sA}\otimes\vsigma_{\bar{\sA}\cap \sB}]\|_1 \\
    \leq &4\eta\cdot\max\bigg\{\log \frac{1}{\eta},\beta \|\vG\|, |\sB|\cdot  \sq \bigg\} \label{eq:area-law-rec-error}
\end{align}
Since the recovery map $\CR$ fixes $\vrho_{\vG}\otimes\vsigma_{\bar{\sB}}$, application of \eqref{eq:area-law-rec-error} proceeded by the data-processing inequality gives:
\begin{equation}
    D(\vsigma||\vrho_{\vG}\otimes\vsigma_{\bar{\sB}})
    \approx
    D\!\left(
      (\CR_{\sB}\otimes\mathsf{id}_{\bar{\sB}})[\vsigma_{\sA}\otimes\vsigma_{\bar{\sA}}]
      \|
      \vrho_{\vG}\otimes\vsigma_{\bar{\sB}}
    \right)
    \leq D(\vsigma_{\sA}\otimes\vsigma_{\bar{\sA}}\|\vrho_{\vG}\otimes\vsigma_{\bar{\sB}}).
    \label{eq:passive-reference-dpi}
\end{equation}
Rearranging these bounds into \eqref{eq:passive-reference-free-energy} gives the statement. 

\end{proof}

In the above we leveraged the following continuity statement for the quantum conditional entropy. 

\begin{fact}
    [Alicki--Fannes inequality]\label{fact:alicki} For any two states $\vrho_1, \vrho_2$ defined on a bipartite Hilbert space $\calH_{\sB}\otimes \calH_{\bar{\sB}}$, with $u:=\frac12\|\vrho_1-\vrho_2\|_1$.
\begin{equation}
    \left|
      S(\sB\mid\bar{\sB})_{\vrho_1}
      -S(\sB\mid\bar{\sB})_{\vrho_2}
    \right|
    \leq
    2u\log\mathsf{dim}(\calH_\sB)
    +(1+u)\cdot h_2\!\left(\frac{u}{1+u}\right).
\end{equation}
with $h_2$ the binary entropy function.
\end{fact}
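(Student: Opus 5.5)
The plan is to prove the stated inequality with Winter's coupling argument, comparing both states to a common mixture. The case $u=0$ is trivial, so assume $u>0$. Take the Jordan decomposition of the Hermitian, traceless operator $\vrho_1-\vrho_2$. This gives $\vrho_1-\vrho_2 = u(\vsigma_+-\vsigma_-)$ with $\vsigma_\pm$ density matrices on $\calH_\sB\otimes\calH_{\bar\sB}$; the normalization by $u$ works because the positive and negative parts each have trace $u=\frac12\|\vrho_1-\vrho_2\|_1$. Rearranging gives a single state with two convex decompositions:
\begin{equation}
  \vomega:=\frac{1}{1+u}\vrho_1+\frac{u}{1+u}\vsigma_-=\frac{1}{1+u}\vrho_2+\frac{u}{1+u}\vsigma_+ .
\end{equation}
Write $p:=1/(1+u)$.

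Next I would establish two standard facts about the conditional entropy of a mixture $\valpha=p\valpha_0+(1-p)\valpha_1$. Both come from adjoining a classical flag register $\mathsf X$ and forming $\sum_x p_x\ketbra{x}\otimes\valpha_x$.
\begin{itemize}
  \item \emph{Lower bound (concavity).} By strong subadditivity, conditioning on more reduces entropy, so $S(\sB|\bar\sB)_{\valpha}\geq S(\sB|\bar\sB\mathsf X)=pS(\sB|\bar\sB)_{\valpha_0}+(1-p)S(\sB|\bar\sB)_{\valpha_1}$.
  \item \emph{Upper bound.} Apply the chain rule $S(\sB\mathsf X|\bar\sB)=S(\sB|\bar\sB)+S(\mathsf X|\sB\bar\sB)$. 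Since $\mathsf X$ is classical, $S(\mathsf X|\sB\bar\sB)\geq 0$. Also $S(\sB\mathsf X|\bar\sB)=S(\mathsf X|\bar\sB)+S(\sB|\bar\sB\mathsf X)\le h_2(p)+S(\sB|\bar\sB\mathsf X)$. Together these give $S(\sB|\bar\sB)_{\valpha}\le pS_{\valpha_0}+(1-p)S_{\valpha_1}+h_2(p)$.
\end{itemize}

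Now apply the lower bound to the first decomposition of $\vomega$ and the upper bound to the second. Subtracting yields
\begin{equation}
 p\big(S(\sB|\bar\sB)_{\vrho_1}-S(\sB|\bar\sB)_{\vrho_2}\big)\le (1-p)\big(S(\sB|\bar\sB)_{\vsigma_+}-S(\sB|\bar\sB)_{\vsigma_-}\big)+h_2(p).
\end{equation}
Every conditional entropy lies in $[-\log\dim\calH_\sB,\log\dim\calH_\sB]$, so the bracketed difference on the right is at most $2\log\dim\calH_\sB$. Multiply through by $1+u=1/p$, note $(1-p)/p=u$, and use $h_2(p)=h_2(1-p)=h_2\big(u/(1+u)\big)$. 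This gives the claimed bound for $S(\sB|\bar\sB)_{\vrho_1}-S(\sB|\bar\sB)_{\vrho_2}$. Swapping the roles of $\vrho_1$ and $\vrho_2$ (equivalently of $\vsigma_+$ and $\vsigma_-$) gives the same bound for the reverse difference, hence the absolute value.

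The main obstacle is the upper bound for the mixture. One must check that $S(\mathsf X|\sB\bar\sB)\geq 0$ for a classical register. This holds because the state is block diagonal in the flag basis, so $S(\sB\bar\sB\mathsf X)=H(\mathsf X)+\sum_x p_xS(\valpha_x)\ge S(\sB\bar\sB)$, the last step being concavity of von Neumann entropy (Holevo bound). Everything else is bookkeeping.
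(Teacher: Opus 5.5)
Your proof is correct. The paper gives no proof of this statement; it imports the inequality as a known fact. Your argument is exactly Winter's coupling proof, and that is the source of this particular tight form, $2u\log\mathsf{dim}(\calH_\sB)+(1+u)\,h_2\bigl(u/(1+u)\bigr)$. The original Alicki--Fannes argument only gives a bound of the same type with worse constants. The ingredient that produces the sharp constants is your common mixture $\vomega$ with weight $p=1/(1+u)$, so you chose the right route.

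One small slip in your last paragraph concerns the inequality
\[
H(\mathsf X)+\sum_x p_x S(\valpha_x)\geq S\Bigl(\sum_x p_x\valpha_x\Bigr).
\]
This is not concavity of the von Neumann entropy; concavity is the reverse inequality $S\bigl(\sum_x p_x\valpha_x\bigr)\geq\sum_x p_xS(\valpha_x)$. What you need is the upper bound on the entropy of a mixture, equivalently $\chi\leq H(\mathsf X)$. Your parenthetical ``Holevo bound'' names this correctly, so the step is valid and only the label is off.
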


\subsection{Proof \texorpdfstring{of \cref{thm:intro-metastability-area-law}}{}}

We are now in a position to conclude the proof of the area law in \cref{thm:intro-metastability-area-law}.

\begin{proof}
Assume \(\epsilon\leq1/4\), since otherwise the claim follows after
increasing \(\mu\). Fix a set \(\sA\), and let us fix a radius $r:=1+\left\lceil c_0^{-1}\log_+(C_0/\epsilon)\right\rceil$ and $\sB:=\sB[\sA, r]$. Then, by a Lieb-Robinson bound \cref{lemma:lindblad-LR} and the definition of locally metastable, $\vsigma$ is metastable w.r.t. the Lindbladian truncated to the region $\sB$ up to error:
\begin{equation}
 \|\CL_{\sA}^{(\sB)}[\vsigma]\|_1
 \leq
 |\sA|\bigl(\epsilon+C_0e^{-c_0r}\bigr)
 \leq
 2|\sA|\epsilon
 \label{eq:local-area-truncation}
\end{equation}
We proceed by smoothing $ \vsigma^\delta:=(1-\delta)\vsigma+\delta\vrho_{\vH_\sB}\otimes\vsigma_{\bar{\sB}}$. Then in PSD order,
\begin{equation}
    \vsigma
 \leq 2^{\sq|\sB|}\vI_{\sB}\otimes\vsigma_{\bar{\sB}}
 \leq 2^{2\sq|\sB|}e^{2\beta\|\vH_{\sB}\|}\vrho_{\vH_\sB}\otimes\vsigma_{\bar{\sB}},
\end{equation}
 which consequently entails a bound on the operator norm of the log-likelihood ratio, by operator monotonicity of the matrix logarithm:
\begin{align}
     \|\log\vsigma^\delta-\log(\vrho_{\vH_\sB}\otimes\vsigma_{\sB^c}^\delta)\|
 =O\bigg(
 |\sB|\sq+\beta \|\vH_{\sB}\|+\log(1/\delta)\bigg)
 =:\Delta_\delta.
 \label{eq:local-area-log}
\end{align}
By H\"older's inequality, the Fisher Information of \(\vsigma^\delta\)
under \(\CL_{\sA}^{(\sB)}\) is at most
\( 2|\sA|\epsilon\Delta_\delta\). \cref{lem:metastable-FI-to-ADB} then yields
\begin{equation}
 \max_{a\in\CS^1_{\sA}}
 \ADB_{\vH_\sB,a}[\vsigma^\delta]
 \leq
 2|\sA|\epsilon\Delta_\delta
\cdot\bigg (1+ 
   \log_+\frac{ \Delta_\delta}{ 2|\sA|\epsilon}
 \label{eq:local-area-adb}\bigg)
\end{equation}
To proceed, we make the assumption the Hamiltonian lies on a lattice / admits polynomial growth and thereby $|\sB|\leq |\sA| \cdot \poly \log \frac{1}{\epsilon}$. We further make the explicit choice of $\delta = \epsilon$ which ensures:
\begin{equation}
 \max_{a\in\CS^1_{\sA}}
 \ADB_{\vH_\sB,a}[\vsigma^\delta]
 \leq \epsilon^{\lambda}\poly(|\sA|)
 \label{eq:local-area-adb2}
\end{equation}
for suitable choice of $\lambda$. Applying \cref{lem:bcv-area-inputs}, and the triangle inequality over $\delta$,  gives a channel \(\CR\), acting only on \(\sB\) and fixing
$\vrho_{\vH_\sB}\otimes\vsigma_{\sB^c}$, which recovers $\vsigma$ from erasures up to additive error
\begin{equation}
 \|\vsigma-\CR\circ\mathcal N_{\sA}[\vsigma]\|_1
 \leq 4\delta+
 e^{\mu_2|\sA|}\cdot \max_{a\in\CS^1_{\sA}}
 \ADB_{\vH_\sB,a}[\vsigma^\delta]^{\lambda_2} \leq e^{\mu_3|\sA|}\cdot \epsilon^{\lambda_3}
 \label{eq:local-area-recovery}
\end{equation}
for an adequate choice of constants \(\mu_3,\lambda_3>0\). Placed into \cref{lem:bcv-area-law-passive-reference} with $\vG = \vH_\sB$ gives the desired bound. 
\end{proof}

\end{document}